\newcommand{\freest}{\textsc{FreeST}\xspace}
\newcommand{\fmusession}{$F^{\mu;}$}
\newcommand{\changed}[1]{#1}
\newcommand{\Keyword}[1]{\mathsf{#1}}
\newcommand{\link}{\Keyword{lin}}
\newcommand{\unk}{\Keyword{un}}
\newcommand{\dualofk}{\Keyword{dualof}}
\newcommand{\skipk}{\Keyword{Skip}}
\newcommand{\ifk}{\Keyword{if}}
\newcommand{\thenk}{\Keyword{then}}
\newcommand{\elsek}{\Keyword{else}}
\newcommand{\letk}{\Keyword{let}}
\newcommand{\ink}{\Keyword{in}}
\newcommand{\newk}{\Keyword{new}}
\newcommand{\sendk}{\Keyword{send}}
\newcommand{\receivek}{\Keyword{receive}}
\newcommand{\selectk}{\Keyword{select}}
\newcommand{\matchk}{\Keyword{match}}
\newcommand{\withk}{\Keyword{with}}
\newcommand{\forkk}{\Keyword{fork}}
\newcommand{\casek}{\Keyword{case}}
\newcommand{\ofk}{\Keyword{of}}
\newcommand{\bytek}{\Keyword{Byte}}
\newcommand{\intk}{\Keyword{Int}}
\newcommand{\boolk}{\Keyword{Bool}}
\newcommand{\unitk}{\unite}
\newcommand{\prekind}{\upsilon}
\newcommand{\kind}{\kappa}
\newcommand{\kindstyle}[1]{\textsc{\textbf{#1}}}
\newcommand{\kinds}{\kindstyle s}
\newcommand{\kindt}{\kindstyle t}
\newcommand{\kindm}{\kindstyle m}
\newcommand{\kindk}[2]{{#1}^{#2}}
\newcommand{\kindsl}{\kinds^\lin}
\newcommand{\kindsu}{\kinds^\un}
\newcommand{\kindtl}{\kindt^\lin}
\newcommand{\kindtu}{\kindt^\un}
\newcommand{\kindml}{\kindm^\lin}
\newcommand{\kindmu}{\kindm^\un}
\newcommand{\kindtm}{\kindt^m}
\newcommand{\un}{\unk}
\newcommand{\lin}{\link}
\newcommand{\Label}[1]{\Keyword{#1}}
\newcommand{\fstl}{\Label{fst}}
\newcommand{\sndl}{\Label{snd}}
\newcommand{\pushl}{\Label{Push}}
\newcommand{\popl}{\Label{Pop}}
\newcommand{\donel}{\Label{Done}}
\newcommand{\leafl}{\Label{Leaf}}
\newcommand{\nodel}{\Label{Node}}
\newcommand{\Name}[1]{\Keyword{#1}}
\newcommand{\estackn}{\Name{EStack}}
\newcommand{\recordm}[4]{\{{#1}{#2}{#3}\}_{#4}} 
\newcommand{\record}[4]{\recordm{#1}{#2}{{#3}_{#1}}{{#1}\in{#4}}} 
\newcommand{\recorda}[3]{\{{#1}{#2}{#3}_{#1}\}} 
\newcommand{\recordt}[3]{\record{#1}{\colon}{#2}{#3}} 
\newcommand{\variantt}[3]{\langle{#1}\colon{#2}_{#1}\rangle_{{#1}\in{#3}}} 
\newcommand{\variantta}[2]{\langle{#1}\colon{#2}_{#1}\rangle} 
\newcommand{\skipt}{\skipk}
\newcommand{\semit}[2]{#1;#2}
\newcommand{\sout}[1]{\mathop!#1}
\newcommand{\sint}[1]{\mathop?#1}
\newcommand{\msgt}[1]{\sharp{#1}}
\newcommand{\choicet}[1]{\star{#1}}
\newcommand{\ichoicet}[1]{\oplus{#1}}
\newcommand{\echoicet}[1]{\&{#1}}
\newcommand{\funt}[3][m]{{#2}\rightarrow_{#1}{#3}}
\newcommand{\unfunt}[2]{#1\rightarrow#2} 
\newcommand{\pairt}[2]{\{\fstl\colon{#1},\sndl\colon{#2}\}}
\newcommand{\oldpairt}[2]{#1\times #2}
\newcommand{\rect}[3]{\mu{#1}\colon{#2}.{#3}}
\newcommand{\typescheme}[2]{\forall\,{#1}\,.\,{#2}} 
\newcommand{\forallt}[3]{\typescheme{{#1}\colon{#2}}{#3}}
\newcommand{\unitt}[1][m]{()_{#1}}
\newcommand{\intt}{\Keyword{Int}}
\newcommand{\chart}{\Keyword{Char}}
\newcommand{\boolt}{\Keyword{Bool}}
\newcommand{\Endk}{\Keyword{end}}
\newcommand{\unite}[1][m]{()_{#1}}
\newcommand{\recordp}[3]{\record{#1}{\rightarrow}{#2}{#3}} 
\newcommand{\recordpa}[2]{\recorda{#1}{\rightarrow}{#2}} 
\newcommand{\recorde}[3]{\record{#1}{=}{#2}{#3}} 
\newcommand{{\abse}}[4][m]{\lambda_{#1}{#2}\colon{#3}.{#4}}
\newcommand{\appe}[2]{{#1}\,{#2}}
\newcommand{\tabse}[3]{\Lambda{#1}\colon{#2}.{#3}}
\newcommand{\tappe}[2]{#1\,[#2]}
\newcommand{\conde}[3]{\ifk\,#1\,\thenk\,#2\,\elsek \, #3}
\newcommand{\binlete}[4]{\letk\,(#1, #2) = #3\,\ink\,#4}
\newcommand{\unlete}[3]{\letk\,#1 = #2\,\ink\,#3}
\newcommand{\lete}[5]{\letk\,\recorde {#1}{#2}{#3} = #4\,\ink\,#5}
\newcommand{\selecte}[2]{\selectk\,#1\,{#2}}
\newcommand{\matche}[2]{\matchk\,#1\,\withk\,#2}
\newcommand{\forke}[1]{\forkk\,#1}
\newcommand{\casee}[2]{\casek\,#1\,\ofk\,#2}
\newcommand{\injecte}[2]{{#1}\,{#2}}
\newcommand{\newe}[1]{\newk\,{#1}}
\newcommand{\PAR}{\mid}
\newcommand{\PROC}[1]{\langle{#1}\rangle}
\newcommand{\NU}[2]{(\nu{#1}{#2})}
\newcommand{\Empty}{\varepsilon}
\newcommand{\infrule}[3]{\inferrule* [lab=#1]{#2}{#3}}
\newcommand{\axiom}[2]{\infrule {#1}{}{#2}}
\newcommand{\rulename}[2]{{#1}-{#2}\xspace}
\newcommand{\rulenameeq}[1]{\rulename{Q}{#1}} 
\newcommand{\rulenamelts}[1]{\rulename{L}{#1}} 
\newcommand{\rulenameterm}[1]{\rulename{Te}{#1}} 
\newcommand{\rulenameform}[1]{\rulename{F}{#1}} 
\newcommand{\rulenamesplit}[1]{\rulename{S}{#1}} 
\newcommand{\rulenamecontr}[1]{\rulename{C}{#1}} 
\newcommand{\rulenamekind}[1]{\rulename{K}{#1}} 
\newcommand{\rulenametype}[1]{\rulename{T}{#1}} 
\newcommand{\rulenameproc}[1]{\rulename{P}{#1}} 
\newcommand{\rulenameexpred}[1]{\rulename{E}{#1}} 
\newcommand{\rulenameprocred}[1]{\rulename{R}{#1}} 
\newcommand{\rulenameltsbpa}[1]{\rulename{BPA}{#1}} 
\newcommand{\rulenamealgkind}[1]{\rulename{KA}{#1}} 
\newcommand{\rulenamealgtype}[1]{\rulename{TA}{#1}} 
\newcommand{\rulenametermSkip}{\rulenameterm{Skip}}
\newcommand{\rulenametermSeq}{\rulenameterm{Seq}}
\newcommand{\rulenametermRec}{\rulenameterm{Rec}}
\newcommand{\rulenamecontrSeqOne}{\rulenamecontr{Seq2}}
\newcommand{\rulenamecontrSeqTwo}{\rulenamecontr{Seq1}}
\newcommand{\rulenamecontrRec}{\rulenamecontr{Rec}}
\newcommand{\rulenamecontrPoly}{\rulenamecontr{TAbs}}
\newcommand{\rulenamecontrVar}{\rulenamecontr{Var}}
\newcommand{\rulenamecontrOther}{\rulenamecontr{Other}}
\newcommand{\rulenamekindType}{\rulenamekind{Type}}
\newcommand{\rulenamekindUnit}{\rulenamekind{Unit}}
\newcommand{\rulenamekindSkip}{\rulenamekind{Skip}}
\newcommand{\rulenamekindVar}{\rulenamekind{Var}}
\newcommand{\rulenamekindMsg}{\rulenamekind{Msg}}
\newcommand{\rulenamekindCh}{\rulenamekind{Ch}}
\newcommand{\rulenamekindSeq}{\rulenamekind{Seq}}
\newcommand{\rulenamekindArrow}{\rulenamekind{Arrow}}
\newcommand{\rulenamekindRcd}{\rulenamekind{Record}}
\newcommand{\rulenamekindData}{\rulenamekind{Variant}}
\newcommand{\rulenamekindRec}{\rulenamekind{Rec}}
\newcommand{\rulenamekindPoly}{\rulenamekind{TAbs}}
\newcommand{\rulenamekindSub}{\rulenamekind{Sub}}
\newcommand{\rulenameltsVar}{\rulenamelts{Var}}
\newcommand{\rulenameltsMsg}{\rulenamelts{Msg}}
\newcommand{\rulenameltsCh}{\rulenamelts{Ch}}
\newcommand{\rulenameltsSeqOne}{\rulenamelts{Seq1}}
\newcommand{\rulenameltsSeqTwo}{\rulenamelts{Seq2}}
\newcommand{\rulenameltsRec}{\rulenamelts{Rec}}
\newcommand{\rulenameltsbpaVar}{\rulenameltsbpa{Var}}
\newcommand{\rulenameltsbpaChOne}{\rulenameltsbpa{Ch1}}
\newcommand{\rulenameltsbpaChTwo}{\rulenameltsbpa{Ch2}}
\newcommand{\rulenameltsbpaSeqOne}{\rulenameltsbpa{Seq1}}
\newcommand{\rulenameltsbpaSeqTwo}{\rulenameltsbpa{Seq2}}
\newcommand{\rulenameltsbpaDef}{\rulenameltsbpa{Def}}
\newcommand{\rulenameeqUnit}{\rulenameeq{Unit}}
\newcommand{\rulenameeqST}{\rulenameeq{ST}}
\newcommand{\rulenameeqFix}{\rulenameeq{Fix}}
\newcommand{\rulenameeqVar}{\rulenameeq{Var}}
\newcommand{\rulenameeqArrow}{\rulenameeq{Arrow}}
\newcommand{\rulenameeqRcd}{\rulenameeq{Record}}
\newcommand{\rulenameeqData}{\rulenameeq{Variant}}
\newcommand{\rulenameeqRecL}{\rulenameeq{RecL}}
\newcommand{\rulenameeqRecR}{\rulenameeq{RecR}}
\newcommand{\rulenameeqPoly}{\rulenameeq{TAbs}}
\newcommand\rulenameformEmpty{\rulenameform{Empty}}
\newcommand\rulenameformExt{\rulenameform{Ext}}
\newcommand\rulenamesplitEmpty{\rulenamesplit{Empty}}
\newcommand\rulenamesplitUnr{\rulenamesplit{Unr}}
\newcommand\rulenamesplitLeft{\rulenamesplit{Left}}
\newcommand\rulenamesplitRight{\rulenamesplit{Right}}
\newcommand{\rulenametypeConst}{\rulenametype{Const}}
\newcommand{\rulenametypeVar}{\rulenametype{Var}}
\newcommand{\rulenametypePoly}{\rulenametype{TAbs}}
\newcommand{\rulenametypeTApp}{\rulenametype{TApp}}
\newcommand{\rulenametypeAbs}{\rulenametype{Abs}}
\newcommand{\rulenametypeApp}{\rulenametype{App}}
\newcommand{\rulenametypeRcd}{\rulenametype{Record}}
\newcommand{\rulenametypeLet}{\rulenametype{RcdElim}}
\newcommand{\rulenametypeUnitElim}{\rulenametype{UnitElim}}
\newcommand{\rulenametypeVariant}{\rulenametype{Variant}}
\newcommand{\rulenametypeCase}{\rulenametype{Case}}
\newcommand{\rulenametypeNew}{\rulenametype{New}}
\newcommand{\rulenametypeSel}{\rulenametype{Sel}}
\newcommand{\rulenametypeMatch}{\rulenametype{Match}}
\newcommand{\rulenametypeEq}{\rulenametype{Eq}}
\newcommand{\rulenameprocExp}{\rulenameproc{Exp}}
\newcommand{\rulenameprocPar}{\rulenameproc{Par}}
\newcommand{\rulenameprocNew}{\rulenameproc{New}}
\newcommand{\rulenameexpredApp}{\rulenameexpred{App}}
\newcommand{\rulenameexpredLet}{\rulenameexpred{RcdElim}}
\newcommand{\rulenameexpredUnitElim}{\rulenameexpred{UnitElim}}
\newcommand{\rulenameexpredTApp}{\rulenameexpred{TApp}}
\newcommand{\rulenameexpredCase}{\rulenameexpred{Case}}
\newcommand{\rulenameexpredCtx}{\rulenameexpred{Ctx}}
\newcommand{\rulenameprocredExp}{\rulenameprocred{Exp}}
\newcommand{\rulenameprocredFork}{\rulenameprocred{Fork}}
\newcommand{\rulenameprocredNew}{\rulenameprocred{New}}
\newcommand{\rulenameprocredMsg}{\rulenameprocred{Com}}
\newcommand{\rulenameprocredCh}{\rulenameprocred{Ch}}
\newcommand{\rulenameprocredPar}{\rulenameprocred{Par}}
\newcommand{\rulenameprocredBind}{\rulenameprocred{Bind}}
\newcommand{\rulenameprocredCong}{\rulenameprocred{Cong}}
\newcommand{\rulenamealgkindType}{\rulenamealgkind{Type}}
\newcommand{\rulenamealgkindTypeAgainst}{\rulenamealgkind{Type-Against}}
\newcommand{\rulenamealgkindUnit}{\rulenamealgkind{Unit}}
\newcommand{\rulenamealgkindSkip}{\rulenamealgkind{Skip}}
\newcommand{\rulenamealgkindVar}{\rulenamealgkind{Var}}
\newcommand{\rulenamealgkindMsg}{\rulenamealgkind{Msg}}
\newcommand{\rulenamealgkindCh}{\rulenamealgkind{Ch}}
\newcommand{\rulenamealgkindSeq}{\rulenamealgkind{Seq}}
\newcommand{\rulenamealgkindArrow}{\rulenamealgkind{Arrow}}
\newcommand{\rulenamealgkindRcd}{\rulenamealgkind{Record}}
\newcommand{\rulenamealgkindData}{\rulenamealgkind{Variant}}
\newcommand{\rulenamealgkindRec}{\rulenamealgkind{Rec}}
\newcommand{\rulenamealgkindPoly}{\rulenamealgkind{TAbs}}
\newcommand{\rulenamealgcheckAgainst}{\rulenamealgkind{Against}}
\newcommand{\rulenamealgtypeConst}{\rulenamealgtype{Const}}
\newcommand{\rulenamealgtypeVarLin}{\rulenamealgtype{LinVar}}
\newcommand{\rulenamealgtypeVarUn}{\rulenamealgtype{UnVar}}
\newcommand{\rulenamealgtypeAbsLin}{\rulenamealgtype{LinAbs}}
\newcommand{\rulenamealgtypeAbsUn}{\rulenamealgtype{UnAbs}}
\newcommand{\rulenamealgtypeApp}{\rulenamealgtype{App}}
\newcommand{\rulenamealgtypeTAbs}{\rulenamealgtype{TAbs}}
\newcommand{\rulenamealgtypeTApp}{\rulenamealgtype{TApp}}
\newcommand{\rulenamealgtypeRcd}{\rulenamealgtype{Record}}
\newcommand{\rulenamealgtypeLet}{\rulenamealgtype{Proj}}
\newcommand{\rulenamealgtypeUnitElim}{\rulenamealgtype{UnitElim}}
\newcommand{\rulenamealgtypeVariant}{\rulenamealgtype{Variant}}
\newcommand{\rulenamealgtypeCase}{\rulenamealgtype{Case}}
\newcommand{\rulenamealgtypeNew}{\rulenamealgtype{New}}
\newcommand{\rulenamealgtypeSel}{\rulenamealgtype{Sel}}
\newcommand{\rulenamealgtypeMatch}{\rulenamealgtype{Match}}
\newcommand{\rulenameAlgtypeEq}{\rulenamealgtype{Eq}}
\newcommand{\grmeq}{\; ::= \;\;}
\newcommand{\grmor}{\;\mid\;}
\newcommand{\declrel}[2]{\emph{#1}\hfill\fbox{{#2}}}
\newcommand{\decltworel}[3]{\emph{#1}\hfill\fbox{{#2}}\quad{\fbox{{#3}}}}
\newcommand{\declthreerel}[4]{\emph{#1}\hfill\fbox{#2}\quad\fbox{#3}\quad\fbox{#4}}
\newcommand{\subs}[3]{{#3}[{#1}/{#2}]}
\newcommand{\dual}[1]{\overline{#1}}
\newcommand{\join}{\sqcup}
\newcommand\unravel[1]{\operatorname{unr}(#1)} 
\newcommand\unravelSubs[2][\mathsf{id}]{\operatorname{unr}(#2)} 
\newcommand{\synthetise}{\Rightarrow}
\newcommand{\checkagainst}{\Leftarrow}
\newcommand{\reduces}{\rightarrow}
\newcommand\TypeEquiv{\simeq}
\newcommand\STypeEquiv{\TypeEquiv_s} 
\newcommand{\BPAEquiv}[1][\empty]{\approx_{#1}}
\newcommand{\subt}{<:}
\newcommand{\subk}{\subt}
\newcommand{\LTSderives}[1][\lambda]{\stackrel{#1}{\longrightarrow}}
\newcommand{\typeof}{\operatorname{typeof}}
\newcommand{\subterms}{\operatorname{sub}}
\newcommand{\tvarset}{\mathcal A}
\newcommand{\geqs}{\Sigma} 
\newcommand{\Out}{\text{out}}
\newcommand{\In}{\text{in}}
\newcommand{\judgementlabel}[1]{\mathsf{#1}} 
\newcommand{\twocontexts}[2]{{#1} \mid {#2}}
\newcommand{\judgement}[2]{{#1} \: \judgementlabel{#2}}
\newcommand{\judgementctx}[3]{{#1} \vdash \judgement{#2}{#3}}
\newcommand{\judgementrel}[3]{{#1} \; {#2} \; {#3}}
\newcommand{\judgementrelctx}[4]{{#1} \vdash \judgementrel{#2}{#3}{#4}}
\newcommand{\judgementrelbictx}[5]{\judgementrelctx{\twocontexts{#1}{#2}}{#3}{#4}{#5}}
\newcommand{\isContr}[3][\tvarset]{\judgementrelctx{#1}{#2}{.}{#3}}
\newcommand{\isNotContr}[3][\tvarset]{{#1} \nvdash \judgementrel{#2}{.}{#3}}
\newcommand{\isDone}[2][\tvarset]{\judgementctx{\empty}{#2}{\checkmark}}
\newcommand{\isNotDone}[2][\tvarset]{{\empty} \nvdash \judgement{#2}{\checkmark}}
\newcommand{\istype}[3]{\judgementrelctx{#1}{#2}{:}{#3}} 
\newcommand{\isType}[4]{\judgementrelbictx{#1}{#2}{#3}{:}{#4}} 
\newcommand{\isExpr}[4][\Delta]{\judgementrelbictx{#1}{#2}{#3}{:}{#4}}
\newcommand{\algKindOut}[3][\Delta]{\judgementrelctx{#1}{#2}{\synthetise}{#3}} 
\newcommand{\algkindout}[4][\tvarset]{\judgementrelbictx{#1}{#2}{#3}{\synthetise}{#4}} 
\newcommand{\ctxdiff}[4][\Delta]{\judgementrelctx{#1}{#2\div#3}{\synthetise}{#4}}
\newcommand{\algKindIn}[3][\Delta]{\judgementrelctx{#1}{#2}{\checkagainst}{#3}}
\newcommand{\algkindin}[4][\tvarset]{\judgementrelbictx{#1}{#2}{#3}{\checkagainst}{#4}}
\newcommand{\isSubkind}[2]{\judgementrel{#1}{\subk}{#2}}
\newcommand{\algtypeout}[5][\Delta]{\judgementrelbictx{#1}{#2}{#3}{\synthetise}{{#4}\mid{#5}}}
\newcommand{\algtypeoutnorm}[5][\Delta]{\judgementrelbictx{#1}{#2}{#3}{\synthetise\Downarrow}{{#4}\mid{#5}}}
\newcommand{\algtypein}[5][\Delta]{\judgementrelbictx{#1}{#2}{#3:{#4}}{\synthetise}{#5}}
\newcommand{\algequivin}[3][\Delta]{\isequiv[\Empty]{#1}{#2}{#3}{\kindtl}}
\newcommand{\normalisation}[2]{\judgementrel{#1}{\Downarrow}{#2}}
\newcommand\isequiv[5][\Theta]{\judgementrelbictx{#1}{#2}{#3}{\TypeEquiv}{#4}:{#5}}
\newcommand\isequivst[2]{\judgementrel{#1}{\STypeEquiv}{#2}}
\newcommand\isequivbpa[3][\empty]{\judgementrel{#2}{\BPAEquiv[#1]}{#3}}
\newcommand{\ctxequiv}[3][\Delta]{\judgementrelctx{#1}{#2}{\TypeEquiv}{#3}}
\newcommand{\append}[2]{\judgementrel{#1}{\mathrel{+{+}}}{#2}}
\newcommand{\isRed}[2]{\judgementrel{#1}{\reduces}{#2}}
\newcommand{\splitctx}[4][\Delta]{\judgementrelctx{#1}{#2}{=}{#3 \circ #4}}
\newcommand{\isproc}[2]{{#1}\vdash{#2}} 
\newcommand{\isCtx}[3][\Delta]{\istype{#1}{#2}{#3}}
\newcommand{\isCUn}[2][\Delta]{\isCtx[#1]{#2}{\kindtu}}
\newtheorem{theorem}{Theorem}
\newtheorem{lemma}[theorem]{Lemma}
\newtheorem{corollary}[theorem]{Corollary}
\newcommand{\transeqs}[1]{\llbracket {#1} \rrbracket}
\newcommand{\transprocs}[1]{\transeqs{#1}}
\newcommand{\metafun}{\operatorname}
\newcommand\dom{\metafun{dom}}
\newcommand\free{\metafun{free}}
\newcommand\subj{\metafun{subj}}
\newcommand\agree{\metafun{agree}}
\newcommand{\ie}{i.e.,\xspace} 
\newcommand{\eg}{e.g.\xspace}  
\newcommand{\etal}{et al.\xspace} 
\newcommand\Small{\small}
\definecolor{darkviolet}{rgb}{0.5,0,0.4}
\definecolor{darkgreen}{rgb}{0,0.4,0.2}
\definecolor{darkblue}{rgb}{0.1,0.1,0.9}
\definecolor{darkgrey}{rgb}{0.5,0.5,0.5}
\definecolor{lightblue}{rgb}{0.4,0.4,1}
\lstdefinestyle{eclipse}{
  breaklines=true,
  basicstyle=\sffamily\Small,
  emphstyle=\color{red}\bfseries,
  keywordstyle=\color{darkviolet}\bfseries,
  commentstyle=\color{darkgreen},
  stringstyle=\color{darkblue},
  numberstyle=\color{darkgrey},
  emphstyle=\color{red},
  showstringspaces=false,
}
\begin{document}

\begin{frontmatter}
  \title{Polymorphic Lambda Calculus with Context-Free Session Types}

  \author[1]{Bernardo Almeida\corref{cor1}}
  \ead{bpdalmeida@fc.ul.pt}
  \author[1]{Andreia Mordido}
  \ead{afmordido@fc.ul.pt}
  \author[2]{Peter Thiemann}
  \ead{thiemann@informatik.uni-freiburg.de}
  \author[1]{Vasco T. Vasconcelos}
  \ead{vmvasconcelos@fc.ul.pt}

  \cortext[cor1]{Corresponding author}
  \address[1]{LASIGE, Faculdade de Ciências, Universidade de Lisboa, Portugal}
  \address[2]{Faculty of Engineering, University of Freiburg, Germany}

\begin{abstract}
  Session types provide a typing discipline for structured communication on
  bidirectional channels. Context-free session types overcome the restriction to
  tail recursive protocols characteristic of conventional session types.
  This extension enables the serialization and deserialization of tree
  structures in a fully type-safe manner.

  We present the theory underlying the language \freest2, which features
  context-free session types in an extension of System~F with linear types and a
  kinding system to distinguish message types, session types, and channel types.
  The system presents metatheoretical challenges which we address:
  contractivity in the presence of polymorphism, a non-trivial equational theory
  on types, and decidability of type equivalence. We also establish standard
  results on typing preservation, progress, and a characterization of erroneous
  processes.
\end{abstract}
\begin{keyword}
Polymorphism \sep Functional programming \sep Session types \sep Context-free types
\end{keyword}

\end{frontmatter}

\section{Introduction}
\label{sec:introduction}

Session types were discovered by Honda as a means to describe structured process
interaction on typed communication
channels~\cite{DBLP:conf/concur/Honda93,DBLP:conf/esop/HondaVK98,DBLP:conf/parle/TakeuchiHK94}.
\changed{Session types allow expressing elaborate protocols on communication
  channels, sharply contrasting with languages such as Concurrent
  ML~\cite{DBLP:conf/pldi/Reppy91} and
  Go~\cite{go,DBLP:journals/pacmpl/GriesemerHKLTTW20} 
  that force channels to carry objects of a common type during the whole
  lifetime of the channel.}


Session types provide detailed protocol specifications by describing message
exchanges and choice points: if $T$ is a type and $U$ and $V$ session types,
then $!T.U$ and $?T.U$ are session types that describe channels where a message
of type $T$ is sent or received, and where $U$ describes the ensuing protocol.
Choices are usually present in labelled form, so that, in the particular case of
a binary choice, $\oplus\{l\colon U, m\colon V\}$ and
$\&\{l\colon U, m\colon V\}$ describe channels selecting or offering labels~$l$
and~$m$ and continuing as~$U$ or~$V$ according to the choice taken. Type $\Endk$
denotes a channel on which no further interaction is possible. It terminates all
session types. Starting from Honda \etal~\cite{DBLP:conf/esop/HondaVK98}, most
works on session types incorporate recursive types for unbounded behaviour: type
$\mu a.T$ introduces a recursive type with a bound type variable $a$ that can
be used to refer to the whole $\mu$-type.
For example, type%
\begin{equation*}
  \mu a.\oplus\!\{\pushl\colon !\intk.a, \popl\colon ?\intk.a, \donel\colon\Endk\}
\end{equation*}
describes a channel providing for three operations $\pushl$, $\popl$, and
$\donel$. Clients that exercise option $\pushl$ must then send an integer
value, after which the protocol goes back to the beginning. Similarly, the
processes that select option $\popl$ must then be ready to receive an integer
value before going back to the beginning. Finally, by exercising option $\donel$ the
client signals protocol completion.

At the other end of the channel a server awaits. Its type is \emph{dual} to that
of clients, namely
$\mu b.\&\!\{\pushl\colon ?\intk.b, \popl\colon !\intk.b, \donel\colon\Endk\}$.
The server offers the three options and behaves according to the option selected
by the client. The duality $\oplus/\&$ and $!/?$ guarantees that communication
does not go wrong: if one partner selects, the other offers, if one side writes,
the other reads, and conversely.

Session types are well suited to document communication protocols and there is a
whole range of variants and extensions that make them amenable to describe
realistic situations, including those featuring multiple
partners~\cite{DBLP:journals/jacm/HondaYC16}, accounting for object-oriented
programming~\cite{DBLP:journals/corr/abs-1205-5344}, web
programming~\cite{lindley17:_light_funct_session_types} or exceptional
behaviour~\cite{DBLP:journals/pacmpl/FowlerLMD19}.
Despite these advances, session types remain limited in their support for
compositionality.
Protocols $U$ and $V$ can only be combined under (internal or external) choice,
where either $U$ or $V$ are used, but not both. Input and output do not qualify
as protocol composition operators for they merely append a simple communication
(input or output) at the head of a type. A sequential composition operator on
session types, as in $U;V$, greatly increases the flexibility in protocol
composition, opening perspectives for describing far richer protocols.

When implementing the client side of the stack protocol just introduced, the
programmer is faced with the classical difficulty of what action to take when
confronted with a $\popl$ operation on an empty stack.
Session types equipped with sequential composition can accurately describe a
stack protocol where the $\popl$ operation is not available on an empty
stack. Rather than writing a single, long doubly recursive type, we factor the
type in two which we introduce by means of two mutually recursive equations. 
Inspired in Padovani~\cite{DBLP:journals/toplas/Padovani19}, the protocol can be
written in \freest~\cite{freest} as follows:
\begin{lstlisting}
type EStack = +{Push: !Int ; Stack ; EStack, Done: Skip}
type Stack  = +{Push: !Int ; Stack ; Stack,  Pop:  ?Int}
\end{lstlisting}
%
$\estackn$ describes an empty stack on which $\pushl$ and $\donel$ operations
are available. A $\pushl$ must be followed by a non-empty stack followed by an
empty stack again. According to the type, stack sessions can only terminate when
the stack is empty. A non-empty stack offers $\pushl$ and $\popl$ operations.
The former must be followed by two non-empty stacks, the first accounts for the
element just pushed, the second for the state the stack was at prior to
$\pushl$. The $\popl$ operation returns the value at the top of the stack.

The syntactic changes with respect to conventional session types are mild: input and
output become types of their own, $!T$ and $?T$, irrespective of the
continuation, if any. Prefixes give way to sequential composition. In the
process, we rename type $\Endk$ to $\skipk$ for the new type behaves differently
from its conventional counterpart. Choices remain as they were. Then, what once
was $!\intk.\Endk$ is now simply $!\intk$ and what one once wrote as
$!\intk.?\boolk.\Endk$ can now be written as $!\intk;?\boolk$.

Session types arose in the scope of process calculi, the $\pi$-calculus in
particular~\cite{DBLP:journals/iandc/MilnerPW92a}. They however fit nicely with
strongly typed functional languages, as shown by Gay
\etal~\cite{DBLP:journals/jfp/GayV10,DBLP:journals/eatcs/Vasconcelos11,DBLP:conf/concur/VasconcelosRG04},
and this is the trend we follow in this paper.
Suppose that we are to program a client that must interact with a stack via
a channel end \lstinline|c| of type $\estackn$. The code can be written in
\freest as follows,
\begin{lstlisting}[numbers=right]
  let c = select Push c in let     c  = send  5 c in
  let c = select Pop  c in let (x, c) = receive c in
  let c = select Pop  c in let (y, c) = receive c in
\end{lstlisting}
where \lstinline|select| chooses a branch on a given channel and returns a
channel where to continue the interaction, \lstinline|send| sends a value on a
given channel and again returns the channel where to continue the interaction,
and \lstinline|receive| returns a pair composed of the value read from the
channel and the continuation channel. We thus see that all session operations
return a continuation channel; we identify them all by the same identifier
\lstinline|c| for simplicity.
The code does not contain any syntax for unfolding the recursive
session type. It is customary for session type systems to use
equi-recursion which equates a recursive type with its unfolding.
Running the above code through our interpreter~\cite{freest} on a context that
assigns type \lstinline|EStack| to identifier \lstinline|c|, we get
\begin{verbatim}
stack.fst:3:11: error:
    Branch Pop not present in external choice type EStack
\end{verbatim}

Type \lstinline|EStack| (or \lstinline|Stack| for that matter) cannot be written
with conventional session types. Almeida
\etal~\cite{DBLP:conf/tacas/AlmeidaMV20} translate context-free session types to
grammars for the purpose of deciding type equivalence.
%
In contrast, the language described by traditional session types is
regular. More precisely, taking infinite executions into account, each
traditional session type can be related to the union of a regular language and a
$\omega$-regular language that describe the finite and infinite sequences of
communication actions admitted by the type.

We now look at the code for the server side. We need two mutually recursive
functions, one to handle the empty stack, the other for the non empty case. The
code closely follows the types:
\begin{lstlisting}[numbers=right]
eStack c = match c with
  { Push c -> let (x, c) = receive c in eStack (stack x c)
  , Done  c -> c
  }
stack x c = match c with
  { Push c -> let (y, c) = receive c in stack x (stack y c)
  , Pop  c -> send x c
  }
\end{lstlisting}
The \lstinline|match| expression branches according to the label selected by the
client. Next we focus on the types for the two functions. They both expect and
return channels. Function \lstinline|eStack| consumes an \lstinline|EStack|
channel to completion, so that its type must be \lstinline|EStack -> Skip| when
called for the first time. Similarly, the type of function \lstinline|stack| is
\lstinline|Int -> Stack -> Skip| when called for the first time. But then the
type of the call \lstinline|(stack x c)| in line 2 is \lstinline|Skip| and thus
its value cannot be used as a parameter to function \lstinline|eStack| that
expects an \lstinline|EStack|.
Consequently programming in \freest requires support for polymorphism, and in
particular for \emph{polymorphic recursion}. Both functions must be polymorphic in
the continuation, so that they may accommodate the top-level and the recursive
calls equally. The types for the functions are as follows.
\begin{lstlisting}
eStack : foralla:SL . EStack;a -> a
 stack : foralla:SL . Int -> Stack;a -> a
\end{lstlisting}

The polymorphic types above abstract over a \emph{linear session} type. \freest
distinguishes three base kinds: functional (\lstinline|T|), session
(\lstinline|S|) and message (\lstinline|M|) in order to control type formation
in the presence of polymorphism. For example, only \lstinline|S| types can be
used in sequential composition, and only \lstinline|M| types can be used in
message exchanges (\lstinline|send| or \lstinline|receive|). To obtain a kind,
we complement base kinds with their multiplicity: \lstinline|L| for linear and
\lstinline|U| for unrestricted. For example, \lstinline|Skip : SU| and
\lstinline|U;V : SL| if both \lstinline|U| and \lstinline|V| are session types.
Kinding comes equipped with a partial order that allows both \lstinline|M| and
\lstinline|S| types to be viewed as \lstinline|T|, and \lstinline|U| as
\lstinline|L|, so that \lstinline|Skip : TL| if needed. We often omit the top
kind, that is \lstinline|TL|, in examples.



\paragraph{Contributions}

\begin{itemize}
\item We introduce context-free session types that extend the expressiveness of
  regular session types, allowing type-safe implementation of algorithms beyond
  the reach of regular sessions.
\item We propose a first-order kinding system that distinguishes messages from
  sessions from functional types as well as linear from unrestricted types,
  and a notion of contractiveness for recursive types that takes into account
  types equivalent to \lstinline|Skip| and ensures that substitution is a total
  function.
\item We formalise the core of \freest as a functional language \fmusession, a
  polymorphic linear language with equi-recursive types, records and variants,
  term-level type abstraction and application, channel and thread creation, and
  session communication primitives governed by context-free session types; we
  show soundness, absence of runtime errors of the language as well as progress
  for single-threaded programs.
\item We show that type equivalence is decidable in the presence of
  context-free sessions; we present a bidirectional type checking algorithm and prove it correct
  with respect to the declarative system.
\end{itemize}

All these ideas are embodied in a freely available interpreter~\cite{freest}.
This paper polishes and expands Thiemann and
Vasconcelos~\cite{DBLP:conf/icfp/ThiemannV16}. The two main novelties are
impredicative (System F) polymorphism instead of predicative
(Damas-Milner~\cite{DBLP:conf/popl/DamasM82}) and algorithmic type checking.
%
The original paper includes an embedding of a functional language for regular
session types~\cite{DBLP:journals/jfp/GayV10}; the embedding is still valid in
the language of this paper even if we decided not to discuss it.

\paragraph{Outline}

The rest of the paper is organised as follows. \Cref{sec:motivation} discusses
the overall design of \freest and explains the requirements to the metatheory
with examples. \Cref{sec:types} introduces the type language and the notions of
type equivalence and duality. \Cref{sec:processes} introduces the 
process language and proves type preservation, absence of runtime errors and
progress. \Cref{sec:type-equiv-decidable} shows that type equivalence is
decidable. \Cref{sec:algorithmic-typing} presents algorithmic type checking
and proves its correctness with respect to the declarative system.
\Cref{sec:related} discusses related work and \cref{sec:conclusion} concludes
the paper and points to future work.


\section{\fmusession\ in Action}
\label{sec:motivation}

Our language for polymorphic context-free session types is called \fmusession.
It is based on $F^\mu$, System F with equi-recursive types. On top of this
system we add multi-threading and communication based on context-free session
types. Extending polymorphism to session types requires an appropriate kind
structure. We freely use {\freest} syntax in our examples, deferring the
introduction of the formal syntax of {\fmusession} to \cref{sec:types}.

\subsection{Polymorphic Session Types}
\label{sec:polym-sess-types}

Many session type systems model their communication
operations with type-indexed families of constants. For instance,
Gay and Vasconcelos~\cite[Fig.~9]{DBLP:journals/jfp/GayV10} specify
informal \emph{typing schemas for constants}, \eg
${\sendk} \colon T
\to !T.S \to S$ where $S$ may be instantiated with any session type
and $T$ with any arbitrary type. This design enables the use of
the $\sendk$ operation at different types, but it is somewhat
restrictive.  Formally, we would have to make the indices explicit as
in $\sendk_{S,T}$, but more importantly this approach does not allow
us to abstracting over operations that send (or receive) data. 
For example, suppose we want to write a function that sends all data encrypted:
\begin{lstlisting}
sendEncrypted : foralla b . (a -> Int) -> a -> (!Int;b) -> b
sendEncrypted encrypt x c = send (encrypt x) c
\end{lstlisting}
It takes an encryption function that encodes a value of type
\lstinline{a} into an integer, a value of type \lstinline{a}, and a
channel on which we can send an integer. It returns the continuation
of the channel.
This function cannot be written with a polymorphic type in most
previous session-type systems
(exceptions include lightweight functional session 
types~\cite{lindley17:_light_funct_session_types}
and nested session types~\cite{DBLP:conf/esop/DasDMP21,das2021subtyping}).
In {\fmusession}, we define the sending and receiving operations as
polymorphic constants. In a first approximation, we might consider the
following (incomplete) types:
\begin{align*}
  \sendk & : \forall a.\,\unfunt a {\forall b.\,\unfunt{\semit{\sout a} b}{b}}
  \\
  \receivek & : \forall a.\forall b. \, \unfunt{\semit{\sint a} b}{\oldpairt a b}
\end{align*}

The $\sendk$ operation takes a value of type $a$, a channel of type
$\semit{\sout a} b$ (send $a$ and then continue as $b$), and returns a channel
of type $b$. Similarly, $\receivek$ takes a channel of type $\semit{\sint a}b$
(receive $a$ and continue as $b$) and returns a pair of the received value of
type $a$ and the remaining channel of type $b$. This typing suffers from
several shortcomings. First, the type $b$ must be a session type in both cases.
Instantiating $b$ with a record, a variant or a function type would result
in an ill formed type. Second, to keep typing decidable, our system restricts
the type $a$ to any type that does \emph{not} contain a session. Third, channel
references must be handled in a linear fashion, which is also not reflected in
the proposed signature, yet.

Our solution is to classify types with suitable kinds.
We identify three base kinds, together with their linearity variants:
$  \kindmu$ and $\kindml$ stand for types that can be exchanged on channels;
$\kindsu$ and $\kindsl$ refer to session types;
$  \kindtu$ and $\kindtl$ stand for arbitrary types.

Adding kinds to the types of  $\sendk$ and $\receivek$ we get:
%
\begin{align*}
  \sendk & : \forallt a \kindml {\funt[\unk] a {\forallt b\kindsl {{\funt[\link]{\semit{\sout a}b}{b}}}}}
  \\
  \receivek & : \forallt a \kindml {\forallt b \kindsl {\funt[\unk]{\semit{\sint a} b}{\oldpairt a b}}}
\end{align*}

What types are classified as $\kindm$ is limited by type equivalence alone. For
regular session types, $\kindm$ coincides with $\kindt$. For context free
session types we restrict $\kindm$ to base types $\unitt[\un]$ and
$\unitt[\lin]$ 
but this could be easily extended to other base types, records and variants, for
example.

We annotate the last function arrow in the type of $\sendk$ with $\link$ to
cater for the possibility that the first argument $x:a$ is in fact
linear. In that case, the closure obtained by partial application
$\sendk\ x$ has type ${\funt[\link]{\semit{\sout a}b}{b}}$ to indicate that it
must be used exactly once (i.e., linearly). The remaining function
arrows are unrestricted as indicated by the subscript $\to_{\unk}$.

There is no equally pleasing way to enable abstraction over the
operations to offer and accept branching in a session type. Hence,
these operations are still hardwired as expressions with parametric
typing rules.

We conclude with the implementation of encrypted sending and receiving
in \freest. Syntactically, \lstinline{forall} introduces a universal
type, \lstinline{ML} stands for $\kindml$, \lstinline{SL} for
$\kindsl$, and we use the Haskell notation for pair types.
\begin{lstlisting}
sendEncrypted :
    foralla:ML . forallb:SL . (a -> Int) -> a -> (!Int; b) -> b
sendEncrypted encrypt x c = send (encrypt x) c

recvEncrypted :
    foralla:ML . forallb:SL . (Int -> a) -> (?Int; b) -> (a, b)
recvEncrypted decrypt c =
    let (i, c) = receive c in
    (decrypt i, c)
\end{lstlisting}
As an example of the use of these abstractions, we implement the addition server
with encryption. It takes an encoding function, a decoding function, and an
encoded channel to produce \lstinline{Skip}, i.e., a depleted channel. As
customary in System F, we must provide explicit type arguments (in square
brackets) to the polymorphic operations for encrypted sending and receiving. The
typechecker has special support to infer type arguments for $\sendk$ and
$\receivek$. We omit the corresponding client implementation, which is
straightforward.
\begin{lstlisting}
server : (Int->Int) -> (Int->Int) -> (?Int;?Int;!Int) -> Skip
server enc dec c =
    let (x, c) = recvEncrypted [Int, ?Int;!Int] dec c in
    let (y, c) = recvEncrypted [Int, !Int] dec c in
    sendEncrypted [Int, Skip] enc (x + y) c
\end{lstlisting}

%



\subsection{Parameterization over Sub-Protocols}
\label{sec:param-over-sub}

As a more advanced example of polymorphism we consider
parameterization over fragments of a protocol. This facility is not
available in the standard, tail-recursive formulations of session
types. It relies crucially on the availability of the composition
operator ``;'' that forms the sequential composition of two
protocols. Using composition we can write parametric wrapper functions
to implement
protocols such as \lstinline_!Auth; &{Ok: s; ?Acct, Denied: Skip}_, where
the parameter \lstinline{s} is the session type of the actual payload
protocol of a client.
For example, the client wrapper might send an authentication token of type
\lstinline{Auth}. If the authentication is accepted, then the actual
payload protocol \lstinline{s} proceeds and accounting information of
type \lstinline{Acct} is attached. Otherwise, execution of the payload protocol
is skipped.

An implementation of the client wrapper would have to parameterise
over the client function implementing the payload protocol. The proper
type for this wrapper involves higher-rank polymorphism to
parameterise over the type of the continuation. In this example, we
hardcode the password for authentication as a constant in the code.
\begin{lstlisting}
password : Auth
password = ...

clientWrapper :
    foralla b:MU .
    foralls cont:SL .
    (foralld:SL . a -> s;d -> (b, d)) ->
    a -> b ->
    !Auth; &{ Ok: s;?Acct, Denied: Skip}; cont ->
    (b, cont)
clientWrapper proto init def c =
    let c = send password c in
    match c with {
        Ok c -> let (ret, c) = proto [?Int; cont] init c in
                let (acc, c) = receive c in
                (ret, c),
        Denied c -> (def, c)
    }
\end{lstlisting}
The first parameter is a function that processes the payload
protocol. It is polymorphic in its continuation parameter \lstinline{d} so that it
could be invoked in different contexts.  It takes an initial parameter
of type \lstinline{a}, transforms a channel of type \lstinline{s;d},
and returns a result of type \lstinline{b} paired with the
continuation \lstinline{d}.
Then there are parameters for the initial parameter of the payload and
to provide a default value of type \lstinline{b}. Finally, the channel
with the augmented protocol is transformed into a pair of the result
of the payload and the continuation channel.

The client wrapper can be used with multiple instantiations for the
protocol. As an example, we consider a recursive protocol to send a
list of integer items.
\begin{lstlisting}
type ListC:SL = +{Item: !Int; ListC, Stop: Skip}
\end{lstlisting}
We write a client for this protocol where the initial value and the output are
integers. The infix operator
\lstinline|& : foralla:TL . forallb:TL . a -> (a -> b) -> b| represents reverse
function application; we use it to hide the continuation channel in the code.
\begin{lstlisting}
clientPayload : foralld:SL . Int -> ListC;d -> (Int, d)
clientPayload init c =
    if init == 0 then
        (0, select Stop c)
    else
        select Item c &
        send init &
        clientPayload [d] (init - 1)
\end{lstlisting}
Putting everything together, we choose arbitrary initialisation and
default parameters (111 and 999) to  obtain
\begin{lstlisting}
clientFull :
    !Int;&{Ok: ListC;?Int, Denied: Skip} -> (Int, Skip)
clientFull =
    clientWrapper [Int,Int,ListC,Skip] clientPayload 111 999
\end{lstlisting}
It is crucial that \lstinline|clientPayload| is polymorphic in the continuation.

\subsection{Tree-structured Transmission}
\label{sec:tree-struct-transm}

The next example fully exploits the freedom gained with context-free
session types. We develop code for the type-safe serialisation and
deserialisation of a binary tree. Such trees may be defined by the
following algebraic datatype definition.
\begin{lstlisting}
data Tree = Leaf | Node Tree Int Tree
\end{lstlisting}
To transmit such a tree faithfully requires a context-free structure
as embodied by the following protocol.
\begin{lstlisting}
type TreeChannel:SL = +{
  Leaf: Skip,
  Node: TreeChannel ; !Int ; TreeChannel
 }
\end{lstlisting}
A \lstinline{TreeChannel} either selects a \lstinline{Leaf} and then nothing, or
it selects a \lstinline{Node}, sends the left subtree, followed by the root and
then the right subtree. While the tail-recursive occurrence of
\lstinline{TreeChannel} (that for the right subtree) is acceptable in a regular
session type, the first occurrence of \lstinline{TreeChannel} (that for the left
subtree) is not. In \freest it is straightforward to implement a function to
write a \lstinline{Tree} on a \lstinline{TreeChannel}.
\begin{lstlisting}[numbers=right]
write : foralla:SL . Tree -> TreeChannel;a -> a
write t c =
  case t of {
    Leaf ->
      select Leaf c,
    Node t1 x t2 ->
      select Node c &
      write [TreeChannel;a] t1 &
      send x &
      write [a] t2
  }
\end{lstlisting}
Both recursive invocations of \lstinline{write} take the type of  their
continuation session as a parameter. The continuation of the first
\lstinline{write} consists of the \lstinline{TreeChannel} generated by
the second \lstinline{write} and the enclosing continuation
\lstinline{a}. The second \lstinline{write} occurs in tail position
and its continuation corresponds with the enclosing continuation \lstinline{a}.

It is similarly straightforward to read a \lstinline{Tree} from a
session of type \lstinline{dualof TreeChannel}.
Instead of reifying a
tree, we show a function that computes the sum of all values in a tree
directly from the serialised format on the channel.
\begin{lstlisting}
treeSum : foralla:SL . dualof TreeChannel; a -> (Int, a)
treeSum c =
  match c with {
    Leaf c ->
      (0, c),
    Node c ->
      let (sl, c) = treeSum [dualof TreeChannel;a] c in
      let (x, c) = receive c in
      let (sr, c) = treeSum [a] c in
      (sl + x + sr, c)
  }
\end{lstlisting}

The examples of \lstinline{write} and \lstinline{treeSum} demonstrate
the requirement for polymorphism when handling proper context-free
sessions. The recursive invocations of 
\lstinline{write} and \lstinline{treeSum} require different instantiations of the session
continuation \lstinline{TreeChannel;a}
(\lstinline{dualof TreeChannel;a}, respectively) and \lstinline{a}, which
are evidence for polymorphic recursion.

A closer look at the typing of \lstinline{write} reveals further requirements on
the type structure of {\fmusession}, which features non-trivial identities
beyond the ones imposed by equi-recursion (see \cref{lemma:laws}). The initial
type of \lstinline{c} in line~2 is \lstinline{TreeChannel;a}, which is a
recursive type followed by \lstinline{a}. Applying \lstinline{select Leaf} to
\lstinline{c} (in line~5) requires unrolling the recursive type in
\lstinline{TreeChannel; a} to
\lstinline|+{Leaf: Skip, Node: TreeChannel;!Int;TreeChannel}; a|, but the
top-level type is still a sequence. Hence, we must allow choice to distribute
over sequence as in
\lstinline|c : +{Leaf: Skip;a, Node: TreeChannel;!Int;TreeChannel;a}|. With this
argument type, the inferred type of \lstinline{select Leaf c} is
\lstinline{Skip;a}, which is still different from the expected type
\lstinline{a}. As \lstinline{Skip} has no effect, we consider it a left and
right unit of the sequencing operator so that \lstinline{Skip;a} is equivalent
to \lstinline{a}, which finally fits the expected type. Typing the
\lstinline{Node} branch of the code yields no further insights.
With the same approach one can easily write type-safe code for serialising and
deserialising JSON data. 


\section{Types}
\label{sec:types}

This section introduces the notion of types, type equivalence and session type
duality.

\subsection{Types and Type Formation}


\begin{figure}[t!]
  \begin{align*}
    \text{Multiplicity} &&
      m \grmeq & \un \grmor \lin
    \\
    \text{Basic kinds} &&
      \prekind \grmeq & \kindm \grmor \kinds \grmor \kindt
    \\
    \text{Kinds} &&
      \kind \grmeq& \prekind^m
    \\
    \text{Polarity} &&
      \msgt{} \grmeq& {}! \grmor {}? 
    \\
    \text{View} &&
      \star \grmeq& \oplus \grmor {}\& 
    \\
    \text{Types} &&
      T \grmeq& \skipt \grmor \msgt T \grmor \choicet{\recordt \ell T L} \grmor T;T
    \\
    &&& \unitt \grmor  \funt TT \grmor \recordt \ell T L \grmor \variantt \ell T L
    \\
    &&& \forallt a \kind T \grmor \rect a \kind T \grmor a
    \\
    \text{Kinding contexts} &&
      \Delta \grmeq & \Empty \grmor \Delta, a\colon \kind    
  \end{align*}
  \caption{Syntax of kinds, type and kinding contexts}
  \label{fig:types}
\end{figure}


We rely on a couple of base sets: \emph{type variables}, denoted by $a,b,c$, and
\emph{labels}, denoted by $k,\ell$.
The syntax of kinds and types is in \cref{fig:types}.
We identify three basic kinds, together with their
multiplicity variants.
\begin{align*}
  \kindmu, \kindml &\quad \text{for types that can be exchanged on channels}
  \\
  \kindsu, \kindsl &\quad \text{for session types}
  \\
  \kindtu, \kindtl &\quad \text{for arbitrary types}
\end{align*}
Multiplicities describe the number of times a value can be used: exactly once
for $\lin$ and zero or more times for $\un$.

Session types include the terminated type $\skipt$, output messages $\sout T$
and input messages $\sint T$, internal labelled choices
$\ichoicet{\recordt \ell T L}$ and external choices
$\echoicet{\recordt \ell T L}$, and the sequential composition $T;U$.

Functional types include \changed{the unrestricted and linear units,
  $\unitk[\un]$ and $\unitk[\lin]$}, unrestricted and linear functions,
$\funt[\un]TU$ and $\funt[\lin]TU$, records $\recordt \ell T L$, and variants
$\variantt \ell T L$. The unit \changed{types are representative of all base
  types including boolean and integer used in examples. The linear multiplicity,
  $\lin$, assigned to basic types enables to specify linear assets to be
  exchanged on channels (e.g., tokens).} The multiplicity in the function type
constrains the number of times a function can be used: exactly once for
$\rightarrow_\lin$ case and zero or more times for $\rightarrow_\un$. A function
capturing in its body a free linear value must itself be linear. Functions
receiving linear values may still be reused (hence $\un$) if they contain no
free linear variables. Unrestricted functions are abbreviated in examples to
$\unfunt TU$.
\changed{The pair type $T\times U$ used in examples abbreviates the record
  $\pairt TU$.} Recursive types $\rect a \kind T$ are both session and
functional. Type variable $a$ denotes a type variable in a recursive type or a
polymorphic variable in a universal type.



We say that a \emph{change of bound variables} in $T$ is the
replacement of a part $\rect{a}{\kind}{U}$ of $T$ by $\rect{b}{\kind}{U'}$
where $b$ does not occur (at all) in $U$ and where $U'$ is obtained from $U$ by
replacing all free occurrences of $a$ by $b$.
We say that two types are $\alpha$-equivalent if one can be obtained
from the other by a series of changes of bound variables.
This paper adopts the \emph{$\alpha$-identification convention} whereby
processes, terms and types are always identified up to $\alpha$-equivalence.


\paragraph{Terminated Types}

\begin{figure}[t!]
  \declrel{Type $T$ is terminated}{$\isDone T$}
  \begin{gather*}
    \axiom{\rulenametermSkip}{\isDone \skipt}
    \qquad
    \infrule{\rulenametermSeq}{\isDone{T} \\ \isDone{U}}{\isDone{\semit TU}}
    \qquad
    \infrule{\rulenametermRec}{\isDone[\tvarset,a]{T}}{\isDone{\rect{a}{\kinds^m}{T}}}
  \end{gather*}
  \caption{The is-terminated predicate}
  \label{fig:terminated}
\end{figure}

Type termination in \cref{fig:terminated} applies to session types alone.
Intuitively a type is terminated if it does not exhibit a communication action.
Terminated types are composed of $\skipt$, sequential composition and recursion.
We say that type $T$ is \emph{terminated} when judgement $\isDone{T}$ holds.


\paragraph{Contractive Types}

\begin{figure}[t]
  \declrel{Type $T$ is contractive on type variable $a$}{$\isContr aT$}
  \begin{gather*}
    \infrule{\rulenamecontrSeqTwo}
    	{\isDone{T} \\ \isContr aU}
    	{\isContr a {(\semit TU)}}
    \quad
  	\infrule{\rulenamecontrSeqOne}
  		{\isNotDone{T} \\ \isContr aT}
  		{\isContr a {(\semit TU)}}
    \quad
    \infrule{\rulenamecontrPoly}
    {\isContr aT}
    {\isContr{a}{(\forallt b\kind T)}}
    \\
    \infrule{\rulenamecontrRec}
    	{\isContr aT}
    	{\isContr{a}{(\rect b\kind T)}}
    \infrule{\rulenamecontrVar}
    	{b \notin \tvarset,a }
    	{\isContr ab}
    \infrule{\rulenamecontrOther}
    {T =
      \skipt,
      \msgt U,
      \choicet{\recorda \ell : U},
      \unitt,
      \funt UV,
      \recorda \ell : U,
      \variantta \ell U}
     {\isContr aT}
  \end{gather*}

  \caption{Contractivity}
  \label{fig:contractivity}
\end{figure}


Contractivity is adapted from the standard ``no subterms of the form
$\rect a \kind {\rect {a_1}{\kind_1}{\dots \rect{a_n}{\kind_n}a}}$'' to take into
account the nature of the semi-colon operator as the neutral for sequential
composition. 
Following Harper~\cite{DBLP:books/cu/Ha2016}, \emph{contractivity on a given
  type variable} is captured by judgement $\isContr aT$, read as ``type $T$ is
contractive on type variable $a$ over a set of polymorphic variables $\tvarset$''. The
rules are in \cref{fig:contractivity}. 
The set of type variables $\tvarset$ is expected to contain the polymorphic
variables in scope \changed{at the time of the call to the 
contractivity judgement (see~\cref{fig:kinding}, rule \rulenamekindRec).}

Contractivity is defined with respect to a given type variable. For example,
$\rect b \kindsl {\semit ba}$ is contractive on type variable $a$ (under the
empty set of variables) even if $\semit ba$ is not contractive on $b$.
Central to the definition of contractivity is \rulenamecontrVar\ rule. The
contractivity of $\semit ba$ on $a$ depends on $b$ not being a polymorphic
variable: when $b$ is a recursion variable we have
$\isContr[]{a}{\semit ba}$, otherwise, when $b$ is a polymorphic variable, we
have $\isNotContr[b]{a}{\semit ba}$.
Given that polymorphic variables may be replaced by arbitrary types, if $b$
turns out to be replaced by $\skipt$, we are left with $\semit {\skipt}a$ which,
by rule \rulenamecontrSeqOne, is not contractive on $a$, because
$\isNotContr[\emptyset]{a}{a}$.

The set of polymorphic type variables in the judgement allows for the
contractive predicate to be closed under substitution. We note, however, that
the same problem does not arise when a polymorphic type occurs in the body of a
$\mu$-type (e.g., $\rect a \kind {\forallt b {\kind'} T}$). In this case, the
corresponding polymorphic variable is not added to the context, as evidenced
by the rule \rulenamecontrPoly. Contractivity is a concept that concerns the
behavior of $\kinds$-kinded types and a polymorphic type is of kind $\kindt$
(see~\cref{fig:kinding}, rule \rulenamekindPoly). Would $b$ be added to the
context in the \rulenamecontrPoly rule, type
$\rect a \kindtu {\forallt b \kind b}$ would not be well-formed, without any
apparent reason since the body of the $\mu$-type has kind~$\kindt$.


\paragraph{Subkinding}

\begin{figure}[t]
  \declthreerel{Subkinding}{$m \subk m$}{$v \subk v$}{$\kind \subk \kind$}
    \begin{gather*}
    \axiom{}{\un \subk \lin}
    \qquad\qquad
    \axiom{}{\kindm \subk \kindt}
    \quad
    \axiom{}{\kinds \subk \kindt}
    \quad
    \infrule{}{\prekind_1 \subk \prekind_2\\ m_1 \subk m_2}{\prekind_1^{m_1} \subk \prekind_2^{m_2}}
  \end{gather*}
  \caption{Subkinding}
  \label{fig:subkinding}
\end{figure}


\begin{wrapfigure}{r}{0.18\textwidth}
  \begin{tikzpicture}[scale=.67]
    \node (TL) at (0,1) {$\kindtl$};
    \node (TU) at (0,0) {$\kindtu$};
    \node (SL) at (1,0) {$\kindsl$};
    \node (SU) at (1,-1) {$\kindsu$};
    \node (ML) at (-1,0) {$\kindml$};
    \node (MU) at (-1,-1) {$\kindmu$};
    \draw (TL) -- (TU) -- (SU) -- (SL) -- (TL);
    \draw (MU) -- (ML) -- (TL);
    \draw (MU) -- (TU);
  \end{tikzpicture}
\end{wrapfigure}
Given the intended meaning for $\un$ and $\lin$ multiplicities it should be
clear that an unrestricted type can be used where a linear type is sought. On a
similar vein, a value that can be used in a message ($\kindmu$ or $\kindml$) can
be used where an arbitrary type ($\kindtl$) is required, and similarly for a
value of a session type. This gives rise to a notion of subkinding
$\kind \subk \kind'$ governed by the rules in \cref{fig:subkinding} and depicted
by the partial order at the right.


\paragraph{Type Formation}

\begin{figure}[t]
  \declrel{Type formation}{$\istype \Delta T \kind$}
  \begin{gather*}
    \infrule{\rulenamekindType}{\isType {\dom(\Delta)} \Delta T \kind }
    {\istype \Delta T \kind}
  \end{gather*}  
  \declrel{Type formation (inner)}{$\isType \tvarset \Delta T \kind$}
  \begin{gather*}
    \axiom{\rulenamekindSkip}{\isType{\tvarset}{\Delta}{\skipk}{\kindsu}}
    \quad
    \infrule{\rulenamekindMsg}
    	{\isType{\tvarset}{\Delta} T \kindml}
    	{\isType{\tvarset}{\Delta} {\msgt T}\kindsl}
    \quad
    \infrule{\rulenamekindCh}
    	{\isType{\tvarset}{\Delta}{T_\ell}\kindsl \\ \changed{(\forall\ell\in L)}}
    	{\isType{\tvarset}{\Delta}{\choicet{\recordt \ell T L}}{\kindsl}}
    \\
    \infrule{\rulenamekindSeq}
    	{\isType{\tvarset}{\Delta} T \kinds^m
      		\\
      	\isType{\tvarset}{\Delta} U \kinds^m}
      	{\isType{\tvarset}{\Delta}{(T;U)} \kinds^m}
    \quad
    \axiom{\rulenamekindUnit}{\isType{\tvarset}{\Delta}{\unitt}{\changed{\kindm^m}}}
    \\
    \infrule{\rulenamekindArrow}
    	{\isType{\tvarset}{\Delta}{T}{\kind_1}
      	\\
      	\isType{\tvarset}{\Delta}{U}{\kind_2}}
      	{\isType{\tvarset}{\Delta} {\funt{T}{U}}{\kindt^m}}
    \quad
    \infrule{\rulenamekindRcd}
    	{\isType{\tvarset}{\Delta} {T_\ell}{\kindt^{m}} \\ \changed{(\forall\ell\in L)}}
    	{\isType{\tvarset}{\Delta}{\recordt \ell T L}{\kindt^m}}
    \\
    \infrule{\rulenamekindData}
    	{\isType{\tvarset}{\Delta} {T_\ell}{\kindt^{m}} \\ \changed{(\forall\ell\in L)}}
    	{\isType{\tvarset}{\Delta}{\variantt \ell T L}{\kindt^m}}
    \quad
    \infrule{\rulenamekindPoly}
    	{\isType{\tvarset,a}{\Delta,a\colon\kind} T {\kindt^m}}
    	{\isType{\tvarset}{\Delta}{\forallt a \kind T}{\kindt^m}}
    \\
    \infrule{\rulenamekindRec}
     	{\isContr aT
      	 \\
      	 \isType{\tvarset}{\Delta, a\colon\kind} T \kind}
        {\isType{\tvarset}{\Delta} {\rect a\kind T} \kind}
    \quad 
    \infrule{\rulenamekindVar}{a\colon\kind\in\Delta}{\isType{\tvarset}{\Delta} a \kind}
    \quad
	\infrule{\rulenamekindSub}
    	{\isType{\tvarset}{\Delta} T \kind
      	\\
      	\kind \subk \kind'}
      	{\isType{\tvarset}{\Delta} T \kind'}
  \end{gather*}
  \caption{Type formation}
  \label{fig:kinding}
\end{figure}


Equipped with the notions of contractivity and subkinding we may address type
formation. Type formation is captured by judgement $\istype{\Delta} T\kind$ stating
that type $T$ has kind $\kind$ under kinding context $\Delta$. The rules are in
\cref{fig:kinding}.
The term language defined in \cref{sec:processes} may talk about types
containing free polymorphic variables but not free recursion variables. The
latter are collected in kinding context $\Delta$. During type formation new type
variables (both polymorphic and recursion) must be moved to the context. We have
seen how important it is to be able to distinguish the two sorts of type
variables. Therefore, judgement $\istype\Delta T\kind$ for (external) type
formation considers all variables in $\Delta$ to be polymorphic. The rules for
inner type formation, judgement $\isType{\tvarset}{\Delta}{T}{\kind}$, are then
able to distinguish polymorphic from recursion variables: contrast rule
\rulenamekindPoly against \rulenamekindRec. They both move the bind
$a\colon\kind$ to context $\Delta$, but only the former rule moves $a$ to
$\tvarset$. This setup allows us not to impose a syntactic distinction of the
two sorts of type variables, important for programming languages.

The rules classify types into session types ($\kindsu,\kindsl$), message types
($\kindmu,\kindml$) and arbitrary ($\kindtu,\kindtl$) types.
Rule \rulenamekindArrow gives kind $\kindt^m$ to an arrow type $\funt{T}{U}$
regardless of the kinds for types $T$ and $U$. This decouples the number of
times a function can be used (governed by $m$) from the number of times the
argument (and the result) can be used~\cite{DBLP:conf/tldi/MazurakZZ10}. For
simplicity, rule \rulenamekindPoly restricts polymorphism to functional types.
The only $\kindsu$ types are those made of $\skipk$, such as $\skipk$ itself and
$\semit{\skipt}{\skipt}$ (that is, terminated types); all other session types
are linear. Only base types can be transmitted in messages (represented by kind
$\kindml$). Recursive types are required to be contractive (rule
\rulenamekindRec and \cref{fig:contractivity}). 

Rule \rulenamekindSub incorporates subkinding (\cref{fig:subkinding}) into type 
formation.
Using \rulenamekindSub, any type is of kind $\kindtl$ and any session type is of
kind $\kindsl$ (including $\skipk$), and so, in a sense, all types may be viewed
as linear. In the prose we often refer to linear types to mean those types whose
only kind is $\prekind^\lin$, that is, types that cannot be classified as
unrestricted.

Continuing with the examples in the discussion of type termination and
contractivity, a simple derivation concludes that
$\istype{}{\rect{a}{\kindsl}{(\sout\unitt ; \rect{b}{\kindsl}{\semit ab}})}{\kindsl}$.
On the other hand, a polymorphic type that repeatedly performs some
protocol~$a$, namely $\forallt{a}{\kindsl}{\rect{b}{\kindsl}{\semit ab}}$ is not
well formed, because $\isNotContr[a]{b}{\semit ab}$. However the type of
polymorphic streams as seen from the consumer side,
$\forallt{a}{\kindmu}{\rect{b}{\kindsl}{\sint \semit ab}}$, is well formed.


The reader may wonder why we explicitly annotate function types $\funt TU$ with
a multiplicity mark $m$, but not record or variant types. The key difference is
that the linearity of a record or variant type is completely determined by that
of its components (a record with a linear component must be linear). This is not
the case with function types where the multiplicity of the type is independent
of those of its components (an unrestricted function may accept and/or return
linear values). Rather, the multiplicity of the function type depends on those
of the values captured in the function closure whose types are not apparent in
the type. Hence, annotating record/variant types is redundant (and may
even lead to inconsistencies), whereas annotating function types is unavoidable.


\subsection{Bindings and Substitution}

The binding occurrences for type variables $a$ are types $\rect a\kind T$ and
$\forallt a\kind T$. The set $\free(T)$ of free type variables in type $T$ is
defined accordingly, and so is the \emph{capture avoiding substitution} of a
type variable~$a$ by a type~$T$ in a type~$U$, denoted by $\subs TaU$.
Type formation may be subject to weakening and strengthening on non free type
variables. \changed{The following lemmas show that we can discard and insert 
variables into the context (under some assumptions), which turns out to be paramount 
to verify that substitution preserves the good properties of types 
(\cref{lem:type-substitution}).}

\begin{lemma}[Type strengthening]
  \label{lem:type-strengthening}
  Let $a \not \in \free(T)$.
  \begin{enumerate}
  \item\label{it:ts-contr} If $\isContr[\tvarset,a] bT$, then
    $\isContr[\tvarset] bT$.
  \item\label{it:ts-kindenv} If
    $\isType{\tvarset}{\Delta,a\colon \kind'} T\kind$, then
    $\isType{\tvarset}{\Delta} T\kind$.
  \item\label{it:ts} If
    $\isType{\tvarset,a}{\Delta, a\colon \kind'} T\kind$, then
    $\isType{\tvarset}{\Delta} T\kind$.
  \end{enumerate}
\end{lemma}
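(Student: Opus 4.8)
The plan is to prove the three statements in turn, each by rule induction on the derivation it assumes, using the $\alpha$-identification convention throughout to keep the bound type variables occurring in~$T$ distinct from~$a$. This makes $a \notin \free(T)$ descend to the bodies of $\mu$- and $\forall$-types, so that the freeness hypothesis is really only consumed at the variable rules.

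For the first statement, among the rules of \cref{fig:contractivity} only \rulenamecontrVar consults the variable set, and there it appears solely as a freshness condition: a derivation of $\isContr[\tvarset, a]{b}{c}$ witnesses $c \notin \tvarset \cup \{a\} \cup \{b\}$, hence $c \ne a$ (so the excluded case $c = a$ cannot arise) and in particular $c \notin \tvarset \cup \{b\}$, which re-derives $\isContr[\tvarset]{b}{c}$. Rules \rulenamecontrPoly and \rulenamecontrRec leave the variable set unchanged, so the induction hypothesis applies to their premise directly; \rulenamecontrSeqOne and \rulenamecontrSeqTwo recurse into a subterm of~$T$, their is-terminated premises being independent of the set; and \rulenamecontrOther concludes over any set. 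So the hypothesis $a \notin \free(T)$ is in fact not needed here.

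For the second statement I induct on the derivation of $\isType{\tvarset}{\Delta, a\colon\kind'}{T}{\kind}$ using the rules in \cref{fig:kinding}, read with $a\colon\kind'$ allowed to occur anywhere in the kinding context so that context exchange is absorbed into the statement. The decisive case is \rulenamekindVar with $T = c$ and $c\colon\kind \in \Delta, a\colon\kind'$: since $a \notin \free(T) = \{c\}$ we get $c \ne a$, hence $c\colon\kind \in \Delta$, and \rulenamekindVar re-derives the goal over~$\Delta$. In \rulenamekindRec, for $T = \rect c\kind U$, the premise $\isContr[\tvarset]{c}{U}$ does not mention the kinding context and is kept verbatim, while the induction hypothesis applied to $\isType{\tvarset}{\Delta, a\colon\kind', c\colon\kind}{U}{\kind}$ (with $c \ne a$ and $a \notin \free(U)$, both by the $\alpha$-convention) gives $\isType{\tvarset}{\Delta, c\colon\kind}{U}{\kind}$, and \rulenamekindRec concludes. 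Rule \rulenamekindPoly is analogous---enlarging $\tvarset$ by the bound variable is orthogonal to deleting $a\colon\kind'$ from the kinding context---rule \rulenamekindSub is immediate, and every remaining rule merely relays the induction hypothesis to each premise, using $\free(T_\ell) \subseteq \free(T)$ for the $L$-indexed families.

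The third statement follows by the same induction, now on $\isType{\tvarset, a}{\Delta, a\colon\kind'}{T}{\kind}$, deleting $a$ from $\tvarset$ and from $\Delta$ simultaneously. Deleting $a$ from $\tvarset$ causes no trouble, because \rulenamekindVar never reads the polymorphic-variable set and \rulenamekindPoly only enlarges it; deleting $a\colon\kind'$ from the kinding context proceeds exactly as for the second statement. The single new point is in \rulenamekindRec, whose contractivity premise is now $\isContr[\tvarset, a]{c}{U}$: before re-applying the rule I invoke the first statement (with $a \notin \free(U)$) to obtain $\isContr[\tvarset]{c}{U}$, and then apply the induction hypothesis to the kinding premise as above. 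I do not expect a genuine obstacle; the only points needing a little care are the routine context-exchange bookkeeping---handled by the generalised statement---and the disciplined use of the $\alpha$-convention required to carry $a \notin \free(\cdot)$ below the $\mu$- and $\forall$-binders.
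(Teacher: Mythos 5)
Your proposal is correct and follows essentially the same route as the paper: rule induction on each hypothesis, with the only interesting cases being the variable rules (where $a\notin\free(T)$ forces the variable to differ from $a$) and \rulenamekindRec in the third item, where item~1 is invoked to strengthen the contractivity premise before reapplying the rule. Your added observation that item~1 in fact holds without the freeness hypothesis (since $\tvarset$ only appears negatively in \rulenamecontrVar) is accurate, though the paper does not remark on it.
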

\begin{proof}
  By rule induction on the hypothesis.
  We sketch one illustrative case for \cref{it:ts}: rule \rulenamekindRec with
  $T= \rect b\kind U$. Given
  the $\alpha$-identification convention, we assume that $a \neq b$. The
  premises of rule \rulenamekindRec are $\isContr[\tvarset,a] bU$ and
  $\isType{\tvarset,a}{\Delta, a\colon\kind', b\colon\kind} U \kind$. Since
  $a \not \in \free(\rect b\kind U)$ then $a \not\in\free(U)$ and so, by
  induction, we obtain $\isType{\tvarset}{\Delta, b\colon\kind} U \kind$.
  \Cref{it:ts-contr} gives $\isContr[\tvarset]bU$. The result follows from rule
  \rulenamekindRec.
\end{proof}

\begin{lemma}[Type weakening]\ 
  \label{lem:type-weakening}
  \begin{enumerate}
  \item If $\isContr{b}{T}$, then $\isContr[\tvarset,a]{b}{T}$.
  \item\label{it:te-type} If $\isType{\tvarset}{\Delta} T\kind$, then
    $\isType{\tvarset}{\Delta,a\colon \kind} T\kind$ and
    $\isType{\tvarset,a}{\Delta,a\colon \kind} T\kind$.
  \end{enumerate}
\end{lemma}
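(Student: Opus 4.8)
The plan is to prove both items by rule induction on the derivation of the hypothesis, treating item~1 first because item~2 will invoke it: for item~1 I induct on the derivation of $\isContr{b}{T}$, and for item~2 on the derivation of $\isType{\tvarset}{\Delta}{T}{\kind}$, establishing its two conclusions together. Throughout I rely on a freshness assumption for the variable being added, namely $a \notin \free(T)$. In item~2 this comes for free: a well-formed context extension requires $a \notin \dom(\Delta)$, and since \rulenamekindVar is the only rule that types a type variable we have $\free(T) \subseteq \dom(\Delta)$; for item~1 I take it as implicit in the statement, exactly as the companion \cref{lem:type-strengthening} takes $a \notin \free(T)$ as an explicit hypothesis.

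For item~1, every case but one is immediate: the variable set $\tvarset$ is merely carried through the premises of \rulenamecontrSeqOne, \rulenamecontrSeqTwo, \rulenamecontrPoly, \rulenamecontrRec and \rulenamecontrOther (note that \rulenamecontrRec and \rulenamecontrPoly do \emph{not} add the bound variable to $\tvarset$), and the $\isDone{\cdot}$ / $\isNotDone{\cdot}$ side conditions of the two sequencing rules do not mention $\tvarset$; hence the induction hypothesis on the premises followed by a re-application of the rule closes the case. The exception is \rulenamecontrVar, where $T$ is a variable $c$ with premise $c \notin \tvarset \cup \{b\}$: to reestablish the premise for the set $\tvarset,a$ I need in addition $c \neq a$, which is precisely the freshness hypothesis $a \notin \free(c)$.

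For item~2, the base cases and the purely structural rules — \rulenamekindSkip, \rulenamekindUnit, \rulenamekindMsg, \rulenamekindCh, \rulenamekindSeq, \rulenamekindArrow, \rulenamekindRcd, \rulenamekindData, \rulenamekindSub — follow by the induction hypothesis on the subderivations and re-applying the rule; \rulenamekindVar is immediate, since from $c\colon\kind\in\Delta$ and $a\neq c$ (freshness) we still have $c\colon\kind \in \Delta, a\colon\kind'$. The cases that introduce a binder, \rulenamekindPoly and \rulenamekindRec, require two small moves. First, the induction hypothesis yields a judgement whose context has the new binding $a\colon\kind'$ appended \emph{after} the bound variable, so I reorder the context — harmless, as its bindings are for distinct, mutually independent variables — into the shape the rule expects. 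Second, only for \rulenamekindRec and only for the conclusion that also extends $\tvarset$ to $\tvarset,a$, I need the contractivity premise in the larger set, $\isContr[\tvarset,a]{c}{U}$; this is exactly what item~1 provides (its side condition $a \notin \free(U)$ following from $a \notin \dom(\Delta)$ together with $a \neq c$).

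I do not expect a genuine obstacle: the statement is a routine rule induction. The points that need care are all bookkeeping — threading the freshness of $a$ through the \rulenamecontrVar case of item~1 (mirroring the hypothesis of \cref{lem:type-strengthening}), the innocuous context reordering in the binder cases of item~2, and remembering that the $\tvarset$-extending half of item~2's \rulenamekindRec case must fall back on item~1 to relocate the contractivity premise.
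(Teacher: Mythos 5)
Your proof is correct and takes essentially the same route as the paper, whose entire proof is ``by straightforward rule induction on the hypothesis''; your detailed case analysis (including the reordering of context bindings in the binder cases and the use of item~1 inside the \rulenamekindRec case of item~2) is exactly the expansion of that one-liner. Your observation that item~1 only goes through under an implicit freshness condition $a \notin \free(T)$ — needed to re-establish the premise of \rulenamecontrVar — is a fair point that the paper's statement leaves tacit, but it does not alter the argument.
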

\begin{proof}
  By straightforward rule induction on the first hypothesis.
\end{proof}

\changed{\Cref{lem:type-substitution} shows that substitution} preserves
termination, contractivity and type formation, \changed{fundamental properties
  to guarantee the proper behaviour of contexts when subject to substitutions,
  as we elaborate in~\cref{sec:processes}.}

\begin{lemma}[Type substitution]
  \label{lem:type-substitution}
  Suppose that $\isType{\tvarset}{\Delta} U \kind$.
  \begin{enumerate}
  \item\label{it:subs-done} If $\isType{\tvarset,a}{\Delta,a\colon\kind} T\kind'$
    and $\isDone T$, then $\isDone {\subs UaT}$.
  \item\label{it:subs-contr} If $\isType{\tvarset,a}{\Delta,a\colon\kind, b \colon \kind'} T\kind''$
    and \changed{$\isContr[\tvarset,a] bT$} and $b\notin\tvarset\cup\free(U)$, then
    $\isContr b {\subs UaT}$.
  \item\label{it:subs-type} If $\isType{\tvarset,a}{\Delta,a\colon\kind} T\kind'$,
    then $\isType{\tvarset}{\Delta} {\subs UaT} \kind'$.
  \end{enumerate}
\end{lemma}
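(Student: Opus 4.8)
The plan is to prove the three items by rule induction, in order of dependence: \cref{it:subs-done} first (self-contained), then \cref{it:subs-contr} (which invokes \cref{it:subs-done} in one case), then \cref{it:subs-type} (which invokes \cref{it:subs-contr} in one case). The principal induction is on the derivation of $\isDone T$ for \cref{it:subs-done}, of $\isContr[\tvarset,a]{b}{T}$ for \cref{it:subs-contr}, and of $\isType{\tvarset,a}{\Delta,a\colon\kind}{T}{\kind'}$ for \cref{it:subs-type}. Throughout, the $\alpha$-identification convention lets me pick every bound type variable distinct from $a$ and outside $\free(U)$, so substitution commutes with the binders $\mu$ and $\forall$; \cref{lem:type-weakening} carries the standing hypothesis $\isType{\tvarset}{\Delta}{U}{\kind}$ into the contexts that grow when descending under a binder; and \cref{lem:type-strengthening} handles the bookkeeping of the polymorphic-variable set $\tvarset$ (which the $\forall$ binder extends for kinding but not for contractivity, and the $\mu$ binder extends for neither).

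\Cref{it:subs-done} is routine: \rulenametermSkip is immediate and \rulenametermSeq, \rulenametermRec follow from the induction hypotheses after pushing the substitution through $;$ and $\mu$; in fact no rule in \cref{fig:terminated} introduces a variable, so a terminated type is closed and $\subs UaT = T$ outright. For \cref{it:subs-contr}, most cases are equally direct. In \rulenamecontrVar the side condition forces the variable to be distinct from $a$, so substitution fixes it and the rule re-applies; in \rulenamecontrOther the type $\subs UaT$ keeps the top constructor of $T$, so the rule re-applies; \rulenamecontrRec and \rulenamecontrPoly commute the substitution with the binder and call the induction hypothesis on the body, the variable-set accounting matching the rules exactly (notably \rulenamecontrPoly leaves $\tvarset$ unchanged); and \rulenamecontrSeqTwo, with $\isDone{T_1}$ and $\isContr[\tvarset,a]{b}{T_2}$, follows by deriving $\isDone{\subs Ua{T_1}}$ from \cref{it:subs-done}, getting $\isContr{b}{\subs Ua{T_2}}$ from the induction hypothesis, and re-applying \rulenamecontrSeqTwo.

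The genuinely delicate case --- and, I expect, the main obstacle --- is \rulenamecontrSeqOne, with $T = \semit{T_1}{T_2}$, $\isNotDone{T_1}$ and $\isContr[\tvarset,a]{b}{T_1}$. To re-apply \rulenamecontrSeqOne I must show $\isNotDone{\subs Ua{T_1}}$, but substitution can in general collapse a non-terminated session type to a terminated one (replace $a$ by $\skipt$). The remedy is a small \emph{auxiliary claim}, proved by induction on $T_1$: if $\isContr[\tvarset,a]{b}{T_1}$ and $\isNotDone{T_1}$ then $\isNotDone{\subs Ua{T_1}}$. The only subterm shape where substitution could change the termination verdict is a bare type variable, and there the side condition of \rulenamecontrVar pins that variable down as different from $a$; when $T_1$ is $\semit VW$ or $\rect c\kind V$ one unfolds the termination rules (\rulenametermSeq and \rulenametermRec require every part to be terminated) and recurses on the contractive, non-terminated component; every remaining shape of $T_1$ is terminated neither before nor after substitution. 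With $\isNotDone{\subs Ua{T_1}}$ established, the induction hypothesis yields $\isContr{b}{\subs Ua{T_1}}$ and \rulenamecontrSeqOne re-applies.

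\Cref{it:subs-type} is again mostly congruence. The axioms \rulenamekindSkip, \rulenamekindUnit and the structural rules \rulenamekindMsg, \rulenamekindCh, \rulenamekindSeq, \rulenamekindArrow, \rulenamekindRcd, \rulenamekindData re-apply after recursion on their components, which are typed in the same context. \rulenamekindVar splits on $T = a$ --- then $\subs UaT = U$ and the goal is exactly the standing hypothesis $\isType{\tvarset}{\Delta}{U}{\kind}$ with $\kind = \kind'$ --- versus $T = c \neq a$, where the binding $c\colon\kind'$ survives into $\Delta$. \rulenamekindSub composes the induction hypothesis with subkinding. \rulenamekindPoly and \rulenamekindRec commute the substitution with the binder and recurse on the body, first weakening $U$'s typing into the enlarged context via \cref{lem:type-weakening}; in the \rulenamekindRec case one also re-establishes contractivity of the substituted body by invoking \cref{it:subs-contr} --- whose contractivity premise is, conveniently, stated at the same variable set $\tvarset,a$ that the kinding premise of \rulenamekindRec uses. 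So the whole weight of the lemma rests on the \rulenamecontrSeqOne case of \cref{it:subs-contr}, and there on the base case of the auxiliary claim, where the \rulenamecontrVar side condition makes $\isContr[\tvarset,a]{b}{a}$ underivable --- which is precisely why the contractivity judgment tracks the set $\tvarset$.
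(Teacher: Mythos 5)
Your proof is correct and follows the paper's own structure exactly: the three items are proved by rule induction in order of dependence, with \cref{it:subs-done} feeding \cref{it:subs-contr} and \cref{it:subs-contr} feeding the \rulenamekindRec case of \cref{it:subs-type}, and the \rulenamecontrVar and \rulenamekindVar cases resolved just as in the paper. The one place you go beyond the paper is the \rulenamecontrSeqOne case of \cref{it:subs-contr} --- the paper dismisses it with ``all cases follow by induction hypothesis'', whereas you correctly observe that re-applying the rule requires $\isNotDone[]{\subs U a{T_1}}$ and supply the auxiliary claim (substitution preserves non-terminatedness of a type contractive on $b$ over $\tvarset,a$) that makes this go through, for exactly the reason you give: the \rulenamecontrVar side condition pins the scrutinised variable down as different from $a$.
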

\begin{proof}
  \Cref{it:subs-done}. By rule induction on $\isDone T$. 
  
  \Cref{it:subs-contr}. By rule induction on \changed{$\isContr[\tvarset,a] bT$}, using 
  \cref{it:subs-done}. All cases follow by induction hypothesis
  except case
  for rule \rulenamecontrVar, where we have  $T=c$ with $c \neq a,b$;
  the result follows from the hypothesis \changed{$\isContr[\tvarset,a] bc$}.

  \Cref{it:subs-type}. By rule induction on
  $\isType{\tvarset,a}{\Delta,a\colon\kind} T\kind'$. 
  For the case \rulenamekindRec we have $T=\rect{b}{\kind''}{V}$. The premises to the rule
  are $\isContr[\tvarset,a] bV$ and
  $\isType{\tvarset,a}{\Delta,a\colon\kind, b\colon \kind''} V\kind'$. 
  \changed{Induction on the second premise gives}
  $\isType{\tvarset}{\Delta,b\colon \kind''} {\subs UaV}\kind'$.
  \Cref{it:subs-contr} gives $\isContr b{\subs UaV}$. Rule \rulenamekindRec
  gives $\isType{\tvarset}{\Delta}{\rect b{\kind''}{\subs UaV}}{\kind'}$.
  Conclude with the definition of substitution.
  For the case \rulenamekindVar with $T=b\neq a$, we have $a\notin\free(b)$. The result
  follows from hypothesis $\isType{\tvarset,a}{\Delta,a\colon\kind} b\kind'$ and
  strengthening (\cref{lem:type-strengthening}).
  For \rulenamekindVar with $T=a$ we have $k=k'$. The result follows from
  the hypothesis $\isType{\tvarset}{\Delta} U \kind$.
\end{proof}


The following lemma is used in agreement for type equivalence
(\cref{lem:agreement-type-equiv}). It should hold for all kinds, but we only
need it for non session types.
\begin{lemma}[Inversion of substitution]
  \label{lem:inversion-subs}
  If $\isType{\tvarset}{\Delta}{\subs T a U} \kind$ and $a\notin\free(T)$ and
  $\kind\neq\kinds^m$, then $\isType{\tvarset}{\Delta,a\colon\kind}{U}{\kind}$.
\end{lemma}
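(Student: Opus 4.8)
The plan is to argue by induction on the structure of $U$, inverting the derivation of $\isType{\tvarset}{\Delta}{\subs TaU}{\kind}$ at each step. The hypothesis $a\notin\free(T)$ is what makes this go through: it guarantees that $U$ and $\subs TaU$ are identical except that every free occurrence of $a$ in $U$ shows up as a copy of $T$ in $\subs TaU$, so a case split on the outermost constructor of $U$ controls the recursion and no occurrence of $a$ can arise ``from inside'' $T$.

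For the base cases: if $U=a$, then $\subs TaU=T$, the hypothesis is exactly $\isType{\tvarset}{\Delta}{T}{\kind}$, and the goal $\isType{\tvarset}{\Delta,a\colon\kind}{a}{\kind}$ is an instance of \rulenamekindVar, since $a\colon\kind$ has just been added to the context. If $U$ is a variable $b\neq a$, or a constructor with no free $a$ (in particular $\skipt$ and $\unitt$), then $\subs TaU=U$ and the goal is the hypothesis transported along the context weakening of \cref{lem:type-weakening}. For a compound $U$ --- $\msgt V$, $\choicet{\recordt\ell V L}$, $\semit VW$, $\funt VW$, $\recordt\ell V L$, $\variantt\ell V L$, $\forallt b\kind V$, $\rect b\kind V$, with $b\neq a$ by the $\alpha$-identification convention --- substitution commutes with the head constructor, so the derivation of $\isType{\tvarset}{\Delta}{\subs TaU}{\kind}$ ends with the matching formation rule, preceded by at most one use of \rulenamekindSub (collapsed using transitivity of $\subk$). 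Inverting that rule exposes kinding judgements for the substituted components; here the restriction $\kind\neq\kinds^m$ is used to rule out the spurious ``purely session'' reading of the outermost kind and thereby fix the possible shapes of the components' kinds. One then applies the induction hypothesis to each component and reassembles with the same formation rule (and a trailing \rulenamekindSub where needed); in the $\rect b\kind V$ case one additionally observes that contractivity of $V$ on $b$ is insensitive to the extra binding $a\colon\kind$ in the context, by the first part of \cref{lem:type-weakening}.

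The step I expect to be the real obstacle is the interaction of \rulenamekindSub with the constructors whose formation rule does not preserve kinds: the message, choice, sequence and $\mu$ cases, which are intrinsically session-kinded, and the function case, whose domain and codomain may sit at kinds unrelated to the $\kindtm$ of the whole. In those cases the kind $\kind_i$ demanded of a component genuinely differs from $\kind$, so one cannot literally invoke ``the induction hypothesis at $\kind$''; the natural remedy is to carry through the induction a statement that tracks the kind at which the substituted type is actually used --- in effect replacing the binding $a\colon\kind$ by $a\colon\kind_T$ for $\kind_T$ a least kind of $T$ --- which collapses to the stated form at the place this lemma is applied (agreement for type equivalence, where $T$ is the $\mu$-type being unfolded and hence has kind $\kind$). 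Checking that this generalised statement is closed under \rulenamekindSub and under every formation rule, with $\kind\neq\kinds^m$ supplying the missing slack, is where the weight of the proof lies.
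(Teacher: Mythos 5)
Your overall strategy---induction with inversion of the kinding derivation, the two variable cases, weakening for the $a$-free base cases---coincides with the paper's, which likewise only works out the variable and $\forall$ cases. But the obstacle you flag in your last paragraph is not merely ``where the weight lies'': it is a genuine gap, and the lemma as literally stated in fact fails there. Take $T=\sout{\unitt[\un]}$ and $U=\funt[\un]{(\semit a\skipt)}{\unitt[\un]}$. Then $a\notin\free(T)$ and $\isType{\tvarset}{\Delta}{\subs TaU}{\kindtu}$ holds with $\kindtu\neq\kinds^m$, yet $\isType{\tvarset}{\Delta,a\colon\kindtu}{U}{\kindtu}$ is not derivable: rule \rulenamekindSeq demands $\isType{\tvarset}{\Delta,a\colon\kindtu}{a}{\kindsl}$, and from the binding $a\colon\kindtu$ rules \rulenamekindVar and \rulenamekindSub can only reach $\kindtu$ and $\kindtl$, never $\kindsl$. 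This is exactly the mismatch you describe: the binding $a\colon\kind$ in the conclusion records the kind of the whole type $\subs TaU$, whereas the occurrences of $a$ inside $U$ must be kinded at whatever (possibly session) kinds the corresponding occurrences of $T$ carry in the given derivation. So the naive induction cannot be completed, and no reshuffling of \rulenamekindSub repairs it.

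Your proposed remedy---re-binding $a$ at a least kind of $T$ rather than at $\kind$---points in the right direction (it does repair the example above, since $\sout{\unitt[\un]}$ has principal kind $\kindsl$), but it is only named, not carried out, and it has its own wrinkle at the unique call site (\cref{lem:agreement-type-equiv}): there $T=\rect a\kind U$ is precisely the type whose well-formedness is being established, so a kind for $T$ is not available as a hypothesis; one must either extract the kinds of the occurrences of $T$ from the given derivation, or simply specialise the lemma to the instance actually needed, namely $T=\rect a\kind U$ with the binding taken from the $\mu$-annotation, where the claim is essentially the converse of \rulenamekindRec. Either way, the verification you defer---closure of the generalised statement under \rulenamekindArrow, \rulenamekindSub and the session constructors---is the entire content of the lemma, so what you have is a correct (and sharper than the paper's) diagnosis plus a plausible plan, not yet a proof. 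For what it is worth, the paper's own sketch does not confront these cases either.
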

\begin{proof}
  By rule induction on the hypothesis. We sketch a few cases.
  
  Case \rulenamekindVar with $U=b\neq a$. The premise is $b\colon\kind \in \Delta$;
  rule \rulenamekindVar gives $\isType{\tvarset}{\Delta}{b}{\kind}$ and
  weakening (\cref{lem:type-weakening}) yields the result.
  
  Case \rulenamekindVar with $U=a$, conclude with rule
  \rulenamekindVar.
  
  Case \rulenamekindPoly. We have $U=\forallt{b}{\kind'}{U'}$ and
  $\subs TaU = \forallt{b}{\kind'}{(\subs Ta{U'})}$ and $\kind\neq \kinds^m$. The
  premise of the rule is
  $\isType{\tvarset,b}{\Delta,b\colon\kind'}{\subs Ta{U'}}{\kind}$. Induction
  gives $\isType{\tvarset,b}{\Delta,b\colon\kind',a\colon\kind}{U'}{\kind}$ and
  rule \rulenamekindPoly gives the result.
\end{proof}

\subsection{Type Equivalence}
\label{sec:type-equivalence}

Type equivalence for context-free session types is based on bisimulation. In
contrast, the same notion for regular types is a standard equi-recursive
definition~\cite{DBLP:books/daglib/0005958}.

\paragraph{Session Type Bisimilarity}

\begin{figure}
  \begin{align*}
    \text{Labels} && \lambda \grmeq &
     \msgt T \grmor \star \ell \grmor a 
  \end{align*}
  \declrel{Labelled transition system}{$T \LTSderives T$}
  \begin{gather*}
  	\axiom {\rulenameltsVar}
  		{a \LTSderives[a] \skipt}	
  	\qquad
  	\axiom {\rulenameltsMsg}
  		{\msgt T \LTSderives[\msgt T] \skipt}
  	\qquad
  	\infrule {\rulenameltsCh}
  		{k\in L}
  		{\choicet{\recordt \ell T L} \LTSderives[\star k] T_k}
  	\\
  	\infrule {\rulenameltsSeqOne}
  		{T \LTSderives T'}
  		{\semit T U \LTSderives \semit {T'}U}	
  	\qquad
  	\infrule {\rulenameltsSeqTwo}
  		{\isDone[\empty]{T} \\ U \LTSderives U'}
  		{\semit TU \LTSderives U'}	
  	\qquad
  	\infrule {\rulenameltsRec}
  		{\subs{\rect a \kind T}aT \LTSderives U}
  		{\rect a \kind T\LTSderives U}
  \end{gather*}
  \caption{Labelled transition system}
  \label{fig:lts}

\end{figure}



The labelled transition system for session types is in~\cref{fig:lts}.
The labels we consider are $\sout T$ and $\sint T$ for output and input messages
of type $T$, $\oplus\ell$ and $\&\ell$ for internal and external choices on
label $\ell$, and $a$ for transition on a \emph{polymorphic} variable.
%
There are two rules for sequential
composition $T;U$ depending on $T$ being terminated or not.

A bisimulation is defined in the usual way from the labelled transition
system~\cite{sangiorgi2014introduction}.
We say that a type relation $\mathcal R$ is a \emph{bisimulation} if for $T$ and
$U$ whenever $T\mathcal RU$, for all~$\lambda$ we have:
\begin{itemize}
\item for each $T'$ with $T \LTSderives T'$, there is $U'$ such that $U
  \LTSderives U'$ and $T'\mathcal RU'$, and
\item for each $U'$ with $U \LTSderives U'$, there is $T'$ such that $T
  \LTSderives T'$ and $T'\mathcal RU'$.
\end{itemize}
We say that two types are bisimilar, written $\isequivst T U$, if there
is a bisimulation~$\mathcal R$ with $T\mathcal RU$.
Below we identify some laws of session bisimilarity.
\begin{itemize}
\item $\skipt$ is left and right neutral for sequential composition; sequential
  composition is associative; choice right-distributes over sequential
  composition;
\item $\mu$-bindings for variables not free in the body can be eliminated;
  unfold preserves bisimilarity.
\end{itemize}

\changed{
\begin{lemma}[Properties of type bisimilarity]\
  \label{lemma:laws}
  %
  %
  \begin{enumerate}
    \item\label{it:skip}
      $\isequivst{\semit \skipt T}{\isequivst {\semit {T}\skipt}T}$
    \item\label{it:semi}
      $\isequivst{\semit {(\semit {T_1}T_2)} T_3}{\semit {T_1}{(\semit {T_2}T_3)}}$
    \item\label{it:ch}
      $\isequivst
        {\semit {\choicet{\recordt \ell T L}}{T}}
        {\choicet{\recordm \ell \colon {\semit {T_\ell}T}{\ell \in L}}}$
    \item\label{it:mu1}
      $\isequivst{\rect a \kind T}{T}$, if $a \notin \free (T)$
    \item $\isequivst{\rect a \kind T}{\subs{\rect a \kind T}{a}{T}}$
  \end{enumerate}
\end{lemma}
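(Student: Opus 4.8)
The plan is to prove each law independently, in each case exhibiting an explicit type relation that contains the pair in question and checking directly that it is a bisimulation in the sense of \cref{fig:lts}; reflexivity, symmetry and transitivity of $\STypeEquiv$ — all immediate from the definition (the identity relation $\mathsf{Id}$ is a bisimulation, $\mathcal R^{-1}$ is a bisimulation whenever $\mathcal R$ is, since the two clauses of the definition are symmetric, and the relational composition of two bisimulations is a bisimulation) — are then used to stitch the pieces together; in particular the conjunction $\isequivst{\semit\skipt T}{T}$ and $\isequivst{\semit T\skipt}{T}$ recorded by \cref{it:skip} follows by proving the two halves separately and composing by transitivity. For every law the candidate relation has the shape $\mathcal R \cup \mathsf{Id}$ where $\mathcal R$ is the set of \emph{all} instances of that law; since $\mathsf{Id}$ is a bisimulation and a union of bisimulations is a bisimulation, it suffices to consider transitions issued from pairs in $\mathcal R$ and verify that their residuals lie again in $\mathcal R\cup\mathsf{Id}$.

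For \cref{it:skip} take $\mathcal R = \{(\semit\skipt T, T)\}$ and $\mathcal R = \{(\semit T\skipt, T) : T \text{ a type}\}$. In the first, rule \rulenameltsSeqOne cannot fire on $\semit\skipt T$ because $\skipt$ has no transition, while \rulenameltsSeqTwo applies since $\isDone[\empty]{\skipt}$; hence $\semit\skipt T$ and $T$ have literally the same transitions with literally the same residuals, which lie in $\mathsf{Id}$. In the second, only \rulenameltsSeqOne can fire on $\semit T\skipt$ (\rulenameltsSeqTwo would need a transition of $\skipt$), so a step $\semit T\skipt \LTSderives \semit{T'}\skipt$ is matched by $T\LTSderives T'$ and the residual $(\semit{T'}\skipt, T')$ is again in $\mathcal R$. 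For \cref{it:mu1}, under the hypothesis $a\notin\free(T)$, take $\{(\rect a\kind T, T)\}$: here $\subs{\rect a\kind T}{a}{T} = T$, so by \rulenameltsRec the two types have identical transitions. For the final item (unfold) take $\{(\rect a\kind T, \subs{\rect a\kind T}{a}{T})\}$: \rulenameltsRec again makes a $\mu$-type and its one-step unfolding have exactly the same transitions. In both $\mu$ cases the residuals fall into $\mathsf{Id}$.

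For \cref{it:ch} take $\mathcal R = \{(\semit{\choicet{\recordt \ell T L}}T,\ \choicet{\recordm \ell \colon {\semit {T_\ell}T}{\ell \in L}})\}$. A choice type is never terminated — no rule of \cref{fig:terminated} has such a conclusion — so \rulenameltsSeqTwo does not apply to the left-hand type, and its only transitions come from \rulenameltsSeqOne over \rulenameltsCh, namely $\semit{\choicet{\recordt \ell T L}}T \LTSderives[\star k] \semit{T_k}T$ for each $k\in L$; these are matched precisely by $\choicet{\recordm \ell \colon {\semit {T_\ell}T}{\ell \in L}} \LTSderives[\star k] \semit{T_k}T$ via \rulenameltsCh, and the residuals coincide. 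For \cref{it:semi} take $\mathcal R = \{(\semit{(\semit{T_1}{T_2})}{T_3},\ \semit{T_1}{(\semit{T_2}{T_3})}) : T_1, T_2, T_3\}$ and proceed by case analysis on which of \rulenameltsSeqOne/\rulenameltsSeqTwo produces a transition and, when the inner sequential composition is the active one, recursively on how that step was derived. The only non-mechanical ingredient is the fact that $\isDone[\empty]{\semit UV}$ holds \emph{iff} $\isDone[\empty]{U}$ and $\isDone[\empty]{V}$ — forward by inversion (rule \rulenametermSeq is the only one with a sequential conclusion), backward by \rulenametermSeq — which is exactly what lets one move the ``terminated prefix'' side conditions of \rulenameltsSeqTwo between the two bracketings. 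Each residual is then either a pair of the form $(\semit{(\semit{T_1'}{T_2})}{T_3},\ \semit{T_1'}{(\semit{T_2}{T_3})})$, another member of $\mathcal R$, or a coinciding pair lying in $\mathsf{Id}$.

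The main obstacle — modest, but the one place where care is required — is the bookkeeping in \cref{it:semi}: the naive choice $\{(\text{single pair})\}\cup\mathsf{Id}$ is \emph{not} closed under transitions, since a step turns the associativity instance for $T_1$ into the one for $T_1'$, so the relation must contain every instance of the law; and one must carefully align the $\isDone[\empty]{\cdot}$ premises of \rulenameltsSeqTwo on the two sides using the inversion/reintroduction fact above. Finally, note that no well-formedness hypotheses enter anywhere: the labelled transition system and bisimilarity are defined on raw types, and none of the relations above refers to kinding, so \cref{lem:type-substitution} is not invoked here (it is needed elsewhere, \eg to see that termination is preserved by substitution, but not for this lemma).
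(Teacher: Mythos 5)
Your proposal is correct and follows essentially the same route as the paper's proof: each law is established by exhibiting an explicit bisimulation of the form $\mathcal R\cup\mathsf{Id}$ containing all instances of the law and checking closure under the transitions of \cref{fig:lts}. You simply work out all five cases (including the associativity bookkeeping and the $\isDone{\semit UV}$ inversion fact) where the paper only sketches \cref{it:ch} and \cref{it:mu1}.
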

\begin{proof}
  Each law is proved by exhibiting a suitable bisimulation. We show a couple of
  examples.
  For~\cref{it:ch} we use the relation $\mathcal R$ that contains the identity
  relation as well as all pairs of the form
  $(\semit {\choicet{\recordt \ell T L}}{T}, \choicet{\recordm \ell \colon
    {\semit {T_\ell} T}{\ell \in L}})$.
  For~\cref{it:mu1}, considering $\mathcal R$ to be a bisimulation that
  contains the identity relation, then the bisimulation we seek is
  $\mathcal R \cup \{(\rect a \kind T,T)\}$.
  %
\end{proof}
}

\paragraph{Type Equivalence}

\begin{figure}[t!]
  \declrel{Type equivalence}{$\isequiv[\changed{\Theta}] \Delta TT\kind$}
  \begin{gather*}
  	\infrule{\rulenameeqST}
    {
      \istype{\Delta} {T} \kinds^m
      \\
      \istype{\Delta} {U} \kinds^m
      \\
      \isequivst TU}
    {\isequiv\Delta{T}{U}{\kinds^m}}
    \quad
  	\infrule{\rulenameeqUnit}
    {}
    {\isequiv\Delta \unitt \unitt{\changed{\kindm^m}}}
    \\
    \infrule {\rulenameeqArrow}
    {\isequiv\Delta{T}{T'}{\kind_1} \\ \isequiv\Delta{U}{U'}{\kind_2}}
    {\isequiv\Delta{\funt {T}{U}}{\funt {T'}{U'}}{\kindt^m}}
    \;\;
    \infrule {\rulenameeqVar}
    {
      \kind \neq \kinds^m
      \quad
      a\colon\kind\in\Delta
    }
    {\isequiv\Delta aa{\kind}}
    \\
    \infrule {\rulenameeqRcd}
    {\isequiv\Delta{T_\ell}{U_\ell}{\kindt^m} \\ \changed{(\forall\ell\in L)}}
    {\isequiv\Delta{\recordt \ell T L}{\recordt \ell
        {U}L}{\kindt^m}}
    \quad
    \infrule {\rulenameeqData}
    {\isequiv\Delta{T_\ell}{U_\ell}{\kindt^m} \\ \changed{(\forall\ell\in L)}}
    {\isequiv\Delta{\variantt \ell T L}{\variantt \ell
        {U}L}{\kindt^m}}
    \\
    \infrule {\rulenameeqPoly}
    {\isequiv{\Delta, a\colon\kind} T U {\kindt^m}}
    {\isequiv\Delta{\forallt a \kind T}{\forallt a \kind U}{\kindt^m}}
    \;\;
\changed{    \infrule{\rulenameeqFix}{
      (\Delta,T,U,\kind) \in \Theta
      \quad
      \changed{\istype{\Delta}{T}{\kind}}
      \quad
      \changed{\istype{\Delta}{U}{\kind}}
    }{
      \isequiv \Delta TU\kind
    }
}    \\
    \infrule {\rulenameeqRecL}
    {
      \kind \neq \kinds^m
      \\
      \changed{(\Delta',\rect a\kind T,U,\kind') \notin \Theta}
      \\
      \isContr[\dom(\Delta)] aT
      \\
      \isequiv[\Theta,\changed{(\Delta,\rect a\kind T,U,\kind)}]\Delta{\subs{\rect a\kind T}{a}{T}}{U}{\kind}
    }{
      \isequiv\Delta{\rect a\kind T}{U}{\kind}
    }
    \\
    \infrule {\rulenameeqRecR}
    {
      T \text{ not } \mu
      \\
      \kind \neq \kinds^m
      \\
      \changed{(\Delta', T, \rect a\kind U,\kind') \notin \Theta}
      \\
      \isContr[\dom(\Delta)] aU
      \\
      \isequiv[\Theta,\changed{(\Delta,T, \rect a\kind U,\kind)}] \Delta T {\subs{\rect a\kind U}{a}{U}}{\kind}
    }
    {\isequiv\Delta T {\rect a\kind U}{\kind}}
  \end{gather*}
  \caption{Type equivalence}
  \label{fig:type-equivalence}
\end{figure}


\Cref{fig:type-equivalence} introduces the type equivalence rules for arbitrary
types. The rules use judgements of the form \changed{$\isequiv\Delta TU\kind$,
  where $\Theta$ is a list of kinded pairs of types (quadruples
  $(\Delta,T,U,\kind)$ such that $\istype \Delta {T,U} \kind$)}.
%
%
%
Rule \rulenameeqST incorporates session type bisimilarity into type equivalence;
in this case the two types are required to be session types.
The rules for functional types (\rulenameeqUnit, \rulenameeqArrow,
\rulenameeqRcd, \rulenameeqData, \rulenameeqPoly and \rulenameeqVar) are the
traditional congruence rules.
\changed{Rule \rulenameeqVar is expected to be applied to polymorphic variables only,
that is, neither session variables (hence the premise
$\kind \neq \kinds^m$) nor recursion variables (hence $a\colon\kind\in\Delta$).}
Those for equi-recursion \changed{(\rulenameeqFix, \rulenameeqRecL and
  \rulenameeqRecR) keep track of a set $\Theta$ containing the pairs of types
  visited so far, in which one of the components is recursive. This is
  essentially the Amadio-Cardelli system~\cite{DBLP:journals/toplas/AmadioC93},
  brought to session types (in the context of subtyping) by Gay and
  Hole~\cite{DBLP:journals/acta/GayH05}. The algorithmic reading of the rules is
  known to be
  exponential~\cite{DBLP:conf/tacas/LangeY16,DBLP:books/daglib/0005958}. Pierce
  proposes a polynomial variant that can be
  used in concrete implementations~\cite{DBLP:books/daglib/0005958}.}

\begin{lemma}[Agreement for type equivalence]
  \label{lem:agreement-type-equiv}
  \changed{
  \begin{sloppypar}
    If $\isequiv \Delta TU\kind$, then
    $\istype{\Delta}{T}{\kind}$.
  \end{sloppypar}}
\end{lemma}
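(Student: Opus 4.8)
The plan is to proceed by rule induction on the derivation of $\isequiv[\Theta]\Delta TU\kind$, producing in each case a type-formation derivation for the first component $T$. (The argument is structural induction on the finite derivation tree; the fact that $\Theta$ grows along the way is irrelevant.) Two rules make the conclusion available among their premises: \rulenameeqST has $\istype\Delta T\kinds^m$ as a premise, and \rulenameeqFix has $\istype\Delta T\kind$ as a premise, so these cases are immediate.

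The congruence rules \rulenameeqUnit, \rulenameeqArrow, \rulenameeqRcd, \rulenameeqData, \rulenameeqPoly and \rulenameeqVar are treated uniformly: apply the induction hypothesis to each equivalence premise to obtain a type-formation derivation for the corresponding subterm of $T$ (in \rulenameeqVar there is no equivalence premise and one uses $a\colon\kind\in\Delta$ directly), and then reassemble using the matching type-formation rule, respectively \rulenamekindUnit, \rulenamekindArrow, \rulenamekindRcd, \rulenamekindData, \rulenamekindPoly, \rulenamekindVar. The only point requiring attention is \rulenameeqPoly: the induction hypothesis applied to $\isequiv{\Delta,a\colon\kind}{T}{U}{\kindt^m}$ gives, after unfolding the outer type-formation judgement, $\isType{\dom(\Delta),a}{\Delta,a\colon\kind}{T}{\kindt^m}$, which is exactly the premise \rulenamekindPoly needs in order to conclude $\isType{\dom(\Delta)}{\Delta}{\forallt a\kind T}{\kindt^m}$.

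For rule \rulenameeqRecR there is nothing to do beyond invoking the induction hypothesis: since the statement only concerns the first component, and \rulenameeqRecR passes its first component $T$ unchanged to the recursive premise $\isequiv[\Theta']\Delta{T}{\subs{\rect a\kind U}aU}{\kind}$, the induction hypothesis on that premise already yields $\istype\Delta T\kind$.

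The case with real content is \rulenameeqRecL, with conclusion $\isequiv\Delta{\rect a\kind S}{U}{\kind}$ (renaming the body to $S$ to avoid clashing with the metavariables of \cref{lem:inversion-subs}). The recursive premise is $\isequiv[\Theta,(\Delta,\rect a\kind S,U,\kind)]\Delta{\subs{\rect a\kind S}{a}{S}}{U}{\kind}$, and the induction hypothesis on it gives $\istype\Delta{\subs{\rect a\kind S}{a}{S}}{\kind}$. Since $a$ is bound in $\rect a\kind S$ we have $a\notin\free(\rect a\kind S)$, and the side condition $\kind\neq\kinds^m$ is one of the premises of \rulenameeqRecL; hence inversion of substitution (\cref{lem:inversion-subs}, with $\tvarset=\dom(\Delta)$) applies and yields $\isType{\dom(\Delta)}{\Delta,a\colon\kind}{S}{\kind}$. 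Combining this with the contractivity premise $\isContr[\dom(\Delta)]a{S}$ of \rulenameeqRecL, rule \rulenamekindRec delivers $\istype\Delta{\rect a\kind S}{\kind}$. The only bookkeeping to double-check is that the polymorphic-variable set in the inner judgements remains $\dom(\Delta)$ throughout, so that the contractivity premise, the conclusion of inversion, and the premise of \rulenamekindRec all align. This step is the crux of the argument: it is precisely why \cref{lem:inversion-subs} was proved beforehand, and why the recursion rules of type equivalence carry the restriction $\kind\neq\kinds^m$.
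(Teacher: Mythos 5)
Your proposal is correct and follows essentially the same route as the paper's proof: rule induction on the equivalence derivation, with \rulenameeqST and \rulenameeqFix immediate from their premises, congruence cases reassembled via the matching kinding rules, \rulenameeqRecR discharged directly by the induction hypothesis, and \rulenameeqRecL handled by applying inversion of substitution (\cref{lem:inversion-subs}) to recover the body's kinding and then closing with \rulenamekindRec. No gaps.
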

\begin{proof}
  The proof is by rule induction on the hypothesis.

  Case \rulenameeqST. In this case $\kind = \kinds^m$ and the premises state that $\istype{\Delta}{T}{\kinds^m}$.

  Case \rulenameeqUnit. In this case, $T = \unitt{}$, $\kind = \kindm^m$ and we have $\istype{\Delta}{T}{\kindm^m}$.
  
  Case \rulenameeqArrow, \rulenameeqRcd, \rulenameeqData.
  For \rulenameeqArrow, we know that $T$ is of the form $\funt {T_1}{T_2}$, $\kind = \kindt^m$
  and the premises are 
  $\isequiv\Delta{T_1}{U_1}{\kind_1}$ and $\isequiv\Delta{T_2}{U_2}{\kind_2}$, 
  where $U = \funt {U_1}{U_2}$.
  By induction hypothesis we have 
  $\istype{\Delta}{T_1}{\kind_1}$ and $\istype{\Delta}{T_2}{\kind_2}$.
  Applying \rulenamekindType and \rulenamekindArrow, we conclude that $\istype{\Delta}{\funt {T_1}{T_2}}{\kindt^m}$.
  The other cases are similar.

  Case \rulenameeqPoly. In this case $T$ is of the form $\forallt{a}{\kind}{T'}$
  and $U$ is $\forallt{a}{\kind}{U'}$. The
  premise is $\isequiv{\Delta, a\colon\kind} {T'} {U'} {\kindt^m}$. 
  By induction hypothesis, we have $\istype{\Delta, a\colon\kind}{T'}{\kindt^m}$,
  \ie $\isType {\dom(\Delta), a} {\Delta, a\colon\kind} {T'} {\kindt^m}$.
  Applying \rulenamekindPoly we conclude that $\isType {\dom(\Delta)} {\Delta} T {\kindt^m}$,
  which means that $\istype{\Delta}{T}{\kindt^m}$.

  Case \rulenameeqVar. Here, $T=U=a$ and $a\colon\kind\in\Delta$. Using 
  \rulenamekindVar we get $\istype{\Delta}{a}{\kind}$.

  Case \rulenameeqFix. The premises state that $\istype{\Delta}{T}{\kind}$.

  Case \rulenameeqRecL. In this case $T$ is of the form $\rect a\kind {T'}$.
  Premises include $\kind \neq \kinds^m$ and $\isContr[\dom{(\Delta)}] a{T'}$
  and $\isequiv[\Theta,(\Delta,T,U,\kind)]\Delta{\subs{T}{a}{T'}}{U}{\kind}$.
  By induction hypothesis, we know that
  $\istype{\Delta}{\subs{T}{a}{T'}}{\kind}$.
  Applying \rulenamekindType gives $\isType{\dom\Delta}{\Delta}{\subs{T}{a}{T'}}{\kind}$.
  By~\cref{lem:inversion-subs}, we have $\isType{\dom\Delta}{\Delta,a\colon \kind }{T'}{\kind}$.
  Using \rulenamekindRec, we conclude that $\istype{\Delta}{T}{\kind}$.

  Case \rulenameeqRecR. 
  Premises include $\isequiv[\Theta,(\Delta,T,U,\kind)]\Delta{T}{\subs{U}{a}{U'}}{\kind}$, for $U=\rect a\kind{U'}$.
  By induction hypothesis we conclude that $\istype{\Delta}{T}{\kind}$.
\end{proof}

\begin{lemma}[Type equivalence]
  \label{lem:type-equivalence}
  The relation $\isequiv\Delta TU\kind$ is reflexive, transitive, and symmetric.
\end{lemma}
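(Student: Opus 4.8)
The plan is to split on the kind $\kind$ (and throughout to read $\isequiv[\Theta]\Delta TU\kind$ with a generic list $\Theta$; the relation used for typing is the instance $\Theta=\Empty$, and reflexivity and symmetry in fact hold for every well-formed $\Theta$). When $\kind=\kinds^m$, a derivation of $\isequiv[\Empty]\Delta TU{\kinds^m}$ can only end in \rulenameeqST — rule \rulenameeqFix is blocked because $\Theta$ is empty, and \rulenameeqRecL, \rulenameeqRecR because they demand $\kind\neq\kinds^m$ — so in this regime the three properties reduce directly to the corresponding properties of session bisimilarity: the identity relation is a bisimulation, the inverse of a bisimulation is a bisimulation, and the relational composition of two bisimulations is a bisimulation; the type-formation side conditions of \rulenameeqST come from \cref{lem:agreement-type-equiv}, used on both sides (a trivial strengthening of its statement, obtained by threading well-formedness of the right-hand type through the same induction). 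All remaining work is in the non-session regime, where the conclusion kind is $\kindt^m$ or $\kindm^m$ and the derivation is built from the congruence rules (\rulenameeqUnit, \rulenameeqArrow, \rulenameeqRcd, \rulenameeqData, \rulenameeqPoly, \rulenameeqVar) and the equi-recursive bookkeeping rules (\rulenameeqFix, \rulenameeqRecL, \rulenameeqRecR).

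For reflexivity, given $\isequiv[\Empty]\Delta TU\kind$ with $\kind\neq\kinds^m$ I would construct $\isequiv[\Theta]\Delta TT\kind$ for an arbitrary well-formed $\Theta$, by induction on the number of pairs of subterms of $T$ not yet occurring in $\Theta$, and then on the size of $T$. A polymorphic variable goes by \rulenameeqVar; a functional constructor (function, record, variant, universal, or unit type) by the matching congruence rule applied to its immediate subterms (which are themselves equatable at the kinds dictated by the given derivation); and $T=\rect a\kind{T'}$ closes by \rulenameeqFix when $(\Delta,T,T,\kind)\in\Theta$, and otherwise applies \rulenameeqRecL, appending $(\Delta,T,T,\kind)$ — which strictly decreases the measure — so that it remains to equate $\subs{\rect a\kind{T'}}a{T'}$ with $\rect a\kind{T'}$ under the enlarged $\Theta$; since the subterms of $\subs{\rect a\kind{T'}}a{T'}$ are subterms of $\rect a\kind{T'}$, the measure stays bounded, a symmetric \rulenameeqRecR peels the unfolded copy, and the congruence case finishes. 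Termination is the usual Amadio--Cardelli/Gay--Hole bound: $\Theta$ only grows while bounded by the finitely many subterm-pairs of the original type.

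For symmetry I would prove, by induction on the derivation, that $\isequiv[\Theta]\Delta TU\kind$ implies $\isequiv[\bar\Theta]\Delta UT\kind$, where $\bar\Theta$ transposes the two type components of every quadruple in $\Theta$. The congruence rules, \rulenameeqVar and \rulenameeqFix are immediate — for \rulenameeqFix note that $(\Delta,U,T,\kind)\in\bar\Theta$ and that the well-formedness premises are symmetric, and the contractivity side conditions needed below follow from the two-sided form of \cref{lem:agreement-type-equiv} together with inversion of \rulenamekindRec via \cref{lem:inversion-subs}. For \rulenameeqRecL with $T=\rect a\kind{T'}$: if $U$ is not a $\mu$-type the transposed derivation is an application of \rulenameeqRecR whose premise is exactly the transposed induction hypothesis; the only delicate case is $U$ also a $\mu$-type, since then the original derivation peels the left-hand $\mu$'s first and the right-hand ones afterwards, whereas the transposed derivation must reverse that order. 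To license the swap I would first establish two small lemmas — that $\isequiv[\Theta]\Delta TU\kind$ depends on $\Theta$ only as a set, and that folding/unfolding a $\mu$-binder on either side is admissible when the relevant pair is absent from $\Theta$ — with which the two peeling phases can be interchanged.

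The main obstacle is transitivity: from $\isequiv[\Empty]\Delta TU\kind$ and $\isequiv[\Empty]\Delta UV\kind$ we must obtain $\isequiv[\Empty]\Delta TV\kind$. The session case composes bisimulations and the purely congruential steps compose by the induction hypothesis, but the two derivations may unfold the $\mu$-subterms of the middle type $U$ at unrelated places, so there is no single measure on which both descend in step — the familiar phenomenon that transitivity is not manifestly admissible for Amadio--Cardelli-style rule systems, and a naive induction cannot keep the two $\Theta$-logs consistent. I would therefore route transitivity — and, uniformly, reflexivity and symmetry as well — through a semantic characterisation: unfold every $\mu$-binder into the associated regular tree, which terminates precisely because of contractivity (\cref{fig:contractivity}, \cref{lem:type-substitution}), and show that $\isequiv[\Empty]\Delta TU\kind$ holds iff the two trees are related by the evident coinductive relation (session subtrees related by $\STypeEquiv$, functional constructors matched node-wise). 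On trees, reflexivity, symmetry and transitivity are one-line coinductive arguments that transfer back along this equivalence. The real content then migrates into the soundness-and-completeness proof relating the $\Theta$-rule system to the coinductive relation — in essence the correctness proof of the Gay--Hole algorithm adapted to sequential composition, which reuses the laws of \cref{lemma:laws} — and that is where I expect the bulk of the effort to lie.
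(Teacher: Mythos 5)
Your proposal is correct in outline, but it takes a genuinely different — and considerably more explicit — route than the paper, whose entire proof of this lemma is the single phrase ``by rule induction.'' You decompose by kind, discharging the $\kinds^m$ case through the fact that session bisimilarity is an equivalence (identity, inverse and relational composition of bisimulations are bisimulations), and you treat the functional layer by tracking the $\Theta$-log explicitly: reflexivity by a measure on subterm pairs not yet recorded in $\Theta$, symmetry by transposing $\Theta$ together with auxiliary lemmas letting you reorder the \rulenameeqRecL/\rulenameeqRecR peeling when both sides are $\mu$-types, and transitivity via the coinductive tree characterisation in the style of Amadio--Cardelli and Gay--Hole. The last point is where your route earns its keep: your observation that a direct rule induction cannot synchronise the two $\Theta$-logs through the middle type $U$ is well-founded — this is precisely the known obstruction to admissibility of transitivity in such axiomatisations, and the standard remedy is exactly the semantic detour you propose. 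The paper's one-line proof buys brevity at the cost of leaving this gap unaddressed, whereas your plan identifies where the real work lies (the soundness and completeness of the $\Theta$-rule system with respect to the coinductive relation) and defers it honestly. Two small remarks: the well-formedness side conditions you derive from a ``two-sided'' \cref{lem:agreement-type-equiv} are already available by inverting \rulenameeqST, whose premises kind both types; and for the uses the paper makes of reflexivity (e.g.\ in \cref{lem:type-equivalence-substitution}) you should state it as holding for every $T$ with $\istype\Delta T\kind$, which your subterm-pair induction in fact delivers.
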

\begin{proof}
  \changed{By rule induction.} 
\end{proof}


\subsection{Session Type Duality}

\begin{figure}[t]
  \decltworel{Duality on polarities and views}{$\dual \sharp = \sharp$}{$\dual \star=\star$}
  \begin{align*}
    \dual {!} = {?} 
    &&
     \dual {?} = {!} 
    &&
      \dual {\oplus} = \&
    &&
      \dual {\&} = \oplus
  \end{align*}

  \declrel{The duality function on session types}{$\dual T=T$}
  \begin{gather*}
    \dual \skipk = \skipk
    \qquad
    \dual {\msgt T} = \dual\sharp T
    \qquad
    \dual{\choicet{\recordt \ell TL}} = \dual \star \recordt \ell {\dual T} L
    \\
    \dual{T;U} = \dual T; \dual U
    \qquad
    \dual a = a
    \qquad
    \dual{\rect{a}{\kind}{T}} = \rect{a}{\kind}{\dual T}
  \end{gather*}
  \caption{The duality function on session types}
  \label{fig:duality}
\end{figure}




Duality is a notion central to session types. It allows ``switching'' the point
of view from one end of a channel (say, the client side) to the other end (the
server side). The rules for duality are in \cref{fig:duality}.
The simplified formulation for recursion variables and recursive types is
justified by the fact that the types we consider are first order (channels
cannot be conveyed in messages), thus avoiding a complication known to arise in
the presence of
recursion~\cite{DBLP:conf/concur/BernardiH14,DBLP:journals/corr/BernardiH13,DBLP:journals/corr/abs-2004-01322}.
The properties below state that duality is idempotent, preserves termination and
contractivity, and that every session has a dual.

\begin{lemma}[Properties of duality]\
 \label{lem:dual-preserves-contractivity}
 \begin{enumerate}
 \item If $\dual T=U$, then $\dual U=T$
  \item\label{it:dual-done} If $\isDone T$ and $\dual T=U$, then $\isDone U$. 
 \item\label{it:dual-contr} If $\isContr aT$ and $\dual T=U$, then
   $\isContr aU$.
 \item\label{it:has-dual} If $\istype{}{T}{\kinds^m}$, then $\dual T=U$ and
   $\istype{}{U}{\kinds^m}$ for some $U$.
 \end{enumerate}
\end{lemma}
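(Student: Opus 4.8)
The four items are proved in the order stated, since each relies on those before it. Throughout I treat $\dual{\cdot}$ as a total function on the session-type fragment of the grammar (the only types for which the hypothesis $\dual T=U$ is meaningful), and use freely that duality on polarities and views is involutive. Idempotence (the first item, $\dual{\dual T}=T$) is a direct structural induction on $T$: the cases $\skipt$ and $a$ are immediate; for $\msgt T$ and $\choicet{\recordt\ell TL}$ one unfolds $\dual{\cdot}$ twice and uses $\dual{\dual\sharp}=\sharp$ and $\dual{\dual\star}=\star$ together with the induction hypothesis on the components; $\semit TU$ and $\rect a\kind T$ are congruence cases closed under the induction hypothesis. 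Part~\ref{it:dual-done} is a rule induction on $\isDone T$: \rulenametermSkip is trivial as $\dual\skipt=\skipt$, and \rulenametermSeq and \rulenametermRec go through because $\dual{\semit TU}=\semit{\dual T}{\dual U}$ and $\dual{\rect a{\kinds^m}T}=\rect a{\kinds^m}{\dual T}$ match the shapes of those rules, so the induction hypotheses reassemble the conclusion.

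Part~\ref{it:dual-contr} is a rule induction on $\isContr aT$. Rules \rulenamecontrVar and \rulenamecontrRec, and the subcases of \rulenamecontrOther whose type former lies in the domain of $\dual$ (namely $\skipt$, $\msgt U$ and $\choicet{\recordt\ell TL}$, whose duals have the same shape), are immediate; \rulenamecontrSeqTwo uses Part~\ref{it:dual-done} to pass from $\isDone T$ to $\isDone{\dual T}$ and then the induction hypothesis on $U$. The one case needing an argument is \rulenamecontrSeqOne, whose side condition is the \emph{negative} judgement $\isNotDone T$: I would argue $\isNotDone{\dual T}$ holds as well, for if $\isDone{\dual T}$ then Part~\ref{it:dual-done} applied to $\dual T$ (whose dual is $T$ by idempotence) would give $\isDone T$, contradicting the premise; the induction hypothesis then yields $\isContr a{\dual T}$ and \rulenamecontrSeqOne concludes. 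The cases \rulenamecontrPoly and the remaining subcases of \rulenamecontrOther ($\unitt$, $\funt UV$, records, variants) cannot arise, since those type formers lie outside the domain of $\dual$.

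For Part~\ref{it:has-dual} I would prove the strengthened statement that for all $\tvarset,\Delta$, if $\isType{\tvarset}{\Delta}{T}{\kinds^m}$ then $\dual T$ is defined and $\isType{\tvarset}{\Delta}{\dual T}{\kinds^m}$ (the case $\Delta=\Empty$ being exactly the claim sought), by rule induction on the inner type-formation derivation. Rules \rulenamekindSkip, \rulenamekindMsg, \rulenamekindCh, \rulenamekindSeq and \rulenamekindVar reproduce themselves on the dualised type --- \rulenamekindMsg using that dualisation leaves a message payload untouched and only flips the polarity --- while \rulenamekindRec uses Part~\ref{it:dual-contr} to recover contractivity of the dualised body before reapplying the rule. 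The delicate case is \rulenamekindSub: from $\kind\subk\kinds^m$ one must read off from the subkinding rules that $\kind$ is itself a session kind $\kinds^{m'}$ (neither $\kindm$ nor $\kindt$ lies below $\kinds$), so the induction hypothesis applies at $\kind$ and \rulenamekindSub transports the conclusion back to $\kinds^m$. I expect the only genuine obstacles to be this subkinding analysis --- it is what guarantees that every well-kinded session type is built solely from session-type constructors, so that $\dual{\cdot}$ is defined on all of them --- together with the handling of the negative premise of \rulenamecontrSeqOne in Part~\ref{it:dual-contr}; everything else is routine bookkeeping.
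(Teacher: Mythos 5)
Your proposal is correct and follows essentially the same route as the paper, which proves all four items ``by straightforward rule induction on the first hypothesis''; your additional care with the negative premise of the contractivity rule for sequencing, the vacuous cases where $\dual{\cdot}$ is undefined, and the strengthening of item~\ref{it:has-dual} to arbitrary contexts are exactly the details the paper leaves implicit under ``straightforward''.
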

\begin{proof}
  By straightforward rule induction on the first hypothesis, in all cases.
\end{proof}



\section{Expressions and Processes, Statics and Dynamics}
\label{sec:processes}

This section introduces the terms, typing and operational semantics of \fmusession.

\subsection{Expressions and Processes}


\begin{figure}[t]
  \begin{align*}
    \text{Constants}&&
    c \grmeq\:&
    \sendk
    \grmor \receivek
    \grmor \forke
    \grmor \unite
    \\
    \text{Values}&&
    v \grmeq\:&
    x
    \grmor c
    \grmor \abse xTe 
    \grmor \tabse a\kind v
    \grmor \recorde  \ell v L
    \grmor \injecte \ell v
    \\
    &&
    \grmor & \selecte \ell {}
    \grmor \tappe \sendk T
    \grmor \appe{\tappe\sendk T} v
    \grmor \tappe {\appe{\tappe\sendk T} v} U 
    \\&&
    \grmor& \tappe\receivek T \grmor \tappe{\tappe\receivek T} U
    \\
    \text{Expressions} &&
    e \grmeq\:&
    v 
    \grmor \appe e e
    \grmor \recorde  \ell e L
    \grmor\lete \ell xLee
    \\
    &&\grmor& 
    \changed{\unlete{\unite}{e}{e}}
    \grmor \injecte \ell e 
    \grmor \casee e {\recordp \ell e L}
    \\
    &&\grmor&
    \tappe eT \grmor
    \newe T \grmor
    \matche{e}{\recordp \ell e L}
    \\
    \text{Processes}
    && p \grmeq\:&
    \PROC e
    \grmor p \PAR p
    \grmor \NU xxp
    \\
    \text{Typing contexts} &&
    \Gamma \grmeq & \Empty \grmor \Gamma, x\colon T 
  \end{align*}
  \caption{The syntax of values, expressions, processes and typing contexts}
  \label{fig:processes}
\end{figure}


\Cref{fig:processes} introduces the syntax of values, expressions and processes.
A value is either a variable (which may stand for a channel end), a constant, an
expression abstraction, a type abstraction, a record whose fields are values, a
variant injection of a value, a partial (type) application of the constants that
denote the operations on channels.
Notice that expression $\tappe{\forkk}{T}$ is not a value because
 $\appe{\tappe{\forkk}{T}}{e}$ is evaluated before its body $e$
becomes a value, see rule \rulenameprocredFork in \cref{fig:reduction}.
The body of a type abstraction is restricted to a syntactic
value as customary. This value restriction guarantees the sound interplay of
call-by-value evaluation and polymorphism in the presence of effects like
channel-based communication~\cite{DBLP:journals/lisp/Wright95}.

An \emph{expression} $e$ is a value, an application, a record, a record
elimination by pattern matching (this is necessary because records may contain
linear fields), \changed{a unit elimination,} an injection in a variant type, a
variant elimination, a type application, a channel creation, or matching on a
tag received over a channel.
\changed{Expression $\letk~(x,y) = e_1~\ink~e_2$ used in examples abbreviates
  $\letk~\{\fstl\colon x,\sndl\colon y\} = e_1~\ink~e_2$.}

A \emph{process} $p$ is either an expression lifted to the process level, a
parallel composition of two processes, or a channel restriction binding the two
ends of a channel in a subsidiary process.

\paragraph{Context Formation and Context Split}


\begin{figure}[t!]
  \declrel{Context formation}{$\isCtx \Gamma \kind$}
  \begin{gather*}
    \axiom{\rulenameformEmpty}{\isCtx\Empty\kind}
    \qquad
    \infrule{\rulenameformExt}{
      \isCtx \Gamma \kind \\ \istype\Delta{T}{\kind}
    }{
      \isCtx{\Gamma, x\colon T}\kind
    }
  \end{gather*}
  \declrel{Context split}{$\splitctx \Gamma \Gamma \Gamma$}
  \begin{gather*}
    \axiom{\rulenamesplitEmpty}{\splitctx \Empty \Empty \Empty}
    \qquad
    \infrule{\rulenamesplitUnr}
    	{\splitctx \Gamma {\Gamma_1} {\Gamma_2}\\ \istype{\Delta} T{\kindtu}}
    	{\splitctx {\Gamma, x\colon T}
            	   {(\Gamma_1,x\colon T)}{(\Gamma_2,x\colon T)}}
    \\
    \infrule{\rulenamesplitLeft}
    	{\splitctx \Gamma {\Gamma_1}{\Gamma_2} \\ \istype{\Delta} T{\kindtl}}
    	{\splitctx {\Gamma, x\colon T} {(\Gamma_1,x\colon T)}{\Gamma_2}}
    \qquad
    \infrule{\rulenamesplitRight}
    	{\splitctx \Gamma {\Gamma_1}{\Gamma_2}\\ \istype{\Delta} T {\kindtl}}
    	{\splitctx {\Gamma, x\colon T} {\Gamma_1}{(\Gamma_2,x\colon T)}}
  \end{gather*}
  \caption{Context formation and context split}
  \label{fig:context-split}
  \label{fig:context-formation}
\end{figure}


Type formation (\cref{fig:kinding}) is lifted pointwise to contexts in judgement
$\isCtx\Gamma\kind$ (see \cref{fig:context-formation}).
Context split (also in \cref{fig:context-split}) governs the distribution of
linear bindings. If $\splitctx \Gamma {\Gamma_1}{ \Gamma_2}$, then the
unrestricted bindings of $\Gamma$ are both in $\Gamma_1$ and $\Gamma_2$ whereas
linear bindings are either in $\Gamma_1$ or $\Gamma_2$ but not in both (two
rules). Splitting does not generate new bindings. For convenience, we write
$\Gamma_1 \circ \Gamma_2$ in the conclusion of an inference rule instead of
$\Gamma$ with the additional premise $\splitctx\Gamma{\Gamma_1}{\Gamma_2}$. We
need a few technical lemmas about splittings.

\begin{lemma}[Agreement for context split]
  \label{lem:agreement-context-split}
  \label{lemma:context-split-preserves-kinding}
  Let $\splitctx \Gamma {\Gamma_1}{\Gamma_2}$. Then,
  $\isCtx \Gamma \kind$ iff $\isCtx{\Gamma_1} \kind$ and
  $\isCtx {\Gamma_2} \kind$.
\end{lemma}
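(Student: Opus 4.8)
The plan is to proceed by rule induction on the derivation of $\splitctx \Gamma {\Gamma_1}{\Gamma_2}$, establishing both directions of the biconditional simultaneously. The only ingredient beyond the induction hypothesis is an inversion principle for context formation: since $\rulenameformExt$ is the unique rule whose conclusion matches a non-empty context, any derivation of $\isCtx{\Gamma', x\colon T}{\kind}$ decomposes into sub-derivations of $\isCtx{\Gamma'}{\kind}$ and $\istype{\Delta}{T}{\kind}$. I will use this freely. For the base case $\rulenamesplitEmpty$ we have $\Gamma = \Gamma_1 = \Gamma_2 = \Empty$, and $\isCtx{\Empty}{\kind}$ holds unconditionally by $\rulenameformEmpty$, so the biconditional is immediate.

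For $\rulenamesplitUnr$ we have $\Gamma = \Gamma', x\colon T$ and $\Gamma_i = \Gamma_i', x\colon T$ with $\splitctx{\Gamma'}{\Gamma_1'}{\Gamma_2'}$. In the forward direction, inversion on $\isCtx{\Gamma', x\colon T}{\kind}$ yields $\isCtx{\Gamma'}{\kind}$ and $\istype{\Delta}{T}{\kind}$; the induction hypothesis gives $\isCtx{\Gamma_1'}{\kind}$ and $\isCtx{\Gamma_2'}{\kind}$, and $\rulenameformExt$ re-attaches $x\colon T$ to each. In the backward direction, inversion on $\isCtx{\Gamma_1', x\colon T}{\kind}$ supplies both $\isCtx{\Gamma_1'}{\kind}$ and $\istype{\Delta}{T}{\kind}$; after inverting $\isCtx{\Gamma_2', x\colon T}{\kind}$ for $\isCtx{\Gamma_2'}{\kind}$, the induction hypothesis gives $\isCtx{\Gamma'}{\kind}$ and $\rulenameformExt$ concludes. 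The cases $\rulenamesplitLeft$ and $\rulenamesplitRight$ are analogous, the only difference being that $x\colon T$ is carried by exactly one of $\Gamma_1,\Gamma_2$: in the forward direction we re-attach the binding only to that component; in the backward direction we recover $\istype{\Delta}{T}{\kind}$ from precisely the component that retains $x\colon T$, the other hypothesis being already in the shape the induction hypothesis expects.

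There is essentially no obstacle here; the argument is routine. The one point that warrants a line of care is that the multiplicity side-conditions on the split rules ($\istype{\Delta}{T}{\kindtu}$ in $\rulenamesplitUnr$, $\istype{\Delta}{T}{\kindtl}$ in $\rulenamesplitLeft$ and $\rulenamesplitRight$) play no role: the kinding of $T$ needed to apply $\rulenameformExt$ is precisely the kind $\kind$ appearing in the context-formation judgment under consideration, and it is always obtained by inverting the relevant context-formation derivation rather than read off from the split derivation. One should also keep the kind $\kind$ fixed throughout, consistently with its occurrence in all three judgments of the statement.
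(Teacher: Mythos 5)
Your proof is correct and follows exactly the route the paper takes: rule induction on the context-split derivation, with inversion of \rulenameformExt{} supplying the kinding of each binding; the paper merely states ``by rule induction on the context split hypothesis'' and your write-up fills in the routine details faithfully, including the accurate observation that the multiplicity side-conditions of the split rules are not needed.
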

\begin{proof}
  By rule induction on the context split hypothesis.
\end{proof}

Given a well-formed context $\Gamma$, we distinguish 
the linear and unrestricted portions of
$\Gamma$ and denote them by $\mathcal L(\Gamma)$ and $\mathcal U(\Gamma)$,
respectively~\cite{walker:substructural-type-systems}. 

\begin{lemma}[Properties of context split]\label{lem:cSplitProps}
  Let $\splitctx \Gamma {\Gamma_1}{\Gamma_2}$
  \begin{enumerate}
  \item\label{item:split-equalUn}$\mathcal U(\Gamma) = \mathcal U{(\Gamma_1)} = \mathcal U{(\Gamma_2)}$
  \item\label{item:split-varNotBoth} If $x\colon T\in\Gamma$ and $\istype{\Delta}{T}{\kindtl}$, then either
    $x \in \dom{(\Gamma_1)}$ and $x \not\in\dom{(\Gamma_2)}$, or 
    $x \not\in\dom{(\Gamma_1)}$ and $x \in\dom{(\Gamma_2)}$.
  \item\label{item:split-fst} If $\splitctx{\Gamma_1}{\Gamma_{11}}{\Gamma_{12}}$, then there exists
    exactly one $\Gamma'$ such that $\splitctx\Gamma{\Gamma_{11}}{\Gamma'}$ and
    $\splitctx{\Gamma'}{\Gamma_{12}}{\Gamma_2}$.
  \end{enumerate}
\end{lemma}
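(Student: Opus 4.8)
The plan is to dispatch the first two items by a routine rule induction on the derivation of $\splitctx{\Gamma}{\Gamma_1}{\Gamma_2}$, and to treat the third — the substantive one — by a more careful inductive rearrangement of split derivations.

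For the first item the base case \rulenamesplitEmpty is immediate; in the \rulenamesplitUnr case the head binding, being of an unrestricted type, is copied into all of $\Gamma$, $\Gamma_1$, $\Gamma_2$ and hence into all three unrestricted portions; and in the \rulenamesplitLeft and \rulenamesplitRight cases the binding sent to a single side contributes to none of the unrestricted portions, so in each case the induction hypothesis closes the goal. For the second item, suppose $x\colon T\in\Gamma$ with $T$ linear. The same induction shows that $x\colon T$ must have been introduced by \rulenamesplitLeft or \rulenamesplitRight, since the side condition $\istype{\Delta}{T}{\kindtu}$ of \rulenamesplitUnr fails for a linear $T$; either of those rules places $x$ in exactly one of the two components, and, since the variables of $\Gamma$ are distinct, no later step of the derivation re-introduces $x$, so $x$ lies in the domain of precisely one of $\Gamma_1$ and $\Gamma_2$.

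For the third item I would first argue uniqueness. Since a split only redistributes the bindings of the context being split (it creates and discards none), the requirement $\splitctx{\Gamma'}{\Gamma_{12}}{\Gamma_2}$ forces $\dom(\Gamma') = \dom(\Gamma_{12})\cup\dom(\Gamma_2)$, while the requirement $\splitctx{\Gamma}{\Gamma_{11}}{\Gamma'}$ forces $\Gamma'$ to be a sublist of $\Gamma$; these two facts identify $\Gamma'$ as the restriction of $\Gamma$ to $\dom(\Gamma_{12})\cup\dom(\Gamma_2)$. For existence I would then induct on the derivation of $\splitctx{\Gamma}{\Gamma_1}{\Gamma_2}$, peeling off the head binding $x\colon T$ of $\Gamma$: it occupies a definite subset of the four components $\Gamma_1,\Gamma_2,\Gamma_{11},\Gamma_{12}$, determined by the last rule of $\splitctx{\Gamma}{\Gamma_1}{\Gamma_2}$ and, when $x$ landed in $\Gamma_1$, by the last rule of $\splitctx{\Gamma_1}{\Gamma_{11}}{\Gamma_{12}}$; this dictates whether $x\colon T$ belongs to $\Gamma'$ and which of \rulenamesplitUnr, \rulenamesplitLeft, \rulenamesplitRight justifies the head step of each of the two new splits, so one strips $x$ from every context it occurs in, applies the induction hypothesis to the tails, and re-attaches $x$.

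I expect the existence argument to be the main obstacle: it breaks into several combinations of last rules for the two given splits (essentially, where $x$ sits among $\Gamma_1$ versus $\Gamma_2$, crossed with where it sits among $\Gamma_{11}$ versus $\Gamma_{12}$ when relevant), each of which requires checking that the re-attachment steps are bona fide applications of the split rules and yield exactly the $\Gamma'$ fixed by the uniqueness analysis. A tidier organization I would consider is to first characterize a derivation of $\splitctx{\Gamma}{\Gamma_1}{\Gamma_2}$ by the function it induces sending each binding of $\Gamma$ to \emph{left}, \emph{right}, or \emph{both} (the last only for unrestricted bindings), reduce the claim to an elementary property of composing such functions, and then read all three items off that characterization.
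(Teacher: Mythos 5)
Your proposal is correct and follows essentially the same route as the paper: items 1 and 2 by rule induction on the split derivation with a case analysis on the last rule, and item 3 by a structural induction on one of the given split derivations (the paper inducts on $\splitctx{\Gamma_1}{\Gamma_{11}}{\Gamma_{12}}$ where you induct on $\splitctx{\Gamma}{\Gamma_1}{\Gamma_2}$, but the case analysis is the same either way). Your explicit uniqueness characterization of $\Gamma'$ as the sublist of $\Gamma$ with domain $\dom(\Gamma_{12})\cup\dom(\Gamma_2)$ is a useful addition that the paper's one-line proof leaves implicit.
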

\begin{proof}
  \Cref{item:split-equalUn,item:split-varNotBoth}. By rule induction on the
  context split hypothesis.
  \Cref{item:split-fst}. By rule induction on the context split for $\Gamma_1$.
\end{proof}

\begin{lemma}[Substitution and context split]
  \label{lem:substitution-context-split}
  If $\splitctx[\Delta, a:\kind] \Gamma {\Gamma_1}{\Gamma_2}$ and
  $\istype\Delta U \kind$, then
  $\subs Ua\Gamma = \subs Ua{\Gamma_1} \circ \subs Ua{\Gamma_2}$.
\end{lemma}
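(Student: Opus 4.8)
The plan is to argue by rule induction on the derivation of $\splitctx[\Delta, a\colon\kind]{\Gamma}{\Gamma_1}{\Gamma_2}$, recalling that substitution into a typing context acts pointwise on the types of the bindings, so that $\subs Ua{(\Gamma', x\colon T)} = \subs Ua{\Gamma'}, x\colon\subs UaT$, and that the claimed equation $\subs Ua\Gamma = \subs Ua{\Gamma_1} \circ \subs Ua{\Gamma_2}$ abbreviates the existence of a derivation $\splitctx{\subs Ua\Gamma}{\subs Ua{\Gamma_1}}{\subs Ua{\Gamma_2}}$, the ambient context dropping from $\Delta, a\colon\kind$ to $\Delta$ exactly as the substitution discharges the variable $a$.

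The base case is \rulenamesplitEmpty, where $\Gamma = \Gamma_1 = \Gamma_2 = \Empty$; since $\subs Ua\Empty = \Empty$, the conclusion is again an instance of \rulenamesplitEmpty. For the three inductive cases \rulenamesplitUnr, \rulenamesplitLeft and \rulenamesplitRight we have $\Gamma = \Gamma', x\colon T$, and the premises are a smaller split $\splitctx[\Delta, a\colon\kind]{\Gamma'}{\Gamma_1'}{\Gamma_2'}$ together with a kinding side-condition: $\istype{\Delta, a\colon\kind}{T}{\kindtu}$ for \rulenamesplitUnr and $\istype{\Delta, a\colon\kind}{T}{\kindtl}$ for the other two. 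The induction hypothesis yields $\splitctx{\subs Ua{\Gamma'}}{\subs Ua{\Gamma_1'}}{\subs Ua{\Gamma_2'}}$. The kinding side-condition is transported along the substitution using \cref{it:subs-type}: from $\istype\Delta U\kind$ and $\istype{\Delta, a\colon\kind}{T}{\kindtu}$ (resp.\ $\kindtl$) we obtain $\istype{\Delta}{\subs UaT}{\kindtu}$ (resp.\ $\kindtl$). Re-applying the same split rule to $\subs Ua{\Gamma'} = \subs Ua{\Gamma_1'} \circ \subs Ua{\Gamma_2'}$, the binding $x\colon\subs UaT$, and this kinding derivation delivers the split of $\subs Ua{(\Gamma', x\colon T)}$, distributing the new binding to the appropriate side(s).

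I do not expect a genuine obstacle: the whole argument is a routine structural induction whose only substantive input is \cref{it:subs-type} (type substitution preserves type formation). The single point that needs a little care is the bookkeeping between the ``outer'' judgement $\istype\Delta T\kind$ appearing as a side-condition in the split rules and the ``inner'' judgement $\isType{\tvarset}{\Delta}{T}{\kind}$ on which \cref{it:subs-type} is phrased; one takes $\tvarset = \dom(\Delta)$ uniformly, in accordance with rule \rulenamekindType, so the translation between the two is immediate.
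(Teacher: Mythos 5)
Your proof is correct and follows exactly the route the paper takes: the paper's proof is simply "by rule induction on the context split hypothesis," and your write-up fills in precisely that induction, using \cref{it:subs-type} of the type substitution lemma to transport the kinding side-conditions. The remark about reconciling the outer judgement $\istype\Delta T\kind$ with the inner one via $\tvarset = \dom(\Delta)$ is the right (and only) point of care.
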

\begin{proof}
  By rule induction on the context split hypothesis.
\end{proof}


\paragraph{Expression and Process Formation}

\begin{figure}[t!]
  \declrel{Types for constants}{$\typeof(c) = T$}
  \begin{align*}
    \typeof(\sendk) = &\
      \forallt{a}{\kindml}{
        \funt[\un]{a}{\forallt{b}{\kindsl}{
             {\funt[\lin]{\semit {\sout a}b}{b}}}
      }}
    \\
    \typeof(\receivek) = &\
      \forallt{a}{\kindml}{\forallt{b}{\kindsl}{
        \funt[\un] {\semit{\sint a}{b}} {\pairt ab}
      }}
    \\
    \typeof(\forkk) = &\
      \forallt{a}{\kindtu}{\funt[\un]{a}{\unitt[\un]}}
    \\
    \typeof(\unite) = &\ \unitt
  \end{align*}
  \caption{Types for constants}
  \label{fig:types-constants}
\end{figure}


Types for constants are in \cref{fig:types-constants}. Thanks to polymorphism,
we can give types to most of the session operations. Previous work
\cite{DBLP:journals/jfp/GayV10,DBLP:conf/icfp/ThiemannV16} includes specific
typing rules for these primitives. Inspecting types (and kinds), \changed{we see
  that sending and receiving is restricted to $\kindm$-values.}
In both cases, the continuation is a linear session (of kind $\kindsl$).
\changed{The $\sendk$ and $\receivek$ functions are unrestricted because they do
  not close over linear values.
  However, a partially applied $\sendk$ must be linear for it captures a linear
  value (that of the value to be sent, of type $a\colon\kindml$).}
The type of $\forkk$ indicates that a new process may return any value of an
unrestricted type, a value that is discarded by the operational semantics.

At a first glance, the use of linear polymorphic variables for the values
exchanged in messages ($a\colon\kindml$ in the types for $\sendk$ and
$\receivek$) may seem restrictive, but the prescription of $\kindml$ to $a$ is
just a convenience for checking type application. A $\sendk$ function, partially
applied to an unrestricted value may still be used multiple times. By
$\eta$-expanding the partial application, we obtain a thunk that can be used
repeatedly. For example, function
$\abse[\un] \_ {\unitt[\un]} {\tappe \sendk \intt~5}$ is a partially applied
$\sendk$ of an unrestricted type:
$\funt[\un]{\unitk[\un]}{\forallt{b}{\kindsl}{ {\funt[\lin]{\semit {\sout
          a}b}{b}}}}$.

The remaining session operations retain their status as language primitives with
specific typing rules. Expression $\newe T$ creates a new channel of session
type $T$ and returns a pair of channel ends of types $T$ and $\dual T$.
Expression $\selectk\ k$ selects option $k$ on a internal choice type.
Expression $\matchk$ receives a label and branches on it.

\begin{figure}[t!]
  \declrel{Expression formation}{$\isExpr \Gamma eT$}
  \begin{gather*}
    \infrule{\rulenametypeConst}
    {\isCUn \Gamma}
    {\isExpr \Gamma c{\typeof(c)}}
    \quad
    \infrule{\rulenametypeVar}
    {\isCUn \Gamma
      \\
      \istype{\Delta} T\kind}
    {\isExpr{\Gamma,x\colon T} xT}
    \\
    \infrule{\rulenametypeAbs}
    {\isCtx \Gamma {\kindk \kindt m}
      \\\!\!\!\!\!
      \isExpr {\Gamma,x\colon T_1} e T_2}
    {\isExpr \Gamma {\abse x {T_1} e}{\funt{T_1}{T_2}}}
    \;\;\;
    \infrule{\rulenametypeApp}
    {\isExpr {\Gamma_1} {e_1}{\funt{T_1}{T_2}}
      \\\!\!\!\!\!
      \isExpr{\Gamma_2}{e_2}{T_1}}
    {\isExpr{\Gamma_1\circ\Gamma_2}{\appe{e_1}{e_2}}{T_2}}    
    \\
    \infrule{\rulenametypePoly}
    {
      \isExpr[\Delta,a\colon\kind]{\Gamma}{v}{T}
      \\
      a \notin \free(\Gamma)
    }
    {\isExpr{\Gamma}{\tabse a\kind v}{\forallt a \kind T}}
    \quad
    \infrule{\rulenametypeTApp}
    {
      \isExpr{\Gamma}{e}{\forallt{a}{\kind}{U}}
      \\ 
      \istype{\Delta}{T}{\kind}
    }
    {\isExpr{\Gamma}{\tappe{e}{T}}{\subs Ta U}}
    \\
    \infrule{\rulenametypeRcd}
    {
      \isExpr{\Gamma_\ell}{e_\ell}{T_\ell}\\
      (\forall\ell\in L)
    }
    {\isExpr{\circ\Gamma_\ell}{\recorde \ell eL}{\recordt \ell TL}}
    \\
    \infrule{\rulenametypeLet}
    {
      \isExpr{\Gamma_1}{e_1}{\recordt \ell TL}
      \\
      \isExpr{\Gamma_2, (x_\ell\colon T_\ell)_{\ell\in L}}{e_2} U}
    {\isExpr{\Gamma_1\circ\Gamma_2}{\lete \ell xL{e_1}{e_2}} U}
    \\
    \changed{
    \infrule{\rulenametypeUnitElim}
    {
      \isExpr{\Gamma_1}{e_1}{\unitk}
      \\
      \isExpr{\Gamma_2}{e_2} T}
    {\isExpr{\Gamma_1\circ\Gamma_2}{\unlete {\unite}{e_1}{e_2}} T}}
    \\    
    \infrule{\rulenametypeVariant}
    {
      \isExpr \Gamma {e}{T_k}
      \\
      \istype{\Delta}{T_\ell}{\kindtl}
      \\            
      k \in L
      \\
      \changed{(\forall\ell\in L)}
    }
    {\isExpr \Gamma {\injecte ke}{\variantt \ell T L}}
    \\
    \infrule{\rulenametypeCase}
    {
      \isExpr{\Gamma_1}{e}{\variantt\ell T L}
      \\
      \isExpr{\Gamma_2}{e_\ell}{\funt[\changed{\lin}] {T_\ell} T}
      \\
      \changed{(\forall\ell\in L)}
       }
    {\isExpr{\Gamma_1\circ\Gamma_2}{\casee e {\recordp \ell e L}}{T}}
    \\
    \infrule{\rulenametypeSel}
    {
      \isCUn \Gamma
      \\
      \istype{\Delta}{T_\ell}{\kindsl}
      \\
      k \in L
      \\
      \changed{(\forall\ell\in L)}
    }
    {\isExpr \Gamma{\selecte k {}}{\funt{\ichoicet{\recordt \ell T L}}{T_k}}}
    \\
    \infrule{\rulenametypeMatch}
    {
      \isExpr{\Gamma_1}{e}{\echoicet{\recordt \ell T L}}\\
      \isExpr{\Gamma_2}{e_\ell}{\funt[\changed{\lin}] {T_\ell} T}\\
      \changed{(\forall\ell\in L)}
    }
    {\isExpr{\Gamma_1\circ\Gamma_2}{\matche e {\recordp \ell e L}}{T}}
    \\
    \infrule{\rulenametypeNew}
    {\isCUn \Gamma\\
      \istype{\Empty} T \kindsl\\
    }
    {\isExpr \Gamma {\newe T} {\pairt T{\dual T}}}
    \quad
    \infrule{\rulenametypeEq}
    {
      \isExpr \Gamma e {T_1}
      \\
      \isequiv[\Empty]\Delta{T_1}{T_2}{\kind}
    }
    {\isExpr \Gamma e {T_2}}
  \end{gather*}
  \caption{Expression formation}
  \label{fig:typing-exps}
\end{figure}


Expression formation is defined by rules in \cref{fig:typing-exps}. Constants are
typed in {\rulenametypeConst} according to \cref{fig:types-constants} under the
condition that the context is unrestricted (\ie all linear values have been
consumed). The type rule for variables, \rulenametypeVar, insists that the
unused part of the context must be unrestricted. In rule \rulenametypeAbs,
context formation under multiplicity $m$ enforces that unrestricted abstractions
do not capture linear values. Application {\rulenametypeApp}, kinded type
abstraction {\rulenametypePoly}, and type application {\rulenametypeTApp} are
all standard. Rule \rulenametypeApp rule splits context in two, one part types
the function, the other the argument. The introduction rule for records
{\rulenametypeRcd} uses the notation $\circ\Gamma_\ell$ for iterated context
split. The corresponding elimination rule {\rulenametypeLet} is standard. We
cannot offer the dot notation to select single fields from records, as such
projection might discard linear values. \changed{The elimination of $\unitk$ is
  given by rule \rulenametypeUnitElim.} The rule {\rulenametypeVariant} and
{\rulenametypeCase} are standard introduction and elimination rules for
variants.



\begin{figure}[t!]
   \declrel{Process formation}{$\isproc\Gamma p$}
  \begin{gather*}
  	\infrule{\rulenameprocExp}
    	{
          \isExpr[\Empty]\Gamma{e}{T}\\
          \istype{\empty}T \kindtu
    	}
    	{\isproc {\Gamma} {\PROC e}}
    \qquad
    \infrule{\rulenameprocPar}
     {
      \isproc{\Gamma_1}{p_1}
      \\ 
      \isproc{\Gamma_2}{p_2}
     }
     {\isproc{\Gamma_1 \circ \Gamma_2}{p_1 \mid p_2}}
    \\
    \infrule{\rulenameprocNew}
     {
      \isproc{\Gamma, x\colon T, y \colon {\dual T}}{p}\\
      \istype{\empty} T \kindsl\\
     }
     {\isproc{\Gamma}{\NU xyp}}
  \end{gather*}
  \caption{Process formation}
  \label{fig:typing-procs}
\end{figure}


Process formation is defined in \cref{fig:typing-procs}. Any
expression of unrestricted type can be made into a  process
({\rulenameprocExp}). Processes running in parallel split the resources
among them ({\rulenameprocPar}). Restriction introduces two channel ends
of session type, which are dual to one another ({\rulenameprocNew}).


\paragraph{Reduction}

\begin{figure}[t!]
  \declrel{Structural congruence}{$p \equiv q$}
  \begin{gather*}
    \axiom{}{p \equiv \PROC {\changed{\unite[\un]}} \PAR p}
    \qquad
    \axiom{}{p \PAR q \equiv q \PAR p}
    \qquad
    \axiom{}{(p \PAR  q) \PAR r \equiv p \PAR (q  \PAR r)}
    \\
    \axiom{}{(\NU xy p) \PAR q \equiv \NU xy (p \PAR q)}
    \qquad
    \axiom{}{\NU xy \PROC{\changed{\unite[\un]}} \equiv \PROC{\changed{\unite[\un]}}}
    \qquad
    \axiom{}{\NU xyp \equiv \NU yxp}
    \\
    \axiom{}{\NU wx \NU yzp \equiv \NU yz \NU wxp}
  \end{gather*}
  Rules for congruence and equivalence omitted.
  \caption{Structural congruence}
  \label{fig:scong}
\end{figure}

We consider processes up to structural congruence as defined in
\cref{fig:scong}. The rules are standard~\cite{DBLP:journals/iandc/Vasconcelos12}.
\Cref{fig:reduction} specifies the call-by-value, type-agnostic, reduction rules for expressions and
processes. Expression reduction $\isRed ee$ is standard for
call-by-value lambda calculus. Evaluation contexts $E$ specify a
left-to-right call-by-value reduction strategy. Thanks to the value
restriction, we need not evaluate under type abstractions.
Record components are evaluated
from left to right as they appear in the program text.

Process reduction $\isRed pp$ includes standard rules for congruence
({\rulenameprocredCong}), parallel execution ({\rulenameprocredPar}), and
reduction under restriction ({\rulenameprocredBind}). An expression process can
perform an internal reduction step ({\rulenameprocredExp}). Forking creates a
new process from the argument ({\rulenameprocredFork}). The operations on
channels are reasonably standard for session type calculi. Creation of a new
channel, \rulenameprocredNew, ignores the type argument, wraps the process in a
restriction, and returns the channel ends in a pair. Rule {\rulenameprocredMsg}
matches a send with a receive operation across threads. It transfers the value
ignoring the type arguments. Rule {\rulenameprocredCh} matches a select with
a suitable match operation and dispatches according to the label $k$ selected by
the internal choice process.

\begin{figure}[t!]
  \textit{Evaluation contexts}
  \begin{align*}
    && E \grmeq&
                   []
                   \grmor Ee
                   \grmor vE
                   \grmor \injecte \ell E
                   \grmor \tappe ET
                   \grmor \lete \ell xLEe 
    \\
    &&\grmor&
              \unlete {\unite}Ee
              \grmor \{ \ell_1= v_1, \dots, \ell_i=E, \dots, \ell_n=e_n\}
    \\
    &&\grmor& \casee E {\recordp \ell e L}
              \grmor \matche{E}{\recordp \ell e L}
  \end{align*}
  \declrel{Expression reduction}{$\isRed ee$}
  \begin{gather*}
    \axiom{\rulenameexpredApp}{(\abse x{\_}e)v \reduces \subs vxe}
    \qquad
    \axiom{\rulenameexpredLet}{
      \lete \ell xL{\recorde \ell vL}{e} \reduces \subs
      {v_\ell}{x_\ell}e_{\ell\in L}}
    \\
    \axiom{\rulenameexpredUnitElim}{\unlete {\unite}{\unite}e \reduces e}
    \qquad
    \axiom{\rulenameexpredTApp}{(\tappe{\tabse a \_ v)}{T} \reduces \subs Tav}
    \\
    \infrule{\rulenameexpredCase}
    {k\in L}
    {\casee{\injecte k v}{\recordp \ell e L}  \reduces e_kv}
    \qquad
    \infrule{\rulenameexpredCtx}
    {e \reduces e'}
    {E[e] \reduces E[e']}
  \end{gather*}
  \declrel{Process reduction}{$\isRed pp$}
  \begin{gather*}
    \infrule{\rulenameprocredExp}
    {e \reduces e'} 
    {\PROC{e} \reduces \PROC{e'}}
    \quad
    \axiom{\rulenameprocredFork}{\PROC{E[\appe{\tappe\forkk\_} e]} \reduces
      \PROC{E[\unite[\un]]} \PAR \PROC e}
    \quad
    \axiom{\rulenameprocredNew}{\PROC{E[\newe \_]} \reduces\NU xy\PROC{E[(x,y)]}}
    \\
    \axiom{\rulenameprocredMsg}{
      \NU xy(\PROC{E_1[\sendk[\_]v[\_]x]}
      \PAR \PROC{E_2[\receivek[\_][\_]y})
      \reduces\NU xy(\PROC{E_1[x]} \PAR \PROC{E_2[(v,y)]})}
    \\
    \infrule{\rulenameprocredCh}
    {k\in L}
    {\NU xy(\PROC{E_1[\selecte k x]} \PAR \PROC{E_2[\matche
        y{\recordp \ell e L}]}) \reduces
      \hspace{10em}\\\hspace{23em}
      \NU xy(\PROC{E_1[x]} \PAR \PROC{E_2[e_ky]})
    }
    \\
    \infrule{\rulenameprocredPar}{p \reduces p'}{p\PAR q \reduces p'\PAR q}
    \qquad
    \infrule{\rulenameprocredBind}{p \reduces p'}{\NU xyp \reduces \NU xyp'}
    \qquad
    \infrule{\rulenameprocredCong}{p \equiv q\\ q \reduces q'}{p \reduces q'}
  \end{gather*}
  \caption{Reduction}
  \label{fig:reduction}
\end{figure}


\subsection{Type Safety}

Before proving the typing preservation for congruence
(\cref{lemma:congruence-preserves}), we need to establish some auxiliary results.

\begin{lemma}[Weakening]
  \label{lemma:unrestricted-weakening}
  \label{lem:weakening-typeEquiv}
  \label{lem:ctx-weakening}
  \label{lemma:weakening-ctx-split}
  \
  \begin{enumerate}
  \item If $\isequiv\Delta{T_1}{T_2}\kind$, then
    $\isequiv{\Delta, a:\kind'}{T_1}{T_2}\kind$.
  \item\label{it:weakening-ctx} If $\isCtx\Gamma\kind$, then $\isCtx[\Delta, a:\kind']\Gamma\kind$.
  \item\label{it:weakening-ctx-split} If
    \sloppy{$\istype{\Delta}{T}{\kindtl}$, then $\splitctx\Gamma{\Gamma_1}{\Gamma_2}$ if
    and only if
    $\splitctx{(\Gamma, x\colon T)}{(\Gamma_1, x\colon T)}{\Gamma_2}$.}
  \item\label{it:isExpr} If $\isExpr\Gamma e T$, then
    $\isExpr[\Delta, a:\kind]\Gamma eT$.
  \item\label{it:unweak} If $\isExpr\Gamma e T$ and
    $\istype{\Delta} U {\kindtu}$, then $\isExpr{\Gamma, x:U}e T$.
  \item If $\splitctx\Gamma{\Gamma_1}{\Gamma_2}$, then
    $\splitctx[\Delta, a:\kind]\Gamma{\Gamma_1}{\Gamma_2}$.
  \item If $\isproc \Gamma p$ and $\istype\Delta T\kindtu$, then $\isproc{\Gamma,x\colon T} p$. 
  \end{enumerate}
\end{lemma}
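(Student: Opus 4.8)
The plan is to prove the seven items by rule induction on the appropriate derivation, in an order that respects their dependencies: the weakenings of the \emph{kinding} context (items~1, 2, 3 and~6) first, then item~4, then item~5, and finally item~7. Items~1, 2 and~6 add a fresh binding $a\colon\kind'$ to $\Delta$; I would induct on the type-equivalence, context-formation and context-split derivation respectively, and in every case simply re-wrap the induction hypothesis in the same rule, discharging the leaves that carry a type-formation premise (\rulenameformExt, \rulenamesplitUnr, \rulenamesplitLeft, \rulenamesplitRight, \rulenameeqST, \rulenameeqFix, \ldots) by type weakening (\cref{lem:type-weakening}), and $\alpha$-renaming the bound type variable of \rulenameeqPoly, \rulenameeqRecL and \rulenameeqRecR away from~$a$. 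Item~3 is the purely structural fact that one may freely insert or delete a $\kindtl$-binding on the left component of a split; the forward direction is an instance of \rulenamesplitLeft and the backward direction follows by inversion on the split derivation.

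Item~4 weakens the kinding context of an expression-formation derivation, so the induction is on $\isExpr{\Gamma}{e}{T}$ with $\Gamma$ fixed and $a\colon\kind'$ threaded through $\Delta$: premises $\istype{\Delta}{\cdot}{\cdot}$ use \cref{lem:type-weakening}, premises $\isCtx{\Gamma}{\cdot}$ use item~2, the implicit split premises use item~6, the equivalence premise of \rulenametypeEq uses item~1, and in \rulenametypePoly the abstracted type variable is $\alpha$-renamed so that $a\notin\free(\Gamma)$ is preserved. Item~7 is then immediate from item~5: rule induction on $\isproc{\Gamma}{p}$, with \rulenameprocExp invoking item~5, \rulenameprocPar distributing the new unrestricted binding over the split via \rulenamesplitUnr (and applying the induction hypothesis to both components), and \rulenameprocNew using the induction hypothesis beneath the two channel-end bindings.

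Item~5 is the one that needs care. Here the induction is on $\isExpr{\Gamma}{e}{T}$ with $\Delta$ fixed and a binding $x\colon U$ satisfying $\istype{\Delta}{U}{\kindtu}$ appended to $\Gamma$. In the leaves \rulenametypeVar, \rulenametypeConst, \rulenametypeNew and \rulenametypeSel the side condition $\isCUn{\Gamma}$ re-establishes itself because $U$ is unrestricted; in \rulenametypeAbs the premise $\isCtx{\Gamma}{\kindt^m}$ is recovered using subkinding ($\kindtu\subk\kindt^m$) and \rulenameformExt; \rulenametypeEq simply re-uses its unchanged equivalence premise, and the remaining rules follow from the induction hypothesis alone. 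The real work is in the rules that split the context (\rulenametypeApp, \rulenametypeRcd, \rulenametypeLet, \rulenametypeUnitElim, \rulenametypeCase, \rulenametypeMatch): from $\splitctx{\Gamma}{\Gamma_1}{\Gamma_2}$ and $\istype{\Delta}{U}{\kindtu}$, rule \rulenamesplitUnr yields $\splitctx{(\Gamma,x\colon U)}{(\Gamma_1,x\colon U)}{(\Gamma_2,x\colon U)}$, so the induction hypothesis has to be applied to \emph{both} sub-derivations before the rule can be re-applied --- this is precisely where the hypothesis $U\colon\kindtu$ is essential, and why no analogous statement holds for a linear binding. I expect this splitting step to be the only genuine obstacle. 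Two recurring bookkeeping points in items~5 and~7: typing contexts are treated up to exchange (or one appeals to a routine exchange lemma), so the fresh binding may be placed wherever a rule expects it --- e.g.\ before $x\colon T_1$ in \rulenametypeAbs, or immediately before the distinguished variable in \rulenametypeVar; and in \rulenametypePoly one $\alpha$-renames the bound type variable so that $a\notin\free(U)$, keeping $a\notin\free(\Gamma,x\colon U)$ intact.
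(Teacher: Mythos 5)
Your proposal is correct and follows essentially the same route as the paper: item~3 directly from \rulenamesplitLeft, and all remaining items by rule induction on the hypothesis, with item~2 discharging the context-formation premises in item~4 and \rulenamesplitUnr handling the context-split rules in item~5. You spell out considerably more detail (dependency ordering, the subkinding step for \rulenametypeAbs, $\alpha$-renaming in \rulenametypePoly) than the paper's terse proof, but the underlying argument is the same.
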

\begin{proof}
  \Cref{it:weakening-ctx-split}. Immediate from rule \rulenamesplitLeft.
  All the remaining items follow by rule induction on the hypothesis. In the
  case of~\cref{it:isExpr}, rules \rulenametypeConst, \rulenametypeVar,
  \rulenametypeAbs, \rulenametypeNew, \rulenametypeSel, and \rulenametypeEq
  use~\cref{it:weakening-ctx}.
\end{proof}

\begin{lemma}[Strengthening]\
  \label{lem:strengthening}
  \begin{enumerate}
  \item If $\isExpr {\Gamma,x\colon T} eU$ and $x\notin\free(T)$, then
    $\istype{\Delta}{T}{\kindtu}$ and $\isExpr\Gamma eU$.
  \item If $\isproc{\Gamma,x\colon T}{p}$ and $x\notin\free(p)$, then
  $\istype{\empty}{T}{\kindtu}$ and $\isproc{\Gamma}{p}$.

  \end{enumerate}
\end{lemma}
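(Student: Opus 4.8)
The plan is to prove the two parts by rule induction on the given typing derivation, handling part~1 (expressions) first since part~2 (processes) will appeal to it in the \rulenameprocExp case. I read the part-1 side condition as $x\notin\free(e)$ (a term variable cannot occur free in a type, so $\free(T)$ as written would be vacuous), matching the $x\notin\free(p)$ of part~2; and I use the $\alpha$-identification convention to assume $x$ differs from every variable, term or type, bound by the rule at hand, so that freshness of $x$ propagates to every immediate subterm.

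For part~1 I would do induction on $\isExpr{\Gamma,x\colon T}{e}{U}$. The leaf-ish rules \rulenametypeConst, \rulenametypeVar, \rulenametypeSel and \rulenametypeNew each carry a premise $\isCUn{\Gamma'}$ for the relevant part of the context; since $x\colon T$ then sits inside an unrestricted context I get $\istype{\Delta}{T}{\kindtu}$ at once, and deleting $x\colon T$ leaves the context unrestricted so the same rule re-derives the judgement (in \rulenametypeVar the variable actually used is $\neq x$ by freshness). Rules \rulenametypeEq, \rulenametypeTApp and \rulenametypeVariant have a single $\isExpr$ premise over the same typing context: apply the induction hypothesis there, reuse the unchanged type-formation premises, and re-apply the rule. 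For the binder-adding rules \rulenametypeAbs and \rulenametypeLet I would push $x\colon T$ past the freshly introduced bindings ($\alpha$-renaming makes this harmless), apply the induction hypothesis to each premise, recover the context-formation premise $\isCtx{\Gamma}{\kindt^m}$ after deleting $x\colon T$ (context formation is closed under removal of an entry, by inversion of \rulenameformExt), and reassemble. Rule \rulenametypePoly is the one wrinkle: its premise lives under the extended kinding context $\Delta,a\colon\kind$, so the induction hypothesis yields $\istype{\Delta,a\colon\kind}{T}{\kindtu}$, and I then use type strengthening (\cref{lem:type-strengthening}) with $a\notin\free(\Gamma)$ — whence $a\notin\free(T)$ — to descend to $\istype{\Delta}{T}{\kindtu}$; the strengthened subderivation still satisfies $a\notin\free(\Gamma\setminus x)$ and so re-enters \rulenametypePoly.

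The bureaucratic cases are the context-splitting rules \rulenametypeApp, \rulenametypeRcd, \rulenametypeLet, \rulenametypeUnitElim, \rulenametypeCase and \rulenametypeMatch, where $x\colon T$ occurs in $\Gamma_1\circ\Gamma_2$. I would invert the context-split derivation to see which of \rulenamesplitUnr, \rulenamesplitLeft, \rulenamesplitRight introduced $x\colon T$: under \rulenamesplitUnr the type $T$ already has kind $\kindtu$ and $x\colon T$ occurs in both $\Gamma_i$, so I apply the induction hypothesis to both premises (freshness holds in each subterm) and recombine the two reduced contexts; under \rulenamesplitLeft and \rulenamesplitRight the binding $x\colon T$ sits in exactly one $\Gamma_i$, and the induction hypothesis on that premise delivers both $\istype{\Delta}{T}{\kindtu}$ and the strengthened subderivation, while the other premise is untouched. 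In every subcase the residual splits compose to $\Gamma_1\circ\Gamma_2$ with $x\colon T$ deleted, \ie to $\Gamma$, which is just the subderivation of the context split one inverted; \cref{lem:cSplitProps} backs the bookkeeping. I expect this reassembly — and keeping context ordering/permutation straight — to be the main obstacle, but it is shallow: no deep argument is lurking.

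For part~2 I would do induction on $\isproc{\Gamma,x\colon T}{p}$. In \rulenameprocExp I apply part~1 to the expression premise with empty ambient kinding context, getting $\istype{\empty}{T}{\kindtu}$ (as the statement asks) together with the strengthened expression judgement, and re-apply \rulenameprocExp. Case \rulenameprocPar is the context-split argument above, verbatim at the process level. Case \rulenameprocNew adds two channel-end bindings $x'\colon S$, $y'\colon\dual S$ that are fresh for $x$ by $\alpha$-renaming, so $x$ stays out of the free variables of the restriction's body; the induction hypothesis strengthens away $x\colon T$ and also yields $\istype{\empty}{T}{\kindtu}$, after which \rulenameprocNew — its kinding premise $\istype{\empty}{S}{\kindsl}$ being unaffected — re-derives the conclusion.
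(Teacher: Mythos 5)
Your proof is correct and follows the same route as the paper, which simply states ``by rule induction on the first hypothesis'' and leaves the case analysis (unrestrictedness extracted from the $\isCUn$ premises at the leaves, propagation through the context-split rules, and the use of \cref{lem:type-strengthening} in the \rulenametypePoly case) implicit. Your reading of the side condition as $x\notin\free(e)$ is the intended one.
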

\begin{proof}
  By rule induction on the first hypothesis.
\end{proof}

\begin{lemma}[Agreement for process formation]\
  \begin{enumerate}
  \item\label{item:agree-exps}
    If $\isExpr{\Gamma}{e}{T}$, then $\isCtx{\Gamma}{\kind}$ and
    $\istype{\Delta}{T}{\kind'}$. 
  \item\label{item:agree-procs} If $\isproc \Gamma p$, then
    $\isCtx[\empty]{\Gamma} \kind$. 
  \end{enumerate}
\end{lemma}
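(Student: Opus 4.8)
The plan is to prove both items by rule induction on the given derivation, and it helps to first make a normalising observation. By subkinding (\cref{fig:subkinding}) we have $\kind\subk\kindtl$ for every kind $\kind$, so rule \rulenamekindSub promotes any well-formed type to kind $\kindtl$, and a trivial induction on context formation (inserting a \rulenamekindSub step at each \rulenameformExt) promotes any well-formed context to kind $\kindtl$. Hence it suffices to establish the uniform statements $\isCtx\Gamma\kindtl$ and $\istype\Delta T\kindtl$; taking $\kind=\kind'=\kindtl$ then yields the lemma. The auxiliary results I plan to call on are agreement for context split (\cref{lem:agreement-context-split}), type substitution (\cref{lem:type-substitution}, \cref{it:subs-type}), type strengthening (\cref{lem:type-strengthening}), agreement and symmetry for type equivalence (\cref{lem:agreement-type-equiv} and \cref{lem:type-equivalence}), the existence of duals (\cref{lem:dual-preserves-contractivity}, \cref{it:has-dual}), and a couple of routine inversion observations on kinding derivations.

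For the first item I would go through the rules of \cref{fig:typing-exps}. In \rulenametypeConst, \rulenametypeVar, \rulenametypeSel and \rulenametypeNew the context is well-formed directly from the premise $\isCUn\Gamma$ (promoted to $\kindtl$); the result type comes from the premises, via a short kinding derivation for each of the four constant types of \cref{fig:types-constants} in the \rulenametypeConst case (e.g.\ $\typeof(\sendk)$ is built from \rulenamekindPoly, \rulenamekindArrow, \rulenamekindMsg, \rulenamekindSeq, \rulenamekindVar and \rulenamekindSub), and using \cref{it:has-dual} of \cref{lem:dual-preserves-contractivity} in the \rulenametypeNew case to see $\dual T$ is a well-formed session type before forming $\pairt T{\dual T}$. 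In \rulenametypeAbs the premise $\isCtx\Gamma{\kindt^m}$ gives well-formedness of $\Gamma$, while inverting the well-formed context $\Gamma,x\colon T_1$ supplied by the induction hypothesis on the body extracts well-formedness of $T_1$, so together with the induction hypothesis on $T_2$ we rebuild $\funt{T_1}{T_2}$ by \rulenamekindArrow. For \rulenametypeApp, \rulenametypeRcd, \rulenametypeLet, \rulenametypeUnitElim, \rulenametypeCase and \rulenametypeMatch the conclusion context has the form $\Gamma_1\circ\Gamma_2$ (or an iterated split), so it is well-formed from the induction hypotheses plus \cref{lem:agreement-context-split}; the result type is either one of the subexpressions' types or — as in \rulenametypeApp — is read off by inversion on the kinding of a compound type produced by an induction hypothesis (here the codomain of the arrow), with \rulenamekindRcd and \rulenamekindData rebuilding record and variant types.

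The remaining cases of the first item need the heavier lemmas. In \rulenametypePoly, since $a\notin\free(\Gamma)$, the induction hypothesis $\isCtx[\Delta,a\colon\kind]\Gamma\kindtl$ is pulled back to $\isCtx\Gamma\kindtl$ by type strengthening (\cref{lem:type-strengthening}) applied componentwise, and the induction hypothesis on $v$ gives $\istype{\Delta,a\colon\kind}{T}{\kindtl}$, hence $\istype\Delta{\forallt a\kind T}\kindtl$ by \rulenamekindPoly. In \rulenametypeTApp, the induction hypothesis on $e$ gives $\istype\Delta{\forallt a\kind U}\kindtl$; inversion through \rulenamekindSub and \rulenamekindPoly yields $\istype{\Delta,a\colon\kind}{U}{\kindt^m}$, and type substitution (\cref{lem:type-substitution}, \cref{it:subs-type}) — instantiated with the set of polymorphic variables equal to $\dom(\Delta)$ and fed the premise $\istype\Delta T\kind$ — produces $\istype\Delta{\subs TaU}{\kindt^m}$. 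In \rulenametypeVariant the type is well-formed by \rulenamekindData from the premises and the context by the induction hypothesis on the scrutinee. In \rulenametypeEq the context is from the induction hypothesis on $e$; for the type, symmetry (\cref{lem:type-equivalence}) turns $\isequiv[\Empty]\Delta{T_1}{T_2}\kind$ into $\isequiv[\Empty]\Delta{T_2}{T_1}\kind$, and agreement for type equivalence (\cref{lem:agreement-type-equiv}) then gives $\istype\Delta{T_2}\kind$.

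For the second item I would do rule induction on \cref{fig:typing-procs}: \rulenameprocExp is immediate from the first item, \rulenameprocPar combines the two induction hypotheses through \cref{lem:agreement-context-split}, and \rulenameprocNew peels two applications of \rulenameformExt off the well-formed context $\Gamma,x\colon T,y\colon\dual T$ obtained from the induction hypothesis. I expect the main obstacle to be the \rulenametypeTApp case, where the (routine but not separately stated) inversion principle for the kinding of universal types has to be combined with the type-substitution lemma while keeping the set of polymorphic variables aligned with $\dom(\Delta)$; the recurring need to promote kinds and contexts to the top kind $\kindtl$ so that \cref{lem:agreement-context-split} and rules such as \rulenamekindRcd apply at a common multiplicity is a minor but pervasive bookkeeping point.
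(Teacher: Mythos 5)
Your proof is correct and follows essentially the same route as the paper: rule induction on the derivation, using agreement for context split in the splitting rules, strengthening in the type-abstraction case, the type-substitution lemma in the type-application case, and agreement (plus symmetry) of type equivalence in the \rulenametypeEq case, with the process part reduced to the expression part. The paper's own proof is a one-paragraph sketch citing exactly these lemmas; your uniform promotion of all kinds to $\kindtl$ and the explicit inversion steps for arrow and universal types are just the bookkeeping it leaves implicit.
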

\begin{proof}
  \begin{sloppypar}
    \Cref{item:agree-exps}. By rule induction on the hypothesis, using
    strengthening (\cref{lem:strengthening}) in the rule for type abstraction,
    substitution (\cref{lem:substitution-context-split}) in the rule for type
    substitution, agreement for type equivalence
    (\cref{lem:agreement-type-equiv}) in the corresponding rule, and agreement
    for context split (\cref{lem:agreement-context-split}) in the rules
    involving context split.
  \end{sloppypar}
  \Cref{item:agree-procs}. By rule induction on the hypothesis,
  using~\cref{item:agree-exps} on \rulenameprocExp and the agreement for context
  split (\cref{lem:agreement-context-split}) on the rule for parallel
  composition.
\end{proof}

\begin{theorem}[Congruence preserves typing]
  \label{lemma:congruence-preserves}
  If $p \equiv q$ and $\isproc\Gamma p$, then $\isproc\Gamma q$.
\end{theorem}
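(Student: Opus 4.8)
The plan is to argue by rule induction on the derivation of $p \equiv q$. Since structural congruence (\cref{fig:scong}) is the least equivalence closed under the process constructors that contains the displayed axioms, it is cleanest to establish the stronger claim that the relation $\mathcal{S} = \{(p,q) : \text{for every } \Gamma,\ \isproc\Gamma p \text{ iff } \isproc\Gamma q\}$ contains $\equiv$. I would first dispatch the easy structural obligations: $\mathcal S$ is reflexive and symmetric by construction, transitive by composing the two bi-implications, and closed under $p\mapsto p\PAR r$, $p\mapsto r\PAR p$ and $p\mapsto\NU xyp$ — each of these by inverting the relevant rule of \cref{fig:typing-procs}, invoking the induction hypothesis on the immediate subprocess, and re-applying the same rule. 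The substance is then to show that $\mathcal S$ contains each axiom, and since every axiom is symmetric I would prove both directions of each bi-implication.

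For the three axioms about $\PAR$ I would use context split. The process $\PROC{\unite[\un]}$ types under the empty context (by \rulenameprocExp, \rulenametypeConst, and $\istype{\empty}{\unitt[\un]}{\kindtu}$ via subkinding $\kindmu\subk\kindtu$), hence under any unrestricted context by process weakening (\cref{lemma:unrestricted-weakening}); combined with the derivable split $\splitctx\Gamma{\mathcal U(\Gamma)}\Gamma$ this gives $\isproc\Gamma{\PROC{\unite[\un]}\PAR p}$ from $\isproc\Gamma p$. Conversely, \rulenametypeConst forces the left half of any split used to type $\PROC{\unite[\un]}\PAR p$ to be unrestricted, so it is exactly $\mathcal U(\Gamma)$ and the right half carries every binding of $\Gamma$; strengthening (\cref{lem:strengthening}) then recovers $\isproc\Gamma p$. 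Commutativity $p\PAR q\equiv q\PAR p$ is just commutativity of context split (immediate from the symmetry of rules \rulenamesplitLeft and \rulenamesplitRight), and associativity $(p\PAR q)\PAR r\equiv p\PAR(q\PAR r)$ is \cref{item:split-fst} of \cref{lem:cSplitProps}, read together with commutativity for the opposite bracketing.

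For the axioms involving restriction I would invert \rulenameprocNew, which existentially produces a type $T$ with $\istype{\empty}T\kindsl$ and extends the context with $x\colon T,y\colon\dual T$ (the freshness side conditions such as $x,y\notin\free(q)$ are available by the $\alpha$-identification convention). For scope extrusion $(\NU xyp)\PAR q\equiv\NU xy(p\PAR q)$, the forward direction pushes $x\colon T,y\colon\dual T$ into the half of the split that types $p$, using \cref{it:weakening-ctx-split} (and the equally routine unrestricted analogue) to keep the split valid; the backward direction case-splits on whether $T$ is linear or unrestricted — if unrestricted, $x$ and $y$ occur in both halves of the split and must be deleted from the $q$-half by strengthening, whereas if linear, strengthening shows they cannot occur in the $q$-half at all. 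For garbage collection $\NU xy\PROC{\unite[\un]}\equiv\PROC{\unite[\un]}$, the forward direction follows because \rulenametypeConst forces $\Gamma,x\colon T,y\colon\dual T$ to be unrestricted, hence so is $\Gamma$; the backward direction instantiates the existential with $T=\skipk$ (so $\dual T=\skipk$, which is both $\kindsl$ and $\kindtu$ and keeps the context unrestricted). The two rearranging axioms $\NU xyp\equiv\NU yxp$ and $\NU wx\NU yzp\equiv\NU yz\NU wxp$ follow by inverting \rulenameprocNew once or twice, using $\dual{\dual T}=T$ and \cref{it:has-dual} of \cref{lem:dual-preserves-contractivity}, plus an exchange property on typing contexts to reorder the added bindings.

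The hard part will be the backward direction of scope extrusion: because a restriction may bind a terminated — hence unrestricted — session type such as $\skipk$, the channel ends need not behave linearly, so one cannot simply read off from a split $\Gamma,x\colon T,y\colon\dual T=\Gamma_1\circ\Gamma_2$ that $x$ and $y$ land in the half typing $p$; the argument must branch on the multiplicity of $T$ and, in the unrestricted case, combine strengthening with an inversion principle for rule \rulenamesplitUnr. Everything else is bookkeeping once weakening (\cref{lemma:unrestricted-weakening}), strengthening (\cref{lem:strengthening}), the context-split laws (\cref{lem:cSplitProps}), and duality (\cref{lem:dual-preserves-contractivity}) are on hand, modulo treating typing contexts up to reordering — an admissible exchange property that I would leave implicit, as the paper does elsewhere.
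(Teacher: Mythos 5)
Your proposal is correct and follows essentially the same route as the paper's proof: a case analysis of both directions of each congruence axiom using the context-split properties, weakening, and strengthening, with the scope-extrusion case handled by exactly the paper's split on whether the restricted type $T$ is linear or unrestricted. The extra material you supply (the explicit closure of the bi-implication relation under the equivalence and congruence rules, and the instantiation $T=\skipk$ for the garbage-collection axiom) fills in details the paper leaves implicit rather than departing from its method.
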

\begin{proof}
  The proof is by analysis of the derivation for each member of each axiom
  in~\cref{fig:scong}. We use the properties of context split
  (\cref{lem:cSplitProps}), weakening (\cref{lemma:unrestricted-weakening}),
  strengthening (\cref{lem:strengthening}), and check both directions of each
  axiom.

  We detail the case for scope restriction. In the forward direction of the
  axiom we have to show that if $\isproc{\Gamma}{\NU xyp \PAR q}$ then
  $\isproc{\Gamma}{\NU xy (p \PAR q)}$. We start by building the only derivation
  for $\isproc{\Gamma}{\NU xyp \PAR q}$ and obtain:
  $\splitctx[\empty] \Gamma{\Gamma_1}{\Gamma_2}$ and
  $\isproc{\Gamma_1,x\colon T, y\colon\dual T}p$ and $\istype{\empty} T \kindsl$
  and $\isproc{\Gamma_2}{q}$. To build the derivation for the conclusion, we
  distinguish two subcases based on the linearity of $T$. If $T$ is linear, then
  we have
  $\Gamma_1, x\colon T, y\colon\dual T \circ \Gamma_2 = (\Gamma_1\circ\Gamma_2),
  x\colon T, y\colon\dual T$. If $T$ is unrestricted, we use weakening and
  obtain $\isproc{\Gamma_2,x\colon T,y\colon\dual T}q$ and thus,
  $(\Gamma_1\circ\Gamma_2),x\colon T,y\colon\dual T = (\Gamma_1,x\colon
  T,y\colon\dual T)\circ(\Gamma_2,x\colon T,y\colon\dual T)$. Conclude with
  \rulenameprocPar and \rulenameprocNew.

  \sloppy In the reverse direction we show that if
  $\isproc{\Gamma}{\NU xy (p \PAR q)}$ then $\isproc{\Gamma}{\NU xyp \PAR q}$.
  We start by building the only derivation for
  $\isproc{\Gamma}{\NU xy (p \PAR q)}$ and obtain
  $\Gamma,x\colon T,y\colon\dual T = \Gamma_1\circ\Gamma_2$ and
  $\isproc{\Gamma_1}p$, $\isproc{\Gamma_2}q$ and $\istype\empty T\kindsl$.
  Again, we distinguish two subcases. If $T$ is linear, the properties of the
  context split (\cref{lem:cSplitProps}) state that $T$ is either in $\Gamma_1$
  or in $\Gamma_2$, but not in both. Given that $x\not\in\free(q)$,
  strengthening gives us that $x\colon T \not\in\Gamma_2$,
  $x\colon T \in\Gamma_1$, and thus
  $\Gamma_1 = \Gamma_1',x\colon T,y\colon\dual T$. If $T$ is unrestricted, then
  $\Gamma,x\colon T,y\colon\dual T = (\Gamma_1,x\colon T,y\colon\dual
  T)\circ(\Gamma_2,x\colon T,y\colon\dual T)$. Applying strengthening on
  $\isproc{\Gamma_2,x\colon T,y\colon\dual T}q$ yields $\isproc{\Gamma_2}q$. In both cases, conclude with
  \rulenameprocNew and \rulenameprocPar.
\end{proof}


\begin{lemma}[Normalised type derivation]
  \label{lem:normalized-type-derivation}
  Suppose that $\isExpr\Gamma e T$ is derivable.  Then there is
  a derivation of the same judgment that ends with exactly one use of
  the rule {\rulenametypeEq}. 
\end{lemma}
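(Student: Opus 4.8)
The plan is to argue by rule induction on the derivation of $\isExpr\Gamma eT$, showing that every derivation can be rewritten so that its last step is exactly one application of \rulenametypeEq. The key observations are: (i) consecutive uses of \rulenametypeEq can be collapsed into one, because $\TypeEquiv$ is transitive (\cref{lem:type-equivalence}); (ii) if a derivation does \emph{not} end in \rulenametypeEq, we can append a trivial use of \rulenametypeEq by reflexivity (\cref{lem:type-equivalence}), since the relevant type is well formed by agreement for process formation (\cref{item:agree-exps}); and (iii) for the induction to go through we need to push a final \rulenametypeEq past the other typing rules, or rather, assume inductively that each subderivation already ends in exactly one \rulenametypeEq and then reorganise.

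Concretely, I would proceed as follows. First, strengthen the statement to the form actually proved by induction: \emph{every derivation of $\isExpr\Gamma eT$ can be transformed into one ending with exactly one use of \rulenametypeEq, whose premise $\isExpr\Gamma e{T_1}$ is derived without a trailing \rulenametypeEq.} Then perform induction on the original derivation. If the last rule is already \rulenametypeEq, say concluding $T$ from $\isExpr\Gamma e{T'}$ and $\isequiv[\Empty]\Delta{T'}T\kind$, apply the induction hypothesis to the premise to get a derivation of $\isExpr\Gamma e{T'}$ ending in one \rulenametypeEq from some $\isExpr\Gamma e{T''}$ with $\isequiv[\Empty]\Delta{T''}{T'}\kind'$; merge the two \rulenametypeEq steps using transitivity (after noting both equivalences live over compatible kinds via agreement, \cref{lem:agreement-type-equiv}), obtaining a single final \rulenametypeEq from $\isExpr\Gamma e{T''}$.

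If the last rule $R$ is any rule other than \rulenametypeEq, apply the induction hypothesis to each premise of $R$; each now ends in exactly one \rulenametypeEq. We must absorb these trailing equivalences into $R$. For the structural/congruence-style rules (\rulenametypeAbs, \rulenametypeApp, \rulenametypeTApp, \rulenametypeRcd, \rulenametypeLet, \rulenametypeUnitElim, \rulenametypeVariant, \rulenametypeCase, \rulenametypeMatch) this is where the real work lies: we use the congruence rules for $\TypeEquiv$ (\rulenameeqArrow, \rulenameeqRcd, \rulenameeqData, \rulenameeqPoly, plus \rulenameeqST for session types and \rulenameeqVar/\rulenameeqUnit for the base cases) to show that if a premise was typed at $T_i'$ and the original rule expected $T_i$ with $\isequiv[\Empty]\Delta{T_i'}{T_i}\kind_i$, then applying $R$ to the $T_i'$-typed premises yields $\isExpr\Gamma e{T'}$ for some $T'$ with $\isequiv[\Empty]\Delta{T'}T\kind$; finally append one \rulenametypeEq. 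For rules whose premises are \emph{not} typing judgements of the subexpressions but side conditions (\rulenametypeConst, \rulenametypeVar, \rulenametypeNew, \rulenametypeSel), there is nothing to absorb, and we simply conclude by appending a reflexive \rulenametypeEq, well-formedness of the type coming from agreement (\cref{item:agree-exps}) together with the construction rules in \cref{fig:kinding}.

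The main obstacle is the bookkeeping of kinds in the absorption step: an equivalence on a subexpression's type may be stated at kind $\kind_i$, and we must check that applying the appropriate $\TypeEquiv$-congruence rule (e.g., \rulenameeqArrow needs premises at $\kind_1$ and $\kind_2$ to conclude at $\kindt^m$) produces an equivalence at the kind that the enclosing typing rule's conclusion carries; here agreement for type equivalence (\cref{lem:agreement-type-equiv}) and the subkinding embedded in \rulenamekindSub let us reconcile the kinds, but the case analysis — particularly separating the session-typed premises handled by \rulenameeqST from the functional ones handled by the congruence rules — must be done carefully. No step requires anything beyond transitivity, reflexivity, the congruence rules for $\TypeEquiv$, and agreement, all of which are available from the excerpt.
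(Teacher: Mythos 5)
Your observations (i) and (ii) already constitute the paper's entire proof: it inducts on the number of consecutive uses of \rulenametypeEq that conclude the derivation, appending one application by reflexivity when there are none and collapsing several into one by transitivity. The lemma only constrains the final step(s) at the root of the derivation tree, not the subderivations, so nothing more is required.

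The additional machinery you build --- the strengthened induction hypothesis and the ``absorption'' of trailing equivalences from the premises into the last rule $R$ --- is therefore unnecessary, and as described it would in fact run into trouble. Take \rulenametypeApp: the rule demands that the first premise be typed at a syntactically arrow-shaped type $\funt{T_1}{T_2}$. If the underlying derivation types $e_1$ at some $T_1'$ that is merely \emph{equivalent} to an arrow (say, a $\mu$-type), you cannot apply \rulenametypeApp to the $T_1'$-typed premise at all; the trailing \rulenametypeEq on that premise is precisely what restores the syntactic shape the rule needs, so it cannot be absorbed away. Fortunately none of this matters: when the last rule is not \rulenametypeEq you leave the premises untouched and append a single reflexive \rulenametypeEq at the root, and when the derivation ends in a block of \rulenametypeEq applications you collapse the block by transitivity. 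Drop the absorption step and state the proof as this two-case induction on the length of the trailing \rulenametypeEq block, and you have exactly the paper's argument.
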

\begin{proof}
  By induction on the number of uses of {\rulenametypeEq} that conclude the derivation of
  $\isExpr\Gamma e T$.
  If the derivation does not end with {\rulenametypeEq}, then we may add one
  because type equivalence is reflexive.
  If the derivation ends with more than one application of
  {\rulenametypeEq}, then we may combine them to one rule application
  because type equivalence is transitive.
\end{proof}

\begin{lemma}[Inversion for expression formation]
  \label{lem:inversion}
  Let $\isExpr\Gamma e T$.
  \begin{enumerate}
    
    
  \item If $e=x$, then $\Gamma = \Gamma',x\colon U$ and
    $\isCtx{\Gamma'}{\kindtu}$ and
    $\isequiv[\Empty]{\Delta}{U}{T}{\kind}$.

  \item\label{itm:inv-abs} If $e = \abse x U e$, then
    $\isCtx\Gamma {\kindt^m}$ and $\isExpr{\Gamma, x\colon U} e {V}$ and
    $\isequiv[\Empty] \Delta {\funt UV} T \kindt^m$.

  \item\label{itm:inv-app} If $e = \appe{e_1}{e_2}$, then
    $\splitctx{\Gamma}{\Gamma_1}{\Gamma_2}$ and
    $\isExpr{\Gamma_1}{e_1}{\funt U{V}}$ and
    $\isExpr{\Gamma_2}{e_2}U$ and
    $\isequiv[\Empty] \Delta V T \kind$.

  \item If $e = \tabse{a}{\kind}{v}$, then
    $\isExpr[\Delta, a\colon\kind]\Gamma vU$ and
    $\isequiv[\Empty] \Delta {\forallt a \kind U} T \kindt^m$.

  \item If $e = \tappe eU$, then $\isExpr\Gamma{e}{\forallt a \kind {V}}$ and
    $\istype \Delta U \kind$ and $\isequiv[\Empty] \Delta {V[U/a]} T\kind$.
    
  \item If $e = \recorde \ell eL$, then $\splitctx{\Gamma}{}{\Gamma_\ell}$ and
    $\isExpr{\Gamma_\ell}{e_\ell}{T_\ell}$ and
    $\isequiv[\Empty]\Delta {\recordt \ell TL} T \kindt^m$.
    
  \item If $e = \lete \ell xL{e_1}{e_2}$, then
    $\splitctx{\Gamma}{\Gamma_1}{\Gamma_2}$ and
    $\isExpr{\Gamma_1} {e_1} {\recordt \ell TL}$ and
    $\isExpr{{\Gamma_2}, (x_\ell\colon T_\ell)_{\ell\in L}} {e_2}{U}$ and
    $\isequiv[\Empty]\Delta UT\kind$.

  \item If $e = \unlete\unite {e_1}{e_2}$, then
    $\splitctx{\Gamma}{\Gamma_1}{\Gamma_2}$ and $\isExpr{\Gamma_1}{e_1}\unitk$
    and $\isExpr{\Gamma_2}{e_2} U$ and $\isequiv[\Empty]\Delta UT\kind$.

  \item If $e = \injecte k e$, then exist $\{T_\ell\}_{\ell\in L}$ such that
    $\isExpr\Gamma e {T_k}$ for $k\in L$ and
    $\istype{\Delta}{T_\ell}{\kindtl}$ and
    $\isequiv[\Empty]\Delta {\variantt \ell{T}L} T \kindt^m$.

  \item \sloppy{If $e = \casee{e}{\recordp \ell e L}$, then
    $\splitctx{\Gamma}{\Gamma_1}{\Gamma_2}$ and
    $\isExpr{\Gamma_1} e {\variantt \ell TL}$ and
    $\isExpr{\Gamma_2}{e_\ell}{\funt[\lin]{T_\ell}{U}}$ and
    $\isequiv[\Empty]\Delta UT\kind$.}


  \item If $e = \selecte k{}$, then exist $\{T_\ell\}_{\ell\in L}$ such that 
    $\istype{\Delta}{T_\ell}{\kindsl}$ and
    $\isequiv [\Empty]\Delta {\funt{\ichoicet{\recordt \ell T L}}{T_k}} T
    \kindt^m$ for $k \in L$ and $\isCUn \Gamma$.

  \end{enumerate}
\end{lemma}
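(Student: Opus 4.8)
The plan is to obtain the inversion by a routine case analysis on the head constructor of $e$, after first normalising the derivation. By \cref{lem:normalized-type-derivation} we may assume the derivation of $\isExpr\Gamma eT$ ends with exactly one application of \rulenametypeEq; stripping that application yields a subderivation of $\isExpr\Gamma e{T_0}$ that does \emph{not} end with \rulenametypeEq, together with an equivalence $\isequiv[\Empty]\Delta{T_0}{T}{\kind}$ for some $\kind$. Because every term constructor is typed by a unique syntax-directed rule, the last rule of this subderivation is forced: \rulenametypeVar for $e=x$, \rulenametypeAbs for $e=\abse xUe$, \rulenametypeApp for $e=\appe{e_1}{e_2}$, \rulenametypeTApp for $e=\tappe eU$, \rulenametypeVariant for $e=\injecte ke$, \rulenametypeSel for $e=\selecte k{}$, and so on. Reading off the premises of that rule gives precisely the context-split, subexpression-typing, context-wellformedness and side-condition assertions ($k\in L$, $\isCUn\Gamma$, $\istype\Delta{T_\ell}{\kindsl}$, \dots) listed in the matching clause, with $T_0$ playing the role of the structural type of the conclusion ($\funt UV$, $\recordt \ell TL$, $\variantt \ell TL$, $\forallt a\kind U$, $\funt{\ichoicet{\recordt \ell TL}}{T_k}$, and so on). In the variable clause, \rulenametypeVar assigns $x$ the exact type recorded in $\Gamma$, so there $T_0$ \emph{is} the type $U$. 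The pending equivalence $\isequiv[\Empty]\Delta{T_0}{T}{\kind}$ is then exactly the one demanded by the clause.

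I would spell out two representative cases. For $e=\appe{e_1}{e_2}$ the subderivation ends with \rulenametypeApp, so $\Gamma$ splits as $\Gamma_1\circ\Gamma_2$ with $\isExpr{\Gamma_1}{e_1}{\funt U{T_0}}$ and $\isExpr{\Gamma_2}{e_2}{U}$ for some $U$; together with $\isequiv[\Empty]\Delta{T_0}{T}{\kind}$ this is the required conclusion. For $e=\tappe eU$ the subderivation ends with \rulenametypeTApp, giving $\isExpr\Gamma e{\forallt a\kind V}$ and $\istype\Delta U\kind$ and $T_0=\subs UaV$, and the recorded equivalence finishes. Note that \rulenametypeEq leaves $\Gamma$ untouched, so context splittings and the $\isCUn\Gamma$ conditions carry over verbatim, and the degenerate possibility that the original derivation used no \rulenametypeEq at all is absorbed by \cref{lem:normalized-type-derivation}, whose proof inserts a trivial instance using reflexivity of $\TypeEquiv$ (\cref{lem:type-equivalence}).

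The one genuinely delicate point is reconciling the kind carried by the stripped \rulenametypeEq with the specific kind demanded by some clauses --- for instance the clause for $e=\abse xUe$ asks for $\isequiv[\Empty]\Delta{\funt UV}{T}{\kindt^m}$ with the \emph{same} $m$ appearing in $\isCtx\Gamma{\kindt^m}$, whereas stripping hands us the equivalence at an a priori arbitrary kind $\kind$. Since the subderivation ends with the syntax-directed rule, $T_0$ is a concrete type former whose canonical kind ($\kindt^m$, resp.\ $\kindsl$, \dots) is produced by that rule's premises, and agreement for type equivalence (\cref{lem:agreement-type-equiv}) gives $\istype\Delta{T_0}{\kind}$; one then re-states the equivalence at the required kind. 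For the functional-type formers this step is immediate, because the congruence rules \rulenameeqArrow, \rulenameeqRcd, \rulenameeqData and \rulenameeqPoly do not constrain the multiplicity in their concluded kind; the subcase in which $T$ is itself a $\mu$-type (so the equivalence was concluded by \rulenameeqRecR) needs only a short auxiliary unfolding argument. This kind bookkeeping, together with the sheer number of otherwise mechanical cases, is the main obstacle; no essentially new idea is needed.
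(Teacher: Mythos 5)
Your proposal matches the paper's proof: both invoke \cref{lem:normalized-type-derivation} to strip a single trailing \rulenametypeEq, observe that the remaining derivation must end with the unique syntax-directed rule for the head constructor, and read off its premises together with the recorded equivalence (the paper details only the application case and declares the rest similar). Your extra discussion of reconciling the kind of the stripped equivalence with the kind named in each clause is bookkeeping the paper silently elides, but it does not change the argument.
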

\begin{proof}
  All cases are similar; we detail
  %
  %
  \cref{itm:inv-app}.
  By \cref{lem:normalized-type-derivation}, there is a derivation that ends with
  a single use of \rulenametypeEq. Its premises read
  $\isExpr\Gamma{\appe{e_1}{e_2}}{V}$ and $\isequiv[\Empty]\Delta VT\kind$.
  The premises to rule \rulenametypeApp on $\isExpr\Gamma{\appe{e_1}{e_2}}{V}$
  include $\splitctx{\Gamma}{\Gamma_1}{\Gamma_2}$ and
  $\isExpr{\Gamma_1}{e_1}{\funt U {V}}$ and $\isExpr{\Gamma_2}{e_2}U$ which
  proves the claim.
\end{proof}

The next lemma is used to establish type preservation for the communication
reductions. Since such rules move values from one process to another, the values
need to take a part of its typing context with them.

\begin{lemma}[Context typing]
  \label{lem:context-typing}
  Let $\splitctx[\Delta]\Gamma{\Gamma_1}{\Gamma_2}$. Then,
  $\isExpr[\Delta]\Gamma{E[e]}T$ if and only if $\isExpr[\Delta]{\Gamma_1}{e}U$
  and $\isExpr[\Delta]{\Gamma_2, x\colon U}{E[x]}T$.
\end{lemma}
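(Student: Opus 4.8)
The plan is to argue by induction on the structure of the evaluation context $E$, establishing both implications at once; I read the statement with $x$ fresh (so $x\notin\dom\Gamma$ and $x$ occurs in neither $E$ nor $e$) and with the decomposition — the splitting $\Gamma=\Gamma_1\circ\Gamma_2$ together with the type $U$ — produced, existentially, by the forward direction. The recurring tools are: normalisation of type derivations and inversion for expression formation (\cref{lem:normalized-type-derivation,lem:inversion}), used to peel one constructor off the typings of $E[e]$ and of $E[x]$ — supplemented by the evident analogue of \cref{lem:inversion} for $\matchk$, the only evaluation-context former not covered there; commutativity and associativity of context split, packaged in \cref{lem:cSplitProps}, together with weakening for splits (\cref{lemma:unrestricted-weakening}), to re-associate splittings and to slide the binding $x\colon U$ in and out; strengthening (\cref{lem:strengthening}); and rule \rulenametypeEq to absorb the type-equivalence slack that \cref{lem:inversion} leaves at the top of each decomposition (by \cref{lem:agreement-type-equiv} this slack always connects types at a genuine kind, so \rulenametypeEq applies). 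For the base case $E=[\,]$: in the forward direction, from $\isExpr[\Delta]\Gamma eT$ we take $U:=T$, $\Gamma_1:=\Gamma$, $\Gamma_2:=\mathcal U(\Gamma)$, so that $\splitctx\Gamma\Gamma{\mathcal U(\Gamma)}$ and $\isExpr[\Delta]{\mathcal U(\Gamma),x\colon T}xT$ by \rulenametypeVar; in the backward direction, inversion on $\isExpr[\Delta]{\Gamma_2,x\colon U}xT$ forces $\Gamma_2$ unrestricted and $U\simeq T$, hence $\Gamma$ and $\Gamma_1$ carry the same bindings (\cref{lem:cSplitProps}), so $\isExpr[\Delta]\Gamma eU$ and then $\isExpr[\Delta]\Gamma eT$ by \rulenametypeEq.

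For the inductive step, every remaining shape of $E$ places a proper context $E'$ in one argument slot of an expression former, the others holding values or arbitrary expressions; all cases follow the same recipe, which I describe for $E=E'\,e_0$. Forward: from $\isExpr[\Delta]\Gamma{(E'[e])\,e_0}T$, inversion gives $\splitctx\Gamma{\Gamma_a}{\Gamma_b}$, $\isExpr[\Delta]{\Gamma_a}{E'[e]}{\funt VW}$, $\isExpr[\Delta]{\Gamma_b}{e_0}V$ and $\isequiv[\Empty]\Delta WT\kind$. The induction hypothesis on $E'$ applied to $\isExpr[\Delta]{\Gamma_a}{E'[e]}{\funt VW}$ yields $\splitctx{\Gamma_a}{\Gamma_{a1}}{\Gamma_{a2}}$ and some $U$ with $\isExpr[\Delta]{\Gamma_{a1}}eU$ and $\isExpr[\Delta]{\Gamma_{a2},x\colon U}{E'[x]}{\funt VW}$. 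Re-associating the two splittings (\cref{lem:cSplitProps}) produces $\Gamma_2$ with $\splitctx\Gamma{\Gamma_{a1}}{\Gamma_2}$ and $\splitctx{\Gamma_2}{\Gamma_{a2}}{\Gamma_b}$; we set $\Gamma_1:=\Gamma_{a1}$, weaken the last split by $x\colon U$ (\cref{lemma:unrestricted-weakening}), and apply \rulenametypeApp followed by \rulenametypeEq to obtain $\isExpr[\Delta]{\Gamma_2,x\colon U}{(E'[x])\,e_0}T$. Backward: given the split and $\isExpr[\Delta]{\Gamma_1}eU$, $\isExpr[\Delta]{\Gamma_2,x\colon U}{(E'[x])\,e_0}T$, inversion on the latter gives $\splitctx{\Gamma_2,x\colon U}{\Delta_1}{\Delta_2}$, $\isExpr[\Delta]{\Delta_1}{E'[x]}{\funt VW}$, $\isExpr[\Delta]{\Delta_2}{e_0}V$ and $W\simeq T$; since $x$ is free in $E'[x]$ but not in $e_0$, a short case analysis on which split rule treats $x\colon U$ — excluding \rulenamesplitRight, which would keep $x$ out of $\Delta_1$, and using strengthening to drop a redundant unrestricted copy of $x\colon U$ from $\Delta_2$ — lets us take $\Delta_1=\Delta_1',x\colon U$ with $\splitctx{\Gamma_2}{\Delta_1'}{\Delta_2'}$ and $\isExpr[\Delta]{\Delta_2'}{e_0}V$. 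The induction hypothesis on $E'$ (backward direction), applied on a merge $\Gamma^\dagger$ of $\Gamma_1$ and $\Delta_1'$ supplied by associativity (so $\splitctx{\Gamma^\dagger}{\Gamma_1}{\Delta_1'}$ and $\splitctx\Gamma{\Gamma^\dagger}{\Delta_2'}$), gives $\isExpr[\Delta]{\Gamma^\dagger}{E'[e]}{\funt VW}$, and \rulenametypeApp followed by \rulenametypeEq conclude $\isExpr[\Delta]\Gamma{(E'[e])\,e_0}T$.

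The remaining shapes of $E$ — $v\,E'$, $\tappe{E'}{T_0}$, $\injecte\ell{E'}$, a record with $E'$ in one field, $\lete\ell xL{E'}{e_0}$, $\unlete\unite{E'}{e_0}$, $\casee{E'}{\ldots}$ and $\matche{E'}{\ldots}$ — are handled identically, substituting in each case the matching clause of \cref{lem:inversion} (or its analogue for $\matchk$) and the corresponding typing rule; the non-splitting formers (type application and injection) are in fact simpler, since no re-association of contexts is needed. I expect the only genuine difficulty to be the context-split bookkeeping: repeatedly commuting and re-associating splittings so that the pieces line up, keeping the linear and unrestricted cases of $U$ apart, and tracking on which side of each split the fresh variable $x$ lands; everything else is a mechanical replay of inversion, the matching structural rule, and \rulenametypeEq.
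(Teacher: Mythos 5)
Your proposal is correct and follows essentially the same route as the paper: induction on the structure of the evaluation context, using normalised derivations and inversion to peel off one constructor, the context-split properties to re-associate splittings, and \rulenametypeEq to absorb the residual equivalence. The only difference is one of detail — you spell out the backward direction and the base case, which the paper dispatches in a sentence, and you correctly flag the freshness of $x$ and the existential reading of the splitting that the paper leaves implicit.
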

\begin{proof}
  The forward implication is by induction on the structure of evaluation context
  $E$. We sketch one case: $Ee'$, in which case $(Ee')[e] = (E[e]) e'$.
  Inversion of the typing relation (\cref{lem:inversion}) yields
  (1) $\isequiv[\Empty] \Delta W T \kind$ and 
  (2) $\isExpr{\Gamma_1}{E[e]}{\funt VW}$ and
  (3) $\isExpr{\Gamma_2}{e'}V$.
  From (2) and induction we get
  (4) $\isExpr{\Gamma_{11},x\colon U}{E[x]}{\funt VW}$ and
  (5) $\isExpr{\Gamma_{12}}{e}{U}$ and
  (6) $\splitctx[\Delta]{\Gamma_1}{\Gamma_{11}}{\Gamma_{12}}$.
  From the hypothesis $\splitctx[\Delta]\Gamma{\Gamma_1}{\Gamma_2}$ and (6),
  properties of context split (\cref{lem:cSplitProps}) guarantee that
  $\Gamma_{11}\circ\Gamma_2$ is defined. Then, from (3), (4) and rule
  \rulenametypeApp we have
  (7) $\isExpr{\Gamma_{11}\circ\Gamma_2,x\colon U}{(E[x])e'}{W}$.
  Given that $(E[x])e' = (Ee')[x]$, from (1), (7) and rule \rulenametypeEq we
  have $\isExpr{\Gamma_{11}\circ\Gamma_2,x\colon U}{(Ee')[x]}{T}$, which
  concludes the case together with item (5).

  The reverse direction is a simple application of the appropriate typing rule.
  For example, for context $Ee'$ we use rule \rulenametypeApp.
\end{proof}

\begin{lemma}[Substitution preserves equivalence]
  \label{lem:type-equivalence-substitution}
  \begin{sloppypar}
    If $\isequiv{\Delta, a:\kind}{T_1}{T_2}{\kind'}$ and
    $\istype\Delta U \kind$, then
    $\isequiv \Delta{\subs Ua{T_1}}{\subs Ua{T_2}}{\kind'}$.
  \end{sloppypar}
\end{lemma}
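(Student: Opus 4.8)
The plan is to proceed by rule induction on the derivation of $\isequiv[\Theta]{\Delta, a\colon\kind}{T_1}{T_2}{\kind'}$. Since rules $\rulenameeqRecL$ and $\rulenameeqRecR$ feed a \emph{larger} list of visited pairs into their recursive premise, the statement must first be generalized to an arbitrary $\Theta$, with the substitution carried into it as well: I will prove that $\isequiv[\Theta]{\Delta, a\colon\kind}{T_1}{T_2}{\kind'}$ together with $\istype\Delta U\kind$ entails $\isequiv[\subs Ua\Theta]{\Delta}{\subs Ua{T_1}}{\subs Ua{T_2}}{\kind'}$, where $\subs Ua\Theta$ denotes applying $[U/a]$ componentwise to every recorded quadruple while deleting the $a\colon\kind$ binding from the recorded context; the lemma is then the case $\Theta=\Empty$. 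In every rule carrying a kinding side-condition ($\rulenameeqST$, $\rulenameeqFix$, $\rulenameeqRecL$, $\rulenameeqRecR$) the well-formedness of the substituted types is recovered uniformly: agreement for type equivalence (\cref{lem:agreement-type-equiv}) gives $\istype{\Delta,a\colon\kind}{T_i}{\kind'}$, and type substitution (\cref{lem:type-substitution}) turns this into $\istype{\Delta}{\subs Ua{T_i}}{\kind'}$. I will also use weakening (\cref{lemma:unrestricted-weakening}) tacitly and treat kinding contexts as finite maps, so that in $\rulenameeqPoly$ the binding $a\colon\kind$ need not be last.

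The congruence rules $\rulenameeqUnit$, $\rulenameeqArrow$, $\rulenameeqRcd$, $\rulenameeqData$, $\rulenameeqPoly$ are routine: apply the induction hypothesis to each premise and reassemble with the same rule; in $\rulenameeqPoly$, first $\alpha$-rename the bound variable $c$ so that $c\neq a$ and $c\notin\free(U)$, whence $\subs Ua{(\forallt c{\kind_0}V)}=\forallt c{\kind_0}{\subs Ua V}$. For $\rulenameeqVar$, split on whether the variable equals $a$: if so, both sides of the goal become $U$ and $\isequiv\Delta UU\kind$ is an instance of reflexivity (\cref{lem:type-equivalence}); if it is some $b\neq a$, then $\subs Ua b=b$ and from the premise $b\colon\kind\in\Delta,a\colon\kind$ we get $b\colon\kind\in\Delta$, so $\rulenameeqVar$ applies again. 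For $\rulenameeqFix$, the recorded quadruple is $(\Delta,a\colon\kind,T_1,T_2,\kind')\in\Theta$, hence $(\Delta,\subs Ua{T_1},\subs Ua{T_2},\kind')\in\subs Ua\Theta$, and together with the well-formedness facts above, $\rulenameeqFix$ reassembles the conclusion.

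The case $\rulenameeqST$ is the first requiring genuine work. Here $T_1,T_2$ are session types with $\isequivst{T_1}{T_2}$; \cref{lem:type-substitution} already yields that $\subs Ua{T_1}$ and $\subs Ua{T_2}$ are well-formed session types, so by $\rulenameeqST$ it remains to show $\isequivst{\subs Ua{T_1}}{\subs Ua{T_2}}$, i.e.\ that session bisimilarity is closed under substitution of a well-kinded type. I would obtain this by exhibiting a bisimulation containing, besides the pairs $(\subs Ua S,\subs Ua{S'})$ with $\isequivst S{S'}$, all such pairs prefixed by a common residual of $U$, together with the identity on session types; the transition matching relies on two facts: substitution preserves termination (\cref{lem:type-substitution}), so the position of the ``active head'' is respected, and a transition of $\subs Ua S$ is either inherited from a non-$a$ transition of $S$ — matched through the bisimulation witnessing $S\sim S'$ — or produced at an $a$-head by a substituted copy of $U$, in which case $S'$ has an $a$-head in the corresponding position and the identical $U$ is plugged in, so both sides expose the same behaviour and the residuals fall into the relation.

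Finally, $\rulenameeqRecL$ and $\rulenameeqRecR$ are symmetric; take $\rulenameeqRecL$ with $T_1=\rect c{\kind'}{T_1'}$, the bound variable $c$ chosen ($\alpha$-conversion) distinct from $a$ and fresh for $U$. The induction hypothesis applied to the recursive premise, together with the standard fact that substitution commutes with unfolding (since $c\notin\free(U)$, $\subs Ua{\subs{\rect c{\kind'}{T_1'}}c{T_1'}}=\subs{\subs Ua{T_1}}c{\subs Ua{T_1'}}$, the unfolding of $\subs Ua{T_1}=\rect c{\kind'}{\subs Ua{T_1'}}$), produces a derivation of $\isequiv[\subs Ua\Theta,(\Delta,\subs Ua{T_1},\subs Ua{T_2},\kind')]{\Delta}{\subs{\subs Ua{T_1}}c{\subs Ua{T_1'}}}{\subs Ua{T_2}}{\kind'}$; contractivity of the body transfers through \cref{lem:type-substitution}. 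To close with $\rulenameeqRecL$ I must re-establish its freshness side-condition for $\subs Ua\Theta$: if $(\subs Ua{T_1},\subs Ua{T_2})$ is still absent from $\subs Ua\Theta$, $\rulenameeqRecL$ applies directly; if substitution has merged it with a previously recorded but syntactically distinct pair, that pair now lies in $\subs Ua\Theta$ and I conclude instead by $\rulenameeqFix$, whose side-conditions are again supplied by \cref{lem:type-substitution}. I expect the main obstacle of the proof to be exactly this bookkeeping — threading $[U/a]$ through $\Theta$ while keeping the mutually exclusive side-conditions of $\rulenameeqFix$, $\rulenameeqRecL$ and $\rulenameeqRecR$ consistent — together with the coinductive argument needed in the $\rulenameeqST$ case.
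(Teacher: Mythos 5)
Your proposal takes essentially the same route as the paper: rule induction on the derivation of $\isequiv{\Delta,a\colon\kind}{T_1}{T_2}{\kind'}$, with the variable case for $a$ itself discharged by reflexivity of type equivalence (\cref{lem:type-equivalence}) and the kinding side-conditions recovered via \cref{lem:type-substitution}. The paper's proof states only this skeleton, whereas you additionally work out the bookkeeping it leaves implicit — strengthening the statement to thread $[U/a]$ through $\Theta$ (with the possible fall-back to \rulenameeqFix when substitution collapses a recorded pair) and the closure of $\STypeEquiv$ under substitution for the \rulenameeqST case — all of which is consistent with, and a sound elaboration of, the paper's argument.
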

\begin{proof}
  By rule induction on the first hypothesis. The key step is the case for rule
  {\rulenameeqVar} when the type variable considered is $a$. Here we need to
  appeal to reflexivity of type equivalence to establish
  $\isequiv\Delta UU\kind$ (\cref{lem:type-equivalence}).
\end{proof}

\begin{lemma}[Type substitution]
  \label{lem:type-substitution-exps}
  If $\isExpr[\Delta,a\colon\kind]{\Gamma}{e}{T}$ and $\istype \Delta U \kind$,
  then $\isExpr{\subs Ua\Gamma}{\subs Uae}{\subs UaT}$.
\end{lemma}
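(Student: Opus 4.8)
The plan is to proceed by rule induction on the derivation of $\isExpr[\Delta,a\colon\kind]{\Gamma}{e}{T}$, with $\subs Ua{-}$ denoting substitution of the \emph{type} $U$ for the type variable $a$ (which only touches the type annotations inside $e$, the context $\Gamma$, and $T$, and in particular maps values to values). The three workhorses will be: \cref{lem:type-substitution}, which turns every kinding premise $\istype{\Delta,a\colon\kind}{V}{\kind'}$ occurring in the derivation into $\istype{\Delta}{\subs UaV}{\kind'}$ (applied pointwise this also shows that $\isCUn{\Gamma}$ is preserved, so the side conditions of \rulenametypeConst, \rulenametypeVar, \rulenametypeSel, \rulenametypeNew survive); \cref{lem:substitution-context-split}, which rewrites $\subs Ua\Gamma$ as $\subs Ua{\Gamma_1}\circ\subs Ua{\Gamma_2}$ so the context splits in \rulenametypeApp, \rulenametypeRcd, \rulenametypeLet, \rulenametypeUnitElim, \rulenametypeCase and \rulenametypeMatch go through; and \cref{lem:type-equivalence-substitution}, which handles the conversion rule.

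For the leaf and structural cases the argument is routine. In \rulenametypeConst the type $\typeof(c)$ is closed, so $\subs Ua{\typeof(c)}=\typeof(c)$ and $\subs Ua\Gamma$ is still unrestricted by the remark above; in \rulenametypeVar and \rulenametypeSel I would discharge the single kinding premise by \cref{lem:type-substitution}; in \rulenametypeNew the type carried by the term is formed in the empty kinding context, hence closed, so the substitution is the identity on it and on its dual. For \rulenametypeAbs, \rulenametypeApp, \rulenametypeRcd, \rulenametypeLet, \rulenametypeUnitElim, \rulenametypeVariant, \rulenametypeCase and \rulenametypeMatch I would apply the induction hypothesis to each subderivation, rewrite the context of the conclusion with \cref{lem:substitution-context-split}, discharge kinding premises with \cref{lem:type-substitution} (for \rulenametypeAbs this includes the context-formation premise $\isCtx{\Gamma}{\kindk{\kindt}{m}}$, again preserved pointwise), observe that $\subs Ua{-}$ commutes with the type and term constructors involved, and reapply the same rule. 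The conversion rule \rulenametypeEq is immediate: the induction hypothesis gives $\isExpr{\subs Ua\Gamma}{\subs Uae}{\subs Ua{T_1}}$, \cref{lem:type-equivalence-substitution} gives the equivalence of $\subs Ua{T_1}$ and $\subs Ua{T_2}$ under $\Delta$, and \rulenametypeEq concludes.

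The cases with real content are the two rules that manipulate the kinding context. In \rulenametypeTApp we have $e=\tappe{e'}{T'}$ with premises $\isExpr[\Delta,a\colon\kind]{\Gamma}{e'}{\forallt b{\kind''}V}$ and $\istype{\Delta,a\colon\kind}{T'}{\kind''}$ and $T=\subs{T'}bV$; by the $\alpha$-identification convention I may assume $b\neq a$ and $b\notin\free(U)$. The induction hypothesis gives $\isExpr{\subs Ua\Gamma}{\subs Ua{e'}}{\forallt b{\kind''}{\subs UaV}}$ and \cref{lem:type-substitution} gives $\istype{\Delta}{\subs Ua{T'}}{\kind''}$, so \rulenametypeTApp yields $\isExpr{\subs Ua\Gamma}{\tappe{\subs Ua{e'}}{\subs Ua{T'}}}{\subs{\subs Ua{T'}}b{\subs UaV}}$; the familiar commutation of type substitutions (valid since $b\neq a$ and $b\notin\free(U)$) rewrites the result type as $\subs Ua{(\subs{T'}bV)}=\subs UaT$, and the term is exactly $\subs Uae$. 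In \rulenametypePoly we have $e=\tabse b{\kind''}v$, $T=\forallt b{\kind''}{T'}$, premises $\isExpr[\Delta,a\colon\kind,b\colon\kind'']{\Gamma}{v}{T'}$ and $b\notin\free(\Gamma)$, and again assume $b\neq a$, $b\notin\free(U)$. Weakening (\cref{lem:type-weakening}) lifts the hypothesis to $\istype{\Delta,b\colon\kind''}{U}{\kind}$; since a kinding context is used only through membership I may reorder $\Delta,a\colon\kind,b\colon\kind''$ as $\Delta,b\colon\kind'',a\colon\kind$ and apply the induction hypothesis, obtaining $\isExpr[\Delta,b\colon\kind'']{\subs Ua\Gamma}{\subs Uav}{\subs Ua{T'}}$; from $b\notin\free(\Gamma)$ and $b\notin\free(U)$ we get $b\notin\free(\subs Ua\Gamma)$, and $\subs Uav$ is still a value, so \rulenametypePoly applies and gives $\isExpr{\subs Ua\Gamma}{\subs Ua{(\tabse b{\kind''}v)}}{\subs Ua{(\forallt b{\kind''}{T'})}}$, which is the required judgement.

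The main obstacle is precisely this last pair of cases: keeping the bound type variable fresh for both $a$ and $U$ so that the substitutions commute and the \rulenametypePoly side condition is re-established, weakening $U$ into the enlarged kinding context before invoking the induction hypothesis, and treating the kinding context up to reordering of its bindings. Everything else is bookkeeping layered on top of \cref{lem:type-substitution}, \cref{lem:substitution-context-split} and \cref{lem:type-equivalence-substitution}.
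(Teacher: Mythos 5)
Your proof is correct and follows essentially the same route as the paper's: rule induction on the typing derivation, with \cref{lem:type-substitution} discharging kinding and context-formation premises, \cref{lem:substitution-context-split} handling context splits, and \cref{lem:type-equivalence-substitution} handling \rulenametypeEq. The paper only details the \rulenametypeVar case, whereas you elaborate the \rulenametypeTApp and \rulenametypePoly cases (freshness of the bound variable, commutation of substitutions, weakening of $U$ into the extended kinding context); these details are consistent with the paper's intent.
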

\begin{proof}
  The proof is by rule induction on the first hypothesis. We rely on
  \cref{lem:type-substitution} to handle premises on type and typing context
  formation. \Cref{lem:substitution-context-split} handles the interaction with
  context splitting. \Cref{lem:type-equivalence-substitution} handles the case
  for {\rulenametypeEq}. We detail one case: rule \rulenametypeVar with $e=x$.
  The premises to the rule are $\isCUn[\Delta,a\colon\kind]{\Gamma}$ and
  $\istype{\Delta,a\colon\kind}{T}{\kind'}$
  Noticing that $\subs Uax=x$, the result follows by applying type
  substitution (\cref{lem:type-substitution}) to the second hypothesis and to
  the premises to the rule.
\end{proof}

\begin{lemma}[Value substitution]
  \label{lem:value-substitution}
  If $\isExpr{\Gamma_1, x:T}eU$ and $\splitctx\Gamma{ \Gamma_1}{ \Gamma_2}$ and
  $\isExpr{\Gamma_2}vT$, then $\isExpr\Gamma{e[v/x]}U$.
\end{lemma}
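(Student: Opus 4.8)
The plan is to proceed by rule induction on the derivation of $\isExpr{\Gamma_1,x\colon T}{e}{U}$, keeping the hypotheses $\splitctx\Gamma{\Gamma_1}{\Gamma_2}$ and $\isExpr{\Gamma_2}{v}{T}$ fixed, and treating typing contexts up to reordering so that $x\colon T$ can always be taken as the last binding. The case of \rulenametypeEq is immediate: its premises are $\isExpr{\Gamma_1,x\colon T}{e}{U'}$ and $\isequiv[\Empty]\Delta{U'}{U}\kind$; the induction hypothesis gives $\isExpr\Gamma{\subs vxe}{U'}$, and re-applying \rulenametypeEq with the same equivalence (which mentions neither $e$ nor $\Gamma$) closes it. One auxiliary observation will be used repeatedly, namely that \emph{a value typeable at an unrestricted type is typeable under an unrestricted context}; this is a routine induction on the structure of values, using that a record, variant, function, or $\forall$ type is unrestricted precisely when the components relevant to its kinding are, together with \cref{lem:cSplitProps} for the iterated split in \rulenametypeRcd. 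Combined with \cref{lem:cSplitProps}, the observation gives that whenever $T$ is unrestricted we have $\Gamma_2$ unrestricted and $\Gamma = \Gamma_1$.

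The leaf cases \rulenametypeConst, \rulenametypeNew, \rulenametypeSel, and \rulenametypeVar with $e = y \neq x$ are uniform: inversion (\cref{lem:inversion}) shows that the part of $\Gamma_1,x\colon T$ not mentioned by $e$ is unrestricted, so in particular $T$ is; hence $\Gamma = \Gamma_1$, $\subs vxe = e$, and strengthening (\cref{lem:strengthening}) --- applicable since $x$ does not occur free in $e$ --- yields $\isExpr{\Gamma_1}{e}{U}$, which is the goal. The one remaining leaf case is \rulenametypeVar with $e = x$: inversion gives $\isCtx{\Gamma_1}{\kindtu}$ and $\isequiv[\Empty]\Delta TU\kind$, so $\Gamma = \Gamma_2$; from $\isExpr{\Gamma_2}{v}{T}$ and \rulenametypeEq we get $\isExpr{\Gamma_2}{v}{U}$, which is $\isExpr\Gamma{\subs vxx}{U}$.

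For the structural rules \rulenametypeApp, \rulenametypeRcd, \rulenametypeLet, \rulenametypeUnitElim, \rulenametypeCase, \rulenametypeMatch and the binders \rulenametypeAbs, \rulenametypePoly, the derivation splits $\Gamma_1,x\colon T$ as $\Delta_1\circ\Delta_2$ across the immediate subexpressions (for the binders and for the eliminations that introduce pattern variables, fresh variables are introduced first, which by $\alpha$-conversion we take disjoint from $x$, $v$, and $\Gamma_2$, and with which $\Gamma$ and $\Gamma_1$ are correspondingly extended, using weakening of context split, \cref{lemma:unrestricted-weakening}). If $T$ is linear, the split keeps $x\colon T$ in exactly one side, say $\Delta_1 = \Delta_1',x\colon T$; then $x$ does not occur free in the other subexpression, on which substitution is the identity and which is already typed in $\Delta_2$. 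The re-association property of context split (\cref{lem:cSplitProps}) provides a context $\Gamma'$ with $\splitctx\Gamma{\Gamma'}{\Delta_2}$ and $\splitctx{\Gamma'}{\Delta_1'}{\Gamma_2}$; applying the induction hypothesis to the first subexpression along the latter split gives its typing in $\Gamma'$, and re-applying the original rule with the split $\splitctx\Gamma{\Gamma'}{\Delta_2}$ closes the case. If $T$ is unrestricted, then $\Gamma = \Gamma_1$ and $\Gamma_2$ is unrestricted by the observation above, $x\colon T$ may sit on one or both sides of the internal split, the induction hypothesis applies to every subexpression (adjoining the unrestricted $\Gamma_2$ to each costs nothing), and the results recombine into $\Gamma$ because adjoining an unrestricted context commutes with context split. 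For \rulenametypeAbs one additionally discharges the side condition $\isCtx\Gamma{\kindt^m}$: trivial when $m = \lin$, and when $m = \un$ it follows from $\isCtx{\Gamma_1,x\colon T}{\kindtu}$ together with unrestrictedness of $\Gamma_2$. For \rulenametypePoly one uses weakening of expression typing by a type variable (\cref{lemma:unrestricted-weakening}) to keep $\isExpr{\Gamma_2}{v}{T}$ valid under the extended kinding context, noting that $v$ does not mention the fresh type variable; \rulenametypeTApp and \rulenametypeVariant follow directly from the induction hypothesis.

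The main obstacle is the resource bookkeeping in the structural cases: deciding, for each internal context split occurring in the derivation of $e$, on which side the linear binding $x\colon T$ resides, and then re-threading the resources $\Gamma_2$ that type $v$ to precisely the subexpression(s) consuming $x$ so that the reassembled context is again $\Gamma$. This is exactly where the re-association property of context split (\cref{lem:cSplitProps}) and its interplay with strengthening and weakening are indispensable. The only genuinely new ingredient beyond bookkeeping is the auxiliary fact that a value of unrestricted type is typeable under an unrestricted context, which is needed to discharge the unrestricted sub-cases and is itself a straightforward induction on values.
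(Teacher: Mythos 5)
Your proposal is correct and follows essentially the same route as the paper: rule induction on the typing derivation of $e$, with a case split on whether $T$ is linear or unrestricted in the variable, application and abstraction cases, using the re-association property of context split, strengthening, and unrestricted weakening. The paper's proof is only a two-line sketch of this strategy; your elaboration (including the auxiliary fact that a value of unrestricted type forces its typing context to be unrestricted, which the sketch leaves implicit but genuinely needs) fills in the details faithfully.
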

\begin{proof}
  The proof is by rule induction on the first hypothesis. In some cases
  (variables, application and abstraction), we have to distinguish whether $T$
  (or the type of the abstracted variable) is linear or not. Unrestricted
  weakening (\cref{lemma:unrestricted-weakening}) is required in the case for
  abstraction.
\end{proof}

\begin{theorem}[Typing preservation for expression reduction]
  \label{thm:preservation-exps}
  If $\isRed{e_1}{e_2}$ and $\isExpr{\Gamma}{e_1}{T}$, then
  $\isExpr{\Gamma}{e_2}{T}$.
\end{theorem}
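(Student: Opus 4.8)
By rule induction on the derivation of $\isRed{e_1}{e_2}$, with one case for each rule of \cref{fig:reduction}. For the cases in which $e_1$ is itself a redex (\rulenameexpredApp, \rulenameexpredLet, \rulenameexpredUnitElim, \rulenameexpredTApp, \rulenameexpredCase) the recipe is uniform: apply the matching clause of \cref{lem:inversion} — which, via \cref{lem:normalized-type-derivation}, returns the premises that built $e_1$ at some type $V$ together with an equivalence $\isequiv[\Empty]\Delta VT\kind$ — then rebuild a typing of $e_2$ at a type equivalent to $V$, and finally re-attach \rulenametypeEq, using transitivity of type equivalence (\cref{lem:type-equivalence}) to reach the target type $T$. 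The work in each case is just producing that rebuilt derivation from the substitution lemmas.

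Concretely, for \rulenameexpredApp, $(\abse x\_ e)v\reduces\subs vxe$, inversion of the application yields a split $\splitctx\Gamma{\Gamma_1}{\Gamma_2}$, a derivation $\isExpr{\Gamma_1}{\abse x\_ e}{\funt UV}$, and $\isExpr{\Gamma_2}vU$; inverting the abstraction gives $\isExpr{\Gamma_1,x\colon U'}e{V'}$ with $\funt{U'}{V'}$ equivalent to $\funt UV$, whence, by inversion of that arrow equivalence and \rulenametypeEq, $\isExpr{\Gamma_2}v{U'}$, and \cref{lem:value-substitution} produces $\isExpr\Gamma{\subs vxe}{V'}$. The cases \rulenameexpredLet and \rulenameexpredUnitElim are analogous, the former iterating \cref{lem:value-substitution} over $\ell\in L$. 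For \rulenameexpredCase, $\casee{\injecte kv}{\recordp\ell e L}\reduces e_kv$, inversion gives a split, $\isExpr{\Gamma_1}{\injecte kv}{\variantt\ell TL}$, and $\isExpr{\Gamma_2}{e_\ell}{\funt[\lin]{T_\ell}U}$ for all $\ell\in L$; inverting the injection yields $\isExpr{\Gamma_1}v{T_k}$, so rule \rulenametypeApp reassembles $\isExpr{\Gamma_1\circ\Gamma_2}{e_kv}U$. For \rulenameexpredTApp, $\tappe{(\tabse a\_ v)}T\reduces\subs Tav$, inversion gives $\isExpr[\Delta,a\colon\kind]\Gamma v{U}$ with $a\notin\free(\Gamma)$ and $\istype\Delta T\kind$; \cref{lem:type-substitution-exps} yields $\isExpr{\subs Ta\Gamma}{\subs Tav}{\subs TaU}$, and since $a\notin\free(\Gamma)$ we have $\subs Ta\Gamma=\Gamma$, while $\subs TaU$ is exactly the type announced by inversion.

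The remaining case is \rulenameexpredCtx: from $e\reduces e'$ conclude $E[e]\reduces E[e']$. Here I would appeal to \cref{lem:context-typing} to decompose $\isExpr\Gamma{E[e]}T$ into $\isExpr{\Gamma_1}eU$ and $\isExpr{\Gamma_2,x\colon U}{E[x]}T$ for a suitable split $\splitctx\Gamma{\Gamma_1}{\Gamma_2}$ and some $U$; the induction hypothesis applied to $e\reduces e'$ gives $\isExpr{\Gamma_1}{e'}U$, and the reverse implication of \cref{lem:context-typing} reassembles $\isExpr\Gamma{E[e']}T$. I expect this case, together with the linear-context bookkeeping shared by all the redex cases, to be the main obstacle: one must check that the splits handed back by \cref{lem:inversion,lem:context-typing} compose (which is what \cref{lem:cSplitProps} provides), that the kind/multiplicity annotations produced by inversion are the ones fed to \rulenametypeEq (always at empty visited set $\Theta=\Empty$, as the inversion lemma records), and that the arrow- and variant-equivalence inversions used in the $\beta$-cases are legitimate — they are, since \rulenameeqFix cannot fire with $\Theta=\Empty$ and neither side is a $\mu$-type. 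Everything else is routine once the substitution, inversion, and context-typing lemmas are in hand.
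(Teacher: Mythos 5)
Your proposal is correct and follows essentially the same route as the paper's proof: rule induction on the reduction, inversion via \cref{lem:normalized-type-derivation,lem:inversion} in each redex case, the substitution lemmas (\cref{lem:value-substitution,lem:type-substitution-exps}) to rebuild the typing, \rulenametypeEq with transitivity to recover $T$, and \cref{lem:context-typing} in both directions for \rulenameexpredCtx. Your explicit justification of the arrow-equivalence inversion in the $\beta$-case (no \rulenameeqFix at $\Theta=\Empty$, neither side a $\mu$-type) is a point the paper's own write-up glosses over, but it is the same argument.
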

\begin{proof}
  By rule induction on $\isRed{e_1}{e_2}$.

  Case \rulenameexpredTApp. We apply inversion (\cref{lem:inversion}) to the
  typing judgement for the redex $\isExpr{\Gamma}{\tappe{(\tabse{a}{\kind}{v})}{T'}}{T}$ 
  and obtain 
  \begin{gather}
    \label{eq:tpe1}
    \isExpr{\Gamma}{(\tabse{a}{\kind}{v})}{\forallt{a}{\kind}{U}}
    \\
    \label{eq:tpe2}
    \istype{\Delta}{T'}{\kind}
    \\
    \label{eq:tpe2a}
    \isequiv[\Empty]\Delta T {\subs {T'}aU} \kind
  \end{gather}
  Inversion of \cref{eq:tpe1} (\cref{lem:inversion}) yields:
  \begin{gather}
    \label{eq:tpe3}
    \isExpr[\Delta,a\colon\kind]{\Gamma}{v}{U'}
    \\
    \isequiv[\Empty]{\Delta}{\forallt a \kind {U'}}{\forallt a \kind U}{\kind'}
  \end{gather}
  By \rulenameeqPoly
  \begin{gather}
    \label{eq:tpe4}
    \isequiv[\Empty]{\Delta, a:\kind}{ {U'}}{ U}{\kind'}
  \end{gather}
  Applying rule \rulenametypeEq to \cref{eq:tpe3,eq:tpe4} yields
  \begin{gather}
    \label{eq:tpe5}
    \isExpr[\Delta,a\colon\kind]{\Gamma}{v}{U}
  \end{gather}
  Applying the substitution lemma (\cref{lem:type-substitution-exps}) to
  \cref{eq:tpe2,eq:tpe5} yields
  \begin{gather}
    \isExpr\Gamma{\subs {T'} av}{\subs{T'}aU}
  \end{gather}
  noticing that
  $a\notin\free(\Gamma)$, hence $\subs {T'}a\Gamma = \Gamma$.
  Conclude by applying \rulenametypeEq using \cref{eq:tpe2a}.

  Case \rulenameexpredApp. We apply inversion
  (\cref{lem:inversion}) to the typing judgement for the redex
  $\isExpr\Gamma{\appe{(\abse x U e)}v}T$ and obtain
  \begin{gather}\label{eq:21}
    \isExpr\Gamma{\abse x U e}{\funt U{U'}} \\\label{eq:23}
    \isExpr\Gamma {v} U \\\label{eq:24}
    \isequiv[\Empty]\Delta T{U'} \kind
  \end{gather}
  By inversion (\cref{lem:inversion}) applied to \eqref{eq:21} we
  obtain
  \begin{gather}\label{eq:22}
    \isExpr{\Gamma, x:U} e {U''} \\\label{eq:25}
    \isequiv[\Empty]\Delta {U'}{U''}{\kind}
  \end{gather}
  Applying substitution (\cref{lem:value-substitution}) to
  \eqref{eq:22} and \eqref{eq:23} we have
  \begin{gather}
    \isExpr\Gamma {\subs vxe} {U''}
  \end{gather}
  Using transitivity on \eqref{eq:24} and \eqref{eq:25}, we apply
  {\rulenametypeEq} to obtain
  \begin{gather}
    \isExpr\Gamma {\subs vxe} {T}
  \end{gather}

  Case \rulenameexpredLet. We apply inversion
  (\cref{lem:inversion}) to the typing judgement for the redex
  $\isExpr\Gamma{\lete \ell xL{\recorde \ell vL}{e}}T$ and obtain
  \begin{gather}\label{eq:26}
    \splitctx{\Gamma}{\Gamma_1}{\Gamma_2}\\
    \isExpr{\Gamma_1}{\recorde \ell vL}{\recordt \ell T L}
    \\\label{eq:27}
    \isExpr{\Gamma_2, (x_\ell:T_\ell)_{\ell\in L}} e {T'}
    \\
    \isequiv[\Empty]\Delta T{T'}\kind
  \end{gather}
  Inversion (\cref{lem:inversion}) and \rulenameeqRcd applied to \eqref{eq:27} yields
  \begin{gather}
    \splitctx{\Gamma_1}{}{\Gamma_\ell}
    \\
    \isExpr{\Gamma_\ell}{v_\ell}{T_\ell'}
    \\
    \isequiv[\Empty]\Delta {T_\ell}{T_\ell'}\kind
  \end{gather}
  By {\rulenametypeEq} we have $\isExpr{\Gamma_\ell}{v_\ell}{T_\ell}$, for
  each $\ell$, by substitution (\cref{lem:value-substitution}) on~\eqref{eq:27}
  we
  have $\isExpr{\Gamma_2\circ \Gamma_1}{\subs {v_\ell}{x_\ell}e_{\ell\in L}}{T'}$, and
  by  {\rulenametypeEq} its type is also $T$. 

  Case \rulenameexpredCase. We apply inversion
  (\cref{lem:inversion}) to the typing judgement for the redex
  \begin{gather}
    \isExpr\Gamma {\casee{\injecte k v}{\recordp \ell e L}} T
  \end{gather}
  (where $k\in L$) and obtain some $T'$ such that
  \begin{gather}\label{eq:30}
    \splitctx{\Gamma}{\Gamma_1}{\Gamma_2}\\
    \isExpr{\Gamma_1}{\injecte k v}{\variantt \ell T L}
    \\
    \isExpr{\Gamma_2}{e_\ell}{\funt {T_\ell}{T'}}
    \\
    \isequiv[\Empty]\Delta T {T'}\kind
  \end{gather}
  By inversion (\cref{lem:inversion}) on \eqref{eq:30}, exist $\variantt\ell T L$ such that
  \begin{gather}
    \isExpr{\Gamma_1} v {T'}
    \\
    \isequiv[\Empty]\Delta{T_k}{T'}\kind
  \end{gather}
  Putting things back together: By \rulenametypeEq, $\isExpr{\Gamma_1} v{T_k}$. Hence, we can apply
  $e_k$ using {\rulenametypeApp}
  \begin{gather}
    \isExpr\Gamma{e_kv}{T'}
  \end{gather}
  and conclude with    {\rulenametypeEq} to obtain type $T$.

  Case \rulenameexpredCtx. From context typing (\cref{lem:context-typing})
  we know that $\isExpr[\Delta]{\Gamma_1}{e_1}U$ and
  $\isExpr[\Delta]{\Gamma_2, x\colon U}{E[x]}T$, with
  $\splitctx[\Delta]\Gamma{\Gamma_1}{\Gamma_2}$. By induction
  $\isExpr[\Delta]{\Gamma_1}{e_2}U$. We conclude with context typing, now in the
  reverse direction.
\end{proof}

\begin{theorem}[Type preservation for process reduction]
  \label{thm:preservation-procs}
  If $\isRed pq$ and $\isproc\Gamma p$, then $\isproc\Gamma q$.
\end{theorem}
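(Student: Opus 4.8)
The plan is to prove the statement by rule induction on the derivation of $\isRed pq$, examining each process reduction rule of \cref{fig:reduction}. The structural rules are routine. For \rulenameprocredCong we have $p\equiv q_0$ and $\isRed{q_0}q$; \cref{lemma:congruence-preserves} turns $\isproc\Gamma p$ into $\isproc\Gamma{q_0}$ and the induction hypothesis finishes the case. For \rulenameprocredPar and \rulenameprocredBind we invert \rulenameprocPar, resp.\ \rulenameprocNew, apply the induction hypothesis to the reducing sub-process, and re-apply the same rule, the context recombining since $\Gamma$ is untouched. For \rulenameprocredExp we invert \rulenameprocExp to get $\isExpr[\Empty]\Gamma eT$ with $\istype\Empty T\kindtu$, and conclude with typing preservation for expression reduction (\cref{thm:preservation-exps}) and \rulenameprocExp again.

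The four rules whose redex lies under an evaluation context share a common treatment built on the context typing lemma (\cref{lem:context-typing}), which peels the redex out of $E$ with its own share of the typing context and plugs a fresh variable in its place, together with inversion for expression formation (\cref{lem:inversion}) to read off the redex's type and value substitution (\cref{lem:value-substitution}) to plug the contractum back in. For \rulenameprocredFork, inverting the typing of $\appe{\tappe\forkk{\_}}e$ shows the redex has type $\unitt[\un]$ while the part of the context typing the $\forkk$ constant is unrestricted, so by the properties of context split (\cref{lem:cSplitProps}) all the linear hypotheses end up typing $e$ at some $U$ with $\istype\Empty U\kindtu$; since $\unitt[\un]$ again has type $\unitt[\un]$ in an unrestricted context, context typing (reverse direction) retypes $\PROC{E[\unitt[\un]]}$, \rulenameprocExp retypes $\PROC e$, and \rulenameprocPar recombines them along the split produced by context typing. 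Rule \rulenameprocredNew is analogous: the redex $\newe{T'}$ has type $\pairt{T'}{\dual{T'}}$ in an unrestricted context with $\istype\Empty{T'}\kindsl$ (rule \rulenametypeNew); substituting the value $(x,y)$, typable in $x\colon T',y\colon\dual{T'}$, for the placeholder via \cref{lem:value-substitution} retypes $E[(x,y)]$, and \rulenameprocExp followed by \rulenameprocNew closes the case, after \cref{lemma:unrestricted-weakening} restores any unrestricted hypotheses absorbed by the split.

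The substantive cases are the two communication rules, where the key is to reconcile the session types at the two ends of the restricted channel. For \rulenameprocredMsg, inverting \rulenameprocNew and \rulenameprocPar exposes $x\colon R$ in the sender's context and $y\colon\dual R$ in the receiver's — the channel variables being linear, \cref{lem:cSplitProps} puts each on exactly one side. Applying context typing and \cref{lem:inversion} to the two redexes yields, modulo the equivalences introduced by \rulenametypeEq, that $R$ is equivalent to $\semit{\sout A}S$ (with $\istype\Empty A\kindml$ and the transmitted value $v$ of type $A$) while $\dual R$ is equivalent to $\semit{\sint{A'}}{B'}$. Using that duality is compatible with session equivalence, the laws of \cref{lemma:laws} (neutrality of $\skipt$), and a short look at the labelled transition system, one obtains $A\TypeEquiv A'$ and $B'\TypeEquiv\dual S$, so that in the contractum $x$ continues at $S$, $y$ continues at $\dual S$, and $(v,y)$ has the type $E_2$ expects; relocating the slice of context that typed $v$ from the sender to the receiver and re-closing the channel at type $S$ with \rulenameprocNew finishes the case. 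Rule \rulenameprocredCh is identical in structure with internal/external choice replacing output/input: $R\TypeEquiv\ichoicet{\recordt\ell TL}$, $\dual R\TypeEquiv\echoicet{\recordt\ell{T'}{L'}}$, duality forces $L=L'$ and $T'_\ell\TypeEquiv\dual{T_\ell}$, hence the selected branch $e_k$ applies to $y$ at the matching type.

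I expect these last two cases to concentrate essentially all the difficulty. In particular the argument needs the auxiliary facts that $\dual{(\cdot)}$ preserves session equivalence (a symmetry of \cref{fig:lts}) and that equivalence of session types of head shape $\semit{\msgt{\cdot}}\cdot$ or $\semit{\choicet{\cdot}}\cdot$ can be inverted into equivalences of the components; the tracking of context splits across the two processes, and the fact that $\kindm$-types are base types (so the message types at the two ends literally agree), are what make the reconciliation go through.
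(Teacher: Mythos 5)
Your proposal is correct and follows essentially the same route as the paper's proof: rule induction on the reduction, with the congruence lemma for \rulenameprocredCong, expression preservation for \rulenameprocredExp, and the context-typing, inversion, and value-substitution lemmas for the fork, new, and communication cases, closing the channel at the continuation type with \rulenameprocNew. Your treatment of \rulenameprocredMsg is in fact slightly more careful than the paper's, which reads off $T = \semit{\sout R}{S_1}$ by inversion as if syntactically, whereas you correctly note the reconciliation happens modulo type equivalence and requires that duality be compatible with session bisimilarity.
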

\begin{proof}
  We proceed by rule induction on $\isRed pq$.

  Case \rulenameprocredExp. We invert \rulenameprocExp to obtain
  $\isExpr[\Empty]{\Gamma}{e} T$. Preservation for expression reduction
  (\cref{thm:preservation-exps}) gives
  $\isExpr[\Empty]{\Gamma}{e'} T$. We conclude with rule \rulenameprocExp.
  
  Case \rulenameprocredFork. We invert \rulenameprocExp to obtain
  \begin{gather}
    \isExpr[\Empty]{\Gamma}{E[\appe{\tappe\forkk\_} e]} T
  \end{gather}
  By \cref{lem:context-typing}, we can pick some $z:
  \unitt$ and $\splitctx[\Empty]\Gamma{ \Gamma_1}{ \Gamma_2}$ such that
  \begin{gather}
    \isExpr[\Empty]{\Gamma_1, z:\unitt}{E[z]} T \\
    \isExpr[\Empty]{\Gamma_2}{\appe{\tappe\forkk\_} e} {\unitt}
  \end{gather}
  By substitution and by inversion (for some $\istype \Empty {T'}\kindtu$)
  \begin{gather}
    \isExpr[\Empty]{\Gamma_1}{E[\unite]} T  \\
    \isExpr[\Empty]{\Gamma_2}{ e} {T'}
  \end{gather}
  By rules {\rulenameprocExp} and {\rulenameprocPar}
  \begin{gather}
    \isproc\Gamma{\PROC {E[\unite]} \mid \PROC e}
  \end{gather}
  
  Case \rulenameprocredNew. We invert \rulenameprocExp to obtain
  \begin{gather}
    \isExpr[\Empty]{\Gamma}{E[\newe S]} T
  \end{gather}
  By \cref{lem:context-typing}, we can pick some $z:
  \pairt S {\dual S}$ such that
  \begin{gather}
    \isExpr[\Empty]{\Gamma, z: \pairt S{\dual S}}{E[z]} T \\
    \isExpr[\Empty]{\Empty}{\newe S} {\pairt S {\dual S}}
  \end{gather}
  We create a new typed value from two channel ends by
  \begin{gather}
    \isExpr[\Empty]{x:S, y:\dual S}{(x, y)}{\pairt S {\dual S}}
  \end{gather}
  By substitution of this value for $z$, we obtain
  \begin{gather}
    \isExpr[\Empty]{\Gamma, x: S, y:\dual S}{E[(x, y)]} T
  \end{gather}
  By {\rulenameprocExp} and {\rulenameprocNew}, we find:
  \begin{gather}
    \isproc\Gamma {\PROC {E[(x, y)]}}(\nu xy)
  \end{gather}
  
  Case \rulenameprocredMsg. We invert
  {\rulenameprocNew} to obtain
  \begin{gather}
    \label{eq:1}
    \isproc{\Gamma, x\colon T, y \colon \dual
      T}{(\PROC{E_1[\appe{\tappe{\appe{\tappe\sendk\_}v}\_}x]}  \PAR  \PROC{E_2[\appe{\tappe{\tappe\receivek\_}\_} y]})}\\
    \label{eq:9}
    \istype{}{T}{\kindsl}
  \end{gather}
  We invert \eqref{eq:1} using {\rulenameprocPar} to obtain
  \begin{gather}
    \Gamma = \Gamma_1 \circ \Gamma_2 \\
    \label{eq:2}
    \isExpr[\Empty]{\Gamma_1, x\colon T}{E_1[\appe{\tappe{\appe{\tappe\sendk\_}v}\_}x]}\unitt \\
    \label{eq:3}
    \isExpr[\Empty]{\Gamma_2, y\colon \dual T}{E_2[\appe{\tappe{\tappe\receivek\_}\_} y]}\unitt
  \end{gather}
  Applying \cref{lem:context-typing} to judgement \eqref{eq:2}
  yields some $\Gamma_1'', \Gamma_2'', S_1$ such that, for some type
  $R$ and all $z_1$ not in $\Gamma_2''$,
  \begin{gather}
    \Gamma_1, x:T = \Gamma_1'' \circ (\Gamma_2'', x:T) \\
    \isExpr[\Empty]{\Gamma_2'', x:T}{\appe{\tappe{\appe{\tappe\sendk\_}v}\_}x}{S_1} \\
    \isExpr[\Empty]{\Gamma_1'', z_1: S_1}{E_1[z_1]}\unitt
  \end{gather}
  Inversion of the sending application yields
  \begin{gather}\label{eq:10}
    T = \sout R; S_1 \\
    \label{eq:4}
    \isExpr[\Empty]{\Gamma_2''}v R
  \end{gather}
  Applying \cref{lem:context-typing} to \eqref{eq:3}
  yields some $\Gamma_1''', \Gamma_2''', S_2$ such that, for all $z_3$
  not in $\Gamma_2'''$,
  \begin{gather}
    \Gamma_2, y:\dual T = \Gamma_1''' \circ (\Gamma_2''', y:\dual T) \\
    \dual T = \sint R; \dual{S_1} \text{ and }
    S_2 = \pairt R {\dual {S_1}}\\
    \isExpr[\Empty]{\Gamma_2''', y:\dual T}{\appe{\tappe{\tappe\receivek\_}\_} y}{S_2} \\
    \label{eq:5}
    \isExpr[\Empty]{\Gamma_1''', z_2:S_2}{E_2[z_2]}\unitt
  \end{gather}
  To conclude, take $z_1 = x$ to obtain
  \begin{gather}
    \label{eq:7}
    \isExpr[\Empty]{\Gamma_1'', x: S_1}{E_1[x]}\unitt
  \end{gather}
  From \eqref{eq:4}, we obtain
  \begin{gather}
    \label{eq:6}
    \isExpr[\Empty]{\Gamma_2'', y:\dual {S_1}}{(v, y)}{\pairt R {\dual{S_1}}}
  \end{gather}
  Apply substitution (\cref{lem:value-substitution}) for $z_2$ to
  \eqref{eq:5} and \eqref{eq:6} yielding
  \begin{gather}
    \label{eq:8}
    \isExpr[\Empty]{\Gamma_1''' \circ \Gamma_2'', y : \dual{S_1}}{E_2[(v,y)]}\unitt
  \end{gather}
  Apply {\rulenameprocPar} to \eqref{eq:7} and \eqref{eq:8},
  exploiting that $\Gamma = \Gamma_1''\circ\Gamma_2'' \circ
  \Gamma_1''' \circ \Gamma_2'''$
  \begin{gather}\label{eq:12}
    \isproc{\Gamma, x:S_1, y:\dual{S_1}}{{(\PROC{E_1[x]} \PAR  \PROC{E_2[(v, y)]})}}
  \end{gather}
  By \eqref{eq:9} and \eqref{eq:10} and inversion of
  {\rulenamekindSeq}, we know that
  \begin{gather}\label{eq:11}
    \istype{\Empty}{S_1}{\kindsl}
  \end{gather}
  Conclude by applying {\rulenameprocNew} to \eqref{eq:11} and \eqref{eq:12}.
  
  Case \rulenameprocredCh. We proceed analogously.

  Case \rulenameprocredPar. Conclude by induction.
  
  Case \rulenameprocredBind. Conclude by induction.

  Case \rulenameprocredCong. Apply \cref{lemma:congruence-preserves} and
  conclude by induction.
\end{proof}

\paragraph{Absence of Run-time Errors}

We start with the definition of runtime errors. The \emph{subject} of an
expression~$e$, denoted by $\subj(e)$, is~$z$ in the following cases and
undefined in all other cases. 
\begin{equation*}
  \sendk[T]v [U] z \qquad
  \receivek[T][U] z \qquad
  \selecte \ell z \qquad
  \matche z{\recordpa \ell e}
\end{equation*}

Two expressions~$e_1$ and $e_2$ \emph{agree} on channel
$xy$, notation $\agree^{xy}(e_1,e_2)$, in the following four cases.
\begin{itemize}
\item $\agree^{xy}(\sendk[T]v [U] x, \receivek[T][\dual U] y)$;
\item $\agree^{xy}(\receivek[T][U] x,\sendk[T]v [\dual U] y)$;
\item $\agree^{xy}(\selecte kx, \matche y{\recordp \ell e L})$ and $k\in L$;
\item $\agree^{xy}(\matche x{\recordp \ell e L, \selecte ky})$ and $k\in L$.
\end{itemize}

A closed process is an \emph{error} if it is structurally congruent to some
process that contains a subprocess of one of the following forms.
\begin{enumerate}
\item $\PROC{E[\appe v e]}$ where $v$ is
  $x, c, \tabse a\kind {v'}, \recorde \ell v L, \injecte \ell {v'},
  \sendk[T]v'$ or $\receivek[T]$;
\item $\PROC{E[\tappe vT]}$ where $v$ is
  $x$, $ \abse xTe$, $ \recorde \ell v L$, $ \injecte \ell {v'}$, $ \sendk[T]$, $
  \sendk[T]v'[U]$ or $\receivek[T][U]$. 
\item $\PROC{E[\lete \ell{x}L v e]}$ and  $v \neq \recorde\ell{v}L$;
\item $\PROC{E[\matche v{\recordp \ell e L}]}$ and $v \neq \injecte k{v'}$, for
  all $k\in L$;
\item\label{item:1} $\PROC{E_1[e_1]} \PAR \PROC{ E_2[e_2]}$ and
  $\subj(e_1) = \subj(e_2)$;
\item\label{item:2} $\NU xy(\PROC{E_1[e_1]} \PAR  \PROC{E_2[e_2]} \PAR p)$ and $\subj(e_1)=x$
  and $\subj(e_2)=y$ and $\neg\agree^{xy}(e_1,e_2)$.
\end{enumerate}

The first four cases are typical of polymorphic functional
languages with records, variants, and functional constants.
\Cref{item:1} guarantees that no two threads hold references to the same channel
end. If $\PROC{E_1[e_1]} \PAR \PROC{ E_2[e_2]}$ is closed for non channel-end
variables (and given that the evaluation contexts $E_1$ and $E_2$ do not bind
variables), then the subjects of $e_1$ and $e_2$ are channel ends.
\Cref{item:2} says that channel ends agree at all times: if one
thread is ready for sending, then the other is ready for receiving,
and similarly for selection and branching.

\begin{theorem}[Absence of run-time errors]
        If $\isproc{}p$, then $p$ is not an error.
\end{theorem}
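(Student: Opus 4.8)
The plan is a proof by contradiction. Assume $\isproc{}{p}$ but that $p$ is an error, so that $p \equiv q$ for some $q$ containing a subprocess $r$ of one of the six shapes in the definition. By \cref{lemma:congruence-preserves} we have $\isproc{}{q}$, and I would trace the typing derivation of $q$ down to the occurrence of $r$ (inverting \rulenameprocPar and \rulenameprocNew along the way) to obtain a context $\Gamma$ with $\isproc{\Gamma}{r}$ in which every binding is a channel end introduced by an enclosing restriction; in particular, every free variable of $r$ is some $z\colon S$ with $\istype{}{S}{\kindsl}$. It then suffices to refute $\isproc{\Gamma}{r}$ in each of the six cases.

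For the four functional cases (items 1--4) we have $r = \PROC{E[\hat e]}$ for the offending redex $\hat e$. I would localize $\hat e$ with context typing (\cref{lem:context-typing}), obtaining $\isExpr{\Empty}{\Gamma_1}{\hat e}{W}$ for some $W$ and split of $\Gamma$, and then apply inversion of expression formation (\cref{lem:inversion}) to force the value $v$ in head position to have a type equivalent to an arrow type (item 1), a $\forall$-type (item 2), a record type (item 3), or a variant type (item 4). A second use of \cref{lem:inversion}, on $v$ itself, reads the principal type of $v$ off its syntactic shape: a unit, a record, a variant, a $\forall$-type, a message or session type (for a partially applied $\sendk$ or $\receivek$), or --- when $v$ is a variable --- a session type, since the free variables of $r$ are channel ends. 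In each instance the two types lie in incompatible top-level classes. The remaining ingredient is an auxiliary lemma stating that type equivalence does not bridge those classes; I would prove it by induction on derivations of $\isequiv{\Delta}{T}{U}{\kind}$ in \cref{fig:type-equivalence}: the congruence rules preserve the class, \rulenameeqST keeps both sides session types, and the unfolding rules \rulenameeqRecL and \rulenameeqRecR preserve the class of the non-$\mu$ side, with contractivity ensuring that finitely many unfoldings expose the class of the $\mu$ side.

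The two channel cases require linearity. For item 5, $r = \PROC{E_1[e_1]} \PAR \PROC{E_2[e_2]}$ with $z = \subj(e_1) = \subj(e_2)$; since evaluation contexts bind no variables, $z$ is free in both components, hence a channel end with $z\colon S \in \Gamma$ and $\istype{}{S}{\kindsl}$. Localizing $e_1$ (and $e_2$) with \cref{lem:context-typing} and inverting the send/receive/select/match application shows $z$ is used at a session type admitting a transition, so $S$ is not terminated and therefore not of kind $\kindtu$, \ie linear. But \rulenameprocPar routes a linear binding to exactly one of the two threads (\cref{lem:cSplitProps}), whereas $z$ must occur in the context of each --- contradiction. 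For item 6, $r = \NU xy(\PROC{E_1[e_1]} \PAR \PROC{E_2[e_2]} \PAR p')$ with $\subj(e_1)=x$, $\subj(e_2)=y$, and $\neg\agree^{xy}(e_1,e_2)$; inverting \rulenameprocNew gives $x\colon T$, $y\colon\dual T$ with $\istype{}{T}{\kindsl}$, and the same localize-and-invert recipe identifies the unique initial transition of the session type used for $x$ and for $y$ as exactly the action performed by $e_1$, resp.\ $e_2$. Using that duality is compatible with the labelled transition system and with session bisimilarity, these two initial actions must be complementary, whence $\agree^{xy}(e_1,e_2)$ --- contradiction. (Inversion pins down the message types exactly but the continuation annotations only up to equivalence; this is harmless, since one may either read $\agree$ modulo type equivalence or rewrite the receive's continuation annotation, neither of which affects whether $r$ is an error.)

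The hard part will be item 6: it needs a clean statement and proof that duality commutes with the transition system --- $\dual T \LTSderives[\dual\lambda] \dual{T'}$ exactly when $T \LTSderives[\lambda] T'$ --- and preserves bisimilarity, plus the reconciliation of the syntactic form of $\agree$ with the up-to-equivalence information that inversion supplies. The functional cases and item 5 should be routine once the ``equivalence respects the constructor class'' lemma and the linearity of non-terminated session types are available.
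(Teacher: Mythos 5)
Your proposal follows essentially the same route as the paper's proof: contradiction via congruence, localisation with context typing and inversion to force an arrow/$\forall$/record/variant type on a value whose syntactic shape forbids it in cases 1--4, linearity of the shared channel end versus context split in case 5, and duality of the two restricted ends in case 6. The only real divergence is case 6, where the paper extracts a direct syntactic contradiction with the duality function (e.g.\ two sends would force both $T$ and $\dual T$ to be output types), whereas you route through an LTS-compatibility lemma for duality --- heavier machinery than the paper uses, though your explicit ``equivalence preserves the top-level constructor class'' lemma does make precise a step the paper leaves implicit in cases 1--4.
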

\begin{proof}
  Suppose that $\isproc{}p$ and that $p$ is an error. Then we know that for each
  subprocess $q$ of $p$ there exists $\Gamma$ such that $\isproc{\Gamma}{q}$
  with $\isCtx[]{\Gamma}{\kinds^m}$, since all entries in $\Gamma$ are
  introduced by rule \rulenameprocredNew.
  %
  By \cref{lemma:congruence-preserves}, we may assume that $q$ is one of the
  offending subprocesses. We analise in turn each of the six cases in the
  definition of runtime errors.

 1. Suppose $\isExpr[\Empty]{\Gamma_1}{\appe ve}T$ where $v$ is of one of the
    forms in the definition of run-time errors.
    By inversion (\cref{lem:inversion}), we have
    $\splitctx[\empty]{\Gamma_1}{\Gamma'_1}{\Gamma''_1}$ and
    $\isExpr[\Empty]{\Gamma_1}v{\funt UV}$. We distinguish the various cases for
    $v$, extracting a contradiction in each case.
    If $v$ is a variable, we find a contradiction because $\Gamma_1$, and hence
    $\Gamma'_1$, contain no arrow types.
    If $v$ is a constant, there is a contradiction because no constant is of an
    arrow type (see \cref{fig:types-constants}). 
    If $v$ is a type abstraction, a record, an injection, $\sendk[T]v'$ or
    $\receivek[T]$, a contradiction arises because, by inversion, their types
    are not arrow types.



    2, 3 and 4. Analogous to the previous case.
    
    5. Suppose that $\isproc\Gamma{\PROC{E_1[e_1]} \PAR \PROC{E_2[e_2]}}$ where
    $\subj(e_1) = \subj(e_2) = z$. By inversion of \rulenameprocPar we have
    $\isproc{\Gamma_1}{\PROC{E_1[e_1]}}$ and
    $\isproc{\Gamma_2}{\PROC{E_2[e_2]}}$ with
    $\splitctx[\Empty]\Gamma{\Gamma_1}{\Gamma_2}$. By
    \cref{lem:context-typing} we know that
    $\isExpr[\Empty]{\Gamma'_1}{e_1}{T_1}$ with $\Gamma'_1\subseteq\Gamma_1$,
    and similarly for expression $e_2$. Analysing inversions
    (\cref{lem:inversion}) for the various expressions $e_1$ such that
    $\subj(e_1)$ is defined, we conclude that $\istype{\Empty}{T_1}{\kindsl}$
    and $\isExpr[\Empty]{\Gamma''_1}{z}{T_1}$
    with $\Gamma''_1\subseteq\Gamma_1$, and similarly for $e_2$. But this
    contradicts the existence of the splitting of $\Gamma$.
    
    6. Suppose that
    $\isproc\Gamma{\NU xy(\PROC{E_1[e_1]} \PAR \PROC{E_2[e_2]} \PAR q)}$ with
    $\subj(e_1)=x$ and $\subj(e_2)=y$ and $\neg\agree^{xy}(e_1,e_2)$. By
    inversion of {\rulenameprocNew} we obtain
      $\istype{\empty} T \kindsl$ and 
      $\isproc{\Gamma, x\colon T, y \colon {\dual T}}{
        (\PROC{E_1[e_1]} \PAR \PROC{E_2[e_2]} \PAR q)
      }$.
    By congruence and inversion of {\rulenameprocPar} we obtain
      $\isproc{\Gamma', x\colon T, y \colon {\dual T}}{
        \PROC{E_1[e_1]} \PAR \PROC{ E_2[e_2]}
      }$.
    By further inversion of {\rulenameprocPar} we get
      $\splitctx[\Empty]{\Gamma', x\colon T, y \colon {\dual
          T}}{(\Gamma_1', x\colon T)}{(\Gamma_2', y\colon \dual T)}$ and 
      $\isproc{\Gamma_1', x\colon T}{\PROC{E_1[e_1]}}$ and
      $\isproc{\Gamma_2', y\colon \dual T}{\PROC{E_2[e_2]}}$.
    Now we have consider the cases where $\neg\agree^{xy}(e_1,e_2)$,
    all of which contradict the definition of duality. We show
    two representative examples. 
    \begin{itemize}
    \item $e_1$ has the form $\sendk[U_1]v_1[U_2] x$ and $e_2$ has the form
      $\sendk[V_1]v_2[V_2] y$. In this case, $T = \semit{\sout U_1}{U_2}$ and
      $\dual T = \semit{\sout V_1}{V_2}$, which contradicts the definition of
      duality.
    \item $e_1$ has the form  $\sendk[U_1]e [U_2] x$ and $e_2$ has the
      form $\selecte \ell y$ . In this case, $T = \semit{\sout
        U_1}{U_2}$ and $\dual T = \ichoicet{\recordt \ell T L}$, which also
      contradicts the definition of duality.\qedhere
    \end{itemize}
\end{proof}

\subsection{Progress for Single-threaded Processes}

Progress for arbitrary processes cannot be guaranteed by typing, for threads may
share more than one channel and interact on these in the ``wrong'' order. Yet,
single-threaded processes enjoy the pleasing result of progress.

Before we embark on the proof for progress, we need to tell which values inhabit
a few select types.

\begin{lemma}[Canonical forms]
  \label{lem:canonical-forms}
  Let $\isExpr{\Gamma}{v}{T}$ with $\isCtx\Gamma\kindsl$.
  \begin{enumerate}
  \item\label{item:3} If $T = \funt UV$, then $v$ is $\abse x{W}e$,
    $\selectk\, k$, $\sendk[W]$, $\sendk[W]v'[X]$, or $\receivek [W][X]$.
  \item \sloppy{If $T = \forallt a\kind U$, then
    $v = \tabse a\kind{v'}$, $\sendk$, $\sendk[W]v'[X]$, $\receivek$ or
    $\receivek[W]$.}
  \item If $T = \unitt$, then $v = \unite$.
  \item If $T = \recordt lTL$, then $v = \recorde lvL$.
  \item If $T = \variantt lTL$, then $v = k\,v'$ and $k\in L$.
  \item\label{it:canonical-choice} If $T = \choicet{\recordt \ell T L}$ or
    $T = \msgt U;W$, then $v = x$.
  \end{enumerate}
\end{lemma}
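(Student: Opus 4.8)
The plan is to proceed by case analysis on the syntactic shape of the value $v$, following the grammar of values in \cref{fig:processes}. First I would invoke \cref{lem:normalized-type-derivation} to assume that the derivation of $\isExpr\Gamma vT$ ends with exactly one use of \rulenametypeEq; its premise then yields $\isequiv[\Empty]\Delta{T}{T_0}{\kind}$ where $T_0$ is the type synthesised by the last non-equivalence rule, whose head constructor is dictated by the head constructor of $v$. Inversion of the typing rules (in the style of \cref{lem:inversion}) records this correspondence: a variable gets a type copied verbatim from $\Gamma$; a constant $c$ gets $T_0 = \typeof(c)$, which is a $\forall$-type for $\sendk$, $\receivek$, $\forke$ and is $\unitt$ for $\unite$; $\abse xUe$ gets an arrow type; $\tabse a\kind{v'}$ gets a $\forall$-type; $\recorde\ell vL$ gets a record type; $\injecte\ell{v'}$ gets a variant type; $\selecte\ell{}$ gets an arrow type; and each partial application of $\sendk$ or $\receivek$ gets either an arrow type or a $\forall$-type, according to \cref{fig:types-constants}. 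I would also note here that $\isCtx\Gamma\kindsl$ forces every binding in $\Gamma$ to have a session type, so a variable always has a session type in this setting.

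The heart of the proof is a \emph{no-confusion} property of type equivalence, which I would isolate as a sub-claim: if $\isequiv[\Empty]\Delta TU\kind$ and neither $T$ nor $U$ is a recursion variable or a $\mu$-type, then $T$ and $U$ have the same outermost constructor; in particular, no functional type (arrow, $\forall$, unit, record or variant) is equivalent to a session type ($\skipt$, an input/output, a choice, or a sequential composition). This follows by inspecting the last rule of the derivation of $\isequiv[\Empty]\Delta TU\kind$: rules \rulenameeqRecL and \rulenameeqRecR each require a $\mu$-type on one side, rule \rulenameeqFix requires a non-empty $\Theta$ whereas here $\Theta = \Empty$, and rule \rulenameeqST relates session types only; the only remaining possibilities are the congruence rules, each of which forces $T$ and $U$ to share the head constructor.

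With these two ingredients the six items follow uniformly. When $T$ is one of $\funt UV$, $\forallt a\kind U$, $\unitt$, $\recordt \ell T L$ or $\variantt \ell T L$ --- all non-session types --- the no-confusion property rules out $v$ being a variable (whose $T_0$ is a session type), and among the remaining value forms it admits exactly those whose synthesised $T_0$ has the same head constructor as $T$; comparing against \cref{fig:types-constants} to see which partial applications of $\sendk$ and $\receivek$ carry an arrow type and which carry a $\forall$-type, this yields the list stated in the corresponding item. When $T = \choicet{\recordt \ell T L}$ or $T = \msgt U;W$, every non-variable value form has a $T_0$ that is an arrow, a $\forall$, a unit, a record or a variant, hence a non-session type, which by no-confusion cannot be equivalent to the session type $T$; therefore $v$ is a variable, which is \cref{it:canonical-choice}.

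I expect the no-confusion sub-claim to be the only genuine obstacle: type equivalence is coinductive (through the visited set $\Theta$ of rule \rulenameeqFix) and mixes in session bisimilarity (rule \rulenameeqST), so that preservation of the head constructor is not obvious in general. It becomes tractable here because the claim is only ever applied with $\Theta = \Empty$ and with both sides already head-exposed, so a single case split on the last rule settles it. Everything else --- matching each $v$ to its list of admissible forms, and keeping track of which of $\sendk[W]$, $\sendk[W]\,v'$, $\sendk[W]\,v'\,[X]$, $\receivek[W]$, $\receivek[W][X]$ is of arrow type and which of $\forall$-type --- is routine bookkeeping.
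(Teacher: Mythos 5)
Your proposal is correct and follows essentially the same route as the paper: normalise the derivation to a single trailing \rulenametypeEq, observe that values synthesise non-$\mu$ types whose head constructor is fixed by their syntax, rule out \rulenameeqFix, \rulenameeqRecL and \rulenameeqRecR (empty $\Theta$, no $\mu$ on either side), and let the congruence rules force the constructors to match before enumerating the admissible typing rules for each shape. One caveat: your no-confusion sub-claim as literally stated is too strong, because \rulenameeqST can relate session types with different outermost constructors (e.g.\ $\semit\skipt T$ and $T$); the weaker form you actually apply --- a functional head constructor is never equivalent to a session type, and in the non-session cases the congruence rules force matching constructors --- is what holds and is exactly what the paper's proof uses.
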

\begin{proof}
  Given that $\isCtx\Gamma\kindsl$, the only variables that can be read from
  $\Gamma$ in \cref{it:canonical-choice} are those of a session type.
  We start with an observation, which follows by inspection of the
  typing rules:
  For all syntactic values $v$, there is a derivation
  $\isExpr{\Gamma}{v}{T_0}$ such that $T_0$ has one of the
  forms \ref{item:3}--\ref{it:canonical-choice}. Specifically, $T_0$
  is not a recursive type.
  Second, we exploit \cref{lem:normalized-type-derivation} to
  obtain that the derivation of $\isExpr{\Gamma}{v}{T}$ ends with a
  single use of the rule {\rulenametypeEq}.
  By inversion of that rule, we have $\isExpr{\Gamma}{v}{T_0}$ and
  $\isequiv[\Empty]{\Delta}{T_0}{T}{\kind}$, where $T_0$ has  one of the
  forms~\ref{item:3}--\ref{it:canonical-choice}.
  The last step in the derivation of
  $\isequiv[\Empty]{\Delta}{T_0}{T}{\kind}$ cannot be \rulenameeqFix
  (because $\Theta = \Empty$), nor \rulenameeqRecL (because $T_0$ is not
  a $\mu$ type), nor \rulenameeqRecR (because $T$ is not a $\mu$
  type).

  Case $T = \funt UV$. By rule \rulenameeqArrow, $T_0 =
  \funt{U_0}{V_0}$ with $\isequiv[\Empty]{\Delta}{U}{U_0}{\kind'}$ and
  $\isequiv[\Empty]{\Delta}{V}{V_0}{\kind'}$. As rule \rulenametypeEq
  is not applicable to construct $\isExpr{\Gamma}{v}{T_0}$, we find
  that this derivation may end with \rulenametypeConst,
  \rulenametypeVar, \rulenametypeAbs, \rulenametypeTApp, or \rulenametypeSel.

  All constants have polymorphic types or type $\unitt$, which rules
  out \rulenametypeConst.

  All variables have kind $\kindsl$, which rules out \rulenametypeVar.

  Rule \rulenametypeAbs is applicable and yields $v = \abse x{U_0}e$.

  Rule \rulenametypeSel is applicable and yields $v = \selectk\, k$.

  Rule \rulenametypeTApp is applicable and yields $v = \sendk[W]$, $v
  = \sendk[W]v'[X]$, or $v = \receivek [W][X]$.

  Case $T = \forallt a\kind U$. By rule \rulenameeqPoly, $T_0 = \forallt a\kind{U_0}$
  with $\isequiv[\Empty]{\Delta, a:\kind}{U}{U_0}{\kind'}$. As rule \rulenametypeEq
  is not applicable to construct $\isExpr{\Gamma}{v}{T_0}$, we find
  that this derivation may end with \rulenametypeConst,
  \rulenametypeVar, \rulenametypePoly, or \rulenametypeTApp.

  From rule \rulenametypeConst, we obtain $v = \sendk$ or $v =
  \receivek$.

  All variables have kind $\kindsl$, which rules out \rulenametypeVar.

  Rule \rulenametypePoly yields $v = \tabse a\kind{v'}$.

  Rule \rulenametypeTApp is applicable and yields $v = \sendk[W]v'[X]$
  or $v = \receivek[W]$.

  Case $T = \unitt$. By rule \rulenameeqUnit, $T_0 =
  \unitt$. As rule \rulenametypeEq
  is not applicable to construct $\isExpr{\Gamma}{v}{T_0}$, we find
  that this derivation may end with \rulenametypeConst,
  \rulenametypeVar,  or \rulenametypeTApp.

  From rule \rulenametypeConst, we obtain $v = \unite[m]$.

  All variables have kind $\kindsl$, which rules out \rulenametypeVar.

  There is no polymorphic constant abstracting over unit, which rules
  out \rulenametypeTApp.

  Case $T = \recordt \ell TL$. By rule \rulenameeqRcd, $T_0 =
  \recordt \ell {T_0}L$ where
  $\isequiv[\Empty]{\Delta}{T_\ell}{T_{0\ell}}{\kind'}$, for all
  $\ell\in L$.

  As rule \rulenametypeEq
  is not applicable to construct $\isExpr{\Gamma}{v}{T_0}$, we find
  that this derivation may end with \rulenametypeConst,
  \rulenametypeVar,  \rulenametypeTApp, or \rulenametypeRcd.

  There is no applicable constant, which rules out \rulenametypeConst.

  All variables have kind $\kindsl$, which rules out \rulenametypeVar.

  There is no polymorphic constant abstracting over records, which rules
  out \rulenametypeTApp.
  
  Rule \rulenametypeRcd yields $v = \recorde lvL$.

  Case $T = \variantt lTL$. Similar to the case for records.

  Case $T = \choicet{\recordt \ell T L}$ or $T = \msgt U;W$. 
  In this case, $\kind = \kindsl$ and the equivalence derivation ends with
  rule \rulenameeqST. By \cref{lem:agreement-type-equiv},
  $\istype\Delta{T_0}\kindsl$. 

  As rule \rulenametypeEq
  is not applicable to construct $\isExpr{\Gamma}{v}{T_0}$, we find
  that this derivation may end with \rulenametypeConst,
  \rulenametypeVar, or  \rulenametypeTApp.

  There is no constant typed with kind $\kindsl$, which rules out rule
  \rulenametypeConst.

  Rule \rulenametypeVar is applicable and yields $v = x$.

  Rule \rulenametypeTApp is not applicable as there are no polymorphic
  constants with kind $\kindsl$.
\end{proof}

\begin{theorem}[Progress for the functional sub-language]
  \label{thm:progress-functional-sublanguage}
  Suppose that $\isExpr\Gamma{e}T$ with $\isCtx\Gamma\kindsl$. Then either
  \begin{enumerate}
  \item $e$ is a value,
  \item $e$ reduces, or 
  \item $e$ is of the form $E[e']$ with $e' = \newe U$, $\sendk[U]v[V]x$,
    $\receivek[U][V]x$, $\selecte \ell x$ or $\matche x {\recordp \ell e L}$,
    in which case $e$ is stuck.
  \end{enumerate}
\end{theorem}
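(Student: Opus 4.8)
The plan is to argue by rule induction on the derivation of $\isExpr\Gamma{e}{T}$, keeping the invariant $\isCtx\Gamma\kindsl$ throughout. Rule \rulenametypeEq leaves $e$ unchanged and the statement does not mention $T$, so the conclusion follows at once from the induction hypothesis applied to the strictly smaller premise. The value-introduction rules \rulenametypeConst, \rulenametypeVar, \rulenametypeAbs, \rulenametypePoly and \rulenametypeSel put $e$ in the first alternative (it is a value), and \rulenametypeNew puts $e = \newe U$ in the third alternative, with the empty evaluation context.

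For the elimination rules \rulenametypeApp, \rulenametypeTApp, \rulenametypeRcd, \rulenametypeLet, \rulenametypeUnitElim, \rulenametypeVariant, \rulenametypeCase and \rulenametypeMatch, I would apply the induction hypothesis to the subexpression that drives evaluation: the function and then the argument for an application, the scrutinee for a type application, $\letk$, unit elimination, $\casek$ and $\matchk$, and the record fields or the injection payload from left to right. Each such subexpression is typed under a subcontext of $\Gamma$ produced by a context split, which is again of kind $\kindsl$ by \cref{lem:agreement-context-split}, so the induction hypothesis applies. If the subexpression reduces, then $e$ reduces by \rulenameexpredCtx with the obvious enclosing evaluation context; if it is stuck of the form $E'[e']$, then $e$ is stuck of the form $E[e']$ for the enclosing context $E$. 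The remaining case is that every relevant subexpression is already a value, and this is where \cref{lem:canonical-forms} does the work.

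Assuming the subexpressions are values, I would read off the shape of each value at its type and either fire a reduction or recognise a stuck redex. For \rulenametypeRcd and \rulenametypeVariant the whole term is then a value; for \rulenametypeLet, \rulenametypeUnitElim and \rulenametypeCase the scrutinee is forced to be a record value, $\unite$, or an injection $k\,v'$ with $k\in L$, which triggers \rulenameexpredLet, \rulenameexpredUnitElim and \rulenameexpredCase respectively; for \rulenametypeMatch the scrutinee has an external-choice type, which is session-kinded, so \cref{lem:canonical-forms} forces it to be a variable $x$ and $e = \matche x{\recordp \ell e L}$ is stuck. The delicate case is \rulenametypeApp (and, in the same spirit, \rulenametypeTApp) because of the partial applications of $\sendk$ and $\receivek$: \cref{lem:canonical-forms} gives the function value as $\abse x W{e'}$, on which \rulenameexpredApp fires; as $\selectk\,k$; as $\sendk[W]$, on which the application is again a value; or as $\sendk[W]v'[X]$ or $\receivek[W][X]$. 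In the last three of these the argument has a session type (an internal choice, $\semit{\sout W}X$, or $\semit{\sint W}X$), so \cref{lem:canonical-forms} forces it to be a variable $x$ and $e$ is one of the stuck forms $\selecte k x$, $\sendk[W]v'[X]\,x$, or $\receivek[W][X]\,x$; for \rulenametypeTApp the analogous observation is that a value of universal type is either a type abstraction, on which \rulenameexpredTApp fires, or a partial constant application still awaiting a type argument, on which $\tappe{e'}{U}$ is again a value. The only genuine bookkeeping, and the \emph{main obstacle}, is keeping straight which partial applications of the communication constants are values, which perform a step, and which are stuck on a channel variable; everything else is routine propagation of the induction hypothesis through evaluation contexts.
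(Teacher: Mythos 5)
Your proposal is correct and follows essentially the same route as the paper's proof: rule induction on the typing derivation, maintaining the $\kindsl$ invariant via agreement for context split, propagating reductions and stuck forms through evaluation contexts, and invoking the canonical-forms lemma to resolve the application and elimination cases, including the same case split on the partial applications of $\sendk$ and $\receivek$ (value vs.\ stuck on a channel variable). No gaps.
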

\begin{proof}
  By rule induction on hypothesis $\isExpr \Gamma{e}T$.

  Cases \rulenametypeVar, \rulenametypeConst, \rulenametypeAbs and
  \rulenametypePoly. Variables, constants, term and type abstractions are all
  values.

  Case \rulenametypeApp. Then $e = \appe{e_1}{e_2}$. The premises to the rule
  are $\splitctx \Gamma{\Gamma_1}{\Gamma_2}$ and
  $\isExpr {\Gamma_1}{e_1}{\funt U T}$ and $\isExpr {\Gamma_2}{e_2}U$. By
  \cref{lemma:context-split-preserves-kinding}, $\isCtx{\Gamma_1}\kindsl$ and
  $\isCtx{\Gamma_2}\kindsl$. By induction on $e_1$, three cases may happen; we
  analise each separately. Case $e_1$ is a value $v_1$. By induction, this time
  on $e_2$ we are faced with three subcases. Subcase $e_2$ is a value $v_2$:
  canonical forms (\cref{lem:canonical-forms}) for $v_1$ yield one of the
  following cases:
  \begin{itemize}
  \item $v_1 = \abse xU{e'_1}$, in which case $e$ reduces by rule
    \rulenameexpredApp.
  \item $v_1 = \selectk\,k$. In this case inversion (\cref{lem:inversion}) for
    $v_1$ gives
    $\isequiv[\Empty]\Delta {\funt{\ichoicet{\recordt \ell T L}}{T_k}} {\funt
      UV} \kindtm$, hence
    $\isequiv[\Empty]\Delta {\ichoicet{\recordt \ell T L}} U \kind$, for some
    kind $\kind$. Given that $\TypeEquiv$ is commutative, by rule \rulenametypeEq, we
    have $\istype{\Gamma}{v_2}{\ichoicet{\recordt \ell T L}}$ and by canonical
    forms, $v_2 = x$, hence $e$ is stuck at the empty context.
  \item $v_1 = \sendk[W]$, in which case $e = \sendk[W]v_2$ is a value.
  \item $v_1 = \sendk[W]v_1'[X]$. Similarly to $\selectk\,k$, by inversion of
    $v_1$ and canonical forms for $v_2$ we have $v_2 = x$, hence $e$ is stuck at
    the empty context.
  \item $v_1 = \receivek[W][X]$. As above.
  \end{itemize}

  Subcase $e_2$ reduces: take $E=v_1[]$ so that $e$ reduces by rule
  \rulenameexpredCtx. Subcase $e_2$ is stuck at $E[e'_2]$: take $E' = v_1E$ and
  $e$ is stuck at $E'$.

  Case \rulenametypeTApp.
  Follows the lines of rule \rulenametypeApp.

  Case \rulenametypeRcd. Then $e = \recorde \ell eL$ and $T= \recordt \ell TL$.
  The premises to the rule read $\Delta\vdash \Gamma = \circ\Gamma_\ell$ and
  $\isExpr{\Gamma_\ell}{e_\ell}{T_\ell}$. By
  \cref{lemma:context-split-preserves-kinding}, $\isCtx{\Gamma_\ell}{\kindsl}$.
  By induction three cases may happen, for each $\ell$ in $L$.
  If all $e_\ell$ are values, then $e$ is a value. Otherwise let $e_k$
  ($k\in L$) be the first non-value. Expression $e_k$ may reduce or be stuck. If
  it reduces, take for $E$ the context
  $\{\ell_1=v_1,\dots,\ell_k=[],\dots \ell_n=e_n\}$ and $e$ reduces by rule
  \rulenameexpredCtx. If $e_k$ is stuck and is of the form $E[e'_k]$, then $e$
  is stuck under context $\{\ell_1=v_1,\dots,\ell_k=E,\dots \ell_n=e_n\}$.

  Case \rulenametypeLet. Then $e = \lete \ell xL{e_1}{e_2}$. The premises to the
  rule read $\splitctx{\Gamma}{\Gamma_1}{\Gamma_2}$ and
  $\isExpr{\Gamma_1}{e_1}{\recordt \ell TL}$ and
  $\isExpr{\Gamma_2, (x_\ell\colon T_\ell)_{\ell\in L}}{e_2} T§$. By
  \cref{lemma:context-split-preserves-kinding}, $\isCtx{\Gamma_1}{\kindsl}$. By
  induction on $e_1$ three cases may happen. If $e_1$ is a value, then
  canonical forms give $e_1 = \recorde \ell eL$ and $e_1$ reduces by rule
  \rulenameexpredLet. If $e_1$ reduces, then $e$ reduces by rule
  \rulenameexpredCtx under context $\lete \ell xL{[]}{e_2}$. If $e_1$ is stuck
  under context $E$, then so is $e$ under context $\lete \ell xL{E}{e_2}$.

  Cases \rulenametypeUnitElim, \rulenametypeVariant, \rulenametypeCase and
  \rulenametypeMatch. Similar.

  Case \rulenametypeSel. Expression $\selecte k {}$ is a value.

  Case \rulenametypeNew. Take the empty context for $E$. Then $e$ is stuck under
  context $E[\newe T]$.

  Case \rulenametypeEq. By induction.
\end{proof}

We say that \emph{variable $x$ is an active channel end in expression $e$ under
  contexts $\Delta\mid\Gamma$} if $x$ is free in $e$ and $\isExpr \Gamma x T$
with $\istype{\Delta}{T}{\kindsl}$ but \emph{not} $\istype{\Delta}{T}{\kindsu}$.
Intuitively, this means that $T$ is not (equivalent to) $\skipk$.

\begin{lemma}\label{lemma:closed-values}
  Suppose that $\isExpr\Gamma v T$ with $\isCtx\Gamma\kindsl$ and
  $\istype{\Delta} T\kindtu$. Then $v$ contains no active channel end under
  $\Delta\mid\Gamma$.
\end{lemma}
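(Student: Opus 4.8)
The plan is to argue by rule induction on the derivation of $\isExpr\Gamma v T$. Since the conclusion refers to $T$ only through the constraint $\istype\Delta T\kindtu$ and is insensitive to which equivalent type is assigned to $v$, I would first apply \cref{lem:normalized-type-derivation} so that the derivation ends with exactly one use of \rulenametypeEq and the rule before it is fixed by the shape of the value $v$. The single ingredient that does real work is the observation $(\star)$: if $\istype\Delta S\kindtu$ and $\istype\Delta S\kindsl$, then $\istype\Delta S\kindsu$. This is the formal content of the remark in \cref{sec:types} that the only $\kindsu$ session types are the terminated ones while every other session type is strictly linear, and I would prove it by inspecting, per constructor, which kinds a type can receive: $\msgt{\cdot}$, $\choicet{\cdot}$ and $\semit\cdot\cdot$ only ever get $\kinds^\lin$ (and $\kindt^\lin$ by \rulenamekindSub), whereas $\unitt$ and the functional constructors never get a $\kinds$-kind; hence a type of kind $\kindsl$ is $\skipk$, a recursive type, or a variable, and a type that is additionally of kind $\kindtu$ must have its $\mu$-annotation (resp.\ its binding in $\Delta$) below both $\kindsl$ and $\kindtu$ in the subkinding order, i.e.\ equal to $\kinds^\un$, so in each case it is of kind $\kindsu$. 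A corollary is that every entry $y\colon U$ of a context $\Gamma$ with $\isCtx\Gamma\kindtu$ and $\isCtx\Gamma\kindsl$ satisfies $\istype\Delta U\kindsu$, hence is not an active channel end.

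With $(\star)$ the cases become routine. If $v=x$, then $x\colon T\in\Gamma$, so $\istype\Delta T\kindsl$ follows from $\isCtx\Gamma\kindsl$, and $(\star)$ together with the hypothesis gives $\istype\Delta T\kindsu$; thus $x$, the only free variable of $v$, is not active. If $v=\abse x{T_1}{e}$, the structural rule is \rulenametypeAbs, whose premise carries $\isCtx\Gamma{\kindt^m}$ with $m$ the multiplicity of the arrow; since type equivalence preserves the multiplicity annotation of function types, $\istype\Delta T\kindtu$ forces $m=\un$, hence $\isCtx\Gamma\kindtu$, and with $\isCtx\Gamma\kindsl$ the corollary above kills every free variable of $v$ (all of which lie in $\dom\Gamma$). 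For $v=\tabse a\kind{v'}$, $v=\recorde\ell vL$, or $v=\injecte k{v'}$, I would push the kinding constraint through \rulenamekindPoly, \rulenamekindRcd, \rulenamekindData down to the immediate subvalues (which are values because $v$ is), re-type each of them at the resulting unrestricted type with \rulenametypeEq, and invoke the induction hypothesis; the split contexts remain of kind $\kindsl$ by agreement for context split (\cref{lem:agreement-context-split}), and in the type-abstraction case weakening and strengthening (\cref{lemma:unrestricted-weakening,lem:type-strengthening}) move $a\colon\kind$ between $\Delta$ and $\Delta,a\colon\kind$—legitimate because $a\notin\free(\Gamma)$—which also shows that the set of active channel ends is unchanged. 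Finally, a constant, a selector $\selecte k{}$, and the forms $\tappe\sendk T$, $\tappe\receivek T$ and $\tappe{\tappe\receivek T}U$ carry no free term variables, so the claim is vacuous there; and the partial applications $\appe{\tappe\sendk T}{v'}$ and $\tappe{\appe{\tappe\sendk T}{v'}}U$ have a type whose head arrow is linear, hence of kind $\kindt^\lin$ and not $\kindtu$, so their case is excluded by the hypothesis (alternatively, by canonical forms, \cref{lem:canonical-forms}, the sent value has a base kind and is $\unite$, again free of term variables).

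I expect the main obstacle to be $(\star)$ together with the bookkeeping that transports the hypothesis $\istype\Delta T\kindtu$ from the declared type of $v$ across the single trailing \rulenametypeEq to the type produced by the structural rule. This is sound because type equivalence is a congruence for the functional constructors—it preserves arrow multiplicities and the component kinds of records and variants—while on session types it factors through bisimilarity via \rulenameeqST, and a recursive type bearing a session annotation is never matched by \rulenameeqRecL or \rulenameeqRecR, whose side condition $\kind\neq\kinds^m$ fails for it. Once this plumbing is in place, the lemma is a direct consequence of $(\star)$ and the substructural discipline already enforced by the typing rules.
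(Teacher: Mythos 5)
Your proof is correct and follows essentially the same route as the paper's: structural induction on the value (equivalently, on the normalised typing derivation ending in one use of \rulenametypeEq), using inversion, agreement for type equivalence and for context split, and weakening in the type-abstraction case. The two places where you deviate are refinements rather than a different strategy: you state explicitly the meet property $(\star)$ (a type of kind $\kindtu$ and of kind $\kindsl$ is of kind $\kindsu$), which the paper invokes only implicitly when it declares a ``contradiction'' in the variable case, and you dispatch the partially applied $\sendk$ forms by observing that their types are necessarily of kind $\kindt^{\lin}$ only, so the hypothesis $\istype{\Delta}{T}{\kindtu}$ rules them out via agreement, whereas the paper instead recurses into the transmitted value.
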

\begin{proof}
  By structural induction on $v$.
  
  Case $c$. Constants are closed.

  Case $x$. By inversion (\cref{lem:inversion}), $\isExpr\Gamma{x} U$ and
  $\isequiv[\Empty] \Delta UT\kind$. Because type equivalence is symmetric
  (\cref{lem:type-equivalence}), by agreement (\cref{lem:agreement-type-equiv})
  we have $\istype{\Delta}{T}{\kind}$. Hence $\kind=\kindt^\un$. Then, again by
  agreement, $\istype{\Delta}{U}{\kindtu}$. If $x$ is active, then it must be
  that $\istype\Delta U \kindsl$. Contradiction.
  
  Case $\abse xUe$. By inversion and agreement, $\istype{\Delta}{T}{\kindt^m}$
  and $\isCtx{\Gamma}{\kindt^m}$. Hence $m=\un$. Then
  $\isCtx{\Gamma}{\kindtu}$ cannot contain active channel ends.

  Case $\tabse{a}{\kind}{v}$. By inversion,
  $\isExpr[\Delta, a\colon\kind]\Gamma vU$ and
  $\isequiv[\Empty] \Delta T {\forallt a \kind U} \kindtu$. By agreement
  $\istype{\Delta}{T}{\kindt^m}$, hence $m=\un$. Because type equivalence is
  symmetric, by agreement $\istype{\Delta}{\forallt a \kindtu U}{\kindtu}$.
  Inverting type formation (using rules \rulenamekindSub and \rulenamekindPoly),
  we have $\istype{\Delta, a \colon \kindtu}{U}{\kindtu}$. Weakening (\cref{lem:ctx-weakening})
  gives $\isCtx[\Delta,a\colon\kindtu]\Gamma\kindsl$. The result follows by
  induction.
  
  Case $\recorde \ell vL$. Inversion gives $\splitctx{\Gamma}{}{\Gamma_\ell}$,
  $\isExpr{\Gamma_\ell}{v_\ell}{T_\ell}$ and
  $\isequiv[\Empty]\Delta T {\recordt \ell TL}\kind$. Agreement gives
  $\istype{\Delta}{T}{\kind}$, hence $\kind=\kindtu$. From $\isCtx\Gamma\kindsl$
  and $\splitctx{\Gamma}{}{\Gamma_\ell}$ we can show that
  $\isCtx{\Gamma_\ell}\kindsl$. Inverting type formation (using rules
  \rulenamekindSub and \rulenamekindRcd), we have
  $\istype{\Delta}{T_\ell}{\kindtu}$. The result follows by induction.

  Case $\injecte{k}{v}$. Inversion gives $\isExpr\Gamma v {T_k}$, $k\in L$ and
  $\isequiv[\Empty]\Delta T{\variantt \ell{T}L}\kind$. We proceed as above to
  establish that $\istype{\Delta}{T_k}{\kindtu}$ and the result follows by
  induction.
    
  Cases $\selectk\,\ell$, $\sendk[U]$, $\receivek[U]$ and
  $\receivek[U][V]$. All these values are closed.
  
  Case $\sendk[U]v$. Inversion gives $\splitctx{\Gamma}{\Gamma_1}{\Gamma_2}$,
  $\isExpr{\Gamma_1}{\sendk[U]}{\funt VW}$, $\isExpr{\Gamma_2}{v}V$ and
  $\isequiv[\Empty] \Delta T {V} \kind$. From $\isCtx\Gamma\kindsl$ and
  $\splitctx{\Gamma}{\Gamma_1}{\Gamma_2}$ by~\cref{lem:agreement-context-split} we know that
  $\isCtx{\Gamma_2}\kindsl$. Because type equivalence is symmetric, by agreement
  $\istype{\Delta}{\funt[\un] VW}{\kindtu}$. Inverting type formation (using
  rules \rulenamekindSub and \rulenamekindArrow), we have
  $\istype{\Delta}{V}{\kindtu}$. Because $\sendk[U]$ is closed, the result
  follows by induction.

  Case $\sendk[U]v[V]$. Similar to the above case.
\end{proof}

\begin{corollary}[Progress for single-threaded processes]
  Suppose that $\isproc\Gamma{\PROC e}$ with $\isCtx[\empty]\Gamma\kindsl$. Then
  either
  \begin{enumerate}
  \item\label{it:progval} $e$ is a value that contains no active channel end
    under $\Empty\mid\Gamma$,
  \item\label{it:progred} $\PROC e$ reduces, or
  \item\label{it:progstuck} $\PROC e$ is of the from $\PROC{E[e']}$ with
    $e' = \sendk[U]v[V]x$, $\receivek[U][V]x$, $\selecte \ell x$ or
    $\matche x {\recordp \ell e L}$, in which case process $\PROC e$ is stuck.
  \end{enumerate}
\end{corollary}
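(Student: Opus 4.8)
The plan is to lift \cref{thm:progress-functional-sublanguage} through the process layer, upgrading its value case with \cref{lemma:closed-values}. I would first invert process formation: the only rule deriving $\isproc\Gamma{\PROC e}$ is \rulenameprocExp, which yields $\isExpr[\Empty]\Gamma e T$ for some $T$ together with $\istype{\empty}{T}{\kindtu}$. Combined with the hypothesis $\isCtx[\empty]\Gamma\kindsl$, the judgement $\isExpr[\Empty]\Gamma e T$ meets the premises of progress for the functional sub-language, so I obtain one of its three alternatives for $e$.

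If $e$ is a value, then since $\istype{\empty}{T}{\kindtu}$ I would invoke \cref{lemma:closed-values} to conclude that $e$ has no active channel end under $\Empty\mid\Gamma$, which is alternative~\ref{it:progval}. If $e\reduces e'$, then rule \rulenameprocredExp gives $\PROC e\reduces\PROC{e'}$, alternative~\ref{it:progred}. If $e = E[e']$ with $e'$ one of $\newe U$, $\sendk[U]v[V]x$, $\receivek[U][V]x$, $\selecte \ell x$, or $\matche x{\recordp \ell e L}$, I would split on the shape of $e'$: when $e' = \newe U$ the redex, stuck in the pure calculus, becomes reducible at the process level, namely $\PROC{E[\newe U]}\reduces\NU xy\PROC{E[(x,y)]}$ by \rulenameprocredNew, so alternative~\ref{it:progred} applies again; and in each of the remaining four subcases $\PROC e = \PROC{E[e']}$ already has exactly the form required by alternative~\ref{it:progstuck}.

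The only remaining obligation --- and the one place that is not pure bookkeeping --- is to verify that such a process really is stuck. I would do this by inspecting every process-reduction rule: \rulenameprocredExp cannot fire because $E[e']$ does not reduce as an expression (the functional progress theorem has already classified it as stuck); \rulenameprocredFork and \rulenameprocredNew do not match the head of any of the four communication redexes; \rulenameprocredMsg and \rulenameprocredCh each demand a sibling thread offering the dual action on the same channel end, which a lone $\PROC{\cdot}$ cannot supply; and \rulenameprocredPar, \rulenameprocredBind, \rulenameprocredCong add nothing for an isolated expression process, since structural congruence can only rearrange $\PROC e$ into itself up to an inert $\PROC{\unite[\un]}$ factor. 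Hence no step is possible. In short, the whole argument is mechanical once one observes that channel creation is precisely the single functional-level blocked form that the process layer unblocks, whereas the four communication redexes stay blocked exactly for want of a partner thread.
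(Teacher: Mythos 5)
Your proposal is correct and follows essentially the same route as the paper: invert \rulenameprocExp, apply \cref{thm:progress-functional-sublanguage}, use \cref{lemma:closed-values} for the value case, \rulenameprocredExp for the reduction case, and observe that $\newe U$ is the one functional-level stuck form unblocked by \rulenameprocredNew while the four communication redexes remain stuck. The extra paragraph verifying stuckness against each process-reduction rule is more explicit than the paper's proof but adds nothing that changes the argument.
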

\begin{proof}
  The premises of rule \rulenameprocExp read $\isExpr[\Empty]\Gamma e T$
  and 
  $\istype{\Empty}T\kindtu$. The statement mostly follows from
  \cref{thm:progress-functional-sublanguage}. For \cref{it:progval}, we further
  use \cref{lemma:closed-values}. For \cref{it:progred}, we further use rule
  \rulenameprocredExp. For \cref{it:progstuck}, we note that while the
  expression $e= E[\newe T]$ is stuck with respect to expression reduction, this
  expression reduces by rule \rulenameprocredNew when wrapped in a process:
  $\isRed{\PROC{E[\newe T]}}{\NU
    xy\PROC{E[(x,y)]}}$. 
\end{proof}



\section{Type Equivalence is Decidable}
\label{sec:type-equiv-decidable}


This section focuses on the decidability of relation $\isequivst TU$
under the following assumptions:
\begin{itemize}
\item $\istype{\Delta} T \kinds^m$ and $\istype{\Delta} U \kinds^m$, as per the
  premises of rule \rulenameeqST, \cref{fig:type-equivalence},
\item In all subterms of $T$ and $U$ of the form $\rect{a}{\kind}{V}$, type
  variable $a$ occurs free in $V$, and
\item Both types are $\alpha$-renamed in such a way that they do not share bound
  variables.
\end{itemize}

Note that if $a$ turns out not to occur free in $V$, then
$\rect{a}{\kind}{V}$ may be replaced by $V$, as per \cref{lemma:laws},
\cref{it:mu1}.

Decidability of type equivalence for context-free session types is inherited
from the decidability of bisimilarity of basic process algebras
(BPA)~\cite{DBLP:journals/iandc/ChristensenHS95}. For this purpose, we translate
context-free session types into BPA processes and we prove that the translation
converts equivalent types to bisimilar processes.

%

\paragraph{Basic Process Algebra}
BPA processes are defined by the following grammar:
\begin{equation*}
  p \grmeq
  \alpha
  \grmor X
  \grmor p+p
  \grmor p \cdot p
  \grmor \varepsilon
\end{equation*}
where $\alpha$ ranges over a set of atomic actions, $X$ is a BPA process variable, $+$
represents non-deterministic choice, and $\cdot$ represents sequential
composition. The \emph{terminated process} $\varepsilon$ is the neutral element of sequential composition.
We use $\sum _{i\in I} p_i$ when referring to an arbitrary number of non-deterministic choices.

Recursive BPA processes are defined by means of process equations
$\geqs = \{ X_i \triangleq p_i \}_{i\in I}$, where $X_i$ are distinct process
variables. The empty process $\varepsilon$ cannot occur in a process definition;
we included it as part of the syntax because the operational semantics takes
advantage of it~\cite{DBLP:journals/iandc/ChristensenHS95} and, in the scope of
our translation, $\varepsilon$ is the natural candidate for the image of type
$\skipt$ and other terminated types.
A process equation is \emph{guarded} if any
variable occurring in $p_i$ is in the scope of an atomic action. We are
interested in guarded equations only.


The labelled transition relation on a set $\geqs$ of guarded
equations is defined by the following rules.
  \begin{gather*}
  	\axiom{\rulenameltsbpaVar}{\alpha \xrightarrow{\alpha} \varepsilon}
  	\quad
    \infrule{\rulenameltsbpaChOne}
    {p \xrightarrow{\alpha} p'}
    {p+q \xrightarrow{\alpha} p'}
     \quad
    \infrule{\rulenameltsbpaChTwo}
    {q \xrightarrow{\alpha} q'}
    {p+q \xrightarrow{\alpha} q'}
    \smallskip\\
    \infrule{\rulenameltsbpaSeqOne}
    {p \xrightarrow{\alpha} p' \\ p'\neq\varepsilon}
    {p\cdot q \xrightarrow{\alpha} p'\cdot q}
    \quad
    \infrule{\rulenameltsbpaSeqTwo}
    {p \xrightarrow{\alpha} \varepsilon}
    {p\cdot q \xrightarrow{\alpha} q}
    \quad
  	\infrule{\rulenameltsbpaDef}
  	{p \xrightarrow{\alpha} p' \\ X \triangleq p \in \geqs}
    {X \xrightarrow{\alpha} p'}
  \end{gather*}
The bisimulation associated to this labelled transition system is denoted 
by $\BPAEquiv[\geqs]$. We omit the subscript when it is clear from context. 

\paragraph{Translating Context-free Session Types to BPA Processes}

Types are translated in a three-step procedure: we start by identifying the
recursive subterms of a given type, then we translate types to BPA processes
and, finally, we introduce equations for BPA process variables.

The first step identifies
the $\mu$-subterms of a given type $T$.
Assume that
$\subterms(T)=\{\rect {a_1}{\kind_1}T_1,\dots,\rect {a_n}{\kind_n}T_n\}$ is the
set of all $\mu$-subterms in~$T$.

The second step translates type $T$ to a BPA process. To each recursive type $\rect{a_i}{\kind_i}{T_i}$ in $\subterms(T)$ we associate
a distinct process variable $X_i$. The set of atomic actions of BPA processes is
instantiated with $\msgt T$, $\choicet \ell$ and $b$, where $b$ denotes a
polymorphic variable.

To ensure that the processes obtained are well-defined, we need to remove any
occurrences of the empty process $\varepsilon$ from process definitions. For the
purpose, we introduce a \emph{special} sequential composition operator $\odot$
that filters occurrences of $\varepsilon$.
\begin{gather*}
  p\odot \varepsilon = p
  \qquad
    \varepsilon \odot q = q
  \qquad
  p\odot q = p \cdot q
\end{gather*}
Then, the translation is as follows.
\begin{gather*}
  \label{eq:cfst_to_procs}
  \transprocs{\skipt} = \varepsilon 
  \qquad
  \transprocs{\msgt T} = \msgt T
  \qquad
  \transprocs{\choicet{\recordt \ell T L}} = \sum _{\ell\in L} \choicet \ell \odot\transprocs{T_\ell}
  \\
  \transprocs{T;U} =
  \transprocs{T} \odot \transprocs{U}
  \qquad
  \transprocs{b} = b
  \qquad
  \transprocs{a_i} = X_i
  \qquad
  \transprocs{\rect{a_i}{\kind_i}{T_i}} = X_i
\end{gather*}
The terminated type $\skipt$ is translated to the terminated process
$\varepsilon$, message exchanges and polymorphic variables are translated to the
corresponding atomic actions, choices are translated to process choices,
sequential compositions are converted into sequential compositions through
$\odot$ and each recursive type is translated to a corresponding process
variable $X_i$.

The third step translates each of the bodies of the $n$ $\mu$-subterms in the
type. To ensure that the equations are guarded, we perform a preliminary
\emph{unravel} of each type. Intuitively, the unravel function $\unravel \cdot$
consists of unfolding recursive types until exposing a non-recursive type
constructor. In the process we eliminate occurrences of type $\skipt$.
  \begin{gather*}
    \infrule{}{\unravelSubs[\sigma] {T} = \skipt}{\unravelSubs[\sigma]{\semit TU} = {\unravelSubs[\sigma] {U}}} 
    \qquad
    \infrule{}{\unravelSubs[\sigma] {T} \neq \skipt}{\unravelSubs[\sigma]{\semit TU} =\semit {\unravelSubs[\sigma] T} U}
    \\
    \axiom{}
    {\unravelSubs[\sigma]{\rect a \kind T} = \unravelSubs [{\sigma \circ [\rect a \kind T/a]}]{T}}
    \qquad
    \infrule{}{T\neq (\semit{U}{V}), \rect a \kind U}{\unravelSubs[\sigma] {T} = T}
  \end{gather*}

The equations for the $\mu$-subterms of $T$ are obtained by translating the
unravelled $\mu$-subterms: 
\begin{equation*}
  \geqs_T =  
  \{ X_1 \triangleq \transprocs{\unravelSubs[\mathsf{id}] {T_1}}, \dots, X_n \triangleq \transprocs{\unravelSubs[\mathsf{id}] {T_n}} \}
\end{equation*}

As an example of the translation procedure, consider type
\begin{equation*}
  T = \sout\chart ; \rect{a_2}{\kindsl}{T_2}
\end{equation*}
with $T_2 = \skipt ; \rect{a_1}{\kindsl}{T_1}$ and
$T_1 = \sout\intt ; a_1 ; \sint\boolt ; a_2$. The $\mu$-subterms of $T$
identified in step 1 are
$\subterms(T)=\{\rect{a_1}{\kindsl}{T_1},\rect{a_2}{\kindsl}{T_2}\}$. 

In step 2 we translate type $T$:
\begin{equation*}
    \transprocs{\sout\chart ; \rect{a_2}{\kindsl}{T_2}} = \sout\chart \cdot X_2
\end{equation*}

In step 3, we translate unravelled versions of bodies of the $\mu$-types in
$\subterms(T)$:
\begin{align*}
  \transprocs{\unravelSubs[\mathsf{id}]{\sout\intt ; a_1 ; \sint\boolt ; a_2}} &= \sout\intt \cdot X_1 \cdot \sint\boolt \cdot X_2
  \\
  \transprocs{\unravelSubs[\mathsf{id}]{ \skipt ; \rect{a_1}{\kindsl}{T_1}}} &= \sout\intt \cdot X_1 \cdot \sint\boolt \cdot X_2
\end{align*}
%
to obtain
\begin{equation*}
  \geqs_T =  \{
  X_1 \triangleq\sout\intt \cdot X_1 \cdot \sint\boolt \cdot X_2, \enspace
  X_2 \triangleq\sout\intt \cdot X_1 \cdot \sint\boolt \cdot X_2\}
\end{equation*}

\paragraph*{Type Equivalence Is Decidable}

The decidability of the type equivalence problem for context-free session types
builds on the decidability of the bisimulation of BPA processes.
We denote by $\geqs_{T\cdot U}$ the system of BPA process equations resulting
from the translation of types $T$ and $U$. The $\alpha$-identification
convention allows choosing distinct recursion variables and hence distinct 
process variables.


\begin{lemma}
	If $\istype{\Delta} U\kinds^m$
	then $\unravelSubs[\sigma] U$ terminates.
\end{lemma}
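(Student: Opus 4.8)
The plan is to prove termination by well-founded induction on the size of the type argument (the number of type constructors), strengthening the statement slightly to: \emph{for every substitution $\sigma$, if $\istype{\Delta}{T}{\kinds^m}$ then $\unravelSubs[\sigma]{T}$ terminates}. The crucial observation is that each defining clause invokes $\unravelSubs[\cdot]{\cdot}$ only on proper subterms of its argument: the two clauses for $\semit TU$ evaluate $\unravelSubs[\sigma]{T}$ (to test whether it is $\skipt$) and, in the first clause, $\unravelSubs[\sigma]{U}$; the clause for $\rect a\kind T$ reduces to $\unravelSubs[\sigma']{T}$; and the last clause — which covers all remaining constructors, including type variables — makes no recursive call. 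Hence the ``calls'' relation between invocations is contained in the strict subterm order on the first argument, which is well founded, so the computation terminates. The accumulated substitution $\sigma$ never influences which clause fires nor the arguments of the recursive calls, so it is irrelevant to the termination argument, which is why it is safe to quantify over it in the induction hypothesis.

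Concretely I would proceed by cases on $T$. If $T$ is $\skipt$, a message, a choice, a unit, a function, a record, a variant, a type abstraction, or a type variable, the result is immediate by the last clause. If $T = \semit{T_1}{T_2}$, then inversion of type formation (noting that a sequence can only be formed at kind $\kindsl$, possibly followed by \rulenamekindSub) gives $\istype{\Delta}{T_1}{\kinds^{m'}}$ and $\istype{\Delta}{T_2}{\kinds^{m'}}$ with $T_1,T_2$ strictly smaller than $T$; by induction $\unravelSubs[\sigma]{T_1}$ terminates with some value, the two sequence clauses are mutually exclusive and exhaustive on that value, and if it is $\skipt$ a further appeal to the induction hypothesis on $T_2$ closes the case. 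If $T = \rect a\kind V$, inversion of \rulenamekindRec (again modulo \rulenamekindSub) yields $\kind = \kinds^{m'}$, $\isContr aV$, and $\istype{\Delta,a\colon\kind}{V}{\kinds^{m'}}$ with $V$ strictly smaller than $T$, so the induction hypothesis applied to $V$ with the extended substitution $\sigma\circ[\rect a\kind V/a]$ gives termination.

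The one point that warrants care is the clause for recursive types, since this is the step that ``unfolds'' $\mu$'s: a definition that replaced $\rect a\kind V$ by its one-step unfolding $\subs{\rect a\kind V}{a}{V}$ would not visibly decrease the size, and termination would then rest on contractivity of $V$ on $a$ — guaranteed here by \rulenamekindRec via the hypothesis — to bound the number of unfoldings before a guard is exposed, together with \cref{lem:type-substitution} to carry well-formedness and contractivity through the substitution. With the definition as stated, the recursion merely descends into the body while deferring the unfolding to $\sigma$, so the naive size measure already suffices; the residual obstacle is purely bureaucratic, namely justifying the implicit recursive evaluations hidden in the premises of the first two clauses, i.e. checking that $\unravel\cdot$ is a well-defined partial function whose clauses partition the well-formed session types.
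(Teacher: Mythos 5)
Your proposal is correct and takes essentially the same route as the paper, whose entire proof is a one-line appeal to rule induction on the kinding hypothesis: the recursion in $\operatorname{unr}(\cdot)$ is structural in the type (the accumulated substitution $\sigma$ is merely carried along and never enters the recursive arguments), so induction on the kinding derivation and your size induction quantified over $\sigma$ amount to the same descent. Your additional remarks — that contractivity would only be needed if $\mu$-types were actually unfolded rather than descended into, and that the two sequence clauses must be shown exclusive and exhaustive — are sound elaborations of details the paper leaves implicit.
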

\begin{proof}
  By rule induction on the hypothesis. 
\end{proof}

\begin{lemma}
  \label{lem:top_type_constr}
  \sloppy{If $\istype{\Delta} U\kinds^m$ then $\unravelSubs[\sigma] U$ is $\skipk$ or a
  type of the form $(\dots(G;T_1) ; \dots ; T_n)$ with $n\ge0$ and}
  \begin{align*}
    G \grmeq& \msgt T \grmor \choicet{\recordt \ell T L} \grmor a
  \end{align*}
\end{lemma}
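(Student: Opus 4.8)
The plan is to argue by induction on the (finite) computation of $\operatorname{unr}$. By the preceding lemma $\unravelSubs[\sigma] U$ terminates, so the defining rules of $\operatorname{unr}$ in the excerpt produce a finite derivation; I read them as a big-step evaluation relation so that ordinary rule induction is available. Throughout, I use inversion of the kinding derivation $\istype{\Delta}{U}{\kinds^m}$, absorbing any trailing applications of \rulenamekindSub (harmless, since a kind below a session kind is again a session kind), so that the hypothesis $\istype{\Delta'}{U'}{\kinds^{m'}}$ remains available for the subterms on which the recursion descends.

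Since $U$ carries a session kind, by inspection of \cref{fig:kinding} it must be one of $\skipt$, $\msgt T$, $\choicet{\recordt \ell T L}$, a session variable $a$, a sequential composition $T;V$, or a recursive type $\rect a\kind T$. For the first four, $U$ is neither a sequence nor a $\mu$-type, so the last rule for $\operatorname{unr}$ gives $\unravelSubs[\sigma] U = U$: this is $\skipk$ when $U = \skipt$, and has the claimed shape with $n=0$ and $G = U$ in the remaining three cases. For $U = \rect a\kind T$, inversion of \rulenamekindRec gives $\istype{\Delta, a\colon\kinds^m}{T}{\kinds^m}$, and $\unravelSubs[\sigma]{\rect a\kind T} = \unravelSubs[{\sigma\circ[\rect a\kind T/a]}]{T}$; the induction hypothesis applied to this subcomputation yields exactly the conclusion.

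The only case with real content is $U = T;V$, where inversion of \rulenamekindSeq shows both $T$ and $V$ have session kinds. If $\unravelSubs[\sigma] T = \skipt$, then $\unravelSubs[\sigma]{T;V} = \unravelSubs[\sigma] V$, and the induction hypothesis on $V$ closes the case. Otherwise $\unravelSubs[\sigma] T = W$ with $W \neq \skipt$, so by the induction hypothesis $W = (\dots(G;T_1);\dots;T_k)$ for some $k\ge0$, and then $\unravelSubs[\sigma]{T;V} = \semit W V = ((\dots(G;T_1);\dots;T_k);V)$, which is again of the required form with $n = k+1$ and the same $G$.

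The main obstacle is purely bookkeeping: making sure the kinding hypothesis survives each recursive call of $\operatorname{unr}$ — observing that dropping a leading $\skipt$, descending into a $\mu$-body, and splitting a sequential composition all leave a type of session kind — and justifying the rule induction by appeal to the termination lemma. No new mathematical ideas are needed beyond what is already in \cref{fig:kinding} and the definition of $\operatorname{unr}$.
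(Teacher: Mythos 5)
Your proof is correct and follows essentially the same route as the paper's: the paper performs rule induction directly on the kinding derivation $\istype{\Delta}{U}{\kinds^m}$, with base cases \rulenamekindSkip, \rulenamekindMsg, \rulenamekindCh, \rulenamekindVar, the two subcases for \rulenamekindSeq according to whether $\operatorname{unr}(T)=\skipt$, descent into the body for \rulenamekindRec, and \rulenamekindSub immediate --- exactly your case analysis. Inducting on the $\operatorname{unr}$ computation (justified by the preceding termination lemma) rather than on the kinding derivation is only a cosmetic repackaging here, since the recursion of $\operatorname{unr}$ mirrors the structure of the type.
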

\begin{proof}
  By rule induction on the hypothesis. 
  
  Case \rulenamekindMsg. In this case, $U=\msgt{T}$ and the last 
  unravelling rule applies. Cases \rulenamekindSkip, \rulenamekindCh 
  and \rulenamekindVar are similar and constitute the base cases 
  for induction.
  
  Case \rulenamekindSeq for $U = T;V$  has two subcases: 
  if $\unravelSubs[\sigma] T = \skipt$ the proof follows by induction 
  hypothesis on $V$, otherwise we apply induction 
  hypothesis on $T$. 
  
  Case \rulenamekindRec. 
  Since $\istype{\Delta} {\rect a \kindsl T} \kindsl$,
  by \rulenamekindRec we have $\istype{\Delta, a\colon\kindsl} T \kindsl$ and we 
  proceed by induction hypothesis.  
  
  Case \rulenamekindSub. Immediate by induction.
\end{proof}

\begin{lemma}
  \label{lem:terminated}
  Let $\istype{\empty}{T}{\kinds^m}$. Then $\transprocs T = \varepsilon$ if and
  only if $\isDone[\empty]{T}$.
\end{lemma}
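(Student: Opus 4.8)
The plan is to prove both implications simultaneously by induction on $T$, using that the hypothesis $\istype\empty T \kinds^m$ forces $T$ to be one of $\skipt$, $\msgt U$, $\choicet{\recordt \ell T L}$, $\semit {T_1}{T_2}$, or $\rect a{\kinds^m}U$ (inspection of the type-formation rules of \cref{fig:kinding}). The only preliminary observation needed is about the operator $\odot$ occurring in the translation: $p \odot q = \varepsilon$ if and only if $p = \varepsilon$ and $q = \varepsilon$, which is immediate from the three defining clauses of $\odot$, since a term of the form $p\cdot q$ is syntactically never $\varepsilon$.

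Three of the cases are settled directly from the definitions. For $T = \skipt$, both $\transprocs{\skipt} = \varepsilon$ and $\isDone{\skipt}$ hold. For $T = \msgt U$, we have $\transprocs{\msgt U} = \msgt U$, which is not $\varepsilon$, and no $\isDone{}$ rule applies to a message type, so both sides of the equivalence fail. For $T = \choicet{\recordt \ell T L}$, the translation is $\sum_{\ell\in L}\choicet\ell \odot \transprocs{T_\ell}$, which is not $\varepsilon$ because each summand carries the atomic action $\choicet\ell$ at its head (using $p\odot\varepsilon = p$ in case $\transprocs{T_\ell}=\varepsilon$), and again no $\isDone{}$ rule applies, so both sides fail. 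For the sequential composition $T = \semit{T_1}{T_2}$, inverting type formation gives $\istype\empty{T_1}{\kindsl}$ and $\istype\empty{T_2}{\kindsl}$, so the induction hypothesis applies to each component; then $\transprocs{\semit{T_1}{T_2}} = \transprocs{T_1}\odot\transprocs{T_2} = \varepsilon$ iff $\transprocs{T_1}=\varepsilon$ and $\transprocs{T_2}=\varepsilon$ (the observation above) iff $\isDone{T_1}$ and $\isDone{T_2}$ (induction hypothesis) iff $\isDone{\semit{T_1}{T_2}}$ (rule \rulenametermSeq).

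The remaining case $T = \rect a{\kinds^m}U$ is where the standing assumptions of the section enter, and it is the only delicate point. Here $\transprocs{\rect a{\kinds^m}U}$ is the BPA process variable $X_i$ associated with this $\mu$-subterm, which is syntactically distinct from $\varepsilon$; so the left-hand side of the equivalence fails, and it remains to show $\isDone{\rect a{\kinds^m}U}$ fails as well. For this I would first establish, by a short induction on the derivation of $\isDone{\cdot}$, the auxiliary fact that a terminated type contains no free type variables --- the point being that $\isDone{\cdot}$ has introduction rules only for $\skipt$, sequential compositions and $\mu$-types, and none for a bare type variable. The section assumes (without loss of generality, by \cref{lemma:laws}, \cref{it:mu1}) that in every subterm $\rect a\kind V$ of $T$ the variable $a$ occurs free in $V$, hence $a \in \free(U)$; were $\isDone{\rect a{\kinds^m}U}$ to hold, rule \rulenametermRec would give $\isDone U$, hence $\free(U) = \emptyset$ by the auxiliary fact, contradicting $a \in \free(U)$. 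Thus both sides of the equivalence fail. The subtlety worth flagging is exactly this reliance on the standing hypothesis: without it, $\rect a{\kinds^m}\skipt$ would be terminated yet translate to a process variable, and the lemma would be false.
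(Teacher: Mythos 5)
Your proof is correct and follows essentially the same route as the paper: both rest on the observation that $\transprocs{T}=\varepsilon$ only for $\skipt$ and sequential compositions of such, that messages and choices translate to processes headed by an atomic action while admitting no $\isDone{}$ rule, and that the $\mu$ case is discharged by the section's standing assumption that the bound variable occurs free in the body. Your explicit auxiliary fact that terminated types contain no free type variables is a slightly cleaner justification of the $\mu$ case than the paper's terse appeal to contractivity, but the underlying idea is the same.
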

\begin{proof}
  The forward implication follows by rule induction on the formation of type $T$
  such that $\transprocs T = \varepsilon$, noting that
  $\transprocs T = \varepsilon$ if $T=\skipt$ or $T=T_1;T_2$ with
  $\transprocs {T_1} = \transprocs {T_2} = \varepsilon$. For the reverse
  direction, assume that $\transprocs T \neq \varepsilon$; we show that
  $\isNotDone T$. By definition of the translation, since
  $\transprocs T \neq \varepsilon$, $T$ takes one of the following forms:
  $\msgt{U}$, $ \choicet{\recordt \ell T L}$, $\rect{a_i}{\kind_i}{T_i}$ or
  $\semit{U}{W}$. (Notice that the cases for variables are not applicable
  because $\istype{\empty}{T}{\kinds^m}$.) According to the rules in
  \cref{fig:terminated}, cases $\msgt{U}$ and $\choicet{\recordt \ell T L}$
  immediately imply $\isNotDone T$. If $T= \rect{a_i}{\kind_i}{T_i}$, since
  $a_i$ occurs free in $T_i$ (as assumed at the beginning of this section) and
  $T_i$ is contractive on $a_i$, we also have $\isNotDone T$. The conclusion
  that $\isNotDone {\semit{U}{W}}$ follows from the previous observations, using
  \rulenametermSeq.
\end{proof}

\begin{lemma}
	If $\isNotDone[]T$ and $\isNotDone[]U$, then
	the BPA process equations in $\geqs_{T\cdot U}$ are guarded.
\end{lemma}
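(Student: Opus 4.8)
The plan is to expand the definition of $\geqs_{T\cdot U}$ and reduce guardedness to two facts about the translation. Recall that $\geqs_{T\cdot U}$ consists of the finitely many equations $X_i \triangleq \transprocs{\unravelSubs[\mathsf{id}]{T_i}}$, one for each $\mu$-subterm $\rect{a_i}{\kind_i}{T_i}$ of $T$ or $U$. Since $T$ and $U$ have no free recursion variables and (by the $\alpha$-identification convention) no two binders coincide, every recursion variable occurring in $T$ or $U$ is bound by a unique such $\mu$-subterm; moreover, being a subterm of a well-kinded session type, each $\rect{a_i}{\kind_i}{T_i}$ satisfies $\isContr{a_i}{T_i}$ and, by the standing normal-form assumption, $a_i\in\free(T_i)$. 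Guardedness amounts to: (i) no right-hand side is $\varepsilon$; and (ii) no process variable can be unfolded indefinitely without emitting an atomic action, equivalently the relation $X_i\rightsquigarrow X_j$ that holds precisely when $X_j$ occurs at the head of $\transprocs{\unravelSubs[\mathsf{id}]{T_i}}$ is acyclic. (This is the form of guardedness required by the decidability result we invoke; an acyclic system of this kind can be put in Greibach normal form by inlining along $\rightsquigarrow$.)

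For (i): by \cref{lem:top_type_constr}, $\unravelSubs[\mathsf{id}]{T_i}$ is either $\skipt$ or of the shape $(\dots(G;S_1);\dots;S_n)$, with $G$ a message $\msgt R$, a choice $\choicet{\recordt \ell R L}$, or a type variable. It cannot be $\skipt$: an easy induction on the unravelling derivation shows that $\unravelSubs[\sigma]{T_i}=\skipt$ forces $\isDone{T_i}$, and a terminated type has no free variables (by inspection of \cref{fig:terminated}), contradicting $a_i\in\free(T_i)$. Hence $\transprocs{\unravelSubs[\mathsf{id}]{T_i}} = (\dots(\transprocs G\odot\transprocs{S_1})\odot\dots)\odot\transprocs{S_n}$, where $\transprocs G$ is the atom $\msgt R$, the atom $b$, or the non-empty sum $\sum_{\ell\in L}\choicet\ell\odot\transprocs{R_\ell}$; in every case $\transprocs G\neq\varepsilon$, so the right-hand side is $\neq\varepsilon$.

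For (ii): an edge $X_i\rightsquigarrow X_j$ exists only when the head $G$ of $\unravelSubs[\mathsf{id}]{T_i}$ is the recursion variable $a_j$ (if $G$ is a message, a choice, or a polymorphic variable, then $\transprocs G$ begins with an atom and there is no edge). Suppose there were a cycle and take a simple one, $X_{i_1}\rightsquigarrow\dots\rightsquigarrow X_{i_k}\rightsquigarrow X_{i_1}$. If $k=1$, then $\unravelSubs[\mathsf{id}]{T_{i_1}}$ has head $a_{i_1}$; but contractivity propagates along exactly the path traced by $\unravel$ — the rules for sequences (\rulenamecontrSeqOne, \rulenamecontrSeqTwo) descend into the non-terminated component, \rulenamecontrRec into the body of a $\mu$ — so this would demand $\isContr{a_{i_1}}{a_{i_1}}$ at the head, forbidden by \rulenamecontrVar, contradicting $\isContr{a_{i_1}}{T_{i_1}}$. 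If $k\ge2$ the indices are pairwise distinct; for each $j$ (with $i_{k+1}=i_1$), $a_{i_{j+1}}$ occurs free in $T_{i_j}$, so its unique binder $\rect{a_{i_{j+1}}}{\kind_{i_{j+1}}}{T_{i_{j+1}}}$ strictly encloses $\rect{a_{i_j}}{\kind_{i_j}}{T_{i_j}}$; chaining around the cycle yields $\rect{a_{i_1}}{\kind_{i_1}}{T_{i_1}}\sqsubset\rect{a_{i_2}}{\kind_{i_2}}{T_{i_2}}\sqsubset\dots\sqsubset\rect{a_{i_k}}{\kind_{i_k}}{T_{i_k}}\sqsubset\rect{a_{i_1}}{\kind_{i_1}}{T_{i_1}}$, impossible in the well-founded subterm order. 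Hence $\rightsquigarrow$ is acyclic, and following it from any $X_i$ one reaches, in finitely many steps, an equation whose right-hand side begins with an atom; that is guardedness.

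I expect the main obstacle to be step (ii), and inside it the self-loop case: making precise that the path traced by $\unravel$ coincides with the path along which the contractivity judgement descends, which rests on the compatibility $\unravelSubs[\sigma]{V}=\skipt\iff\isDone{V}$ and hence on the standing normal-form assumption (without it one could write $\rect{a_i}{\kind_i}{a_k}$, which violates $a_i\in\free(T_i)$ and whose translation would be the unguarded equation $X_i\triangleq X_k$). Step (i), the case analysis on $G$, and the bookkeeping around $\odot$ are routine. The hypotheses $\isNotDone T$ and $\isNotDone U$ are not essential here — under the standing assumptions a terminated type has no $\mu$-subterms at all — but are stated for uniformity with the surrounding development.
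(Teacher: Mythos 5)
Your proof is correct, but it takes a genuinely different and considerably more explicit route than the paper, whose entire proof is the sentence ``Immediate from the definition of $\geqs_{T\cdot U}$, using \cref{lem:terminated,lem:top_type_constr}.'' Two of your choices diverge from that one-liner in substance. First, you observe that the stated hypotheses $\isNotDone{T}$ and $\isNotDone{U}$ concern the top-level types rather than the bodies $T_i$ for which equations are actually generated, and you instead derive $\transprocs{\unravelSubs{T_i}}\neq\varepsilon$ from the standing assumption $a_i\in\free(T_i)$ together with the fact that terminated types contain no variables; the paper leaves this step implicit. Second, and more importantly, you confront the case where the head $G$ of $\unravelSubs{T_i}$ is a recursion variable $a_j$, yielding an equation $X_i\triangleq X_j\odot\cdots$. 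Under the paper's literal definition of guardedness (``any variable occurring in $p_i$ is in the scope of an atomic action'') such an equation is simply unguarded, and it can arise under all the standing assumptions (e.g.\ $\rect{a_2}{\kindsl}{(\semit{\sout\unitt}{\rect{a_1}{\kindsl}{(\semit{a_2}{a_1})}})}$ produces $X_1\triangleq X_2\cdot X_1$); indeed the proof of \cref{lem:translated_transitions} tacitly admits this shape (``Subcase $\unravelSubs{U}=a_j;T_1;\ldots$'') without reconciling it with the present lemma. Your reformulation of guardedness as ``no $\varepsilon$ right-hand side'' plus acyclicity of the head-variable relation, justified by inlining to Greibach normal form, and your proof of acyclicity via contractivity (for self-loops, where \rulenamecontrVar\ forbids the bound variable at the head of the path traced by $\unravelSubs{\cdot}$) and well-foundedness of the subterm order (for longer cycles) supply exactly the argument that the appeal to Christensen et al.\ needs and that the paper's proof elides. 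What the paper's approach buys is brevity; what yours buys is a proof that actually covers the head-variable case, at the modest cost of working with a notion of guardedness that is equivalent to, but not literally, the one the paper defines.
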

\begin{proof}
	Immediate from the definition of $\geqs_{T\cdot U}$, using
	\cref{lem:terminated,lem:top_type_constr}.
\end{proof}

The following result helps in showing that the translation preserves
bisimulation.

\begin{lemma}
\label{lem:translated_transitions}
\changed{
If $\isType{\tvarset}{\Delta} T \kinds^m$
and 
$\transeqs{T}\xrightarrow{\alpha} p$ then 
$ p = \transeqs{T'}$ for $T'$ such that $T\LTSderives[\alpha] T'$.}
\end{lemma}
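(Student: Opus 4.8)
The plan is to argue by induction on the structure of $T$ (with a well-founded refinement for the recursive case, explained below), performing a case analysis on the top-level constructor of $T$ and, in each case, inverting the BPA transition relation to see which rule can produce $\transeqs T \xrightarrow{\alpha} p$. The non-sequential, non-recursive cases are immediate. If $T=\skipt$ the hypothesis is vacuous, since $\transeqs{\skipt}=\varepsilon$ has no transitions. If $T=\msgt U$ then $\transeqs T=\msgt U$ is an atomic action, so by $\rulenameltsbpaVar$ we must have $\alpha=\msgt U$ and $p=\varepsilon$; rule $\rulenameltsMsg$ gives $T\LTSderives[\msgt U]\skipt$ with $\transeqs{\skipt}=\varepsilon=p$. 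The case $T=b$ with $b$ a polymorphic variable is identical, using $\rulenameltsVar$. If $T=\choicet{\recordt \ell T L}$ then $\transeqs T=\sum_{\ell\in L}\choicet\ell\odot\transeqs{T_\ell}$; iterating $\rulenameltsbpaChOne$ and $\rulenameltsbpaChTwo$ the move originates from a single summand $\choicet k\odot\transeqs{T_k}$, and since $\choicet k$ is atomic this move consumes exactly $\choicet k$ (whether or not $\transeqs{T_k}=\varepsilon$), so $\alpha=\choicet k$ and $p=\transeqs{T_k}$, matched by $\rulenameltsCh$ with $T'=T_k$.

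For $T=T_1;T_2$ we split on whether $\transeqs{T_1}=\varepsilon$, which by \cref{lem:terminated} is equivalent to $\isDone[\empty]{T_1}$. If so, $\transeqs T=\transeqs{T_2}$, and the induction hypothesis for $T_2$ gives $T_2\LTSderives[\alpha]T'$ with $p=\transeqs{T'}$; rule $\rulenameltsSeqTwo$ lifts this to $T_1;T_2\LTSderives[\alpha]T'$. Otherwise $\transeqs T=\transeqs{T_1}\odot\transeqs{T_2}$ with $\transeqs{T_1}\neq\varepsilon$; inverting $\rulenameltsbpaSeqOne$/$\rulenameltsbpaSeqTwo$ and using the induction hypothesis for $T_1$ yields $T_1\LTSderives[\alpha]T_1'$ with the leading part of $p$ equal to $\transeqs{T_1'}$. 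Taking $T'=T_1';T_2$ and applying $\rulenameltsSeqOne$, a short computation with the defining equations of $\odot$ shows that $\transeqs{T_1';T_2}=\transeqs{T_1'}\odot\transeqs{T_2}$ is exactly $p$ in every subcase — in particular, when the BPA move was $\transeqs{T_1}\xrightarrow{\alpha}\varepsilon$ (so $\isDone{T_1'}$ by \cref{lem:terminated}), the stripping of the terminated prefix performed by $\odot$ reproduces at the level of the translation the effect that $\rulenameltsSeqTwo$ would have.

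The remaining, and hardest, case is $T=\rect{a_i}{\kind_i}{T_i}$. Here $\transeqs T=X_i$, the only applicable BPA rule is $\rulenameltsbpaDef$, and the defining equation is $X_i\triangleq\transeqs{\unravelSubs[\mathsf{id}]{T_i}}$, so in fact $\transeqs{\unravelSubs[\mathsf{id}]{T_i}}\xrightarrow{\alpha}p$. The plan is to apply the statement to $\unravelSubs[\mathsf{id}]{T_i}$, which by \cref{lem:top_type_constr} is either $\skipt$ (impossible, since the equations are guarded) or a guarded sequence, is well formed of kind $\kinds^m$ (unravelling is a sequence of $\mu$-unfoldings, handled by \cref{lem:type-substitution}, and $\skipt$-strippings), and has strictly smaller $\mu$-unfolding budget than $\rect{a_i}{\kind_i}{T_i}$ — so, using termination of $\unravel\cdot$, the recursion is well founded. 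This yields $\unravelSubs[\mathsf{id}]{T_i}\LTSderives[\alpha]T'$ with $p=\transeqs{T'}$. Two auxiliary facts then close the case: (a) \emph{unravelling is transparent for the transition system}, i.e.\ $\unravelSubs[\mathsf{id}]{T_i}$ and $\subs{\rect{a_i}{\kind_i}{T_i}}{a_i}{T_i}$ have the same $\lambda$-transitions, because unravelling performs only $\mu$-unfoldings and $\skipt$-strippings, which are absorbed by $\rulenameltsRec$ and $\rulenameltsSeqTwo$; and (b) \emph{the translation commutes with unfolding a $\mu$-subterm}, since $\transeqs{a_i}=X_i=\transeqs{\rect{a_i}{\kind_i}{T_i}}$ and $\transeqs\cdot$ is compositional. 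By (a), $\subs{\rect{a_i}{\kind_i}{T_i}}{a_i}{T_i}\LTSderives[\alpha]T'$ and, by (b), $p=\transeqs{T'}$ is unaffected; rule $\rulenameltsRec$ then gives $\rect{a_i}{\kind_i}{T_i}\LTSderives[\alpha]T'$, as required. No separate case for a bare recursion variable $T=a_i$ is needed: under the running assumptions of this section every recursion variable is guarded inside its $\mu$, so it never occurs as the whole type.

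The bulk of the work lies in the case $T=T_1;T_2$ (the $\odot$ bookkeeping) and, above all, in establishing the auxiliary facts (a) and (b) for $\unravel\cdot$: one must keep careful track of the substitution $\sigma$ threaded through the unravel rules, and of the fact that — thanks to the assumptions that every $\mu$ binds a variable free in its body and that bound variables are distinct — all $\mu$-subterms met while unravelling are among $\subterms$ of the ambient type, so the process variables assigned to them are well defined and the translation stays coherent under the unfoldings.
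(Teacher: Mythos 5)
Your overall strategy (induction on the structure of $T$, inverting the BPA transition in each case) is close in spirit to the paper's case analysis on the translation, and your treatment of $\skipt$, messages, polymorphic variables, choices and sequential composition (including the $\odot$ bookkeeping) is sound. The gap is in the recursive case, and it is exactly the subcase you explicitly dismiss. By \cref{lem:top_type_constr} the unravelled body $\unravel{T_i}$ may have the form $a_j;V_1;\dots;V_n$ where $a_j$ is a \emph{recursion} variable of another $\mu$-subterm ($j\neq i$); unravelling does not substitute these variables away, it merely accumulates the substitution $\sigma$, so they survive into the defining equation as the process variable $X_j$. Your claim that ``no separate case for a bare recursion variable is needed'' is therefore false: when your induction on $\unravel{T_i}$ descends through the sequence and reaches its head $a_j$, the only variable case you provide is the polymorphic one, where $\transeqs{b}=b$ is an atomic action firing by \rulenameltsbpaVar; but $\transeqs{a_j}=X_j$ fires only by \rulenameltsbpaDef through the equation $X_j\triangleq\transeqs{\unravel{U_j}}$, i.e.\ through a different $\mu$-subterm. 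The paper's proof devotes a dedicated subcase to precisely this shape ($\unravel{U}=a_j;T_1;\dots;T_n$ with $j\neq i$), resolved by appealing to contractivity.

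This missing case also undermines your well-foundedness argument. Handling $X_j$ forces a recursion into $\unravel{U_j}$, whose head may again be a recursion variable $a_k$, and so on; none of these is structurally smaller than $\rect{a_i}{\kind_i}{T_i}$, and termination of a single call to $\unravel\cdot$ says nothing about the length of such chains --- it is contractivity/guardedness that bounds them, and that must be invoked explicitly. Finally, your auxiliary fact (a), that $\unravel{T_i}$ and $\subs{\rect{a_i}{\kind_i}{T_i}}{a_i}{T_i}$ have the same transitions, is not literally true as stated: the former still contains free recursion variables $a_j$ (on which the LTS would fire \rulenameltsVar with the wrong label), while the latter does not. It has to be stated and used modulo the accumulated substitution $\sigma$; this is the ``careful bookkeeping'' you defer, but it is where the actual content of the recursive case lies.
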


\begin{proof}
  By case analysis on the definition of the translation. 
  If $T = \rect{a_i}{\kind}{U}$, then $\transeqs{T}=X_i$ and
  $(X_i \triangleq \transprocs{\unravelSubs[\mathsf{id}] {U}}) \in \geqs_T$.
  Note that, by rule \rulenameltsbpaDef, the labelled transitions of
  $\transeqs{T}$ are the transitions of $\transprocs{\unravelSubs {U}}$.
  By \cref{lem:top_type_constr}, we have four subcases to analyse.
  
  Subcase $\unravelSubs{U}= \skipt$. Does not have any transition.
  
  Subcase $\unravelSubs{U} = \msgt T; T_1; \cdots ; T_n$. In this case we have
  $\transprocs{\unravelSubs{U}} = \transprocs{\msgt{T}}\odot \transprocs{T_1}\cdots \transprocs{T_n}
  \xrightarrow{\msgt{T}} \transprocs{T_1}\odot\cdots\odot \transprocs{T_n} = \transprocs{T_1;\ldots; T_n}$
  and $T'= T_1;\ldots; T_n$.
  
  Subcase $\unravelSubs{U}= \choicet{\recordt \ell U L}; T_1; \ldots; T_n$. In
  this case we have
  $\transprocs{\unravelSubs{U}} = (\sum _{\ell\in L} \choicet \ell
  \odot\transprocs{U_\ell})\odot \transprocs{T_1}\cdots \transprocs{T_n}
  \xrightarrow{\star \ell} \transprocs{U_\ell} \odot \transprocs{T_1}\cdots
  \transprocs{T_n} = \transprocs{U_\ell; T_1; \ldots; T_n}$ and we have
  $T'=U_\ell; T_1; \ldots; T_n$.
  
  Subcase $\unravelSubs{U}= \changed{a_j}; T_1; \ldots; T_n$, \changed{ with
    $j\neq i$}. In this case we have
  $\transprocs{\unravelSubs{U}} = \changed{X_j}\odot \transprocs{T_1}\cdots
  \transprocs{T_n}$, with
  \changed{$(X_j \triangleq \transprocs{\unravelSubs[\mathsf{id}] {U_j}}) \in
    \geqs_T$. Since types are contractive, we have}
  $\transprocs{\unravelSubs{U}} = \changed{X_j}\odot \transprocs{T_1}\cdots
  \transprocs{T_n} \xrightarrow{\changed{\alpha}}
  \changed{\transprocs{U_j'}\odot}\transprocs{T_1}\cdots \transprocs{T_n} =
  \transprocs{\changed{U_j';}T_1;\ldots; T_n}$ and
  $T' = \changed{U_j';}T_1;\ldots; T_n$.
  
  In all these cases, $\unravelSubs{U}\LTSderives[\alpha] T'$ and, thus, $T\LTSderives[\alpha] \subs{T}{a_i}{T'}$.
  Noting that $\transprocs{T'} = \transprocs{\subs{T}{a_i}{T'}}$, we conclude that 
  $\transeqs{T}\xrightarrow{\alpha} \transeqs{\subs{T}{a_i}{T'}}$.
  If $T\neq \rect{a}{\kind}{U}$, we proceed similarly.
\end{proof}


\begin{lemma}
  \label{lem:full-abstraction}
  The translation $\transeqs{\cdot}$ is a fully abstract encoding, \ie
  $\isequivst{T}{U}$ if and only if
  $\isequivbpa[\geqs^*]{\transeqs{T}}{\transeqs{U}}$, where $\geqs^*$ is a
  system of process equations such that $\geqs_{T\cdot U}\subseteq \geqs^*$.
\end{lemma}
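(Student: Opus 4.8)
The plan is to prove both implications by exhibiting, on each side, a relation that is closed under transitions, and then using the fact that bisimilarity is the largest such relation. The central technical tool will be \Cref{lem:translated_transitions}, which says that every BPA transition of $\transeqs{T}$ is the image under $\transeqs{\cdot}$ of a session-type transition of $T$; I will also need its converse — that every session-type transition $T \LTSderives[\alpha] T'$ induces a BPA transition $\transeqs{T} \xrightarrow{\alpha} \transeqs{T'}$ — which follows by the same case analysis on the translation (unravelling is performed lazily by the equations $\geqs_T$, so the LTS rules \rulenameltsSeqOne, \rulenameltsSeqTwo, \rulenameltsRec correspond exactly to the BPA rules \rulenameltsbpaSeqOne, \rulenameltsbpaSeqTwo, \rulenameltsbpaDef together with the $\odot$-filtering of $\varepsilon$). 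A small auxiliary observation, already used in the proof of \Cref{lem:translated_transitions}, is that the translation commutes with substitution of $\mu$-subterms: $\transprocs{\subs{\rect{a_i}{\kind_i}{T_i}}{a_i}{T'}} = \transprocs{T'}$, because both $a_i$ and $\rect{a_i}{\kind_i}{T_i}$ are sent to $X_i$.

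For the forward direction, suppose $\isequivst T U$ via a bisimulation $\mathcal R$ on session types (all whose members are well-kinded $\kinds^m$ types, which we may assume by closing $\mathcal R$ under the agreement lemma and restricting to the relevant subterms). Define $\mathcal S = \{(\transeqs{T_1}, \transeqs{U_1}) \mid (T_1,U_1)\in\mathcal R\}$ on BPA processes over $\geqs^*$. I claim $\mathcal S$ is a bisimulation. Given $(\transeqs{T_1},\transeqs{U_1})\in\mathcal S$ and a transition $\transeqs{T_1}\xrightarrow{\alpha}p$, \Cref{lem:translated_transitions} gives $T_1'$ with $T_1\LTSderives[\alpha]T_1'$ and $p=\transeqs{T_1'}$; since $\mathcal R$ is a bisimulation there is $U_1'$ with $U_1\LTSderives[\alpha]U_1'$ and $(T_1',U_1')\in\mathcal R$; by the converse of \Cref{lem:translated_transitions}, $\transeqs{U_1}\xrightarrow{\alpha}\transeqs{U_1'}$, and $(p,\transeqs{U_1'})\in\mathcal S$. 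The symmetric case is identical. Hence $\transeqs T\mathrel{\mathcal S}\transeqs U$ and $\isequivbpa[\geqs^*]{\transeqs T}{\transeqs U}$.

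For the backward direction, suppose $\isequivbpa[\geqs^*]{\transeqs T}{\transeqs U}$ via a BPA bisimulation $\mathcal Q$. The natural move is to define the session-type relation $\mathcal R = \{(T_1,U_1)\mid T_1,U_1 \text{ well-kinded } \kinds^m,\ (\transeqs{T_1},\transeqs{U_1})\in\mathcal Q\}$ and check it is a bisimulation, using the two directions of the transition correspondence exactly as above, together with the fact (\Cref{lem:translated_transitions} and its converse) that $T_1 \LTSderives[\alpha] T_1'$ iff $\transeqs{T_1}\xrightarrow{\alpha}\transeqs{T_1'}$. The one delicate point — and the main obstacle — is that a single BPA process may be the image of several distinct (but bisimilar) session types, so when $\mathcal Q$ responds to a move with some process $q$, one must produce a session type $U_1'$ with $\transeqs{U_1'}=q$; this is where \Cref{lem:top_type_constr} and the surjectivity of $\transeqs{\cdot}$ onto the reachable processes of $\geqs^*$ are used: every process reachable from $\transeqs{U}$ is of the form $\transeqs{V}$ for some well-kinded session type $V$ (reachable states have the shape $\transprocs{G;T_1;\dots;T_n}$ by \Cref{lem:top_type_constr}, and each such process is literally $\transeqs{G;T_1;\dots;T_n}$). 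With that surjectivity in hand the back-and-forth conditions transfer verbatim, giving $\isequivst T U$, and the two implications together establish full abstraction.
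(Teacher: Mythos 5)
Your proposal is correct and follows essentially the same route as the paper's proof: both directions are established by transporting a bisimulation across the translation, using \Cref{lem:translated_transitions} (and its converse) together with \Cref{lem:top_type_constr} to match transitions. You are in fact slightly more explicit than the paper in isolating the converse of \Cref{lem:translated_transitions} and the surjectivity of $\transeqs{\cdot}$ onto reachable processes as the points that need justification, which is a welcome clarification rather than a deviation.
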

\begin{proof}
  For the forward direction, let $\mathcal{S}$ be a bisimulation for $T$ and
  $U$. Consider the relation
  $\mathcal{R} = \{(\transeqs{T_0}, \transeqs{U_0}) \mid (T_0,U_0)\in
  \mathcal{S}\}$ and the set of process equations
  $\geqs^* = \bigcup_{(T_0,U_0)\in \mathcal{S}} \geqs_{T_0\cdot U_0}$. We have
  $(\transeqs{T}, \transeqs{U})\in \mathcal{R}$. To prove that $\mathcal{R}$
  is a bisimulation for $\transeqs{T}$ and $\transeqs{U}$, we just need to prove
  that given $(\transeqs{T_0}, \transeqs{U_0})\in \mathcal{R}$, if
  $\transeqs{T_0}\xrightarrow{\alpha} p$, then
  $\transeqs{U_0}\xrightarrow{\alpha} p'$ with $(p,p')\in \mathcal{R}$, and
  vice-versa. To prove that $U_0$ mimics $T_0$, we use the definition of
  $\transeqs{T_0}$, \cref{lem:translated_transitions}, and the fact that
  $\mathcal{S}$ is a bisimulation.
  For instance, if
  $T_0 = \rect{a_i}{\kind}{T_0'}$, we have $\transeqs{T_0}=X_i$ and 
  $(X_i\triangleq \transprocs{\unravelSubs{T_0'}} )\in \geqs^*$. The proof
  proceeds by case analysis on the structure of $\unravelSubs[\sigma]{T_0'}$, using
  \cref{lem:top_type_constr}. We detail the subcase
  $\unravelSubs{T_0'} = \msgt S; S_1; \ldots; S_k$. In this case,
  $\alpha = \msgt{S}$ and $p = \transprocs{\skipt;S_1; \ldots; S_k}$
  and $T_0 \LTSderives[\msgt{S}] \subs{S_0}{a}{(\skipt;S_1; \ldots; S_k)}$.
  Since $\mathcal{S}$ is a bisimulation then 
  $U_0 \LTSderives[\msgt{S}] U_0'$ and $(\subs{T_0}{a}{(\skipt;S_1; \ldots; S_k)}, U_0')\in \mathcal{R}.$
  Hence, $\unravelSubs{U_0}$ is of the form $\msgt{S}; U_0'$ and 
  $\transprocs{\unravelSubs{U_0}}\xrightarrow{\msgt{S}} \transprocs{U_0'}$.
  Furthermore, $(\transprocs{\subs{T_0}{a}{(\skipt;S_1; \ldots; S_k)}}, \transprocs{U'} )\in \mathcal{R}$.
  Since, by definition of the translation, $\transprocs{T_0} = \transprocs{a_i}$, we have 
  $\transprocs{\subs{T_0}{a}{(\skipt;S_1; \ldots; S_k)}} = \transprocs{\skipt;S_1; \ldots; S_k}$
  and we conclude that $(\transprocs{{\skipt;S_1; \ldots; S_k}}, \transprocs{U'} )\in \mathcal{R}$.
  The other subcases for $\unravelSubs{T_0'}$ and the case where 
  $T_0 \neq \rect{a}{\kind}{T_0'}$ are similar, as well
  as the analysis for $U_0$.
  
  Reciprocally, let $\mathcal{R}$ be a bisimulation 
  for $\transeqs{T}$ and $\transeqs{U}$ and let us fix
  $\subterms(T)=\{\rect {a_1}{\kind_1}T_1,\dots,\rect {a_n}{\kind_n}T_n\}$
  and
  $\subterms(U)=\{\rect {b_1}{\kind_1'}U_1,\dots,\rect {b_m}{\kind_m'}U_m\}$.
  Consider
  the relation 
  $\mathcal{S} = \{(T_0,U_0) \mid (\transeqs{T_0}, \transeqs{U_0})\in \mathcal{R},
  T_0 \neq a_i, U_0 \neq b_j \}$.
  Obviously, $(T,U)\in \mathcal{S}$. Hence, to prove that 
  $\mathcal{S}$ is a bisimulation, we just need to prove that
  any simulation step of $T_0$, $T_0\LTSderives[\alpha] T_0'$,
  can be mimicked by $U_0$, $U_0 \LTSderives[\alpha] U_0'$, with 
  $(T_0',U_0')\in \mathcal{R}$, and vice-versa.
  The proof is done by case analysis on $T_0$ and $U_0$, using
  the rules of \cref{fig:lts}, \cref{lem:translated_transitions}, 
  and the definition of $\transprocs{\cdot}$.
\end{proof}

\begin{theorem}
  \label{thm:type-equiv-decidable}
  The type equivalence problem for context-free session types is decidable.
\end{theorem}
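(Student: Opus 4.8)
The plan is to reduce the decision problem for $\isequivst T U$ to bisimilarity of guarded basic process algebras, which is decidable by Christensen, H\"uttel and Stirling~\cite{DBLP:journals/iandc/ChristensenHS95}. All the ingredients are in place: the translation $\transeqs{\cdot}$, the finite system of process equations $\geqs_{T\cdot U}$, and the full abstraction result \cref{lem:full-abstraction}, which states that $\isequivst T U$ holds iff $\isequivbpa[\geqs^*]{\transeqs T}{\transeqs U}$ for some system $\geqs^*\supseteq\geqs_{T\cdot U}$. I would begin by observing that the three standing assumptions of this section can be enforced soundly and effectively: $\alpha$-rename $T$ and $U$ so that they share no bound variables, and repeatedly replace any sub-binder $\rect a\kind V$ with $a\notin\free(V)$ by $V$; this rewriting terminates (each step removes a binder) and preserves bisimilarity by \cref{lemma:laws}, \cref{it:mu1}. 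From then on both types are well-kinded session types in which every $\mu$-binder captures its variable.

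Next I would check that the whole pipeline is computable. The set $\subterms(T)$ of $\mu$-subterms is finite and obtained by a single structural pass; $\unravelSubs[\sigma]{\cdot}$ terminates on well-kinded session types (as shown earlier); and $\transprocs{\cdot}$ is a terminating structural recursion — it never recurses beneath a $\mu$, mapping each $\rect{a_i}{\kind_i}{T_i}$ directly to its process variable $X_i$, whose defining equation is produced separately in the third step. Hence $\transeqs T$, $\transeqs U$ and the finite system $\geqs_{T\cdot U}$ are all effectively computable. Moreover, every process variable occurring in $\transeqs T$, in $\transeqs U$, or on a right-hand side of $\geqs_{T\cdot U}$ is one of those associated with $\subterms(T)\cup\subterms(U)$, so $\geqs_{T\cdot U}$ is a closed system and \cref{lem:full-abstraction} applies with $\geqs^*=\geqs_{T\cdot U}$.

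The decision procedure then proceeds by a small case split driven by whether the types are terminated (a property decidable by structural recursion, see \cref{fig:terminated}). If both $T$ and $U$ are terminated, then by \cref{lem:terminated} their translations are $\varepsilon$, so by \cref{lem:translated_transitions} neither type has a labelled transition, both are bisimilar to $\skipt$, and we answer ``equivalent''. If exactly one of them is terminated, then — by \cref{lem:terminated}, \cref{lem:top_type_constr} and contractivity — the other does exhibit a transition, hence the two are not bisimilar and we answer ``not equivalent''. In the remaining case both types are non-terminated, so $\geqs_{T\cdot U}$ is guarded (as established earlier via \cref{lem:terminated,lem:top_type_constr}), $\transeqs T$ and $\transeqs U$ are BPA processes over this finite guarded system, and we invoke the decidability of BPA bisimilarity to decide $\isequivbpa[\geqs_{T\cdot U}]{\transeqs T}{\transeqs U}$; correctness of the answer is precisely \cref{lem:full-abstraction}. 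Decidability of the general judgement $\isequiv\Delta TU\kind$ then follows as well: its rules other than \rulenameeqST form the familiar Amadio--Cardelli/Gay--Hole congruence, whose memoising reading terminates because only finitely many relevant quadruples of (sub)types can enter the visited set $\Theta$, and each appeal to session bisimilarity at a $\kinds^m$-kinded leaf is now decidable.

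I expect the delicate part to be bookkeeping rather than mathematics: making sure the construction is a genuine decision procedure and not merely a semi-decision procedure — that is, that $\geqs_{T\cdot U}$ is a finite, guarded, closed system of BPA equations in every relevant case (the awkward cases being exactly the terminated types, isolated above), and that the preprocessing needed to reconcile arbitrary well-kinded session types with the section's standing assumptions is sound and terminating. The substantive work — full abstraction of $\transeqs{\cdot}$, termination of $\unravelSubs[\sigma]{\cdot}$, guardedness of $\geqs_{T\cdot U}$, and decidability of bisimilarity for guarded BPA — is already available.
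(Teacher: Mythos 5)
Your proof is correct and takes essentially the same route as the paper's: reduce the problem to bisimilarity of BPA processes via the full-abstraction result (\cref{lem:full-abstraction}) and invoke the decidability theorem of Christensen et al.~\cite{DBLP:journals/iandc/ChristensenHS95}. The paper's own proof is exactly this two-step reduction stated in three lines; your extra bookkeeping --- enforcing the section's standing assumptions by $\alpha$-renaming and eliminating vacuous $\mu$-binders, checking effectiveness of the translation, and splitting off the terminated types for which the guardedness lemma does not apply --- is sound and, if anything, more careful than the published argument.
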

\begin{proof}
  By \cref{lem:full-abstraction}, the full abstraction $\transeqs{\cdot}$
  reduces the type equivalence problem of context-free session types to the
  bisimulation of BPA processes. The latter was proved to be decidable by
  Christensen et al.~\cite{DBLP:journals/iandc/ChristensenHS95}. Hence, type
  equivalence for context-free session types is decidable.
\end{proof}


\section{Algorithmic Type Checking}
\label{sec:algorithmic-typing}

This section presents a bidirectional
procedure~\cite{DBLP:journals/toplas/PierceT00} to decide expression formation
as in \cref{fig:typing-exps}.


\subsection{Algorithmic Kinding, Algorithmic Equivalence and Type Normalisation}

\paragraph{Kinding}

The rules for algorithmic type formation are in \cref{fig:alg-kinding}. Sequents
are of the form $\algKindOut[\Delta_\In]{T_\In}{\kind_\Out}$ for kind synthesis
and $\algKindIn[\Delta_\In]{T_\In}{\kind_\In}$ for the check against relation.
Marks $\In$ and $\Out$ describe the input/output mode of parameters.
Algorithmic kinding relies on contractivity (\cref{fig:contractivity}), which in
turn relies on the terminated relation for session (\cref{fig:terminated}),
both of which are algorithmic.

Rules
mostly follow their declarative counterparts in \cref{fig:kinding}.
For example, the rule for sequential composition, \rulenamealgkindSeq, requires
both types to be session types, and returns kind $\kinds^{m\join n}$, where
$m\join n$ is the least upper bound of the multiplicities of the synthesised
kinds.
Rule \rulenamealgkindMsg checks that the type is of message kind and returns kind
$\kindsl$.
Rule \rulenamealgkindCh checks that all the elements of a choice type are
themselves session types, by checking against kind $\kindsl$ and returns kind
$\kindsl$.

As with the declarative rules, recursive types $\rect a{\kind} T$
are both session and functional. Rule \rulenamealgkindRec checks whether type
$T$ is contractive on type variable~$a$. Then, it ensures that $T$ has the kind
of $a$ by checking $T$ against $\kind$. Rule \rulenamealgkindPoly checks
whether type $T$ is well formed under contexts $\tvarset, a$ and
$\Delta, a\colon\kind$.
%
%
Rules \rulenamealgkindRcd and \rulenamealgkindData both check that all
components are well-formed. The result is kind $\kindt^{\join m_\ell}$,
where $\join m_\ell$ is the least upper bound of the multiplicities of the
synthesised kinds.
Rule \rulenamealgkindVar looks for an entry $a\colon\kind$ in $\Delta$ and
returns $\kind$. Rule \rulenamealgkindArrow requires both types $T$ and $U$ to
be well-formed and return $\kindt^{m}$, where $m$ stands for the function
multiplicity. $\skipt$ has kind $\kindsu$ and $\unitt$ has kind $\kindm^m$.

\begin{figure}[t!]
  \decltworel{Kind synthesis and check-against}
    {$\algKindOut[\Delta_\In]{T_\In}{\kind_\Out}$}
    {$\algKindIn[\Delta_\In]{T_\In}{\kind_\In}$}
  \begin{gather*}
    \infrule{\rulenamealgkindType}
    {\algkindout [\dom(\Delta)] \Delta T \kind }
    {\algKindOut T \kind}
    \qquad
    \infrule{\rulenamealgkindTypeAgainst}
    {\algkindin [\dom(\Delta)] \Delta T \kind }
    {\algKindIn T \kind}    
  \end{gather*}
  
  \declrel{Kind synthesis (inner)}{$\algkindout[\tvarset_\In]{\Delta_\In}{T_\In}{\kind_\Out}$}
  \begin{gather*}
    \axiom{\rulenamealgkindSkip}{\algkindout{\Delta} \skipt \kindsu}
    \qquad
    \infrule{\rulenamealgkindMsg}{\algkindin{\Delta}{T}{\kindml}}{\algkindout{\Delta}{\msgt T}{\kindsl}}
    \qquad
    \infrule{\rulenamealgkindCh}{
      \algkindin{\Delta}{T_\ell}{\kindsl}
    }{
      \algkindout{\Delta}{\choicet{\recordt \ell T L}}{\kindsl}
    }
    \\    
    \infrule{\rulenamealgkindSeq}{
      \algkindout{\Delta}{T}{\kinds^m}
      \\
      \algkindout{\Delta}{U}{\kinds^n}
    }{
      \algkindout{\Delta}{(\semit{T}{U})}{\kinds^{m\join n}}      
    }
    \quad
    \axiom{\rulenamealgkindUnit}{\algkindout{\Delta}{\unitk}{\changed{\kindm^m}}}
    \\
    \infrule{\rulenamealgkindArrow}{
      \algkindout{\Delta}{T}{\kind_1}
      \\
      \algkindout{\Delta}{U}{\kind_2}
    }{
      \algkindout{\Delta}{(\funt{T}{U})}{\kindt^m}
    }
    \\
    \infrule{\rulenamealgkindRcd}{
      \algkindout{\Delta}{T_\ell}{v^{m_\ell}}\\
      \changed{(\forall\ell\in L)}
    }{
      \algkindout{\Delta}{\recordt \ell T L}{\kindt^{\join m_\ell}}
    }
    \qquad
    \infrule{\rulenamealgkindData}{
      \algkindout{\Delta}{T_\ell}{v^{m_\ell}}\\
      \changed{(\forall\ell\in L)}
    }{
      \algkindout{\Delta}{\variantt \ell T L}{\kindt^{\join m_\ell}}
    }  
    \\
    \infrule{\rulenamealgkindPoly}
    	{\algkindout[\tvarset,a]{\Delta,a\colon\kind} T {v^m}}
    	{\algkindout{\Delta}{\forallt a \kind T}{\kindt^m}}
    \;\;\,
    \infrule{\rulenamealgkindRec}{
      \isContr aT
      \\
      \algkindin{\Delta, a\colon \kind} {T} \kind
    }{
      \algkindout{\Delta}{\rect a{\kind} T}{\kind}
    }
    \;\;\,
    \infrule{\rulenamealgkindVar}{a \colon \kind \in \Delta}{\algkindout{\Delta} a \kind}
  \end{gather*}
  \declrel{Kind check-against (inner)}{$\algkindin[\tvarset_\In]{\Delta_\In}{T_\In}{\kind_\In}$}
  \begin{equation*}
    \infrule{\rulenamealgcheckAgainst}{
      \algkindout \Delta T {\kind_1}
      \\
      \isSubkind{\kind_1}{\kind_2}
    }{
      \algkindin \Delta T {\kind_2}
    }
  \end{equation*}
  \caption{Algorithmic type formation (kinding)}
  \label{fig:alg-kinding}
\end{figure}


We now look at the correctness of algorithmic type formation.

\begin{lemma}[Kinding correctness]\
  \begin{enumerate}
  \item (Soundness, synthesis) If $\algKindOut{T}{\kind}$, then
    $\istype{\Delta}{T}{\kind}$.
  \item (Soundness, check against) If $\algKindIn{T}{\kind}$, then
    $\istype{\Delta}{T}{\kind}$.
  \item (Completeness, synthesis) If $\istype{\Delta}{T}{\kind}$, then
    $\algKindOut{T}{\kind'}$ with $\kind' \subk \kind$.
  \item (Completeness, check against) If $\istype{\Delta}{T}{\kind}$, then
    $\algKindIn{T}{\kind}$.
  \end{enumerate}
  \label{lem:kinding-correctness}
\end{lemma}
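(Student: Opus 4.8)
The plan is to prove the four statements simultaneously by rule induction, exploiting the tight correspondence between the algorithmic rules in Figure~\ref{fig:alg-kinding} and the declarative rules in Figure~\ref{fig:kinding}. The two soundness statements go together (each algorithmic rule is mirrored by a declarative one, possibly followed by \rulenamekindSub), and the two completeness statements go together (each declarative rule, including \rulenamekindSub, is matched by an algorithmic synthesis step returning a possibly smaller kind, which the check-against relation then absorbs via subkinding).

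For \textbf{soundness of synthesis}, I would do rule induction on the derivation of $\algkindout{\Delta}{T}{\kind}$ (after peeling off \rulenamealgkindType). Most cases are immediate: \rulenamealgkindSkip, \rulenamealgkindUnit, \rulenamealgkindVar match \rulenamekindSkip, \rulenamekindUnit, \rulenamekindVar verbatim; \rulenamealgkindMsg and \rulenamealgkindCh use the inductive hypothesis for check-against on the premises and then apply \rulenamekindMsg/\rulenamekindCh. For \rulenamealgkindArrow the algorithmic rule synthesises $\kind_1,\kind_2$ for the components and returns $\kindt^m$ for the arrow regardless; \rulenamekindArrow does exactly the same, so soundness is direct. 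The cases \rulenamealgkindRcd and \rulenamealgkindData need a small observation: the algorithmic rule synthesises $v^{m_\ell}$ for each component and returns $\kindt^{\join m_\ell}$; by induction each component has declarative kind $v^{m_\ell}$, hence by \rulenamekindSub also $\kindt^{m_\ell}$ and then $\kindt^{\join m_\ell}$ (since $m_\ell \subk \join m_\ell$), so \rulenamekindRcd/\rulenamekindData applies. For \rulenamealgkindRec I use the shared contractivity premise and the inductive hypothesis (check-against soundness) on $\algkindin{\Delta,a\colon\kind}{T}{\kind}$ to get $\istype{\Delta,a\colon\kind}{T}{\kind}$, then \rulenamekindRec. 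For \rulenamealgkindPoly, likewise via \rulenamekindPoly. Soundness of check-against is then a one-line corollary: \rulenamealgcheckAgainst gives $\algkindout{\Delta}{T}{\kind_1}$ with $\kind_1 \subk \kind_2$; apply synthesis soundness and then \rulenamekindSub.

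For \textbf{completeness}, I would do rule induction on the declarative derivation of $\istype{\Delta}{T}{\kind}$. The interesting case is \rulenamekindSub: there $\istype{\Delta}{T}{\kind}$ follows from $\istype{\Delta}{T}{\kind''}$ with $\kind'' \subk \kind$; by induction $\algkindout{\Delta}{T}{\kind'}$ with $\kind' \subk \kind''$, and transitivity of $\subk$ gives $\kind' \subk \kind$. All other cases invert the declarative rule, apply the synthesis inductive hypothesis to each premise (obtaining a kind below the expected one), and then one must check that feeding these smaller kinds to the matching algorithmic rule still produces a kind $\subk\kind$. For rules whose premises use check-against in the declarative system --- actually the declarative system only has one synthesis-style judgement, so here I use completeness of check-against (item~4) on the premises of \rulenamekindMsg, \rulenamekindCh, \rulenamekindSeq, \rulenamekindRec: item~4 says a declarative kinding premise yields the corresponding algorithmic check-against, which is exactly what the algorithmic rule needs. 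The records/variants case needs the least-upper-bound reasoning again, in the other direction: if each component has declarative kind $\kindt^m$ then by item~3 each synthesises $v^{m_\ell} \subk \kindt^m$, so $v^{m_\ell}=\kindt^{m_\ell}$ or $\kindm^{m_\ell}$ or $\kinds^{m_\ell}$ with $m_\ell \subk m$, whence $\join m_\ell \subk m$ and $\algkindout{\Delta}{\cdot}{\kindt^{\join m_\ell}}$ with $\kindt^{\join m_\ell} \subk \kindt^m$. Completeness of check-against (item~4) then follows directly from item~3: given $\istype{\Delta}{T}{\kind}$, item~3 gives $\algkindout{\Delta}{T}{\kind'}$ with $\kind' \subk \kind$, and \rulenamealgcheckAgainst yields $\algkindin{\Delta}{T}{\kind}$.

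The \textbf{main obstacle} is bookkeeping rather than conceptual difficulty: one must be careful that the declarative system is ``kind-directed up to subkinding'' while the algorithm synthesises a \emph{least} kind, so every completeness case must explicitly track that the synthesised kind is below the declarative one and verify that the algorithmic rule's output stays below the declarative output --- this is only non-trivial for records and variants, where it rests on $m \subk m'$ and $v \subk v'$ being preserved by the least-upper-bound operation $\join$ on multiplicities, and on $\kindm,\kinds$ being incomparable so that a component synthesising, say, $\kinds^{m_\ell}$ against an expected $\kindt^m$ really does force the base kind of the whole record to be $\kindt$. A secondary subtlety is the mutual dependence between the four items: I resolve it by proving items~1 and~3 (synthesis) first by induction on the respective derivations, with items~2 and~4 (check-against) obtained as immediate corollaries using \rulenamealgcheckAgainst and \rulenamekindSub; no genuinely mutual induction is needed. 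I would also note in passing that the contractivity and terminated side-conditions are literally the same judgements in both systems (they are already algorithmic, as the text remarks), so \rulenamealgkindRec transfers them without any work.
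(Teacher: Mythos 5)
Your proposal is correct and takes essentially the same route as the paper, whose entire proof reads ``Straightforward mutual rule induction on each hypothesis.'' Your case analysis --- deriving the check-against items from the synthesis items via \rulenamealgcheckAgainst and \rulenamekindSub, and tracking $\join m_\ell \subk m$ in the record/variant cases --- simply fills in the details that the paper leaves implicit.
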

\begin{proof}
  Straightforward mutual rule induction on each hypothesis.
\end{proof}

\paragraph{Equivalence}

The proper integration of context-free session types in a compiler requires the
design and implementation of a type equivalence algorithm. The rules in
  \cref{fig:type-equivalence} are algorithmic
provided an algorithm to decide $\isequivst TU$ is available.
A practical algorithm to check type equivalence for context-free session types
can be found in Almeida \etal~\cite{DBLP:conf/tacas/AlmeidaMV20}.
%



\paragraph{Context Operations}


  

Context difference, $\ctxdiff[\Delta_\In]{\Gamma_\In}{x_\In}{\Gamma_\Out}$
in~\cref{fig:context-difference}, checks that $x$ is not linear and removes
it from the incoming context.
\begin{figure}[t!]  
  \declrel{Context difference}{$\ctxdiff[\Delta_\In]{\Gamma_\In}{x_\In}{\Gamma_\Out}$}
  \begin{equation*}
    \infrule{}{
      x\colon T\notin\Gamma
    }{
      \ctxdiff \Gamma x \Gamma
    }
    \qquad
    \infrule{}{
      \algkindin \Delta T \kindtu
    }{
      \ctxdiff {(\Gamma_1,x\colon T,\Gamma_2)} x {\Gamma_1,\Gamma_2}
    }
  \end{equation*}
  \caption{Context difference}
  \label{fig:context-difference}
\end{figure}


\begin{lemma}[Properties of context difference]\
  Let $\ctxdiff{\Gamma_1}x\Gamma_2$.
  \begin{enumerate}
  \item $\Gamma_2 = \Gamma_1 \backslash \{x \colon T\} $
  \item $\mathcal L(\Gamma_1) = \mathcal L(\Gamma_2)$
  \item If $x\colon T \in \Gamma_1$, then $\istype \Delta T {\kindtu}$ and $x \not\in \dom(\Gamma_2)$
  \end{enumerate}
  \label{lem:ctx-diff}
\end{lemma}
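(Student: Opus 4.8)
The plan is to proceed by case analysis on the last rule used to derive $\ctxdiff{\Gamma_1}{x}{\Gamma_2}$, of which there are exactly two (\cref{fig:context-difference}); since neither rule has a recursive premise on the $\div$ judgement, no induction is needed, and all three items are handled uniformly within each case.

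Consider first the case where $x\colon T\notin\Gamma_1$ and $\Gamma_2=\Gamma_1$. Item~1 holds because deleting from $\Gamma_1$ a binding for $x$ that does not occur leaves $\Gamma_1$ unchanged; item~2 is immediate since $\Gamma_1$ and $\Gamma_2$ coincide, so their linear portions coincide; item~3 is vacuous, its hypothesis $x\colon T\in\Gamma_1$ being false.

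Consider now the case where $\Gamma_1=\Gamma_1',x\colon T,\Gamma_2'$, $\Gamma_2=\Gamma_1',\Gamma_2'$, and the premise $\algkindin\Delta T\kindtu$ holds. Item~1 follows because, under the convention that a context binds each variable at most once, $x\colon T$ is the unique occurrence of $x$ in $\Gamma_1$, hence $\Gamma_1\setminus\{x\colon T\}=\Gamma_1',\Gamma_2'=\Gamma_2$. For item~3, its hypothesis is satisfied, and applying soundness of algorithmic kinding (\cref{lem:kinding-correctness}) to the premise gives $\istype\Delta T\kindtu$; moreover $x\notin\dom(\Gamma_2)$, again by distinctness of bound variables. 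Item~2 then follows from item~3: since $\istype\Delta T\kindtu$, the binding $x\colon T$ does not belong to $\mathcal L(\Gamma_1)$, so removing it does not affect the linear portion, i.e.\ $\mathcal L(\Gamma_1)=\mathcal L(\Gamma_1',\Gamma_2')=\mathcal L(\Gamma_2)$.

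The only mild obstacle is bookkeeping rather than mathematical content: one must invoke kinding soundness to pass from the algorithmic premise $\algkindin\Delta T\kindtu$ to the declarative judgement $\istype\Delta T\kindtu$ needed in items~2 and~3, and one must appeal to the standard well-formedness assumption that contexts do not shadow variables in order to justify both the uniqueness of the removed binding (item~1) and $x\notin\dom(\Gamma_2)$ (item~3). Everything else is a direct reading of the two defining rules.
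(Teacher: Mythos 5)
Your proposal is correct and follows essentially the same route as the paper, which simply states that the proof is by rule induction on the context-difference hypothesis; since neither rule has a recursive premise, that induction degenerates into exactly the two-case analysis you give. Your explicit appeals to kinding soundness (\cref{lem:kinding-correctness}) and to the no-shadowing convention on contexts are the right bookkeeping steps and are consistent with what the paper leaves implicit.
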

\begin{proof}
  By rule induction on the context difference hypothesis.
\end{proof}

\paragraph{Type Normalisation}

Type normalisation, $\normalisation{T_\In}{U_\Out}$ in \cref{fig:norm}, is
used in the elimination rules for algorithmic expression formation. Type
normalisation exposes the top-level type constructor by unfolding recursive
types, removing terminated types and adjusting associativity
on  sequential composition.

\begin{figure}[t]
  \declrel{Type concatenation}{$\append {T_\In} {T_\In} = T_\Out$}
  \begin{align*}
    \append{T}{\skipt} = T
    \qquad
    \infrule{}{V \neq \skipt}{\append {(\semit TU)} {V} = \semit T {(\append U V)}}
    \qquad
    \infrule{}{T \neq \semit VW \and U \neq \skipt}{\append {T}{U} = \semit TU}
  \end{align*}
  \emph{Type normalisation} \hfill\fbox{$\normalisation {T_\In} {T_\Out}$}
  \begin{gather*}
    \infrule{}{\isDone[\empty] T \\ \normalisation{U}{U'}}
    {\normalisation{\semit TU}{U'}}
    \qquad
    \infrule{}{\isNotDone[\empty] T\\\normalisation{T}{T'}}
    {\normalisation{\semit TU}{\append {T'} U}}
    \\
    \infrule{}{\normalisation{T}{T'}}
    {\normalisation{\rect a\kind T}{\subs{\rect a\kind T'}{a}{T'}}}
    \qquad
    \infrule{}{T\neq (\semit UV),(\rect x\kind U)}{\normalisation{T}{T}}
  \end{gather*}
  \caption{Type normalisation}
  \label{fig:norm}
\end{figure}


\begin{lemma}[Type normalisation]
  \label{lem:normProps} 
  If $\istype{\Delta} {T} \kind$, then $\normalisation T U$ and
  $\isequiv[\Empty]\Delta{T}{U}{\kind}$. Furthermore, $U$ does not start with $\mu$ and
  if $U=V;W$ then $V$ is a message or a choice.
\end{lemma}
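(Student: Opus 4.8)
The plan is to establish all three conclusions simultaneously, by induction on the structure of $T$ (inverting the kinding derivation, in particular peeling off uses of \rulenamekindSub, so that the last kinding step is syntax directed). Existence of $U$ is immediate: the rules of \cref{fig:norm} recurse only on proper subterms of $T$ (and $\append\cdot\cdot$ only on proper subterms of its first argument), so $\normalisation\cdot\cdot$ is a total, terminating function. The induction then follows the four normalisation rules, and I will in fact strengthen the statement so as to carry along two auxiliary invariants proved by the same induction: that the normal form of $T$ is $\skipt$ exactly when $\isDone[\empty]{T}$, and that contractivity on a variable is preserved by normalisation.

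If $T$ is neither a sequential composition nor a $\mu$-type, the last rule applies, $U=T$, the equivalence is reflexivity (\cref{lem:type-equivalence}), and the shape conditions hold trivially ($T$ is $\skipt$, a message, a choice, $\unitt$, an arrow, a record, a variant, a universal, or a variable). If $T=T_1;T_2$, inversion of \rulenamekindSeq gives that both components are session types, and I split on $\isDone[\empty]{T_1}$. When $T_1$ is terminated, the first rule delegates to $T_2$, and I conclude from the induction hypothesis together with the fact that a terminated session type is bisimilar to $\skipt$ (an easy separate induction on $\isDone[\empty]{\cdot}$, since terminated types have no transitions), \cref{lemma:laws}\cref{it:skip}, and \rulenameeqST. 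When $T_1$ is not terminated, the second rule yields $U=\append{T_1'}{T_2}$ with $\normalisation{T_1}{T_1'}$; here I first record elementary facts about $\append\cdot\cdot$ — it re-associates to the right, discards a trailing $\skipt$, and preserves the outermost constructor of its first argument — and use them, with the induction hypothesis for $T_1$ (which gives $\isequivst{T_1}{T_1'}$, that $T_1'$ is not a $\mu$-type, that the head constructor of $T_1'$ is a message or choice type when $T_1'$ is a sequential composition, and that $T_1'\neq\skipt$ since $T_1$ is non-terminated) and \cref{lemma:laws}\cref{it:semi,it:ch}, to transfer both the equivalence and the shape conditions to $U$.

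The substantive case is $T=\rect a\kind S$. Inverting \rulenamekindRec gives $\isContr[\dom(\Delta)] aS$ and $\istype{\Delta,a\colon\kind}{S}{\kind}$; the induction hypothesis yields the normal form $S'$ of $S$ with $\isequiv[\Empty]{\Delta,a\colon\kind}{S}{S'}{\kind}$, with $S'$ not a $\mu$-type, and with the head condition on $S'$. The third rule then gives $U=\subs{\rect a\kind S'}{a}{S'}$. The crux is that $S'$ is not the variable $a$: this follows from the preservation of contractivity under normalisation (a routine induction mirroring the four rules, using that $\append\cdot\cdot$ and substitution of a contractive type — \cref{lem:type-substitution}\cref{it:subs-contr} — preserve contractivity), since no variable is contractive on itself (\cref{fig:contractivity}, rule \rulenamecontrVar). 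Because $S'$ is neither a $\mu$-type nor $a$, substituting $\rect a\kind S'$ for $a$ leaves the outermost constructor of $S'$ untouched, which gives the two shape conditions for $U$; the same preservation of contractivity makes $\rect a\kind S'$ well formed, whence $U$ is well formed by \cref{lem:type-substitution}\cref{it:subs-type} (needed below to extend $\Theta$). For the equivalence I distinguish on $\kind$: if $\kind\neq\kinds^m$ I apply \rulenameeqRecL, reducing the goal to $\isequiv{}\Delta{\subs{\rect a\kind S}{a}{S}}{\subs{\rect a\kind S'}{a}{S'}}{\kind}$, which I close using substitution into equivalences (\cref{lem:type-equivalence-substitution}) and the induction hypothesis (plus \cref{lemma:laws}, last item, for the residual $\mu$-unfolding); if $\kind=\kinds^m$, where \rulenameeqRecL and \rulenameeqRecR are forbidden, I instead exhibit a bisimulation between $\rect a\kind S$ and $U$ directly — built from \cref{lemma:laws} (last item) and the induction hypothesis — and conclude via \rulenameeqST, both types being well formed session types by agreement (\cref{lem:agreement-type-equiv}).

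I expect the recursion case to be the main obstacle, for two reasons. First, the auxiliary invariants must be kept coherent: the fact ``normal form is $\skipt$ iff $T$ is terminated'' is what rules out $T_1'=\skipt$ in the sequence case, and ``contractivity survives normalisation'' is what gives $S'\neq a$ in the recursion case, and both must be threaded through the same induction as the shape claim. Second, the $\mu$-unfolding equivalence requires moving between the two regimes of \cref{fig:type-equivalence} — bisimilarity via \rulenameeqST for session kinds (where the equi-recursive rules are unavailable) and the equi-recursive congruence rules for functional and message kinds — and, within either regime, showing that substituting the equivalent-but-syntactically-distinct types $\rect a\kind S$ and $\rect a\kind S'$ into $S$ respectively $S'$ preserves equivalence.
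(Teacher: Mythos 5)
Your proof follows the same route as the paper's: rule induction on the kinding derivation, following the four normalisation rules, with the sequence case split on termination of $T_1$ and the equivalence closed via the bisimilarity laws (\cref{lemma:laws}) and \rulenameeqST/\rulenameeqRecL. You are in fact more careful than the paper's own (very terse) argument, which silently elides exactly the two points you isolate — that a non-terminated $T_1$ cannot normalise to $\skipt$, and that contractivity forces $S'\neq a$ so that the substitution in the $\mu$ case preserves the head constructor.
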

\begin{proof}
  By rule induction on the hypothesis. Consider the case where the derivation ends
  with rule \rulenamekindRec. We know that $\isContr aT$ and
  $\isType{\tvarset}{\Delta, a\colon\kind} T \kind$. By induction hypothesis we
  obtain $\normalisation TT'$ and $\isequiv[\Empty]{\Delta, a\colon\kind}{T}{T'}\kind$
  where $T'$ does not start with $\mu$ and if $T' = V;W$, then $V$ is
  either a message or a choice. By applying the third rule of type normalisation
  (\cref{fig:norm}) we get
  $\normalisation{\rect a\kind T}{\subs{\rect a\kind T'}{a}{T'}}$. We conclude
  that
  $\isequiv[\Empty]{\Delta, a\colon\kind}{\rect a\kind T}{\subs{\rect a\kind
      T'}{a}{T'}}{\kind}$ via the properties of type bisimilarity.
  
  When the derivation ends with rule \rulenamekindSeq, we consider two cases for 
  $T = T_1;T_2$,
  regarding whether $T_1$ is terminated or not. Take the latter as an example. We
  have that $\isType{\tvarset}{\Delta} {T_1} \kindsl$ and
  $\isType{\tvarset}{\Delta} {T_2} \kindsl$. Induction hypothesis on the first
  premise yields $\normalisation {T_1}{T_1'}$ and $\isequiv[\Empty]{\Delta}{T_1}{T_1'}{\kind}$.
  Since $\isNotDone[\empty] {T_1}$, we apply the second rule of type normalisation
  to obtain $\normalisation{\semit {T_1}{T_2}}{\append {T_1'} {T_2}}$. Now we have to prove
  three different cases depending on the nature of the type concatenation
  operation. The first two follow by induction hypothesis, using the properties 
  of type bisimilarity
  (\cref{lemma:laws}) and the last follows by reflexivity. 
  
  All the other cases follow
  directly by the last rule of type normalisation.
\end{proof}





\subsection{Algorithmic Type Checking}

\begin{figure}[t!]
  \declrel{Type synthesis}{$\algtypeout[\Delta_\In]{\Gamma_\In}{e_\In}{T_\Out}{\Gamma_\Out}$}
  \begin{gather*}   
    \axiom{\rulenamealgtypeConst}
    {\algtypeout \Gamma c {\typeof(c)} \Gamma}
    \\
    \infrule{\rulenamealgtypeVarLin}{
      \algKindOut{T}{\prekind^\lin}
    }{
      \algtypeout{\Gamma,x\colon T} x T \Gamma
    }
    \qquad
    \infrule{\rulenamealgtypeVarUn}{
      \algKindOut{T}{\prekind^\un}
    }{
      \algtypeout{\Gamma,x\colon T}  x T {\Gamma,x\colon T}
    }
    \\
    \infrule{\rulenamealgtypeAbsLin}{
      \algKindOut{T_1}{\_}
      \\
      \algtypeout{\Gamma_1,x\colon T_1}{e}{T_2}{\Gamma_2}
    }{
      \algtypeout{\Gamma_1}{(\abse[\lin] x{T_1}e)}{(\funt[\lin]{T_1}{T_2})}{\Gamma_2 \div x}
    }
    \\
    \infrule{\rulenamealgtypeAbsUn}{
      \algKindOut{T_1}{\_}
      \\
      \algtypeout{\Gamma_1,x\colon T_1}{e}{T_2}{\Gamma_2}
      \\
      \Gamma_1 = \Gamma_2 \div x
    }{
      \algtypeout{\Gamma_1}{(\abse[\un] x{T_1}e)}{(\funt[\un]{T_1}{T_2})}{\Gamma_1}
    }
    \\
    \infrule{\rulenamealgtypeApp}{
      \algtypeoutnorm{\Gamma_1}{e_1}{\funt TU}{\Gamma_2}
      \\
      \algtypein{\Gamma_2}{e_2}{T}{\Gamma_3}
    }{
      \algtypeout{\Gamma_1}{\appe{e_1}{e_2}}{U}{\Gamma_3}
    }
    \\
    \infrule{\rulenamealgtypeTAbs}{
      \algtypeout[\Delta,a\colon\kind]{\Gamma_1}{e}{T}{\Gamma_2}
      \\
      a\notin\free(\Gamma_2)
    }{
      \algtypeout{\Gamma_1}{\tabse a\kind e}{\forallt a \kind T}{\Gamma_2}
    }
    \\
    \infrule{\rulenamealgtypeTApp}{
      \algtypeoutnorm{\Gamma_1}{e}{\forallt a \kind U}{\Gamma_2}
      \\
      \algKindIn T \kind
    }{
      \algtypeout {\Gamma_1}{\tappe eT}{\subs T a U} \Gamma_2
    } 
 \end{gather*}
   \caption{Algorithmic type checking (synthesis)}
   \label{fig:alg-typing}
\end{figure}

\begin{figure}[t!]  
  \declrel{Type synthesis}{$\algtypeout[\Delta_\In]{\Gamma_\In}{e_\In}{T_\Out}{\Gamma_\Out}$}
  \begin{gather*}
    \infrule{\rulenamealgtypeRcd}{
      \algtypeout{\Gamma_1}{e_1}{T_1}{\Gamma_2}
      \\
      \cdots
      \\
      \algtypeout{\Gamma_n}{e_n}{T_n}{\Gamma_{n+1}}
    }{
      \algtypeout{\Gamma_1}{\recorde \ell e{\{1,..,n\}}}{\recordt \ell T{\{1,..,n\}}}{\Gamma_{n+1}}
    }
    \\
    \infrule{\rulenamealgtypeLet}{
      \algtypeoutnorm{\Gamma_1}{e_1}{\recordt{\ell}{T}{L}}{\Gamma_2}
      \\
      \algtypeout{\Gamma_2, (x_\ell\colon T_\ell)_{\ell\in L}}{e_2}{U}{\Gamma_3}
    }{
      \algtypeout{\Gamma_1}{\lete \ell xL{e_1}{e_2}}{U}{\Gamma_3} \div (x_\ell)_{\ell\in L}
    }
    \\
    \changed{\infrule{\rulenamealgtypeUnitElim}{
      \algtypein{\Gamma_1}{e_1}{\unitk}{\Gamma_2}
      \\
      \algtypeout{\Gamma_2}{e_2}{T}{\Gamma_3}
    }{
      \algtypeout{\Gamma_1}{\unlete {\unite}{e_1}{e_2}}{T}{\Gamma_3} 
    }}
    \\ 
    \infrule{\rulenamealgtypeVariant}{
      \algKindIn {\variantt \ell T L} \kindtl
      \\
      \algtypein{\Gamma_1}{e}{T_k}{\Gamma_2}
      \\
      k \in L
    }
    {\algtypeout{\Gamma_1}{\injecte {k^{\variantt \ell T L}} e}{\variantt \ell T L}{\Gamma_2}}
    \\ 
    \infrule{\rulenamealgtypeCase}{
      \algtypeoutnorm{\Gamma_1}{e}{\variantt \ell T L}{\Gamma_2}
      \\
      \algtypeoutnorm{\Gamma_2}{e_\ell}{\funt[\changed{\lin}]{T_\ell}{U_\ell}}{\Gamma_\ell}
      \\
      \algequivin{U_k}{U_\ell}    
      \\
      \algequivin{\Gamma_k}{\Gamma_\ell} 
      \\
      k\in L
      \\
      (\forall\ell\in L)
    }{
      \algtypeout{\Gamma_1}{\casee{e}{\{\ell\rightarrow e_\ell\}_{\ell\in L}}}{U_k}{\Gamma_k}
    }
    \\ 
    \infrule{\rulenamealgtypeNew}
     {\algKindIn[\Empty] T \kindsl}{
      \algtypeout \Gamma {\newe{T}} {\pairt T {\dual T}} \Gamma
    }
    \\ 
    \infrule{\rulenamealgtypeSel}
    {
      \algKindIn{\ichoicet{\recordt \ell T L}} \kindsl
      \\
      \changed{(k\in L)}
    }
    {\algtypeout \Gamma{\selecte {k^{\ichoicet{\recordt \ell T L}}} {}}{(\funt{\ichoicet{\recordt \ell T L}}{T_k})}
      \Gamma}
    \\
    \infrule{\rulenamealgtypeMatch}{
      \algtypeoutnorm{\Gamma_1} e {\echoicet {\recordt \ell T L}} {\Gamma_2}
      \\
      \algtypeoutnorm{\Gamma_2}{e_\ell}{\funt[\changed{\lin}]{T_\ell}{U_\ell}}{\Gamma_\ell}
      \\
      \algequivin{U_k}{U_\ell}
      \\
      \algequivin{\Gamma_k}{\Gamma_\ell}
      \\
      k\in L
      \\
      (\forall\ell\in L)
    }{
      \algtypeout{\Gamma_1}{\matche e {\{\ell
          \rightarrow e_\ell\}_{\ell\in L}}}{U_k}{\Gamma_k}
    }    
  \end{gather*}
  \declrel{Type check-against}{$\algtypein[\Delta_\In]{\Gamma_\In}{e_\In}{T_\In}{\Gamma_\Out}$}
  \begin{gather*}
    \infrule{\rulenameAlgtypeEq}{
      \algtypeout {\Gamma_1} e U {\Gamma_2}
      \\
      \algequivin T U
    }{
      \algtypein {\Gamma_1} e T {\Gamma_2}
    }
  \end{gather*}
   \caption{Algorithmic type checking (continued, synthesis and check against)}
  \label{fig:alg-typing-two}
\end{figure}


The typing rules in \cref{fig:typing-exps} provide for a declarative description
of well-typed programs. They cannot however be directly implemented due to two
difficulties:
\begin{itemize}
\item The non-deterministic nature of context split, and
\item The necessity of guessing the type for the injection expression operators,
  rules \rulenametypeVariant and \rulenametypeSel in \cref{fig:typing-exps}.
\end{itemize}

We address the first difficulty by restructuring the typing rules so that we do
not have to guess the correct split.
Rather than guessing how to split the context for each subexpression, we use
the entire context when checking the first subexpression, obtaining as a
result the unused
part, which is then passed to the next subexpression~\cite{DBLP:journals/jfp/Mackie94,DBLP:conf/sfm/Vasconcelos09,walker:substructural-type-systems}.
To avoid guessing the type for the two injection expressions, we assume these
are explicitly typed, as in $\selecte {k^{\ichoicet{\recordt \ell T L}}} e$ and
$\injecte {k^{\variantt \ell T L}} e$.

The type synthesis and check-against rules are in
\cref{fig:alg-typing,fig:alg-typing-two}. The judgements are now of form
$\algtypeout[\Delta_\In]{\Gamma_\In}{e_\In}{T_\Out}{\Gamma_\Out}$ for the
synthesis relation, and
$\algtypein[\Delta_\In]{\Gamma_\In}{e_\In}{T_\In}{\Gamma_\Out}$ for the check
against relation.
\changed{We abbreviate type synthesis $\algtypeout{\Gamma_1}{e}{T}{\Gamma_2}$
  followed by normalisation $\normalisation TU$ as
  $\algtypeoutnorm{\Gamma_1}{e}{U}{\Gamma_2}$.}
Most of the rules are simple adaptations of the declarative typing rules to a
bidirectional context. We describe the most relevant.
Rule \rulenamealgtypeAbsLin synthesises the kind of $T_1$. Then, it synthesises
a type for the body of the abstraction under the extended context
$\Gamma_1, x \colon T_1$. The result is a type and a context $\Gamma_2$, from
which the variable $x$ should be removed, represented by $\Gamma_2 \div x$. The
rule \rulenamealgtypeAbsUn is similar but it checks that no linear resources in
the original context $\Gamma_1$ are used, a constraint ensured by
$\Gamma_1 = \Gamma_2 \div x$.

The rule \rulenamealgtypeApp first synthesises type $T$ from expression $e_1$.
Type normalisation exposes the top-level type constructor from type $T$ which is
expected to be a function type $\funt{T}{U}$. Then, it checks function $e_2$
against argument $T$; the result is type $U$.
Rule TA-Case starts by synthesising a type T for expression $e$ that normalises
to a type of the form $\variantt \ell T L$. The rule then synthesises a type for
each branch; they should all be functions $\funt[\lin]{T_\ell}{U_\ell}$. To
ensure that all branches are equivalent, the rule selects a label $k$ in $L$ and
checks that $\algequivin{U_k}{U_\ell}$ for all $\ell$ in $L$. To ensure that all
branches consume the same linear resources, the rule also checks that the final
contexts are equivalent, $\algequivin{\Gamma_k}{\Gamma_\ell}$.


\subsection{Correctness of Algorithmic Type Checking}

\paragraph{Soundness}


Algorithmic monotonicity relates the output against the input of the algorithmic
type checking, and it is broadly used in the remaining proofs.

\begin{lemma}[Algorithmic monotonicity]\
  \begin{enumerate}
  \item\label{itm:synthesis-mono} If $\algtypeout{\Gamma_1}{e}{T}{\Gamma_2}$, then
    $\mathcal U(\Gamma_1)=\mathcal U(\Gamma_2)$ and
    $\mathcal L(\Gamma_2) \subseteq \mathcal L(\Gamma_1)$.
  \item\label{itm:checkagainst-mono} If $\algtypein{\Gamma_1}{e}{T}{\Gamma_2}$,
    then $\mathcal U(\Gamma_1)=\mathcal U(\Gamma_2)$ and
    $\mathcal L(\Gamma_2) \subseteq \mathcal L(\Gamma_1)$.
  \end{enumerate}
  \label{lem:monotonicity}
\end{lemma}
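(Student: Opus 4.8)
The plan is to prove both statements simultaneously by mutual rule induction on the derivations of $\algtypeout{\Gamma_1}{e}{T}{\Gamma_2}$ and $\algtypein{\Gamma_1}{e}{T}{\Gamma_2}$. The guiding intuition is that no algorithmic rule ever installs in the ambient context an entry that was not already there — fresh bound variables are added only when descending into a subderivation and are discarded on the way out — unrestricted entries are never consumed, and linear entries may only disappear. Since the abbreviation $\algtypeoutnorm{\Gamma_1}{e}{U}{\Gamma_2}$ stands for $\algtypeout{\Gamma_1}{e}{T}{\Gamma_2}$ followed by $\normalisation{T}{U}$, and type normalisation (\cref{fig:norm}) leaves the context untouched, every premise written with $\synthetise\Downarrow$ satisfies the induction hypothesis verbatim; I shall use this without further comment.

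The leaf and pass-through cases are immediate. Rules \rulenamealgtypeConst, \rulenamealgtypeNew, \rulenamealgtypeSel and \rulenamealgtypeVarUn return the incoming context unchanged, so both conclusions hold trivially. For \rulenamealgtypeVarLin the output drops the binding $x\colon T$; the premise $\algKindOut{T}{\prekind^\lin}$ forces $T$ to be linear — by soundness and completeness of algorithmic kinding (\cref{lem:kinding-correctness}) together with the syntax-directedness of kind synthesis, $T$ cannot also be of an unrestricted kind — so $x\colon T\notin\mathcal U(\Gamma, x\colon T)$, whence $\mathcal U$ is preserved while $\mathcal L$ can only shrink. The rules that merely thread the context through a sequence of subexpressions — \rulenamealgtypeApp, \rulenamealgtypeTApp, \rulenamealgtypeRcd, \rulenamealgtypeUnitElim, \rulenamealgtypeVariant, \rulenamealgtypeCase, \rulenamealgtypeMatch and the check-against rule \rulenameAlgtypeEq — follow by chaining the induction hypotheses, using transitivity of equality for the $\mathcal U$ components and of inclusion for the $\mathcal L$ components. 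In \rulenamealgtypeCase and \rulenamealgtypeMatch the output context is the $k$-th branch context $\Gamma_k$, so only the induction hypotheses on the scrutinee and on the $k$-th branch are needed; the equivalence side conditions $\algequivin{\Gamma_k}{\Gamma_\ell}$ and $\algequivin{U_k}{U_\ell}$ play no role. For \rulenamealgtypeTAbs the kinding context grows by $a\colon\kind$, to which the statement is insensitive, so the induction hypothesis applies directly.

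The only cases requiring care are the three that end with a context difference: \rulenamealgtypeAbsLin, \rulenamealgtypeAbsUn and \rulenamealgtypeLet. There the body is typed under a context extended with the freshly bound variable(s), and the output removes them with $\div$. The key ingredient is \cref{lem:ctx-diff}: whenever $\ctxdiff{\Gamma}{x}{\Gamma'}$ holds, either $x\notin\dom(\Gamma)$, or $x\colon T\in\Gamma$ with $\istype{\Delta}{T}{\kindtu}$, and in both cases $\mathcal L(\Gamma)=\mathcal L(\Gamma')$ while $\mathcal U(\Gamma')$ is $\mathcal U(\Gamma)$ with $x\colon T$ deleted. Combining this with the induction hypothesis on the body — which gives $\mathcal U(\Gamma_1, x\colon T_1)=\mathcal U(\Gamma_2)$ and $\mathcal L(\Gamma_2)\subseteq\mathcal L(\Gamma_1, x\colon T_1)$ — and splitting on whether $T_1$ is linear or unrestricted (recalling that $x$ is fresh, hence $x\notin\dom(\Gamma_1)$), one reads off $\mathcal U(\Gamma_2\div x)=\mathcal U(\Gamma_1)$ and $\mathcal L(\Gamma_2\div x)\subseteq\mathcal L(\Gamma_1)$; the case \rulenamealgtypeAbsUn adds the constraint $\Gamma_1=\Gamma_2\div x$, which makes the conclusion immediate. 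For \rulenamealgtypeLet one first extends \cref{lem:ctx-diff} to the iterated difference $\div\,(x_\ell)_{\ell\in L}$ by a trivial induction on $L$ and then argues as above. I expect this bookkeeping around $\div$ — in particular the linear/unrestricted case analysis on the bound variables — to be the only non-mechanical part of the proof; everything else is a direct appeal to the induction hypotheses.
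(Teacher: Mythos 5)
Your proposal is correct and follows essentially the same route as the paper: mutual rule induction on the two judgements, with the only delicate cases being those ending in a context difference, resolved via the properties of context difference (\cref{lem:ctx-diff}) together with a case split on the linear/unrestricted nature of the bound variable's type — exactly the case (\rulenamealgtypeAbsLin) the paper chooses to detail. Your additional observations (that normalisation leaves contexts untouched, that only the scrutinee and the $k$-th branch matter in \rulenamealgtypeCase, and that the iterated difference in \rulenamealgtypeLet needs a trivial auxiliary induction) are all sound refinements of the same argument.
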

\begin{proof}
  The proof follows by mutual rule induction on the hypothesis, using the
  properties of context difference (\cref{lem:ctx-diff}). For
  \cref{itm:synthesis-mono}, we show the case when the derivation ends with
  \rulenamealgtypeAbsLin. To show that
  $\mathcal U(\Gamma_1) = \mathcal U(\Gamma_2 \div x)$ and
  $\mathcal L(\Gamma_2 \div x) \subseteq \mathcal L(\Gamma_1)$, we start by
  applying induction hypothesis on the second premise and get
  $\mathcal U(\Gamma_1, x\colon T) = \mathcal U(\Gamma_2)$ and
  $\mathcal L(\Gamma_2) \subseteq \mathcal L(\Gamma_1, x\colon T)$. Then, we
  have two cases regarding the un/lin nature of $T$. Consider the case in which
  we have $\algKindOut{T}{\kindtu}$. We know that
  $\mathcal U(\Gamma_1, x\colon T) \div x = \mathcal U(\Gamma_2) \div x$ and so,
  $\mathcal U(\Gamma_1) = \mathcal U(\Gamma_2 \div x)$. Since we have that
  $\mathcal L(\Gamma_2) = \mathcal L(\Gamma_2 \div x) \subseteq \mathcal
  L(\Gamma_1, x\colon T) = \mathcal L(\Gamma_1)$, we conclude
  $\mathcal L(\Gamma_2 \div x) \subseteq \mathcal L(\Gamma_1)$. The remaining
  cases are similar.
\end{proof}

Algorithmic linear strengthening allows removing unused linear assumptions from
both the input context and the output contexts. 

\begin{lemma}[Algorithmic linear strengthening]\
  Suppose that $\istype \Delta {U} {\kindtl}$.
  \begin{enumerate}
  \item If $\algtypeout{\Gamma_1, x \colon U}{e}{T}{\Gamma_2, x \colon U}$,
    then $\algtypeout{\Gamma_1}{e}{T}{\Gamma_2}$.
  \item If $\algtypein{\Gamma_1, x \colon U}{e}{T}{\Gamma_2, x \colon U}$,
    then $\algtypein{\Gamma_1}{e}{T}{\Gamma_2}$.
  \end{enumerate}
  \label{lem:alg-strengthening}
\end{lemma}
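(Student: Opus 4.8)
The plan is to prove both items by mutual rule induction on the algorithmic typing derivation, exactly as for algorithmic monotonicity (\cref{lem:monotonicity}). The statement says: if $x\colon U$ appears unchanged in both the input and output contexts of a derivation, and $U$ is linear, then $x$ was never actually touched, so we can drop it from the whole derivation. The hypothesis $\istype\Delta U\kindtl$ together with algorithmic monotonicity (\cref{lem:monotonicity}) is the key leverage: since $U$ is linear, $x\colon U\in\mathcal L(\Gamma_1,x\colon U)$, and $x\colon U$ also being in the output means, by the fact that $\mathcal L(\Gamma_2)\subseteq\mathcal L(\Gamma_1)$, that $x$ genuinely survives — an expression that had consumed $x$ could not hand it back. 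From there every rule either passes the context through untouched (constants, selection, $\newe{}$, variable rules for a different variable, type abstraction, variant injection modulo recursion into subderivations) or threads it through subderivations to which we apply the induction hypothesis.

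**Key steps, in order.** First I would set up the mutual induction and dispatch the axioms and leaf rules (\rulenamealgtypeConst, \rulenamealgtypeNew, \rulenamealgtypeSel): here $\Gamma_1=\Gamma_2$, the context is returned verbatim, and removing $x\colon U$ from both sides trivially yields a valid instance of the same rule. For \rulenamealgtypeVarLin and \rulenamealgtypeVarUn the variable being typed cannot be $x$ (were it $x$, rule \rulenamealgtypeVarLin would delete $x\colon U$ from the output, contradicting $x\colon U\in\Gamma_2,x\colon U$ — here I use that $U$ is linear so \rulenamealgtypeVarUn does not apply), so $x\colon U$ sits passively in the context and strengthening is immediate. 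Next, the single-premise and sequential-threading rules — \rulenamealgtypeApp, \rulenamealgtypeTApp, \rulenamealgtypeTAbs, \rulenamealgtypeRcd, \rulenamealgtypeLet, \rulenamealgtypeUnitElim, \rulenameAlgtypeEq, and the check-against rule — follow by chaining the induction hypothesis across the premises: monotonicity guarantees that if $x\colon U$ is in the first input and the last output then it is present in every intermediate context (it cannot be consumed and then reappear), so the IH applies to each premise in turn; type normalisation, kinding, and equivalence premises are unaffected since they do not involve $\Gamma$. The abstraction rules \rulenamealgtypeAbsLin and \rulenamealgtypeAbsUn need a little care with the bound variable $x'$ of the abstraction (distinct from our $x$ by $\alpha$-convention) and the $\div x'$ operation, but the properties of context difference (\cref{lem:ctx-diff}) show $\div x'$ commutes with deleting $x\colon U$.

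**Main obstacle.** The delicate cases are \rulenamealgtypeCase and \rulenamealgtypeMatch, where the context branches: each $e_\ell$ is typed starting from the common $\Gamma_2$ and the rule requires $\algequivin{\Gamma_k}{\Gamma_\ell}$ for all $\ell$. I must argue that $x\colon U$ occurs in $\Gamma_2$ and in every $\Gamma_\ell$, not merely in $\Gamma_k$. From the conclusion we know $x\colon U\in\Gamma_1,x\colon U$ and $x\colon U\in\Gamma_k,x\colon U$; monotonicity on the scrutinee subderivation and on each branch gives $x\colon U\in\Gamma_2$ (linear, surviving) and then $x\colon U\in\Gamma_\ell$ follows because context equivalence $\algequivin{\Gamma_k}{\Gamma_\ell}$ preserves domains and the kinds of entries — so $x\colon U$ cannot be equivalent to a $\Gamma_\ell$ that lacks it or assigns $x$ a different (necessarily equivalent, hence still linear) type. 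Having established $x\colon U$ in every context of the premises, I apply the IH to each, and the side conditions $\algequivin{U_k}{U_\ell}$ and $\algequivin{\Gamma_k\setminus\{x\colon U\}}{\Gamma_\ell\setminus\{x\colon U\}}$ are inherited (weakening/strengthening for context equivalence on a linear entry present in both). Reassembling with \rulenamealgtypeCase (resp. \rulenamealgtypeMatch) closes the case. I expect the bookkeeping around context equivalence preserving the membership and kind of $x\colon U$ to be the one spot requiring an explicit small lemma or remark rather than a one-line appeal.
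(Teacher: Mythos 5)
Your proposal is correct and follows essentially the same route as the paper: mutual rule induction on the two judgements, using algorithmic monotonicity (\cref{lem:monotonicity}) to show that the linear entry $x\colon U$ persists through every intermediate context so the induction hypothesis can be threaded across premises, and the properties of context difference (\cref{lem:ctx-diff}) for the abstraction rules. The paper only details the application case, but its argument there is exactly your threading step, so no further comment is needed.
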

\begin{proof}
  \sloppy
  The proof is by mutual rule induction on both hypotheses and uses the
  properties of context difference (\cref{lem:ctx-diff}) and monotonicity
  (\cref{lem:monotonicity}). Let us show one example. When the derivation ends
  with rule \rulenamealgtypeApp, suppose that
  $\algtypeout{\Gamma_1, x\colon U}{\appe{e_1}{e_2}}{T_2}{\Gamma_3, x\colon U}$.
  The premises to the rule are
  \begin{enumerate*}[label=\emph{(\alph*)}]
  \item$\algtypeoutnorm{\Gamma_1,x\colon U}{e_1}{\funt {T_1}{T_2}}{\Gamma_2}$ and
  \item\label{itm:pre2}$\algtypein{\Gamma_2}{e_2}{T_1}{\Gamma_3, x\colon U}$
  \end{enumerate*}. The first premise is an
  abbreviation for
  \begin{enumerate*}[label=\emph{(\alph*)}, resume]
    \item\label{itm:pre1}$\algtypeout{\Gamma_1,x\colon U}{e_1}{T}{\Gamma_2}$ and
    \item\label{itm:res}$\normalisation T{\funt {T_1}{T_2}}$
   \end{enumerate*}. Monotonicity gives
  $\mathcal L (\Gamma_3,x\colon U) \subseteq \mathcal L (\Gamma_2)$ and thus
  $x\colon {T_1} \in \Gamma_2$. Let $\Gamma_2 = \Gamma_2',x\colon U$.
  Induction on~\cref{itm:pre1,itm:pre2} yields
  \begin{enumerate*}[label=\emph{(\alph*)}, resume]
  \item\label{itm:res2}$\algtypein{\Gamma_2'}{e_2}{T_1}{\Gamma_3}$ and
  \item\label{itm:res1}$\algtypeout{\Gamma_1}{e_1}{T}{\Gamma_2'}$
  \end{enumerate*}, respectively. Conclude with rule
  \rulenamealgtypeApp and~\cref{itm:res,itm:res1,itm:res2}.
  %
\end{proof}


\begin{theorem}[Algorithmic soundness]\ Suppose that $\isCUn{\Gamma_2}$.
  \begin{enumerate}
  \item 
    \label{it:algsoundsynt}
    If $\algtypeout{\Gamma_1}{e}{T}{\Gamma_2}$,
    then $\isExpr{\Gamma_1}{e}{T}$.
  \item 
    \label{it:algsoundcheck}
    If $\algtypein{\Gamma_1}{e}{T}{\Gamma_2}$,
    then $\isExpr{\Gamma_1}{e}{T}$.
  \end{enumerate}
\end{theorem}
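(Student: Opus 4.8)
The plan is to prove the two statements simultaneously by mutual rule induction on the algorithmic derivation, $\algtypeout{\Gamma_1}{e}{T}{\Gamma_2}$ or $\algtypein{\Gamma_1}{e}{T}{\Gamma_2}$: in each case invert the concluding algorithmic rule, feed its premises to the induction hypothesis, and reassemble a declarative derivation. Two systematic mismatches have to be bridged. First, the elimination rules work on \emph{normalised} types: whenever a premise reads $\algtypeoutnorm{\Gamma_1}{e'}{S}{\Gamma'}$ it abbreviates $\algtypeout{\Gamma_1}{e'}{S'}{\Gamma'}$ together with $\normalisation{S'}{S}$; the induction hypothesis delivers $\isExpr{\cdot}{e'}{S'}$, agreement (\cref{item:agree-exps}) makes $S'$ well formed, and \cref{lem:normProps} gives $\isequiv[\Empty]\Delta{S'}{S}{\kind}$, so a single use of \rulenametypeEq replaces $S'$ by $S$. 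Rule \rulenameAlgtypeEq is discharged the same way, using its equivalence premise and \rulenametypeEq directly.

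The real work is reconciling the \emph{leftover} discipline of the algorithm with the \emph{context split} of the declarative system. By algorithmic monotonicity (\cref{lem:monotonicity}), any derivable $\algtypeout{\Gamma_1}{e}{T}{\Gamma_2}$ has $\mathcal U(\Gamma_1)=\mathcal U(\Gamma_2)$ and $\mathcal L(\Gamma_2)\subseteq\mathcal L(\Gamma_1)$; writing $\Gamma_1\setminus\mathcal L(\Gamma_2)$ for $\Gamma_1$ with the linear bindings of $\Gamma_2$ deleted, one gets a valid split $\splitctx[\Delta]{\Gamma_1}{(\Gamma_1\setminus\mathcal L(\Gamma_2))}{\Gamma_2}$. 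For the atomic rules (\rulenamealgtypeConst, \rulenamealgtypeVarLin, \rulenamealgtypeVarUn, \rulenamealgtypeNew, \rulenamealgtypeSel) the output context is the input, or the input minus a linear binding, so the hypothesis $\isCUn{\Gamma_2}$ supplies exactly the unrestrictedness side-conditions of \rulenametypeConst, \rulenametypeVar, \rulenametypeNew, \rulenametypeSel. For the binders (\rulenamealgtypeAbsLin, \rulenamealgtypeAbsUn, \rulenamealgtypeTAbs, \rulenamealgtypeLet, \rulenamealgtypeUnitElim, \rulenamealgtypeVariant) one first uses \cref{lem:ctx-diff} to observe that $\isCUn{\Gamma_2\div x}$ (resp. $\div$ of a tuple), together with the fact that a consumed linear binder has vanished from $\Gamma_2$, forces $\isCUn{}$ of the intermediate output context, so the induction hypothesis applies to the body; the matching declarative rule then recombines the pieces — for the $\un$-abstraction the multiplicity side-condition $\isCtx{\Gamma_1}{\kindtu}$ of \rulenametypeAbs is precisely the hypothesis $\isCUn{\Gamma_1}$.

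For the genuinely composite synthesis rules — \rulenamealgtypeApp, \rulenamealgtypeTApp, \rulenamealgtypeLet, \rulenamealgtypeRcd, \rulenamealgtypeCase, \rulenamealgtypeMatch — the first premise produces a leftover $\Gamma_2$ that need not be unrestricted, so the induction hypothesis cannot be applied to it directly. Here we invoke algorithmic linear strengthening (\cref{lem:alg-strengthening}): by monotonicity the linear bindings of $\Gamma_2$ travel unchanged from $\Gamma_1$ to $\Gamma_2$ and so are unused by the first subexpression; iterating \cref{lem:alg-strengthening} over them turns $\algtypeout{\Gamma_1}{e_1}{S'}{\Gamma_2}$ into $\algtypeout{\Gamma_1\setminus\mathcal L(\Gamma_2)}{e_1}{S'}{\mathcal U(\Gamma_2)}$, whose output context is unrestricted, and the induction hypothesis yields $\isExpr{\Gamma_1\setminus\mathcal L(\Gamma_2)}{e_1}{S'}$. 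The remaining subexpressions are handled along the threaded contexts — their final contexts being unrestricted either by the theorem's hypothesis (as $\Gamma_3$ in \rulenamealgtypeApp) or by propagating unrestrictedness as above — and the declarative rule is applied with the split $\splitctx[\Delta]{\Gamma_1}{(\Gamma_1\setminus\mathcal L(\Gamma_2))}{\Gamma_2}$ iterated through the chain, using \cref{lem:cSplitProps} to check the composed splits are well defined. For \rulenamealgtypeCase and \rulenamealgtypeMatch one further uses that the branch contexts $\Gamma_\ell$ are pairwise $\equiv$-equivalent (a premise) and hence, since $\Gamma_k$ is unrestricted, all unrestricted (a small auxiliary fact: type equivalence preserves $\kindtu$, by inspection of the rules of \cref{fig:type-equivalence}); then pointwise \rulenametypeEq with $\algequivin{U_k}{U_\ell}$ and \rulenameeqArrow makes every branch check at the common type $\funt[\lin]{T_\ell}{U_k}$ under the common leftover $\Gamma_2$, so one application of \rulenametypeCase / \rulenametypeMatch completes the case.

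The main obstacle is exactly this bookkeeping: maintaining the invariant ``the part of $\Gamma_1$ consumed by $e$ is $\Gamma_1\setminus\mathcal L(\Gamma_2)$, and it splits off cleanly'' while it interacts with (a) the single-variable context difference $\div$ in the binder rules, (b) the iterated use of \cref{lem:alg-strengthening} needed to re-establish unrestrictedness before invoking the induction hypothesis on the first subexpression of a composite rule, and (c) the $\equiv$-equivalence of branch contexts in the choice rules, which must be shown to preserve unrestrictedness and to be transportable through declarative typing by pointwise \rulenametypeEq (appealing to \cref{lem:type-equivalence,lem:agreement-type-equiv}). Once this machinery is set up, the purely functional cases (record construction, type application, session-free application) are routine.
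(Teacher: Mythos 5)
Your proposal is correct and follows essentially the same route as the paper's own proof: mutual rule induction, with \cref{lem:normProps} plus \rulenametypeEq to bridge normalisation, algorithmic linear strengthening (\cref{lem:alg-strengthening}) applied to the first premise of the composite rules so the induction hypothesis becomes applicable, the split $\splitctx[\Delta]{\Gamma_1}{(\Gamma_1\setminus\mathcal L(\Gamma_2))}{\Gamma_2}$ reconstructed via monotonicity, and the branch contexts of \rulenamealgtypeCase/\rulenamealgtypeMatch shown unrestricted and their result types aligned to $U_k$ by pointwise \rulenametypeEq. No gaps to report.
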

\begin{proof}
  By mutual rule induction on the hypotheses.
  
  \Cref{it:algsoundsynt}. When the derivation
  ends with \rulenamealgtypeVarLin, \rulenamealgtypeVarUn, \rulenamealgtypeNew,
  or \rulenamealgtypeSel we use Algorithmic Kinding Soundness
  (\cref{lem:kinding-correctness}) and the corresponding expression formation rules. Rules \rulenamealgtypeApp,
  \rulenamealgtypeRcd, \rulenamealgtypeLet, \rulenamealgtypeCase, and
  \rulenamealgtypeMatch require Monotonicity (\cref{lem:monotonicity}) and
  Algorithmic Linear Strengthening (\cref{lem:alg-strengthening}) to push
  through the result.
  
  We detail the case for rule \rulenamealgtypeCase. We strengthen the first
  premise to obtain
  $\algtypeout{\Gamma_1-\mathcal L({\Gamma_2})}{e}{T}{\Gamma_2-\mathcal
    L({\Gamma_2})}$, then by induction hypothesis we get
  $\isExpr{\Gamma_1-\mathcal L({\Gamma_2})}{e}{T}$. Properties of type
  normalisation (\cref{lem:normProps}) yield
  $\isequiv[\Empty]\Delta{T}{\variantt \ell T L}{\kindt^m}$. Then, by applying
  rule \rulenametypeEq, we get
  $\isExpr{\Gamma_1-\mathcal L({\Gamma_2})}{e}{\variantt \ell T L}$. Using
  agreement for type equivalence lifted to typing contexts we get
  $\isCUn {\Gamma_\ell}$, given that $\isCUn {\Gamma_k}$ and
  $\isequiv[\Empty] {\Gamma_k}{U_k}{U_\ell}\kindtu$. Then, we can apply induction
  hypothesis to the second premise, obtaining
  $\isExpr{\Gamma_2}{e_\ell}{\funt[\lin]{T_\ell}{U_\ell}}$. Since
  $\isequiv[\Empty]\Gamma{U_k}{U_\ell}\kindtu$, we get
  $\isExpr{\Gamma_2}{e_\ell}{\funt[\lin]{T_\ell}{U_k}}$ for $k \in L$. Before
  concluding with rule \rulenametypeCase we need to establish that
  $\Gamma_1 = (\Gamma_1-\mathcal L({\Gamma_2})) \circ \Gamma_2$ is in the
  context split relation. For the unrestricted portion monotonicity yields
  $\mathcal U(\Gamma_1 - \mathcal L (\Gamma_2)) = \mathcal U (\Gamma_2 -
  \mathcal L (\Gamma_2))$ which gives
  $\mathcal U(\Gamma_1) = \mathcal U (\Gamma_2)$. Regarding the linear entries
  we know that
  $\mathcal L (\Gamma_1 - \mathcal L (\Gamma_2)) \cap \mathcal L (\Gamma_2) =
  \emptyset$. Hence,
  $\mathcal L (\Gamma_1) = \mathcal L(\Gamma_1 - \mathcal L (\Gamma_2)) \cup
  \mathcal L (\Gamma_2)$.

  \Cref{it:algsoundcheck}. From the hypothesis, using \rulenameAlgtypeEq, we know that
  $\algtypeout {\Gamma_1} e U {\Gamma_2}$ and $\algequivin T U$. By~\cref{it:algsoundsynt},
  we get $\isExpr{\Gamma_1}{e}{U}$.
  Conclude
  with \rulenametypeEq.
\end{proof}

\paragraph{Completeness}

Weakening allows adding new entries to both the input and output contexts.


\begin{lemma}[Algorithmic weakening]\
  \begin{enumerate}
  \item
    \label{it:alg-weak-it1}
    If $\algtypeout{\Gamma_1}{e}{T}{\Gamma_2}$, then
    $\algtypeout{\Gamma_1, x \colon U}{e}{T}{\Gamma_2, x \colon U}$.
  \item If $\algtypein{\Gamma_1}{e}{T}{\Gamma_2}$, then
    $\algtypein{\Gamma_1, x \colon U}{e}{T}{\Gamma_2, x \colon U}$.
  \end{enumerate}
  \begin{proof}
    By mutual rule induction on the hypotheses. We show the case
    \rulenamealgtypeAbsLin\ for ~\cref{it:alg-weak-it1}. By induction hypothesis 
    we know that
    $\algtypeout{\Gamma_1,x\colon U,y\colon V}{e}{T_2}{\Gamma_2,y\colon V}$.
    From the properties of context difference
    (\cref{fig:context-difference}), we have
    $(\Gamma_2 \div x), y\colon V = (\Gamma_2, y\colon V) \div x$. Conclude
    applying the rule \rulenamealgtypeAbsLin.
  \end{proof}
\end{lemma}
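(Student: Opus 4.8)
The plan is to proceed by simultaneous rule induction on the two algorithmic derivations, threading the extra binding $x\colon U$ (with $x$ chosen fresh for $\Gamma_1$ and, by the $\alpha$-identification convention, for the bound variables occurring in $e$) through both the input and the output context of every rule. For the axiom-shaped rules \rulenamealgtypeConst, \rulenamealgtypeNew and \rulenamealgtypeSel the output context coincides with the input context, so the claim is immediate. For the variable rules \rulenamealgtypeVarLin and \rulenamealgtypeVarUn the consumed variable differs from $x$ (it must, since $x$ is fresh for $\Gamma_1$), so the lookup still succeeds in $\Gamma_1,x\colon U$ and the kinding side condition on the looked-up type is untouched; in both cases the output is obtained from the original output by appending the entry $x\colon U$. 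The rules that merely pass a context from one premise to the next---\rulenamealgtypeApp, \rulenamealgtypeTApp, \rulenamealgtypeRcd, \rulenamealgtypeUnitElim, \rulenamealgtypeVariant and \rulenameAlgtypeEq---follow by applying the induction hypothesis to each premise in turn, noting that the context-independent premises (type normalisation, algorithmic kinding, and algorithmic equivalence) are left unchanged.

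The delicate cases are those that use context difference $\div$ or that force two output contexts to agree. In \rulenamealgtypeAbsLin, after renaming the bound variable $y$ so that $y\neq x$, the induction hypothesis on the body gives $\algtypeout{\Gamma_1,y\colon T_1,x\colon U}{e}{T_2}{\Gamma_2,x\colon U}$, and I would close the case with the commutation identity $(\Gamma_2\div y),x\colon U = (\Gamma_2,x\colon U)\div y$, which is read off directly from the definition of context difference (\cref{fig:context-difference}) since $x\neq y$; \rulenamealgtypeLet is identical, iterating this identity over the bindings $(x_\ell)_{\ell\in L}$. For \rulenamealgtypeAbsUn there is the additional side condition $\Gamma_1=\Gamma_2\div y$, and the same commutation turns it into $\Gamma_1,x\colon U=(\Gamma_2,x\colon U)\div y$, as required. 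For \rulenamealgtypeCase and \rulenamealgtypeMatch the induction hypothesis applies to the scrutinee and to every branch (each branch begins from the shared context $\Gamma_2$), the synthesised branch types are unchanged, and the premises of the form $\algequivin{\Gamma_k}{\Gamma_\ell}$ stay valid because extending both sides by the same fresh entry $x\colon U$ preserves equivalence---reflexivity of type equivalence on the new binding (\cref{lem:type-equivalence}), lifted pointwise to contexts---so the output $\Gamma_k$ becomes $\Gamma_k,x\colon U$. Finally \rulenamealgtypeTAbs only extends the kinding context $\Delta$, which is orthogonal to weakening the term context; its side condition $a\notin\free(\Gamma_2)$ survives once $a$ is also taken fresh for $U$.

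I expect the only real friction to be the bookkeeping around $\div$: one must isolate and reuse the fact that removing one variable commutes with inserting a fresh one, and be disciplined about the freshness of $x$ (and of the $\alpha$-bound variables in abstractions, lets and type abstractions) so that all the rearrangements of context entries are legitimate. This is shallow but error-prone; every other case is a direct replay of the corresponding algorithmic rule.
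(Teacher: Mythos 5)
Your proposal is correct and follows essentially the same route as the paper: mutual rule induction on the two judgements, with the only non-trivial step being the commutation of context difference with the insertion of a fresh binding, $(\Gamma_2\div y),x\colon U=(\Gamma_2,x\colon U)\div y$, exactly the identity the paper invokes in its single displayed case (\rulenamealgtypeAbsLin). You simply spell out more cases than the paper does; no gap.
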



\begin{theorem}[Algorithmic completeness]\
  \begin{enumerate}
  \item 
    \label{it:algcompsynth}
    If $\isExpr{\Gamma_1}{e}{T}$, then $\algtypeout{\Gamma_1}{e}{U}{\Gamma_2}$
    and $\isCUn{\Gamma_2}$ and $\isequiv[\Empty]{\Delta}{T}{U}{\kind}$.
  \item 
    \label{it:algcompcheck}
    If $\isExpr{\Gamma_1}{e}{T}$, then $\algtypein{\Gamma_1}{e}{T}{\Gamma_2}$
    and $\isCUn{\Gamma_2}$.
  \end{enumerate}
\end{theorem}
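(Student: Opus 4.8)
The plan is to prove both items simultaneously by mutual rule induction on the declarative derivation $\isExpr{\Gamma_1}{e}{T}$, with the synthesis item carrying all the weight: given item~\cref{it:algcompsynth}, item~\cref{it:algcompcheck} follows by running synthesis to obtain $\algtypeout{\Gamma_1}{e}{U}{\Gamma_2}$ with $\isCUn{\Gamma_2}$ and $\isequiv[\Empty]{\Delta}{T}{U}{\kind}$, then applying \rulenameAlgtypeEq, whose second premise is exactly that equivalence. So the real content is a case analysis on the last rule of $\isExpr{\Gamma_1}{e}{T}$ for item~\cref{it:algcompsynth}. The recurring infrastructure is: algorithmic monotonicity (\cref{lem:monotonicity}), which guarantees leftover contexts only shed linear entries while $\mathcal U$ is preserved --- this is what propagates the $\isCUn{\Gamma_2}$ obligation through every case; algorithmic weakening, to enlarge an input context (and symmetrically its output) so that an induction hypothesis obtained from a sub-context can be reused under the full threaded context; kinding completeness (\cref{lem:kinding-correctness}) for the kind side conditions of \rulenamealgtypeNew, \rulenamealgtypeSel, \rulenamealgtypeTApp, \rulenamealgtypeAbsLin, \rulenamealgtypeAbsUn\ (and to choose, via the synthesised multiplicity, between \rulenamealgtypeVarLin\ and \rulenamealgtypeVarUn); the properties of context difference (\cref{lem:ctx-diff}) for the $\div$ operations in the $\lambda$- and $\letk$-rules; and type normalisation with its structural guarantees (\cref{lem:normProps}) for the elimination rules. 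The base cases \rulenametypeConst, \rulenametypeVar, \rulenametypeNew, \rulenametypeSel\ are direct (the declarative context is already unrestricted, or becomes so after discarding the looked-up linear variable), and \rulenametypeEq\ is handled by transitivity of $\TypeEquiv$ (\cref{lem:type-equivalence}).

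The heart of the argument is reconciling the declarative system's nondeterministic context split with the algorithm's left-to-right threading, which appears in \rulenametypeApp, \rulenametypeRcd, \rulenametypeLet, \rulenametypeUnitElim, \rulenametypeCase, and \rulenametypeMatch. The scheme is uniform: if a declarative rule splits $\Gamma_1 = \Delta_1 \circ \Delta_2$ and types the first subexpression under $\Delta_1$, we run the algorithm on that subexpression with all of $\Gamma_1$; the induction hypothesis yields a derivation from $\Delta_1$, and algorithmic weakening --- applied once per linear entry of $\Delta_2$, its unrestricted entries already occurring in $\Delta_1$ by \cref{lem:cSplitProps} --- lifts it to a derivation from $\Gamma_1$ whose leftover context is, up to the shared unrestricted part that \cref{lem:cSplitProps} equates across the split, exactly $\Delta_2$. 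We then recurse on the remaining subexpression starting from that leftover, re-associating iterated splits via \cref{lem:cSplitProps} when there are more than two subexpressions (as in \rulenametypeRcd\ and the branch premises of \rulenametypeCase/\rulenametypeMatch). The final leftover $\Gamma_2$ is the leftover of the last subexpression, hence unrestricted by its induction hypothesis.

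The elimination rules additionally need the algorithm to expose the expected top-level constructor from the (possibly recursive) synthesised type, which is where normalisation enters. For \rulenametypeApp, the induction hypothesis gives a synthesised $V$ with $\isequiv[\Empty]{\Delta}{V}{\funt{T_1}{T_2}}{\kind}$; normalising yields $\normalisation{V}{V'}$ with $\isequiv[\Empty]{\Delta}{V}{V'}{\kind}$ and $V'$ in the shape of \cref{lem:normProps} (not headed by $\mu$; a leading factor of a composition is a message or a choice). The new ingredient is a \emph{rigidity} lemma for type equivalence on such normal forms: since neither $V'$ nor $\funt{T_1}{T_2}$ is a $\mu$-type, the last rule of $\isequiv[\Empty]{\Delta}{V'}{\funt{T_1}{T_2}}{\kind}$ cannot be \rulenameeqFix\ (empty visited set), \rulenameeqRecL, or \rulenameeqRecR, so it is a congruence rule, forcing $V' = \funt{T_1'}{T_2'}$ with $\isequiv[\Empty]{\Delta}{T_1}{T_1'}{\kind_1}$ and $\isequiv[\Empty]{\Delta}{T_2}{T_2'}{\kind_2}$; the analogous statements hold for record-, variant-, $\forall$-, $\unitt$-, and (with extra care, since the last step is then \rulenameeqST\ and a leading choice may be followed by a composition) external-choice-shaped expectations. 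The check-against premise of \rulenamealgtypeApp\ is then discharged by coercing the declarative type of $e_2$ from $T_1$ to $T_1'$ with \rulenametypeEq\ and invoking the induction hypothesis for check-against, and \rulenamealgtypeApp\ returns $T_2' \TypeEquiv T_2$. \rulenametypeTApp\ follows the same template, additionally using that substitution preserves equivalence (\cref{lem:type-equivalence-substitution}) to relate $\subs{T'}{a}{U}$ and $\subs{T'}{a}{U'}$; \rulenametypeLet\ and \rulenametypeUnitElim\ are analogous. For \rulenametypeCase\ and \rulenametypeMatch, all declarative branches share the result type $T$ and the leftover $\Gamma_2$, so the per-branch synthesised codomains $U_\ell$ and leftovers $\Gamma_\ell$ are pairwise equivalent, making the side conditions $\algequivin{U_k}{U_\ell}$ and $\algequivin{\Gamma_k}{\Gamma_\ell}$ hold (using that $\TypeEquiv$ lifted pointwise to contexts is an equivalence). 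For \rulenametypeVariant\ and \rulenametypeSel\ we rely on the standing assumption of \cref{sec:algorithmic-typing} that injections and selections carry the annotation the algorithmic rules expect, so completeness is understood for the explicitly annotated language.

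Two points should be the main obstacles. First, the context bookkeeping: because contexts are ordered lists, ``the leftover of the first subexpression is the input of the next'' requires either a mild quotient of contexts up to reordering or a carefully placed weakening, and the reassociation of $n$-ary splits must be threaded through \rulenametypeRcd, \rulenametypeCase, and \rulenametypeMatch. Second, the rigidity/inversion lemma for type equivalence on normalised types: it is what actually licenses every use of \rulenamealgtypeApp, \rulenamealgtypeTApp, \rulenamealgtypeLet, \rulenamealgtypeUnitElim, \rulenamealgtypeCase, and \rulenamealgtypeMatch, and it must be established separately, with genuine care in the session subcase (where the last equivalence step is \rulenameeqST\ and one must appeal to \cref{lem:normProps} and the choice/composition laws of \cref{lemma:laws} to pin down the shape of the normal form). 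The per-branch reconciliation of codomains and leftover contexts in \rulenametypeCase/\rulenametypeMatch\ is the remaining routine-but-fussy part.
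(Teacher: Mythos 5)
Your proposal is correct and follows essentially the same route as the paper: mutual rule induction on the declarative derivation, with monotonicity, algorithmic weakening, type normalisation, kinding completeness and the context-split properties doing the work of reconciling the nondeterministic declarative split with the algorithm's left-to-right threading, and the check-against item discharged at the end by \rulenameAlgtypeEq. The one point where you are more explicit than the paper is the inversion/rigidity lemma for type equivalence on normalised types: the paper's proof simply asserts $\normalisation U {\variantt \ell T L}$ (syntactic coincidence of the normal form with the expected constructor), whereas your componentwise-equivalence formulation, obtained by ruling out \rulenameeqFix, \rulenameeqRecL and \rulenameeqRecR on non-$\mu$ normal forms, is the more careful statement of what that step actually requires.
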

\begin{proof}  
  By mutual rule induction on the hypotheses.

  \sloppy\Cref{it:algcompsynth}. We detail the case for \rulenametypeCase. Induction on
  the first premise gives
  \begin{enumerate*}[label=\emph{(\alph*)}]
  \item\label{itm:synth1}$\algtypeout{\Gamma_1}{e}{U}{\Gamma_2'}$,
  \item\label{itm:ctxUn1}$\isCUn{\Gamma_2'}$, and
  \item\label{itm:equivHyp1}$\isequiv[\Empty]{\Delta}{U}{\variantt \ell T L}{\kind}$
  \end{enumerate*}. Rules of \cref{fig:type-equivalence,fig:norm}
  with~\ref{itm:synth1} and \ref{itm:equivHyp1} yield
  $\normalisation U {\variantt \ell T L}$ and thus
  \begin{enumerate*}[label=\emph{(\alph*)}, resume]
  \item\label{itm:synthNorm1}$\algtypeoutnorm{\Gamma_1}{e}{\variantt \ell TL}{\Gamma_2'}$
  \end{enumerate*}.
  Induction on the second premisse of \rulenametypeCase yields 
  \begin{enumerate*}[label=\emph{(\alph*)}, resume]
  \item\label{itm:synth2}$\algtypeout{\Gamma_2}{e}{V}{\Gamma_\ell}$,
  \item\label{itm:ctxUn2}$\isCUn{\Gamma_\ell}$, and
  \item\label{itm:equivHyp2}$\isequiv[\Empty]{\Delta}{V}{\funt[\lin]{T_\ell}T}{\kind}$
  \end{enumerate*}. Again, rules of~\cref{fig:type-equivalence,fig:norm}
  with~\ref{itm:synth2} and~\ref{itm:equivHyp2} yield
  $\normalisation V {\funt[\lin]{T_\ell}T}$ and thus
  \begin{enumerate*}[label=\emph{(\alph*)}, resume]
  \item\label{itm:synthNorm2}$\algtypeoutnorm{\Gamma_2}{e_\ell}{\funt[\lin]{T_\ell}T}{\Gamma_\ell}$
  \end{enumerate*}.
  We have to prove that $\Gamma_2' = \Gamma_2$. Since
  $\Gamma_1 \cup \mathcal L({\Gamma_2}) = \Gamma_1 \circ \Gamma_2$, we
  weaken~\ref{itm:synthNorm1} to obtain
  $\algtypeoutnorm{\Gamma_1 \circ \Gamma_2}{e}{\variantt \ell TL}{\Gamma_2' \cup \allowbreak{\mathcal L({\Gamma_2})}}$.
  We have to show that
  $\Gamma_2' \cup \mathcal L({\Gamma_2}) = \Gamma_2$. We know that
  $\mathcal L(\Gamma_2' \cup \mathcal L({\Gamma_2})) = \mathcal L({\Gamma_2})$
  since $\mathcal L({\Gamma_2'}) = \emptyset$. From monotonicity and properties
  of context split (lemmas~\ref{lem:monotonicity} and~\ref{lem:cSplitProps}), we
  know that
  $\mathcal U({\Gamma_2'} \cup \mathcal L({\Gamma_2})) = \mathcal U({\Gamma_2'})
  = \mathcal U({\Gamma_1\circ\Gamma_2}) = \mathcal U({\Gamma_2})$.
  From~\ref{itm:synthNorm2} and reflexivity of type equivalence
  (\cref{lem:type-equivalence}) we obtain
  $\isequiv{\Delta}{T}{T}{\kind}$.
  Before concluding with rule \rulenamealgtypeCase we still have to show that
  $\algequivin{\Gamma_k}{\Gamma_\ell}$. From monotonicity
  and~\ref{itm:synthNorm2} we get
  $\mathcal U(\Gamma_2) = \mathcal U(\Gamma_\ell)$. Since $\isCUn{\Gamma_\ell}$
  we can pick a $k\in L$ such that
  $\isequiv[\Empty]{\Delta}{\Gamma_k}{\Gamma_\ell}{\kindtu}$. By subkinding
  we get $\algequivin{\Gamma_k}{\Gamma_\ell}$.

  \Cref{it:algcompcheck}. From~\cref{it:algcompcheck} we know that
  $\algtypeout{\Gamma_1}{e}{U}{\Gamma_2}$ and $\isCUn{\Gamma_2}$ and $\algequivin
  TU$. Conclude with rule \rulenameAlgtypeEq.
\end{proof}

\paragraph*{Pragmatics and \freest}


In the injection rules, \rulenamealgtypeVariant and \rulenamealgtypeSel
expressions are type-annotated. \freest handles this by requiring variants to be
declared in advance so that each datatype constructor becomes a function symbol
in the symbol table, as usual in functional programming languages where variants
and records are treated nominally.

The check against relation, defined in \cref{fig:alg-typing-two}, may include
more rules to produce better error messages. The following two rules are
derivable, and we use them in the compiler.
\begin{gather*}
  \infrule{}{
    \algtypeout{\Gamma_1}{e_1}{T_1}{\Gamma_2}
    \\
    \normalisation {T_1}{\pairt{U_1}{U_2}}
    \\
    \algtypein{\Gamma_2, x\colon U_1, y\colon U_2}{e_2}{T_2}{\Gamma_3}
  }{
    \algtypein{\Gamma_1}{\binlete{x}{y}{e_1}{e_2}}{T_2}{\Gamma_3} \div x \div y
  }
  \\
  \infrule{}{
    \algtypein{\Gamma_1}{e_1}{\boolt}{\Gamma_2}
    \\
    \algtypein{\Gamma_2}{e_2}{T}{\Gamma_3}
    \\
    \algtypein{\Gamma_2}{e_3}{T}{\Gamma_4}
    \\
    \ctxequiv{\Gamma_3}{\Gamma_4}
  }{
    \algtypein{\Gamma_1}{\conde{e_1}{e_2}{e_3}}{T}{\Gamma_3}
  }  
\end{gather*}

Similarly to the translation to BPA (\cref{sec:type-equiv-decidable}), the
algorithm to decide type equivalence~\cite{DBLP:conf/tacas/AlmeidaMV20} requires
$\mu$-introduced type variables to occur free in the body of the type, hence
\freest replaces types of the form $\rect{a}{\kind}{T}$ by $T$ when
$a\notin\free(T)$.
%
Furthermore, programs must be kept renamed so that the types inspected by type
equivalence do not share bound variables.
Substitution preserves neither of these two properties. The current version of
\freest ensures that the properties remain invariant throughout type checking by
renaming and replacing $\mu$-types at parsing and whenever it performs a
substitution.

\section{Related Work}
\label{sec:related}

The related work on session types is vast. The BETTY book
\cite{gay17:_behav_types} provides a recent overview. Here we concentrate on
direct influences to our design, systems that support polymorphism, and systems
that go beyond tail recursive types for communication channels.

\paragraph{Binary Session Types}

Session types were introduced by
Honda \etal~\cite{DBLP:conf/concur/Honda93,DBLP:conf/esop/HondaVK98,DBLP:conf/parle/TakeuchiHK94}
on variants of the $\pi$-calculus. These early ideas were later incorporated in
functional languages by Gay
\etal~\cite{DBLP:journals/jfp/GayV10,DBLP:journals/tcs/VasconcelosGR06,DBLP:conf/concur/VasconcelosRG04},
who also enunciate the importance of linearity and linear types for
session types.
Our language is particularly close to that of Gay and
Vasconcelos~\cite{DBLP:journals/jfp/GayV10}, which owes much of its
elegance and generality to its reliance on linear types. At the level of the operational
semantics we use a synchronous semantics for communication in place of
a buffered asynchronous one. This
choice simplifies the technical treatment, even if we believe that a buffered
semantics can be reintroduced without compromising the
metatheory of the language. At the level of types we incorporate record
and variant  types, general recursive types (not restricted to session
types), a sequential composition operator for session types, and
impredicative polymorphism.
We omit subtyping as it requires a significant treatment on its own.
%
Vasconcelos~\cite{DBLP:journals/iandc/Vasconcelos12,DBLP:conf/sfm/Vasconcelos09}
introduces the syntactic distinction of the two ends of a channel, related by a
single $\nu$-binding, a technique we follow in this paper.

Languages with conventional (that is, regular) recursive session
types~\cite{DBLP:conf/esop/CairesPPT13,DBLP:journals/jfp/GayV10,DBLP:conf/concur/Honda93,DBLP:conf/esop/HondaVK98,DBLP:conf/parle/TakeuchiHK94,DBLP:journals/eatcs/Vasconcelos11,DBLP:journals/iandc/Vasconcelos12,DBLP:conf/sfm/Vasconcelos09,DBLP:journals/tcs/VasconcelosGR06,DBLP:conf/concur/VasconcelosRG04}
can still describe some of the protocols discussed in this paper, but they
require higher-order session types (the ability to send channels on channels).
For example, the session type \lstinline|TreeChannel| introduced in
\cref{sec:tree-struct-transm} can be emulated by the type
\begin{equation*}
    \mu a.\oplus\!\{\leafl\colon \Endk, \nodel\colon !\intk.!a.!a\}
\end{equation*}
where two new channels must be created to transmit the two
subtrees at every node. In comparison, our calculus is intentionally closer to a low level
language: it only supports the transmission of non-session typed values.
Furthermore its implementation is simpler and more
efficient: only one channel is created and used for the transmission of the
whole tree; thus, avoiding the overhead of multiple channel creation.

\paragraph{Context-free Session Types}

%
As an alternative to checking type equivalence for arbitrary context-free types,
Padovani~\cite{DBLP:conf/esop/Padovani17,DBLP:journals/toplas/Padovani19}
proposes a language that requires explicit annotations in the source code.
These annotations result in the structural alignment between code and types, thus
simplifying the problem of type checking.
Our system requires no annotations and relies directly on the
algorithm for type equivalence described in \cref{sec:type-equiv-decidable}.
%
%
Aagaard \etal~\cite{DBLP:journals/corr/abs-1808-08648} adapt the concept of
context-free session types from our previous work
\cite{DBLP:conf/icfp/ThiemannV16} to the applied pi calculus. They
also account for unrestricted (that is, arbitrarily shared) channels in the style of
Vasconcelos~\cite{DBLP:journals/iandc/Vasconcelos12},
a feature that we decided not to include in this paper. They prove
session fidelity by mapping their calculus to the psi calculus and
defining types up to equivalence with respect to session type
bisimulation.
Gay \etal recently showed that the equivalence for higher-order context-free
session types is decidable~\cite{gay.etal:shades}.

\paragraph{Context-free Protocols}

Early attempts to describe non-regular protocols as types include proposals by
Ravara and Vasconcelos~\cite{DBLP:conf/europar/RavaraV97} using directed labelled
graphs as types, by Puntigam~\cite{DBLP:conf/europar/Puntigam99} on non-regular
process types to describe component interfaces, and by
S\"udholt~\cite{DBLP:conf/soco/Sudholt05} who investigates some
use cases with an approach based on Puntigam's types. Ravara and
Vasconcelos define a concurrent object calculus and establish some
metatheoretical results culminating in subject reduction for their type system, but do not
address type checking. Puntigam studies type equivalence and
subtyping, claims decidability for both, and gives a trace-based
characterization. No details are given on the underlying
calculus.
These proposals appear in disparate formalisms based on process
calculi and none of them supports polymorphism, so they are quite
distinct from the system in this paper, which is based on System F.
Context-free session types were
introduced by Thiemann and Vasconcelos~\cite{DBLP:conf/icfp/ThiemannV16}; we
follow their formulation and extend it rigorously to System F.
Das \etal~\cite{DBLP:conf/esop/DasDMP21,das2021subtyping} propose the theory of
nested session types, featuring recursive types with polymorphic type
constructors and nested types. Nested session types enable expressing protocols
described by context-free languages recognized by multi-state deterministic
pushdown automata, whereas context-free session types capture communication
patterns recognized by single-state deterministic pushdown
automata~\cite{gay.etal:shades}. Although covering a more restrictive class of
context-free languages, context-free session types enjoy a practical, sound and
complete, type equivalence algorithm~\cite{DBLP:conf/tacas/AlmeidaMV20}. In
turn, type equivalence for nested session types is decidable, the authors
present a sound equivalence algorithm, but a complete and practical type
equivalence algorithm is not yet known~\cite{DBLP:conf/esop/DasDMP21}.

\paragraph{Linear Logic Propositions as Sessions}

Caires and
Pfenning~\cite{DBLP:conf/concur/CairesP10,DBLP:journals/mscs/CairesPT16}
pioneered the interpretation of intuitionistic linear logic propositions as
sessions. This interpretation, indicated by $[\_]$, maps receiving and
sending session types for base types $T$ as payload to
tensor and lollipop as follows:
\begin{align*}
  [!T.U] & = T \otimes [U] & [?T.U] & = T \multimap [U]
\end{align*}
Base types $T$ need not be interpreted.
The semantics
of these systems, given directly by the cut elimination rules,
ensures deadlock freedom.
An analogous interpretation applies to our session type language, but
to compensate for the presence of the
sequencing operator, the interpretation must be formulated in CPS and
invoked as in $[S] (\mathbf{1})$:
\begin{align*}
  [!T] (U) &= T \otimes U & [?T] (U) & = T \multimap U
  & [T_1;T_2] (U) & = [T_1] ([T_2] (U))
\end{align*}

The interpretation of propositions as sessions is extend to dependent types
\cite{DBLP:conf/ppdp/ToninhoCP11}, encompassing polymorphism since the resulting
session type discipline includes explicit operators to send and receive types.

Wadler~\cite{DBLP:conf/icfp/Wadler12,DBLP:journals/jfp/Wadler14} proposes a
typing preserving translation of a variant of the aforementioned language by Gay and
Vasconcelos~\cite{DBLP:journals/jfp/GayV10} into a process calculus
that translates Caires and Pfenning's
work~\cite{DBLP:conf/concur/CairesP10,DBLP:journals/mscs/CairesPT16}
to classical linear logic, thus ensuring deadlock freedom. Even though
our system ensures progress for the
functional part of the language, the unrestricted interleaving of channel
read/write on multiple channels may lead to deadlocked situations. That is the
price to pay for the flexibility our language offers compared to the
accounts of session types based on linear logic.


\paragraph{Polymorphism for Session Types}

An early version of this work~\cite{DBLP:conf/icfp/ThiemannV16} uses predicative
(Damas-Milner~\cite{DBLP:conf/popl/DamasM82}) polymorphism, in a form closely
related to that offered by Bono \etal~\cite{DBLP:conf/forte/BonoPT13} for
functional session types. The current paper extends the predicative system to
System F, with the associated extra flexibility.
%

A different form of polymorphism---bounded polymorphism on the values
transmitted on channels---was introduced by
Gay~\cite{DBLP:journals/mscs/Gay08} in the realm of session types for
the $\pi$-calculus. The intention of this work was to gain flexibility
over subtyping: a protocol like $?\bytek; !\bytek$ is not a subtype of
$?\intk; !\intk$ because sending behaves contravariantly whereas receiving
is covariant.
Bono and Padovani~\cite{DBLP:conf/esop/BonoMP11,DBLP:journals/corr/abs-1108-0466}
present a calculus to model process
interactions based on copyless message passing. Their type system
includes subtyping and recursion. An extension
\cite{DBLP:journals/corr/abs-1202-2086} features polymorphic endpoint
types using bounded polymorphism in a similar way as
Gay~\cite{DBLP:journals/mscs/Gay08}. Dardha
\etal~\cite{DBLP:journals/iandc/DardhaGS17}
extend their encoding of session types into $\pi$-types with parametric
and bounded polymorphism, but recursive types are not considered;
Dardha~\cite{DBLP:journals/corr/Dardha14} extends the encoding to recursive types, but
polymorphism is only referred to as future work.
%

Goto \etal \cite{DBLP:journals/mscs/GotoJJPR16} study a type system
that provides polymorphism over sessions in the context of the $\pi$
calculus. Their calculus supports match processes which are guarded on
a check for token (in)equality and serve to implement the usual
branch session types.
The goal of their system is to enable the typing of
forwarders and session transducers using polymorphism and (session)
type functions provided by the programmer as part of the definition of
the type system. Their system is quite
different from ours (and from most other polymorphic type systems) in
that there is no syntax for quantification in their type
language. They rely on polymorphism in
their metalanguage (Coq), instead.

Wadler~\cite{DBLP:journals/jfp/Wadler14} supports polymorphism on session types,
introducing dual quantifiers, $\forall$ and $\exists$, interpreted as sending
and receiving types, similar to the polymorphic $\pi$-calculus by
Turner~\cite{DBLP:phd/ethos/Turner96}. Explicit higher-rank quantifiers on
session types were pursued by Caires \etal~\cite{DBLP:conf/esop/CairesPPT13,DBLP:journals/iandc/PerezCPT14}.
Caires \etal\ provide a logically motivated theory of parametric polymorphism
with higher-rank quantifiers but without recursive types.
Griffith explores the inclusion of parametric polymorphism without
nested types in the language SILL~\cite{griffith2016polarized}.
This paper considers recursive types but restricts polymorphism
to the functional layer (see rule \rulenamekindPoly in \cref{fig:kinding}),
leaving polymorphism via type passing to future work.

In recent work, Das \etal\ explore recursive session type definitions
with type parameters and nested type
instantiation~\cite{DBLP:conf/esop/DasDMP21}. Type quantification
and subtyping were added to this theory separately~\cite{das2021subtyping}.
In this paper, we only consider nested types in the functional layer;
incorporating explicit polymorphism and nested types on sessions
constitutes future work.

\paragraph{\freest and the Decidability of Type Equivalence}

Almeida \etal~\cite{DBLP:journals/corr/abs-1904-01284} describe the
implementation of the predicative version of \freest (\freest 1). At the core
of all \freest versions lies an algorithm to decide type equivalence.
Unfortunately the proof of decidability of type equivalence (in
\cref{sec:type-equivalence}, based on the reduction to the decidability of
bisimilarity of basic process algebras~\cite{DBLP:journals/iandc/ChristensenHS95}) does not directly lead to a
practical algorithm. Type equivalence in \freest (1 and 2) is based on the algorithm
developed by Almeida \etal~\cite{DBLP:conf/tacas/AlmeidaMV20}.

\paragraph{Linear and Recursive System F}

Bierman \etal~\cite{DBLP:journals/entcs/BiermanPR00} propose a polymorphic
linear lambda calculus with recursive types, supporting both call by value and
call by name semantics, and featuring a $!$ constructor to account for
intuitionistic terms.
%
Mazurak \etal~\cite{DBLP:conf/tldi/MazurakZZ10} use kinds to qualify linearity
of types in a variant of System~F with linear types. In their system, type
qualifiers are fixed as in our system. This approach was taken further by
Lindley and Morris~\cite{lindley17:_light_funct_session_types} in a
sophisticated kind structure with linear and unrestricted kinds as opposed to
type qualifiers (as in Walker~\cite{walker:substructural-type-systems} and in
Vasconcelos~\cite{DBLP:journals/iandc/Vasconcelos12}), the $\un \subk \lin$ kind
subsumption rule, distinct $\lin$/$\un$ arrow types (this distinction is already
present in Gay and Vasconcelos~\cite{DBLP:journals/jfp/GayV10} in the context of
functional session types, but their work does not support polymorphism). We
follow the latter approach path quite closely: we have an additional message
kind besides session and functional, whereas Lindley and Morris model records,
variants, and branching using row types. Moreover, Lindley and Morris support
polymorphism over kinds and rows, which we do not.

Alms \cite{DBLP:conf/popl/TovP11} is a polymorphic functional programming language
that supports affine types and modules. The mode of a type (affine or
unrestricted) is indicated with a kind that may depend on a type,
which serves the same purpose as Lindley and Morris' kind polymorphism
(which we do not support). Alms
comes with recursive algebraic datatypes, but does not offer equirecursive
types. A library implementation of (regular) session types without
recursion is among the examples provided with the Alms implementation.


\paragraph{API Protocols}

Ferles \etal~\cite{DBLP:journals/pacmpl/FerlesSD21} verify the correct use of
context-free API protocols by abstracting the program as a context-free grammar
(representing feasible sequences of API calls) and checking whether the thus
obtained grammar is included in that of a specification. Their tool is
restricted to LL(k) grammars, given the undecidability of inclusion checking for
arbitrary context-free languages. Our approach, instead, directly checks
programs against protocol specifications in the form of context-free types.


\section{Conclusion}
\label{sec:conclusion}

This paper proposes \fmusession, a functional language with support for
polymorphic equi-recursive context-free session types extended with
multi-threading and inter-process communication.
We prove that type equivalence is decidable for context-free session types, and
we present a bidirectional type checking algorithm that we prove correct with
respect to the declarative system.
The proposed algorithms are incorporated in a typechecker and interpreter for
\freest~\cite{freest}.

\freest has a past and a future. \freest 1 is based on a predicative type
system; essentially the proposal by Thiemann and
Vasconcelos~\cite{DBLP:conf/icfp/ThiemannV16}. \freest 2, the current release,
is a language based on this paper.
%
For future releases we plan to extend the \freest with higher-order kinds
(providing for a convenient $\dualofk$ type operator), higher-order channels
(allowing to pass channels on channels), and polymorphism over session types (in
addition to polymorphism over functional types). We also plan to explore type
inference at type application points and the incorporation of shared
channels~\cite{DBLP:journals/pacmpl/BalzerP17,DBLP:journals/pacmpl/RochaC21,DBLP:journals/iandc/Vasconcelos12}.

\changed{Type equivalence for higher-order channels poses interesting challenges, for
both the bisimulation-based declarative
definition~\cite{DBLP:conf/icfp/ThiemannV16} and its grammar-based algorithmic
version~\cite{DBLP:conf/tacas/AlmeidaMV20} rely on first-order messages. A
recent work by Costa \etal~\cite{costa.etal:polymorphic-hocfst} extends both the
bisimulation and the grammar construction to higher-order session types.
Furthermore, we do not expect difficulties in type duality for higher-order
types, given that the results in this paper rely on building a type dual to a
given type, a simple operation when contrasted with checking the duality of
session types.}



\paragraph{Acknowledgements}
We thank the anonymous reviewers for their detailed comments that greatly
contributed to improve the paper. This work was supported by FCT through project
SafeSessions, ref.\ PTDC/CCI-COM/6453/2020, and the LASIGE Research Unit, ref.\
UIDB/00408/2020 and ref.\ UIDP/00408/2020.

\bibliography{biblio}
\bibliographystyle{plain}

\end{document}
